\newcommand{\R}{\mathbb{R}}
\newcommand{\B}{\mathcal{B}}
\newcommand{\F}{\mathcal{F}}
\newcommand{\be}{\begin{equation}}
\newcommand{\ee}{\end{equation}}
\newcommand{\bea}{\begin{eqnarray}}
\newcommand{\eea}{\end{eqnarray}}
\newcommand{\beas}{\begin{eqnarray*}}
\newcommand{\eeas}{\end{eqnarray*}}
\newcommand{\ds}{\displaystyle}
\def\theequation{\thesection.\arabic{equation}}
\newtheorem{theorem}{Theorem}
\newtheorem{corollary}[theorem]{Corollary}
\newtheorem{definition}[theorem]{Definition}
\newtheorem{example}[theorem]{Example}
\newtheorem{lemma}[theorem]{Lemma}
\newtheorem{proposition}[theorem]{Proposition}
\newtheorem{remark}[theorem]{Remark}
\DeclareMathOperator*{\esssup}{ess\,sup}
\DeclareMathOperator*{\essinf}{ess\,inf}
\DeclareMathOperator*{\essmax}{ess\,max}
\DeclareMathOperator*{\essmin}{ess\,min}
\DeclareMathOperator*{\argmin}{arg\,min}
\begin{document}

\title{\vskip -1cm Dynamic Return and Star-Shaped Risk Measures via BSDEs\thanks{We are very grateful to Fabio Bellini, Freddy Delbaen, Marco Frittelli and Martin Schweizer for useful comments and discussions.
This research was funded in part by the Netherlands Organization for Scientific Research under grant NWO Vici 2020--2027 (Laeven). Emanuela Rosazza Gianin and Marco Zullino are members of Gruppo Nazionale per l’Analisi Matematica, la Probabilità e le loro Applicazioni (GNAMPA), Italy.}}
\author{Roger J.~A.~Laeven \\
{\footnotesize Dept.~of Quantitative Economics}\\
{\footnotesize University of Amsterdam, CentER}\\
{\footnotesize and EURANDOM, The Netherlands}\\
{\footnotesize \texttt{r.j.a.laeven@uva.nl}}\\
\and Emanuela Rosazza Gianin\footnote{Corresponding author.} \\
{\footnotesize Dept.~of Statistics and Quantitative Methods}\\
{\footnotesize University of Milano Bicocca, Italy}\\
{\footnotesize \texttt{emanuela.rosazza1@unimib.it}}\\
\and Marco Zullino \\
{\footnotesize Dept.~of Mathematics and Applications}\\
{\footnotesize University of Milano Bicocca, Italy}\\
{\footnotesize \texttt{m.zullino@campus.unimib.it}}}

\date{This Version: \today }

\maketitle

\begin{abstract}
This paper establishes characterization results for dynamic return and star-shaped risk measures induced via backward stochastic differential equations (BSDEs).
We first characterize a general family of static star-shaped functionals in a locally convex Fr\'echet lattice.
Next, employing the Pasch-Hausdorff envelope, we build a suitable family of convex drivers of BSDEs inducing a corresponding family of dynamic convex risk measures of which the dynamic return and star-shaped risk measures emerge as the essential minimum.
Furthermore, we prove that if the set of star-shaped supersolutions of a BSDE is not empty, then there exists, for each terminal condition, at least one convex BSDE with a non-empty set of supersolutions, yielding the minimal star-shaped supersolution.
We illustrate our theoretical results in a few examples and demonstrate their usefulness in two applications, to capital allocation and portfolio choice.\\[3mm]
\noindent \textbf{Keywords:}
Dynamic risk measures;
Return risk measures;
Positive homogeneity;
Star-shapedness;
Backward Stochastic Differential Equations.\\[3mm]
\noindent \textbf{MSC 2020 Classification:}
Primary: 60H10, 91B06; Secondary: 60H30, 62P05.\\[3mm]
\noindent \textbf{JEL Classification:}
D81, G10, G20.
\end{abstract}

\section{Introduction}

The intimate connection between dynamic convex and, more generally, monetary risk measures and Backward Stochastic Differential Equations (BSDEs)
has been studied in a large theoretical literature (e.g., \cite{B76,PP90,P97,EPQ97,P04,P05,BEK05,RG06,J08,LS14,DPR10,DNRG23,TW23})
and has proven to be highly useful in a wide variety of applications in applied probability and stochastic control, financial and insurance mathematics, and operations research (e.g., \cite{EPQ97,CE02,LQ03,MS05,ELKR09,LS14,KLLSS18,RZ23}).

Recently, \cite{BLR18} introduced return risk measures (i.e., risk measures that are monotone and  positively homogeneous)
to stand on par with the class of monetary risk measures.
Return risk measures provide a relative (or geometric) evaluation of risk,
whereas monetary risk measures provide absolute (or arithmetic) risk assessments,
resonant with the distinction between relative and absolute risk aversion.
In \cite{LR22}, the positive homogeneity property of return risk measures was replaced by the more general notion of star-shapedness
(and log-convexity by quasi-logconvexity).

In this paper, we establish characterization results for dynamic return and star-shaped risk measures induced via BSDEs.
That is, we uncover new links between BSDEs and dynamic return and star-shaped risk measures.
These are our main results, the mathematical details of which are delicate.
To accomplish this, we first characterize a general family of static star-shaped functionals as a contribution of independent interest.

The first main challenge in devising our proof strategies is the absence of translation invariance (i.e., cash-additivity).
As a result, even in the static case, we cannot use techniques similar to those employed in the existing literature, where this property is heavily relied upon.
Furthermore, to achieve generality, we first analyze the static case in a locally convex Fr\'echet lattice, adopting the approach used for convex risk measures in \cite{BF10}.
This represents a novel approach in the literature on star-shaped risk measures.
Additionally, as a by-product, we prove a representation result for general static star-shaped functionals without requiring monotonicity.
These results may be considered of interest in their own right, not previously known in the classical theory of non-convex analysis.

The second major challenge concerns the dynamic representation of star-shaped risk measures induced via BSDEs.
In particular, we build a suitable family of convex drivers of BSDEs that generate a corresponding family of convex risk measures, of which the essential minimum yields the star-shaped risk measure of interest.
We construct such drivers by means of the Pasch-Hausdorff envelope (see \cite{RW97}).
Our approach introduces a fresh perspective in the literature, by employing the Pasch-Hausdorff envelope to regularize the drivers.
This regularization ensures the existence and uniqueness of solutions under Lipschitz assumptions while preserving the desired properties.
The work of \cite{BEK05} explores a similar concept, albeit with a different objective.
By leveraging this technique, we extend our representation results from the static case to the dynamic setting.
Furthermore, we demonstrate the effectiveness of our approach in the setting of star-shaped supersolutions to BSDEs.
In this particular setting, our main result asserts that if the collection of star-shaped supersolutions to a BSDE is not devoid of elements, then a set of BSDEs generated by convex drivers can be identified such that the essential minimum of the corresponding minimal convex supersolutions represents the minimal star-shaped supersolution to the original BSDE.
In particular, for any given terminal condition, there exists at least one BSDE for which the set of convex supersolutions is not empty, and the minimal convex supersolution of that BSDE achieves the minimum in the representation of the star-shaped risk measure.
Typically, the existence of a supersolution for a convex BSDE requires additional assumptions.
However, in our specific setting, this  is automatically guaranteed as long as there exists a non-empty set of star-shaped supersolutions.

We provide examples to demonstrate the generality of our framework and to show that our dynamic risk measures may arise naturally.
Our theoretical results are amenable to a broad range of applications.
We illustrate two such applications: capital allocation rules and optimal portfolio choice.

Perhaps most closely related to our work is \cite{BEK05,J08}, in which representation results are established for dynamic monetary convex risk measures induced by BSDEs, and \cite{C00}, in which a dual representation is derived for positively homogeneous functions.
In interesting concurrent work, \cite{MR22} studies the connection between monetary and monetary star-shaped risk measures in a static setting (see also the earlier \cite{CDFM15,CCMTW22,JXZ20}), while \cite{TW23} analyzes dynamic monetary and dynamic monetary star-shaped risk measures.
A key distinction between \cite{MR22,TW23} and our work, is that \cite{MR22,TW23} restrict attention to the case of monetary risk measures in which the property of cash-additivity is assumed.
By contrast, we do not in general impose cash-additivity, neither in the static setting (of Section~\ref{sec:star}) nor in the dynamic setting (of Sections~\ref{sec:mr}--\ref{sec:supsolution}).
The absence of cash-additivity not only substantially generalizes the scope of the representation theorems established in this paper, but also requires proof strategies that differ, both in the static and the dynamic environment, from those employed in the (large) literature on monetary risk measures, as explained above.
Furthermore, other distinctions include that in the static setting we consider a general locally convex Fr\'echet lattice and allow for non-monotonicity, whereas in the dynamic environment we explicitly characterize the BSDE driver using the Pasch-Hausdorff envelope and exploit supersolutions of BSDEs.
In addition, we derive an explicit expression for the penalty function associated with the min-max representation of star-shaped risk measures induced through BSDEs.
Remarkably, this representation was previously unknown even in the realm of dynamic cash-additive star-shaped risk measures.
In the recent \cite{RM22}, star-shapedness has also been analyzed in the context of static deviation measures, satisfying translation insensitivity.

This paper is organized as follows.
In Section~\ref{sec:prel}, we introduce the setting and recall some preliminaries for dynamic risk measures and BSDEs.
In Section~\ref{sec:con}, we provide representation results for dynamic convex return risk measures.
Section~\ref{sec:star} provides general characterization results for star-shaped risk measures.
Section~\ref{sec:mr} contains our main results.
In Section~\ref{sec:supsolution}, we provide a min-max representation of minimal supersolutions for star-shaped BSDEs.
In Section~\ref{sec:examples}, we develop three examples.
Section~\ref{sec:app} contains applications to capital allocation and portfolio choice.
All the proofs are provided in \mbox{Appendix \ref{sec:appendix}.}
\setcounter{equation}{0}

\section{Preliminaries}\label{sec:prel}

\subsection{Dynamic monetary and return risk measures}
In this subsection, we first present the definitions of the main functional spaces used in this paper, then we present the basic definitions of dynamic monetary and return risk measures, and next we provide some related relevant definitions and results.

Let $T>0$ and $(\Omega,\mathcal{F},\mathbb{P},(\mathcal{F}_t)_{t\in[0,T]})$ be a filtered probability space such that $\mathcal{F}_0$ is the trivial $\sigma$-algebra and $\mathcal{F}_T=\mathcal{F}$.
Furthermore, let $L^{p}\left(\mathcal{F}_{t}\right)=L^{p}\left(\Omega,\mathcal{F}_{t},\mathbb{P}\right)$ (with $t\in \left[ 0,T\right]$) denote the space of all real-valued, $\mathcal{F}_{t}$-measurable and $p$-integrable random variables endowed with the $L^{p}$-norm topology, for any $p\in[1,+\infty)$, and let $L^{\infty}(\mathcal{F}_t)$ be the space of essentially bounded random variables equipped with the topology $\sigma(L^1,L^{\infty})$.
The space $L^p_{+}(\mathcal{F}_t)$ (resp.\ $L^p_{++}(\mathcal{F}_t)$) is the restriction of $L^p(\mathcal{F}_t)$ to non-negative (resp.\ strictly positive) random variables.
Let $L_{\mathcal{F}}^{2}\left( T;\mathbb{R}^{n}\right)$ denote the space of all $\mathbb{R}^{n}$-valued, adapted processes $\left(X_{t}\right) _{t\in \left[ 0,T\right] }$ such that $\mathbb{E}\left[\int_{0}^{T}\left| X_{t}\right| ^{2}\text{ }dt\right] <+\infty$, where $\left| \cdot \right| $ is the Euclidean norm on $\mathbb{R}^{n}$.
For the Euclidean scalar product between $(a, b) \in \mathbb{R}^n$, we interchangeably use the notations $\langle a, b \rangle$ and $a \cdot b$.
Equalities and inequalities between random variables should be understood $d\mathbb{P}$-a.s., while for processes equalities and inequalities must be interpreted $d\mathbb{P}\times dt$-a.s.
We also define the space $$\mbox{BMO}(\mathbb{P}):=\left\{X\in L^2_{\mathcal{F}}(T;\R^n): \exists C>0 \mbox{ s.t. } \forall t\in[0,T] \mbox{ it holds that }  \mathbb{E}_{\mathbb{P}}\left[\int_t^T|X_s|^2ds\Bigg|\mathcal{F}_t\right]\leq C \right\}.$$
In the following definitions we assume $p\in[1,+\infty]$.
\begin{definition}[\cite{D02,BLR18,DS05,BLR21,D06}]
    $\rho_t:L^{p}(\mathcal{F}_T)\to L^{p}(\mathcal{F}_t)$ is a dynamic risk measure if it satisfies:
    \begin{itemize}
        \item Monotonicity: $X\geq Y \implies \rho_t(X)\geq\rho_t(Y)$, for any $X,Y\in L^{p}(\mathcal{F}_T),$ \mbox{$t\in[0,T].$}
    \end{itemize}
    A dynamic risk measure is monetary if cash-additivity holds: $\rho_t(X+m)=\rho_t(X)+m,$
    for any $X\in L^{p}(\mathcal{F}_T)$ and $m\in L^{p}(\mathcal{F}_t)$. \\
    A dynamic risk measure \mbox{$\rho_t: L^{p}_{+}(\mathcal{F}_T)\to L^{p}_{+}(\mathcal{F}_t)$} is a return risk measure if it is positively homogeneous:
    $$\rho_t(\alpha X)=\alpha\rho_t(X), \ \forall X\in L^{p}_{+}(\mathcal{F}_T),\ \forall \alpha\in L^{p}_{+}(\mathcal{F}_t).$$
\label{def:dmr}
\end{definition}
We recall that positive homogeneity can be defined analogously for a risk measure on the entire space $L^p(\mathcal{F}_T)$.
Sometimes, as discussed in \cite{BLR18}, return risk measures are normalized at $1$, meaning that $\rho_t(1) = 1$ for any $t \in [0, T]$.
However, it is important to note that this requirement is not always necessary in the context of this paper.
A static monetary or return risk measure can be defined analogously by taking $t=0$ in Definition~\ref{def:dmr}.

In the recent literature, the conditions of cash-additivity for monetary risk measures and positive homogeneity for return risk measures have been weakened to cash-subadditivity and star-shapedness, respectively (see \cite{ELKR09,LR22}).
By cash-subadditivity and star-shapedness we mean:
\begin{definition}
Let $\rho_t:L^{p}(\mathcal{F}_T)\to L^{p}(\mathcal{F}_t)$ be a dynamic risk measure.
Then $\rho_t$ is:
    \begin{itemize}
        \item Cash-subadditive: if $\rho_t(X+c)\leq\rho_t(X)+c$ for any $X\in L^p(\mathcal{F}_T),\ c\in L^{p}_{+}(\mathcal{F}_t),\ t\in[0,T]$.
    \item Star-shaped: if $\rho_t(\lambda X)\leq \lambda\rho_t(X)+(1-\lambda)\rho_t(0)$ for any $X\in L^{p}(\mathcal{F}_T),\ \lambda\in L^{p}(\mathcal{F}_t)$ with $0\leq \lambda\leq 1$.
\end{itemize}
\end{definition}
Although cash-additive and cash-subadditive static risk measures have been extensively studied, and their dual representations are well-known in the literature (e.g., \cite{CDFM15,JXZ20,HWWX22}), to the best of our knowledge, there are essentially no results on risk measures that satisfy only positive homogeneity or star-shapedness.
In \cite{CCMTW22}, representation results are provided for risk measures that are both cash-additive and star-shaped in the static case, and their proofs heavily rely on the axiom of cash-additivity.
In Sections~\ref{sec:star} and~\ref{sec:mr}, we will provide further insight into min-max representations for positively homogeneous and star-shaped risk measures, in both the static and dynamic cases.

In order to enhance our comprehension of the temporal structure of dynamic risk measures, it is frequently necessary to consider the following two properties:
\begin{itemize}
\item Time-consistency:
$\rho_s(\rho_t(X))=\rho_s(X), \ \ \forall X\in L^{p}(\mathcal{F}_T), \ 0\leq s\leq t\leq T.$
\item Regularity: $\mathbb{I}_A\rho_t(X)=\mathbb{I}_A\rho_t(\mathbb{I}_AX), \ \ \forall X\in L^{p}(\mathcal{F}_T), \ A\in\mathcal{F}_t, \ t\in [0,T],$
where $\mathbb{I}_{B}$ is the characteristic function of $B\subseteq\Omega$.
\end{itemize}
It is known that a dynamic cash-additive risk measure defined on bounded random variables is regular. As shown in Lemma \ref{REGCS}, regularity still holds for cash-subadditive risk measures.
We conclude this subsection by presenting a non-exhaustive list of well-known axioms for a dynamic risk measure $\rho_t: L^p(\mathcal{F}_T) \rightarrow L^p(\mathcal{F}_t)$:
\begin{itemize}
\item Normalization: for any $t\in[0,T]$, $\rho_t({0})=0.$
\item Positive-constancy: for any $t\in[0,T]$ and $c_t\in L^{p}_+(\mathcal{F}_t)$, $\rho_t(c_t)=c_t$.
\item Convexity:  $d\mathbb{P}\times dt$-a.s., $\rho_t(\lambda X+(1-\lambda)Y)\leq \lambda\rho_t(X)+(1-\lambda)\rho_t(Y)$ for any  $X,Y\in L^{p}(\mathcal{F}_T),$ and $\lambda\in L^{p}(\mathcal{F}_t),$
    with $\lambda\in L^p(\mathcal{F}_t)$ valued in $[0,1].$
    \item Sublinearity: $d\mathbb{P}\times dt$-a.s., $\rho_t$ satisfies positive homogeneity and subadditivity, i.e., \\
    \mbox{$\rho_t(X+Y)\leq \rho_t(X)+\rho_t(Y) \ \ \forall X,Y\in L^{p}(\mathcal{F}_T).$}
\end{itemize}
The same properties can be defined for a risk measure $\rho_t:L^p_+(\mathcal{F}_T)\to L^p_+(\mathcal{F}_t)$.
It is worth mentioning that for normalized risk measures once any two axioms among positive homogeneity, convexity, and sublinearity are satisfied, the remaining axiom is automatically implied (cf.\ \cite{FR02}).

For further details on the financial implications of the properties listed above, interested readers are referred to \cite{RG06, ELKR09, BEK05}.

\subsection{Dynamic risk measures and BSDEs}
In this subsection, we recall the relation between monetary risk measures and BSDEs, in particular, the so-called $g$-expectations introduced by \cite{P97}; see \cite{P05,BEK05,RG06,LS14}, among others.

Let $\left( W_{t}\right)_{t\in [0,T]}$ be a standard $n$-dimensional Brownian motion defined on the probability space $\left(\Omega,\mathcal{F},\mathbb{P}\right)$ and let the filtration $\left(\mathcal{F}_{t}\right)_{t\in [0,T]}$ be the augmented filtration associated to that generated by $\left(W_{t}\right) _{t\in [0,T]}$.

Consider a function
\begin{equation*}
\begin{array}{rl}
g:  \Omega \times \left[ 0,T\right] \times \mathbb{R}\times \mathbb{R}^{n}& \rightarrow \mathbb{R}\smallskip \\
 \left( \omega ,t,y,z\right) & \mapsto g\left( \omega ,t,y,z\right),
\end{array}
\end{equation*}
satisfying the following `$L^2$-standard assumptions':

\begin{itemize}
        \item $g$ is a $\mathcal{P}\times\mathcal{B}(\mathbb{R})\times\mathcal{B}(\mathbb{R}^n)$-measurable stochastic process with $\mathcal{P}$ the $\sigma$-algebra generated by predictable sets on $\Omega\times[0,T]$.
        \item There exists $k>0$ such that, $d\mathbb{P}\times dt$-a.s.,
        \begin{align*}
            |g(\omega,t,y_1,z_1)-g(\omega,t,y_2,z_2)|\leq k(|y_1-y_2|+|z_1-z_2|) \ \forall (y_1,z_1),(y_2,z_2)\in \mathbb{R}\times\mathbb{R}^n.
            \end{align*}
        \item $g(\cdot,0,0)\in L^2_{\mathcal{F}}(T;\mathbb{R})$.
\end{itemize}
Given a terminal condition $X \in L^{2}\left( \mathcal{F}_{T}\right)$, consider now the following BSDE:
\begin{equation}
\begin{cases}
-dY_{t}=g(t,Y_{t},Z_{t})dt-Z_{t}dW_{t},\text{ }\forall t\in \left[ 0,T\right];  \\
Y_{T}=X.
\end{cases}
\label{eq:bsde}
\end{equation}
We recall that for any $X\in L^{2}\left( \mathcal{F}_{T}\right)$, the BSDE \eqref{eq:bsde}
has a unique solution \mbox{$\left(Y,Z\right)\in L_{\mathcal{F}}^{2}\left( T;\mathbb{R}\right) \times L_{\mathcal{F}}^{2}\left( T;\mathbb{R}^{n}\right) $}
(see \cite{PP90}).
Under `$L^{\infty}$-standard assumptions', which additionally require $g(\cdot,0,0)\in\mbox{BMO}(\mathbb{P})$ and {$X\in L^{\infty}(\mathcal{F}_T)$}, the solution $(Y,Z)$ to the BSDE \eqref{eq:bsde} gains the further regularity $(Y,Z)\in L^{\infty}(\mathcal{F}_t)\times\mbox{BMO}(\mathbb{P})$ (see \cite{RZ23}).
A BSDE may be viewed as a representation of the dynamic programming principle in continuous-time; see \cite{PP90,EPQ97,LQ03,HJ07} and also the early \cite{B76}.
Furthermore, the $Y$-component of such a solution is called a conditional $g$-expectation (\cite{P97}).
Namely, given the solution $(Y_t^X, Z_t^X)$ associated to $X$, the $g$-expectation of $X$ is defined by
$\mathscr{E}_{g}\left[ X\right] \triangleq Y_{0}^{X}$,
while the conditional $g$-expectation of $X$ at time $t$ (for $t\in \left[ 0,T\right]$) is defined by
$\mathscr{E}_{g}\left[ \left. X\right| \mathcal{F}_{t}\right] \triangleq Y_{t}^{X}$.
When $g(t,y,z)=\mu|z|$, with $\mu>0$, $\mathscr{E}_g$ is usually denoted by $\mathscr{E}^{\mu}$.
See \cite{P97} for further details.

Our focus in the dynamic case will be on risk measures that have dynamics described in terms of BSDEs.
Therefore, let us review some properties of risk measures induced through BSDEs.
A dynamic risk measure, denoted by $\rho_t$, is defined through BSDEs as the first component of the solution to Equation \eqref{eq:bsde}.
It is important to note that the comparison theorem for BSDEs (cf.\ Theorem~2.2 in \cite{EPQ97}) guarantees that monotonicity of the first component of the solution to Equation \eqref{eq:bsde} with respect to the final condition $X$ is automatically satisfied when we assume $L^2$- or $L^{\infty}$-standard assumptions.
Consequently, solutions to BSDEs always serve as dynamic risk measures, in accordance with Definition~\ref{def:dmr}.

The following list provides some additional properties of risk measures defined through BSDEs; it makes explicit the relation between the driver $g$ and the corresponding $g$-expectation.
\begin{enumerate}
    \item Normalization: if $g(\cdot,0,0)\equiv 0$ then $\rho_t({0})=0$.
    \item Convexity: if $g$ is convex in $(y,z)$ then $\rho_t$ is convex.
    \item Positive homogeneity: if $g$ is positively homogeneous in $(y,z)$ then $\rho_t$ is positively homogeneous.
    \item Cash-additivity: if $g$ does not depend on $y$ then $\rho_t$ is cash-additive.
    \item Cash-subadditivity: if $g$ is non-increasing in $y$ then $\rho_t$ is cash-subadditive.
\end{enumerate}

Any risk measure induced via BSDEs is automatically time-consistent and regular (see \cite{P05}).
Nevertheless, dynamic risk measures not based on BSDEs might not satisfy time-consistency and/or regularity properties in general.
However, for monetary convex risk measures, time-consistency can be characterized in terms of dual representations and acceptance sets (see \cite{BN08,MGRO15}).
On the other hand, regularity is guaranteed for risk measures satisfying cash-subadditivity, as shown in Lemma \ref{REGCS}.

It is important to note that if the driver function satisfies the stronger property $g(\cdot,\cdot,0)\equiv 0$, then the converse implications of items 2-4 are also true. Specifically, under the assumption $g(\cdot,\cdot,0)\equiv 0$, convexity, positive homogeneity, and cash-additivity of $\rho_t$ imply that the driver function $g$ is convex, positively homogeneous, and independent of $y$, respectively.
These results have been established in \cite{J08}.

\subsection{Dual representation of convex risk measures}
In this subsection, we recall some relevant duality results that will be used later.
A cash-additive and convex risk measure, generated using BSDEs with a driver $g$ that satisfies $L^2$-standard assumptions and is independent of $y$, can be expressed in the following manner (see \cite{BEK05}):
    \begin{equation}
\rho_t(X)=\essmax_{\mu_t\in\mathcal{A}}\mathbb{E}_{\mathbb{Q}^{\mu}_t}\left[X-\int_t^TG(s,\mu_s)ds\Big|\mathcal{F}_t\right], \ \ \forall X\in L^{2}(\mathcal{F}_T),\ t\in[0,T],
\label{eq:dualSA}
    \end{equation}
    where $G:\R^n\to\R\cup\{+\infty\}$ defined by
    $$G(t,\mu):=\sup_{z\in\mathbb{R}^n}\left\{-\langle \mu,z\rangle-g(t,z)\right\}, \ \ \mu\in\R^n,$$ is the Fenchel transformation of the driver $g$ and $$\mathcal{A}:=\left\{(\mu_t)_{t\in[0,T]} \mbox{ predictable process}: G(t,\mu_t)<+\infty, 0\leq ||\mu_t||\leq k \right\},$$ with $k>0$ the Lipschitz constant of $g$.
    The measure $\mathbb{Q}^{\mu}_t$ is given by $$\mathbb{E}\left[\frac{d\mathbb{Q}^{\mu}_t}{d\mathbb{P}}\Big|\mathcal{F}_t\right]=\mathfrak{E}(\mu\cdot W)_t:=\exp\left(-\frac{1}{2}\int_0^t|\mu_s|^2ds-\int_0^t\mu_sdW_s\right)\in L^2(\mathcal{F}_t).$$
    Now we proceed to the dual representation of a cash-subadditive risk measure induced via BSDEs with a driver $g$ satisfying $L^2$-standard assumptions and that is non-increasing w.r.t.~$y$ (see \cite{ELKR09}):
    \begin{equation}
\rho_t(X)=\essmax_{(\beta_t,\mu_t)\in\mathcal{A}'}\mathbb{E}_{\mathbb{Q}^{\mu}_t}\left[D^{\beta}_{t}X-\int_t^TD_{t,s}^{\beta}G(s,\beta_s,\mu_s)ds\Big|\mathcal{F}_t\right],
\label{eq:dualCSA}
    \end{equation}
for any $X\in L^{2}(\mathcal{F}_T)$ and $ t\in[0,T]$, where $G:\mathbb{R}\times\R^n\to\R\cup\{+\infty\}$ defined by
$$G(t,\beta,\mu):=\sup_{(y,z)\in\mathbb{R}\times\mathbb{R}^n}\left\{-\beta y-\langle \mu,z\rangle-g(t,y,z)\right\}, \ \  (\beta,\mu)\in\mathbb{R}\times\mathbb{R}^n,$$ is the Fenchel transformation of the driver $g$ and  $$\mathcal{A}':=\left\{(\beta_t,\mu_t)_{t\in[0,T]} \ \mbox{predictable processes}: G(t,\beta_t,\mu_t)<+\infty, 0\leq \beta_t+|\mu_t|\leq k\right\},$$ with $k$ the Lipschitz constant of $g$ and $D^{\beta}_{t,s}:=e^{-\int_t^s\beta_udu}$ with the convention $D^{\beta}_{t,T}:=D^{\beta}_t$.
If we assume the stronger $L^{\infty}$-standard assumptions, the previous representations hold with $\rho_t(X)\in L^{\infty}(\mathcal{F}_t)$ for any $t\in[0,T]$.
In addition, in this case we also have that the density process is bounded: $\mathbb{Q}^{\mu}_t\in L^{\infty}(\mathcal{F}_t)$.
Moreover, when we deal with convex drivers satisfying $L^2$-standard assumptions (no cash-additivity or cash-subadditivity), we have the following representation valid for any $t\in[0,T]$ and $X\in L^2(\mathcal{F}_T)$, as shown in Proposition~3.1 of \cite{DKRT14}:\footnote{In our current setting, we have slightly different assumptions compared to those in Proposition~3.1 of \cite{DKRT14}.
However, it is worth noting that sublinearity is immediately verified under standard assumptions on the driver $g$, in particular it is implied by Lipschitz continuity.
Furthermore, it is sufficient for $g(\cdot,0,0)\in\mbox{BMO}(\mathbb{P})$ to obtain the same result as Proposition~3.1 of \cite{DKRT14}.
We also emphasize that the positivity of $g$ is not a necessary assumption, as it is not involved in the proof of Proposition~3.1 in \cite{DKRT14}.}
\begin{equation}
\rho_t(X)=\esssup_{(\beta,q)\in\mathcal{G}\times\mathcal{Q}^2}\mathbb{E}_{\mathbb{Q}^{q}_t}\left[D^{\beta}_{t}X-\int_t^TD^{\beta}_{t,s}G(s,\beta_s,q_s)ds\Big|\mathcal{F}_t\right],
\label{eq:conrep}
\end{equation}
where
\begin{align*}
    &\mathcal{Q}^p:=\left\{q\in L^{2}_{\mathcal{F}}(T;\R^n):  \exp\left(\int_0^Tq_sdW_s-\frac{1}{2}\int_0^T|q_s|^2ds\right)\in L^{p}(\mathcal{F}_T) \right\}, \\
    &\mathcal{G}:= \Bigg\{(\beta_t)_{t\in[0,T]}\mbox{ predictable process}:  \int_0^T\beta_s^-ds\in L^{\infty}(\mathcal{F}_T),\int_0^T\beta^+_sds<+\infty\Bigg\},
\end{align*}
with $p\in[1,+\infty]$.
Furthermore, Lipschitzianity of $g$ ensures that $\beta$ and $q$ are bounded processes in the representation \eqref{eq:conrep}.
If the driver $g$ satisfies $L^{\infty}$-standard assumptions and $X\in L^{\infty}(\mathcal{F}_T)$, then Equation \eqref{eq:conrep} holds with $\esssup$ taken over $(\beta,q)\in \mathcal{G}\times\mathcal{Q}^{\infty}$. Furthermore, the $\esssup$ is always attained for some $(\bar{\beta},\bar{q})\in \mathcal{G}\times\mathcal{Q}^{\infty}$.

\setcounter{equation}{0}

\section{Dynamic Convex Return Risk Measures}\label{sec:con}

This section presents results on dynamic convex return risk measures.
We analyze in particular whether there exist non-trivial dynamic return risk measures induced by BSDEs that are convex but not cash-additive.
We answer this question positively.
We also demonstrate that convexity is both a powerful and a restrictive axiom in the context of return risk measures:
every convex return risk measure is shown to be both cash-subadditive and sublinear,
somewhat similar to the well-known convexity implications for normalized monetary risk measures.
This will later serve as a motivation to relax the convexity assumption and develop a more general and flexible theory.

We first present some results that apply to general dynamic convex return risk measures.
Subsequently, we narrow down our focus to the specific case of return risk measures induced via BSDEs.

    \begin{proposition}
            Let $\rho_t:L^{\infty}(\mathcal{F}_T)\to L^{\infty}(\mathcal{F}_t)$ be a monotone, convex, continuous from below and positively homogeneous functional such that $\rho_t(1)=1 \ \forall t\in[0,T]$.
            Then
            \begin{enumerate}
            \item $\rho_t(0)=0 \ \ \forall t\in[0,T]$ a.s.,
            \item $\rho_t$ admits the dual representation:
        \begin{equation*}
            \rho_t(X)=\esssup_{(D,\mathbb{Q})\in \mathcal{D}\times\mathcal{Q}}\left\{D_t\mathbb{E}_{\mathbb{Q}_t}[X|\mathcal{F}_t]\right\}, \ \forall X\in L^{\infty}(\mathcal{F}_T), \ t\in[0,T],
        \end{equation*}
         where
         \begin{align*}
         &\mathcal{D}:=\{D:[0,T]\times\Omega\to[0,1], \ \ \mathcal{F}_t\mbox{-adapted stochastic processes}\}, \\
         &\mathcal{Q}:=\left\{\mathbb{Q}_t \mbox{ probability measures on } (\Omega,\mathcal{F}_T): \mathbb{Q}_t\ll\mathbb{P} \mbox{ and } \mathbb{Q}_t|_{\mathcal{F}_t}\equiv\mathbb{P}\right\},
         \end{align*}
        \item $\rho_t$ is cash-subadditive.
         \end{enumerate}
         \label{DynamicReturn}
        \end{proposition}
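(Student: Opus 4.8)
The plan is to prove the three statements in order, exploiting the interplay between positive homogeneity and convexity together with the standard duality theory for convex risk measures on $L^\infty$.

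\medskip

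\noindent\textbf{Step 1: $\rho_t(0)=0$.} This follows directly from positive homogeneity applied with scalar $0$: since $\rho_t(\alpha X)=\alpha\rho_t(X)$ holds in particular for the constant $\alpha\equiv 0$, we get $\rho_t(0)=\rho_t(0\cdot 1)=0\cdot\rho_t(1)=0$. (One should be slightly careful that positive homogeneity as stated covers $\alpha\in L^p_+(\mathcal F_t)$, which includes $\alpha=0$; alternatively invoke convexity to write $\rho_t(0)=\rho_t(\tfrac12\cdot 0+\tfrac12\cdot 0)\le\rho_t(0)$ and use $\rho_t(\lambda\cdot 1)=\lambda$ for $\lambda\in[0,1]$, which forces $\rho_t(0)\le 0$, combined with monotonicity against a positive multiple of $1$ to get $\rho_t(0)\ge 0$.) The first reading is cleanest.

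\medskip

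\noindent\textbf{Step 2: the dual representation.} The idea is to combine positive homogeneity with convexity. A monotone convex functional on $L^\infty$ that is continuous from below admits a robust (Fenchel--Moreau type) representation
\[
\rho_t(X)=\esssup_{\mathbb Q\in\mathcal Q_t}\bigl\{\mathbb E_{\mathbb Q}[X\mid\mathcal F_t]-\alpha_t(\mathbb Q)\bigr\},
\]
where the penalty $\alpha_t$ is the conjugate of $\rho_t$ restricted to the relevant set of measures (here I would invoke the dynamic version of the Fenchel--Moreau representation for convex risk measures, as developed for cash-additive functionals but valid without cash-additivity once one replaces translation invariance by the weaker structure available; alternatively one can normalize and use that $\rho_t$ is automatically cash-subadditive by Step 3 and apply the representation \eqref{eq:dualCSA}-type theory). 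For a \emph{positively homogeneous} $\rho_t$ the penalty function can only take the values $0$ and $+\infty$: indeed, for $\alpha>0$, $\rho_t(\alpha X)=\alpha\rho_t(X)$ forces $\alpha_t(\mathbb Q)=\alpha\,\alpha_t(\mathbb Q)$ on the effective domain, hence $\alpha_t(\mathbb Q)\in\{0,+\infty\}$; the representation then collapses to a sup over a set of measures. To recover the extra multiplicative factor $D_t\in[0,1]$, note that $\rho_t$ is a priori defined on all of $L^\infty(\mathcal F_T)$ but need not satisfy $\mathbb Q_t|_{\mathcal F_t}\equiv\mathbb P$ for the optimal measures; one disintegrates each admissible $\mathbb Q$ by writing the Radon--Nikodym density process, and the part of the density at time $t$ that is \emph{not} normalized to $1$ is absorbed into the scalar process $D_t$. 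Concretely, any element $X\mapsto D_t\,\mathbb E_{\mathbb Q_t}[X\mid\mathcal F_t]$ with $D_t\in[0,1]$ and $\mathbb Q_t\in\mathcal Q$ is a monotone, positively homogeneous, continuous-from-below functional dominated by $\rho_t$ (using $\rho_t(1)=1$ to get $D_t\le 1$ via evaluating at $X=1$), so the $\esssup$ over $\mathcal D\times\mathcal Q$ is $\le\rho_t$; conversely the collapsed robust representation shows every such $\rho_t(X)$ is attained (or approximated) by such a functional, giving the reverse inequality. Monotonicity of $\rho_t$ is what forces $D_t\ge 0$, and $\rho_t(1)=1$ together with $\mathbb Q_t|_{\mathcal F_t}\equiv\mathbb P$ forces the bound $D_t\le 1$.

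\medskip

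\noindent\textbf{Step 3: cash-subadditivity.} Given the dual representation from Step 2, for $c\in L^\infty_+(\mathcal F_t)$,
\[
\rho_t(X+c)=\esssup_{(D,\mathbb Q)}D_t\bigl(\mathbb E_{\mathbb Q_t}[X\mid\mathcal F_t]+c\bigr)
=\esssup_{(D,\mathbb Q)}\bigl(D_t\mathbb E_{\mathbb Q_t}[X\mid\mathcal F_t]+D_tc\bigr)\le\esssup_{(D,\mathbb Q)}D_t\mathbb E_{\mathbb Q_t}[X\mid\mathcal F_t]+c=\rho_t(X)+c,
\]
where we used $D_t\le 1$ and $c\ge 0$ (so $D_tc\le c$) together with subadditivity of $\esssup$. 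Alternatively, and without invoking Step 2, one can argue directly: by convexity and positive homogeneity, $\rho_t$ is sublinear on $L^\infty$ (convex $+$ positively homogeneous $\Rightarrow$ subadditive), and subadditivity plus $\rho_t(c)\le c$ for $c\in L^\infty_+(\mathcal F_t)$ — which itself follows from positive homogeneity $\rho_t(c)=c\,\rho_t(1)=c$, actually giving equality — yields $\rho_t(X+c)\le\rho_t(X)+\rho_t(c)=\rho_t(X)+c$. This direct route is shorter and avoids any circularity with Step 2.

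\medskip

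\noindent\textbf{Main obstacle.} The delicate part is Step 2, specifically making rigorous the passage from the standard robust representation of convex risk measures (which is classically stated for cash-additive or at least normalized monetary functionals) to the present non-cash-additive, positively homogeneous setting, and correctly identifying the scalar multiplier process $D$ as arising from the non-normalized part of the dual densities at time $t$. One must be careful that the set of dual objects is exactly $\mathcal D\times\mathcal Q$ and not something larger or smaller; the bound $D_t\in[0,1]$ must be extracted cleanly from $\rho_t(1)=1$ and monotonicity. The continuity-from-below hypothesis is exactly what guarantees the $\esssup$ is a genuine robust representation (closedness of the penalty / no duality gap), so it must be used at the point where one passes from the bipolar-type inequality to the identity.
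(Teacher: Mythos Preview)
Your Steps~1 and~3 are fine and essentially match the paper; in fact your direct argument for cash-subadditivity via sublinearity plus $\rho_t(c)=c$ is cleaner than the paper's one-line ``clear from the dual representation.''

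The genuine gap is in Step~2, and you have correctly flagged it. You invoke a ``dynamic Fenchel--Moreau representation for convex risk measures'' in the non-cash-additive setting, but no such off-the-shelf result is available: the standard dynamic dual representations (Detlefsen--Scandolo, Bion-Nadal, etc.) all lean on cash-additivity to identify the dual elements with normalized conditional densities, and your suggested fallback through the cash-subadditive BSDE representation~\eqref{eq:dualCSA} applies only to BSDE-induced functionals, not to an abstract $\rho_t$. Your disintegration idea (absorb the non-normalized part of the density at time $t$ into $D_t$) is the right intuition but is not an argument until you first \emph{have} a dual representation to disintegrate.

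The paper closes this gap by a reduction-to-static trick that you are missing. First, regularity of $\rho_t$ follows from convexity and normalization (Detlefsen--Scandolo). Then one forms the \emph{static} functional $\rho_{0,t}(X):=\mathbb{E}_{\mathbb{P}}[\rho_t(X)]$, which inherits monotonicity, convexity, positive homogeneity, continuity from below, and $\rho_{0,t}(1)=1$. At the static level the classical Fenchel--Moreau / Frittelli--Rosazza~Gianin duality (Corollary~7 of \cite{FR02}) applies directly: positive homogeneity kills the penalty, and positive constancy $\rho_{0,t}(\eta)=\eta$ for $\eta\ge 0$ forces every dual element $X'$ to satisfy $X'(1)\le 1$, so the dual set is exactly the sub-probability measures $\mu\ll\mathbb{P}$ with $\mu(\Omega)\le 1$ (continuity from below upgrades finitely additive to $\sigma$-additive). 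Finally, the passage from the static representation of $\rho_{0,t}$ back to the conditional representation of $\rho_t$ with the pair $(D_t,\mathbb{Q}_t)$ is carried out verbatim as in Steps~2--3 of Proposition~6 in \cite{MGRO15}, using regularity. This static detour is the key idea that makes the argument rigorous.
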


        \begin{corollary}
            If $\rho_t:L^{\infty}_{+}(\mathcal{F}_T)\to L^{\infty}_{+}(\mathcal{F}_t)$ is a monotone, convex, continuous from below and positively homogeneous functional such that $\rho({1})=1$, then it is a sublinear and cash-subadditive return risk measure normalized at $1$.
            \label{DynamicReturn1}
        \end{corollary}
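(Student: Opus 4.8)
The plan is to derive Corollary~\ref{DynamicReturn1} directly from Proposition~\ref{DynamicReturn} together with the elementary remark, already recorded in the text, that for a normalized risk measure any two of the three axioms \{positive homogeneity, convexity, sublinearity\} imply the third (cf.\ \cite{FR02}). So the only substantive work is a reduction: show that a functional $\rho_t$ defined on $L^{\infty}_+(\mathcal{F}_T)$ with values in $L^{\infty}_+(\mathcal{F}_t)$, monotone, convex, continuous from below and positively homogeneous with $\rho_t(1)=1$, can be extended (or its hypotheses transferred) so that the conclusions of Proposition~\ref{DynamicReturn} — which is stated for functionals on the full space $L^{\infty}(\mathcal{F}_T)$ — become applicable. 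Once that is done, cash-subadditivity is item 3 of the Proposition, normalization at $1$ is the hypothesis, positive homogeneity plus convexity plus normalization give sublinearity by the cited remark, and we are finished.

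First I would address the domain issue. The natural move is to note that positive homogeneity on $L^{\infty}_+(\mathcal{F}_T)$ together with $\rho_t(1)=1$ gives $\rho_t(0)=\rho_t(0\cdot 1)=0$ immediately (take $\alpha = 0$), so item 1 is automatic and does not even need the extension. For the dual representation and cash-subadditivity, I would argue that all the computations in the proof of Proposition~\ref{DynamicReturn} that establish items 2 and 3 only ever evaluate $\rho_t$ at nonnegative arguments when the starting point is a nonnegative $X$: convexity is used along segments $[X,Y]$ with $X,Y\ge 0$, positive homogeneity is used with nonnegative scalars, monotonicity compares nonnegative variables, and continuity from below is tested along increasing sequences bounded by a nonnegative variable. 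Hence the restriction of the Proposition's conclusions to $L^{\infty}_+(\mathcal{F}_T)$ holds verbatim for our $\rho_t$. Alternatively — and perhaps cleaner to write — one extends $\rho_t$ to all of $L^{\infty}(\mathcal{F}_T)$ by $\bar\rho_t(X):=\rho_t(X^+)$ or by the standard sup-extension preserving convexity and monotonicity; one then checks the extended functional satisfies the hypotheses of Proposition~\ref{DynamicReturn} and that its restriction to $L^{\infty}_+$ recovers $\rho_t$, so the representation and cash-subadditivity descend.

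Having secured applicability, the remaining steps are short. Cash-subadditivity of $\rho_t$ on $L^{\infty}_+(\mathcal{F}_T)$: for $X\ge 0$ and $c\in L^{\infty}_+(\mathcal{F}_t)$ we have $X+c\ge 0$, and item 3 of Proposition~\ref{DynamicReturn} (suitably restricted as above) gives $\rho_t(X+c)\le\rho_t(X)+c$. Sublinearity: with $\rho_t(0)=0$ the functional is normalized, so by the remark recalled after the axiom list (``for normalized risk measures once any two axioms among positive homogeneity, convexity, and sublinearity are satisfied, the remaining axiom is automatically implied'', cf.\ \cite{FR02}), positive homogeneity and convexity together force subadditivity, hence sublinearity. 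Finally, monotonicity and positive homogeneity with $\rho_t(1)=1$ make $\rho_t$ a return risk measure normalized at $1$ in the sense of Definition~\ref{def:dmr}.

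The main obstacle I anticipate is purely the bookkeeping around the change of domain: Proposition~\ref{DynamicReturn} is phrased on $L^{\infty}(\mathcal{F}_T)$, whereas a genuine return risk measure lives on $L^{\infty}_+(\mathcal{F}_T)$, and one must make sure the extension (or restriction argument) does not secretly use cash-additivity, translation invariance, or evaluation of $\rho_t$ at signed variables anywhere — consistent with the paper's emphasis that the absence of cash-additivity is exactly what makes the arguments delicate. No deep new idea is needed beyond Proposition~\ref{DynamicReturn} and the $\mathrm{FR}02$ implication; the content of the corollary is the observation that convexity is simultaneously powerful and restrictive for return risk measures, collapsing them into the sublinear, cash-subadditive class.
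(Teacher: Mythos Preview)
Your proposal is correct and uses essentially the same ingredients as the paper --- Proposition~\ref{DynamicReturn} for cash-subadditivity and the \cite{FR02} implication for sublinearity --- but the framing is mirror-image. The paper's one-line proof reads the corollary as a statement about the \emph{restriction} of a Proposition~\ref{DynamicReturn} functional to $L^\infty_+$; it therefore takes the domain issue for granted and only checks the codomain (that the dual representation $\esssup_{(D,\mathbb{Q})}D_t\,\mathbb{E}_{\mathbb{Q}_t}[X\mid\mathcal{F}_t]$ sends nonnegative $X$ to nonnegative values, which is immediate). You instead read the corollary as starting from a functional defined only on $L^\infty_+$, take the codomain as given, and work to make Proposition~\ref{DynamicReturn} applicable --- your extension $\bar\rho_t(X):=\rho_t(X^+)$ does this cleanly (it preserves monotonicity, convexity, continuity from below, positive homogeneity and $\bar\rho_t(1)=1$, and restricts back to $\rho_t$). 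Either direction closes the loop; your version is arguably more faithful to the corollary as literally stated, while the paper's is terser because it implicitly views the corollary as a codicil to the proposition.
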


Now let us shift our focus to the setting of dynamic convex risk measures induced through BSDEs.
We assume $L^{\infty}$-standard assumptions for the driver $g$, however it is important to note that all the theses of the next proposition remain valid in the case of $L^2$-standard assumptions.
If we want the return risk measure $\rho_t$ to have a convexity property, we can choose a driver $g$ that is convex in $(y,z)$.
    It is well-known (cf.\ \cite{FR02,LS13}) that convexity, continuity from below, which is satisfied by any convex solution of a BSDE, and constancy of $\rho_t$ imply the cash-additivity property.
    However, in the context of return risk measures, as shown in Remark~\ref{Rconstancy}, we only have {positive}-constancy and we wonder whether such property still implies a corresponding `positive cash-additive' property.
    The following proposition answers this question negatively.
       We can represent convex return risk measures using a BSDE with a positively homogeneous and convex driver $g$, which satisfies $g(\cdot,1,0)\equiv0$.
       Unlike in the case of cash-additivity, the driver $g$ may also depend on $y$.
       Thus, the next proposition shows that we are able to build dynamic convex return risk measures induced via BSDEs.

            \begin{proposition}
            Suppose $g:\Omega\times [0,T]\times\mathbb{R}\times\mathbb{R}^n\to\mathbb{R}_+$ satisfies the $L^{\infty}$-standard assumptions, as well as positive homogeneity, convexity in $(y,z)$ and $g(\cdot,1,0)\equiv0$.
            Then the first component of the solution to Equation~\eqref{eq:bsde}, $\rho_t:L^{\infty}_+(\mathcal{F}_T)\to L^{\infty}_+(\mathcal{F}_t)$,  defines a sublinear, positive-constant, and cash-subadditive return risk measure for any \mbox{$X\in L^{\infty}_+(\mathcal{F}_T)$.}
            \label{PRP}
            \end{proposition}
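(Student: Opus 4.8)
The plan is to verify each claimed property of $\rho_t$ by translating it into a corresponding property of the driver $g$ via the BSDE correspondence, using the structural facts recalled in Section~\ref{sec:prel} (items 1--5 relating $g$ to $\rho_t$) together with Corollary~\ref{DynamicReturn1}. First I would check well-posedness and the codomain: since $g$ satisfies the $L^\infty$-standard assumptions and $g\geq 0$, the comparison theorem gives that for $X\in L^\infty_+(\mathcal{F}_T)$ the solution $Y_t=\rho_t(X)$ dominates the solution with null driver and terminal condition $X\geq 0$, hence $\rho_t(X)\in L^\infty_+(\mathcal{F}_t)$; so $\rho_t$ indeed maps $L^\infty_+(\mathcal{F}_T)$ into $L^\infty_+(\mathcal{F}_t)$. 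Monotonicity is automatic from the comparison theorem.

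Next I would establish the three remaining properties directly at the level of $g$. Convexity of $g$ in $(y,z)$ yields convexity of $\rho_t$ (item 2). Positive homogeneity of $g$ in $(y,z)$ yields positive homogeneity of $\rho_t$ (item 3); combined with convexity, normalization (which I address next) forces sublinearity by the remark after the axiom list (``any two of positive homogeneity, convexity, sublinearity imply the third'' for normalized measures). For the normalization-type property I would use $g(\cdot,1,0)\equiv 0$: the constant process $Y_t\equiv 1$, $Z_t\equiv 0$ solves the BSDE with terminal condition $X=1$ because $-dY_t = g(t,1,0)\,dt - 0\cdot dW_t = 0$; by uniqueness $\rho_t(1)=1$. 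Positive-homogeneity then upgrades this to positive-constancy: for $c_t\in L^\infty_+(\mathcal{F}_t)$ one wants $\rho_t(c_t)=c_t$. Here I would invoke regularity of BSDE-induced risk measures (stated in Section~\ref{sec:prel}) to reduce to deterministic constants via an approximation of $c_t$ by $\mathcal{F}_t$-measurable simple functions, on each atom using positive homogeneity of $\rho_t$ applied to $\rho_t(1)=1$; some care is needed because positive homogeneity is stated with $\mathcal{F}_t$-measurable scalars, which is exactly what makes the reduction work. Cash-subadditivity then follows because positive homogeneity plus convexity plus normalization at $1$ imply it: indeed this is precisely the content of Corollary~\ref{DynamicReturn1}, whose hypotheses (monotone, convex, continuous from below, positively homogeneous, $\rho(1)=1$) are all now verified — continuity from below being automatic for convex BSDE solutions, as noted in the text preceding the proposition.

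A cleaner route for the last step is to observe that all hypotheses of Corollary~\ref{DynamicReturn1} hold for $\rho_t$, so that sublinearity and cash-subadditivity (and normalization at $1$) come for free; the only work specific to this proposition is (i) checking $\rho_t(1)=1$ from $g(\cdot,1,0)\equiv 0$, (ii) checking positive homogeneity of $\rho_t$ from that of $g$, and (iii) deriving positive-constancy. The main obstacle I anticipate is step (iii): passing from $\rho_t(1)=1$ and positive homogeneity to $\rho_t(c_t)=c_t$ for arbitrary (not merely deterministic) $c_t\in L^\infty_+(\mathcal{F}_t)$ requires the regularity property $\mathbb{I}_A\rho_t(X)=\mathbb{I}_A\rho_t(\mathbb{I}_A X)$ together with a limiting argument, and one must be slightly delicate about whether $c_t$ can vanish on a set of positive measure (on $\{c_t=0\}$ one needs $\rho_t(0)=0$, which is item 1 of Proposition~\ref{DynamicReturn}, itself following from positive homogeneity by taking $\alpha\to 0$ and using continuity, or directly from $g\ge 0$ with $g(\cdot,0,0)$ possibly nonzero — so one should note that $g(\cdot,1,0)\equiv 0$ and positive homogeneity force $g(\cdot,0,0)\equiv 0$ as well, giving $\rho_t(0)=0$ by item 1). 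Everything else is a routine dictionary translation between driver properties and risk-measure properties.
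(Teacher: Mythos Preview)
Your proposal is correct and follows essentially the same route as the paper: transfer positivity, convexity, positive homogeneity, and $\rho_t(1)=1$ from the driver to $\rho_t$, then invoke Corollary~\ref{DynamicReturn1} to obtain sublinearity and cash-subadditivity. The only simplification you are missing is that positive-constancy is a one-liner---since positive homogeneity in Definition~\ref{def:dmr} already allows $\mathcal{F}_t$-measurable scalars, $\rho_t(c_t)=c_t\rho_t(1)=c_t$ directly, with no need for regularity or simple-function approximation (the paper makes this explicit in Remark~\ref{Rconstancy}).
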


It is important to emphasize that, throughout this section, we have not imposed the stronger condition $g(t, y, 0) \equiv 0$ for all $y \in \mathbb{R}$.
As a result, convexity (sublinearity) can co-exist with a driver that depends on $y$.
Hence, cash-additivity is not valid in general, and instead, it is replaced by cash-subadditivity.

Furthermore, it is worth noting that in the dynamic setting, the intersection between convex risk measures (restricted to $L^{\infty}_{+}(\mathcal{F}_T)$) and convex return risk measures normalized at 1 occurs when considering sublinear and cash-subadditive risk measures.
This is in line with the fact that convexity cannot be assumed independently of sublinearity in the context of return risk measures, due to the positive homogeneity property.
This extends the findings from the static case (see Figure~1 in \cite{LR22}), where the intersection between convex, normalized, cash-additive risk measures and return risk measures normalized at 1 is given by coherent risk measures (i.e., convex, positively homogeneous and cash-additive risk measures).

\setcounter{equation}{0}

\section{Star-Shaped Risk Measures}\label{sec:star}

As we have demonstrated in Section~\ref{sec:con}, convex return risk measures are always cash-subadditive and sublinear.
Although convexity and sublinearity are interesting properties for return risk measures, it is important to note that the only assumptions in their definition are monotonicity and positive homogeneity.
In some cases, such as when the market includes illiquid assets, only monotonicity and star-shapedness may be relevant.
More information on this can be found in \cite{LR22}.
Due to these reasons and motivated by the findings in Section~\ref{sec:con}, we now develop a general theory that includes convex return risk measures only as a particular case.

In light of these considerations, we aim to provide a characterization of risk measures that are star-shaped (or positively homogeneous).
Specifically, we show that every functional satisfying star-shapedness can be represented as the minimum of convex functionals, and we also provide an equivalent min-max representation.
For the sake of generality, our framework includes both positively and negatively valued random variables.
When working with return risk measures defined only on positive random variables, one can simply restrict the domain without altering the general structure.
Additionally, our results hold for spaces more general than $L^{\infty}$, which is the space usually employed in the related literature.

To achieve our goal, we introduce some definitions that will be used in the following.
Henceforth, let $\mathcal{X}\subseteq \bar{L}^0(\Omega,\mathcal{F},\mathbb{P})$ where $\bar{L}^0(\Omega,\mathcal{F},\mathbb{P})$ is the space of extended real-valued random variables.
We always assume that $\mathcal{X}$ includes the constants.
We need some topological assumptions: $\mathcal{X}$ will be a locally convex\footnote{By a locally convex space we mean a separated topological vector space that admits a neighborhood basis consisting of convex sets.
In the following, we refer to this topology by $\tau$.} Fr\'echet lattice w.r.t.~the usual order relation between random variables, which is also order complete and order separable (see \cite{BF10} and references therein for a thorough discussion about these properties).
We say that $X_n\xrightarrow{o}X$ if $|X_n-X|\leq Y_n$ with $Y_n\leq Y_{n+1}$ for any $n\in\mathbb{N}$ and s.t.\ the infimum of $Y_n$ is $0$, where $|X|=X\vee -X$ ($\vee$ is the usual lattice operation, i.e., the maximum between random variables).
We denote with $\mathcal{X}^*$ the order continuous dual of $\mathcal{X}$ and with $\sigma(\mathcal{X},\mathcal{X}^*)$ the corresponding weak-topology.
We indicate the duality between $\mathcal{X}$ and $\mathcal{X}^*$ with $\langle \cdot , \cdot \rangle$.
We say that $f:\mathcal{X}\to\mathbb{R}\cup\{+\infty\}$ is order lower semicontinuous if for any $X_n\xrightarrow{o}X$ it holds that $\liminf_n f(X_n)\geq f(X)$.
We always suppose that $\sigma(\mathcal{X},\mathcal{X}^*)$ has the C-property (see Definition~3 in \cite{BF10}).
We recall that $L^p$ spaces with $p\in[1,+\infty]$, Orlicz spaces and the Morse space (i.e., Orlicz heart) have all these properties.
We define the proper domain of a functional $f:\mathcal{X}\to\mathbb{R}\cup\{+\infty\}$ as the set $dom(f):=\{X\in\mathcal{X} \mid f(X)<+\infty\}$.
We present the following definition of star-shapedness, which simplifies to the concept of star-shapedness introduced in Section~\ref{sec:prel} when the functional is normalized:

\begin{definition}
    A functional $f:\mathcal{X}\to\mathbb{R}\cup\{+\infty\}$ is said to be star-shaped with respect to its value at $0$ (star-shaped for short in the following) if, for any $\lambda\in[0,1]$, the inequality
    $$f(\lambda X)\leq \lambda f(X)+(1-\lambda)f(0) \ \forall X\in\mathcal{X}$$
    holds and $f(0)<+\infty$.
\end{definition}
We begin with a simple lemma that demonstrates the preservation of star-shapedness under the operation of the infimum.
This implies, in particular, that the infimum of convex functionals is star-shaped.

\begin{lemma}
Let $\Gamma$ be a set of indexes and $(f_{\gamma})_{\gamma\in\Gamma}$ a family of star-shaped functionals $f_{\gamma}:\mathcal{X}\to\mathbb{R}\cup\{+\infty\}$ such that there exists $c\in\R$ verifying $f_{\gamma}(0)=c$ for any $\gamma\in\Gamma$. We define the functional
    $$f(X):=\inf_{\gamma\in\Gamma}f_{\gamma}(X) \ \forall X\in\mathcal{X}.$$
    Then, $f$ is star-shaped and $f(0)=c$.
    \label{Lemma:infSS}
\end{lemma}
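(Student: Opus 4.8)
The plan is to verify the two claims directly from the definition of star-shapedness, exploiting that the infimum of functions sharing a common value at $0$ behaves well under the star-shapedness inequality.

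First I would establish $f(0)=c$. Since $f_{\gamma}(0)=c$ for every $\gamma\in\Gamma$, the pointwise infimum at $X=0$ is simply $\inf_{\gamma\in\Gamma} c = c$. In particular $f(0)=c\in\R<+\infty$, so the second requirement in the definition of star-shapedness (finiteness at $0$) is automatic.

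Next I would check the star-shapedness inequality. Fix $X\in\mathcal{X}$ and $\lambda\in[0,1]$. For each $\gamma\in\Gamma$, star-shapedness of $f_{\gamma}$ gives
\[
f(\lambda X)\leq f_{\gamma}(\lambda X)\leq \lambda f_{\gamma}(X)+(1-\lambda)f_{\gamma}(0)=\lambda f_{\gamma}(X)+(1-\lambda)c,
\]
where the first inequality is just $f\leq f_{\gamma}$ and the last equality uses $f_{\gamma}(0)=c$. Since this holds for every $\gamma$, I may take the infimum over $\gamma\in\Gamma$ on the right-hand side; because $\lambda\geq 0$, $\inf_{\gamma}\big(\lambda f_{\gamma}(X)\big)=\lambda\inf_{\gamma}f_{\gamma}(X)=\lambda f(X)$, and the constant term $(1-\lambda)c$ is unaffected. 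Hence $f(\lambda X)\leq \lambda f(X)+(1-\lambda)c=\lambda f(X)+(1-\lambda)f(0)$, which is exactly star-shapedness of $f$. (If $f(X)=+\infty$ the inequality is trivial; note also that $f(X)$ could in principle be $-\infty$ if the $f_{\gamma}$ are not bounded below, but since the $f_{\gamma}$ take values in $\R\cup\{+\infty\}$ the argument above using $\lambda\inf_{\gamma}f_{\gamma}(X)=\inf_{\gamma}\lambda f_{\gamma}(X)$ remains valid as an identity in $[-\infty,+\infty]$ for $\lambda\in(0,1]$, and the case $\lambda=0$ reduces to $f(0)=c$.)

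There is essentially no serious obstacle here; the only point requiring a moment's care is the interchange of the infimum with multiplication by the nonnegative scalar $\lambda$ and the observation that the common value $c$ at $0$ is what makes the additive constant $(1-\lambda)c$ pull out cleanly — if the $f_{\gamma}(0)$ were not all equal, the infimum of $\lambda f_{\gamma}(X)+(1-\lambda)f_{\gamma}(0)$ would not in general factor into $\lambda f(X)+(1-\lambda)f(0)$. The final remark that the infimum of convex functionals (with common value at $0$) is star-shaped then follows immediately, since every convex functional $f_{\gamma}$ satisfies $f_{\gamma}(\lambda X)=f_{\gamma}(\lambda X+(1-\lambda)\cdot 0)\leq \lambda f_{\gamma}(X)+(1-\lambda)f_{\gamma}(0)$ and is thus star-shaped.
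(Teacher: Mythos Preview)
Your proof is correct and follows essentially the same idea as the paper's own proof: use star-shapedness of each $f_{\gamma}$, the common value $f_{\gamma}(0)=c$, and the fact that the infimum commutes with multiplication by a nonnegative scalar. The only cosmetic difference is that the paper first passes through the equivalent characterization in Lemma~\ref{Lemma:SS} (i.e., $f(\lambda X)\geq \lambda f(X)+(1-\lambda)f(0)$ for $\lambda\geq 1$) and then runs the same infimum argument, whereas you work directly with $\lambda\in[0,1]$; neither route offers any real advantage over the other.
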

In the following, we prove that the converse is also true: any star-shaped functional can be expressed as the minimum of convex functionals.

    \begin{proposition}[Dual representation of star-shaped functions]
       Let $f:\mathcal{X}\to\mathbb{R}\cup\{+\infty\}$ with $f(0)<+\infty$ be a star-shaped functional.
       Then $f$ can be represented as a pointwise minimum of convex and (order) lower semicontinuous functionals:
        \begin{equation}
            f(X)=\min_{Z\in dom(f)}f_Z(X) \ \ \ \forall X\in\mathcal{X},
            \label{minstarshaped}
        \end{equation}
        where $f_Z$ is defined as in Equation \eqref{convexf}.
        Moreover, it admits the min-max representation:
        \begin{equation}
            \label{minmaxstatic}
            f(X)=\min_{Z\in dom(f)}\sup_{q\in \mathcal{X}^*}\{\langle q, X \rangle-f_Z^{*}(q)\},
        \end{equation}
        where $f_Z^{*}$ is the convex conjugate of $f_Z$.

        Furthermore, if $f$ is monotone (thus it is a static star-shaped risk measure\footnote{In this setting, a functional $\rho:\mathcal{X}\to\R$ is said to be a static risk measure if it satisfies monotonicity, i.e., for any $X,Y\in\mathcal{X}$ such that $X\geq Y$ it holds that $\rho(X)\geq\rho(Y)$.}) we can find a family $(\tilde{f}_Z)_{Z\in dom(f)}$ of proper, convex and order lower semicontinuous risk measures such that the representation \eqref{minstarshaped} holds.
        Moreover, we obtain the min-max representation:
         \begin{equation}
            \label{minmaxstaticmon}
f(X)=\min_{Z\in dom(f)}\sup_{q\in\mathcal{X}^*_+}\{\langle q, X \rangle -\tilde{f}_Z^{*}(q)\},
        \end{equation}
        where $\mathcal{X}^*_+$ is the convex cone in the order continuous dual $\mathcal{X}^*$ generated by positive order continuous functionals (i.e., $\langle q,X\rangle \geq 0 \ \forall X\in\mathcal{X}$).
        \label{staticSS}
    \end{proposition}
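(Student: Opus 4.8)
The plan is to prove the representation by constructing, for each $Z\in dom(f)$, a convex minorant $f_Z$ of $f$ that agrees with $f$ at $Z$, so that the pointwise infimum recovers $f$ exactly. Since $f$ is star-shaped with $f(0)<+\infty$, a natural candidate is obtained by "convexifying along the ray through $Z$": for a fixed $Z$, the star-shapedness inequality $f(\lambda Z)\le \lambda f(Z)+(1-\lambda)f(0)$ says that on the segment $[0,Z]$ the function $\lambda\mapsto f(\lambda Z)$ lies below the affine interpolation. One then wants the smallest convex, order lower semicontinuous functional $f_Z$ that dominates the affine function $\lambda Z\mapsto \lambda f(Z)+(1-\lambda)f(0)$ (equivalently, whose value at $Z$ is at most $f(Z)$, at $0$ is at most $f(0)$) and still lies below $f$ everywhere. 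Concretely I expect Equation~\eqref{convexf} (referenced but not shown in the excerpt) to define $f_Z$ as the convex hull / lower semicontinuous convex envelope of the function taking the value $f(0)$ on a neighbourhood structure together with $f(Z)$ at $Z$ and $+\infty$ outside $dom(f)$ — or more simply $f_Z(X)=\mathrm{cl\,conv}\bigl(h_Z\bigr)(X)$ where $h_Z(X)=f(0)$ if $X=0$, $h_Z(Z')$... ; in any case the two properties to verify are (i) $f_Z$ is convex and order l.s.c.\ with $f_Z\le f$, and (ii) $f_Z(X)=f(X)$ for $X=Z$. Then $\inf_{Z}f_Z \le f$ is immediate from (i), while taking $Z=X$ and using (ii) gives $\inf_Z f_Z(X)\le f_X(X)=f(X)$ and the reverse $\ge$, so the infimum is attained and equals $f(X)$, which is Equation~\eqref{minstarshaped}.

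Next, the min-max representation~\eqref{minmaxstatic} follows from~\eqref{minstarshaped} by applying the Fenchel–Moreau biconjugation theorem to each $f_Z$. Here is where the topological hypotheses on $\mathcal{X}$ enter: $\mathcal{X}$ is a locally convex Fréchet lattice that is order complete and order separable, $\mathcal{X}^*$ is its order continuous dual, and $\sigma(\mathcal{X},\mathcal{X}^*)$ has the C-property — exactly the setting of \cite{BF10}. For a proper, convex, order l.s.c.\ functional on such a space one has $f_Z=f_Z^{**}$, i.e.\ $f_Z(X)=\sup_{q\in\mathcal{X}^*}\{\langle q,X\rangle-f_Z^*(q)\}$; substituting into~\eqref{minstarshaped} yields~\eqref{minmaxstatic}. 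I would invoke the relevant duality statement from \cite{BF10} (their analogue of Fenchel–Moreau under order lower semicontinuity and the C-property) rather than reprove it. One must also check $f_Z$ is \emph{proper} — it is not identically $+\infty$ since $f_Z(0)\le f(0)<+\infty$, and it does not take the value $-\infty$ because $f_Z\le f$ would then force... actually one needs a separate argument that $f_Z>-\infty$; typically this is guaranteed because $f_Z$ dominates an affine functional, which is the standard device, so I would build the affine lower bound into the construction of $f_Z$.

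For the monotone case, the point is to replace each $f_Z$ by a monotone (hence risk-measure) version $\tilde f_Z$ without destroying convexity, order lower semicontinuity, or the representation. The standard trick is the \emph{monotone envelope}: set $\tilde f_Z(X):=\sup_{Y\le X}f_Z(Y)$, or equivalently $\tilde f_Z(X)=\inf\{f_Z(Y):Y\ge X\}$ — one of these is monotone, convex (infimal/supremal convolution structure preserves convexity appropriately), order l.s.c., and sandwiched: $f_Z\le\tilde f_Z\le f$ whenever $f$ itself is monotone (this sandwich is exactly what uses monotonicity of $f$, since $f_Z(Y)\le f(Y)\le f(X)$ for $Y\le X$). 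Then $\inf_Z\tilde f_Z$ is again squeezed between $\inf_Z f_Z=f$ and $f$, giving~\eqref{minstarshaped} with $\tilde f_Z$; and since $\tilde f_Z$ is monotone, its conjugate $\tilde f_Z^*$ is $+\infty$ off the positive cone $\mathcal{X}^*_+$, so the biconjugation supremum in~\eqref{minmaxstatic} may be restricted to $\mathcal{X}^*_+$, yielding~\eqref{minmaxstaticmon}. The main obstacle I anticipate is the careful verification that the monotone envelope $\tilde f_Z$ remains \emph{order lower semicontinuous} and \emph{proper} (not $\equiv+\infty$, not $-\infty$-valued) in this general Fréchet-lattice setting — envelopes can behave badly with respect to the order topology, and one may need the order completeness / C-property hypotheses precisely to control the supremum/infimum defining $\tilde f_Z$ and to ensure the biconjugate identity still applies to it. A secondary technical point is confirming that $\tilde f_Z(Z)=f(Z)$ (so the minimum is still attained at $Z=X$), which should follow since $f_Z(Z)=f(Z)$ and $\tilde f_Z$ is sandwiched below $f$.
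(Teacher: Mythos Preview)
Your proposal contains a fundamental sign error that undermines the entire argument. You describe $f_Z$ as a convex \emph{minorant} of $f$ satisfying $f_Z\le f$, and then claim that ``$\inf_Z f_Z\le f$ is immediate from (i), while taking $Z=X$ and using (ii) gives $\inf_Z f_Z(X)\le f_X(X)=f(X)$ and the reverse $\ge$''. But both of these inequalities point the same way; with minorants you have no mechanism to obtain $\inf_Z f_Z\ge f$, and in fact a family of convex minorants touching $f$ at one point each would give $f=\sup_Z f_Z$, not a minimum. The correct picture, which is what Equation~\eqref{convexf} actually encodes, is the opposite: each $f_Z$ is a convex \emph{majorant} of $f$. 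Explicitly, $f_Z(X)=\alpha f(Z)+(1-\alpha)f(0)$ if $X=\alpha Z$ for some $\alpha\in[0,1]$, and $f_Z(X)=+\infty$ otherwise. Star-shapedness says precisely that $f(\alpha Z)\le \alpha f(Z)+(1-\alpha)f(0)=f_Z(\alpha Z)$, so $f\le f_Z$ everywhere; hence $f\le\inf_Z f_Z$, and choosing $Z=X$ gives $f_X(X)=f(X)$, which pins down the minimum. Your speculation that $f_Z$ might be a convex envelope of some two-point data, together with your worry about $f_Z>-\infty$, both stem from not having this explicit formula: with the actual $f_Z$, properness is immediate (it equals $f(0)$ at $0$ and is bounded below by $f$), and convexity and (order) lower semicontinuity are checked directly on the segment.

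The same reversal propagates to the monotone part. With $\tilde f_Z(X):=\inf\{f_Z(Y):Y\ge X\}$ (the version the paper uses), one has $\tilde f_Z\le f_Z$ trivially, and monotonicity of $f$ gives $f_Z(Y)\ge f(Y)\ge f(X)$ for $Y\ge X$, whence $\tilde f_Z\ge f$. So the sandwich is $f\le\tilde f_Z\le f_Z$, not $f_Z\le\tilde f_Z\le f$ as you wrote; this is exactly what is needed to preserve the representation $f=\min_Z\tilde f_Z$. Your instinct that the delicate step is the order lower semicontinuity of $\tilde f_Z$ is correct --- the paper proves it by a direct continuity-from-below argument exploiting the very restricted effective domain of $f_Z$ (sequences $\alpha_n Z$ with $\alpha_n\in[0,1]$, compactness of $[0,1]$) --- but you will not get there until the inequalities are oriented correctly.
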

While the formal proofs of these statements are provided in Appendix~\ref{sec:appendix}, we would like to explicate the intuitive strategies adopted.
The approach involves constructing a family of functionals, denoted as $f_Z$, where $Z\in dom(f)$, such that for any $X\in\mathcal{X}$ lying on the line segment connecting $0$ and $Z$, we have $f_Z(X) = \alpha f(Z)+(1-\alpha)f(0)$, with $\alpha\in[0,1]$ satisfying $X=\alpha Z$.
Additionally, $f_Z(X)=+\infty$ when $X$ does not lie on the segment joining the origin and $Z$.
By design, $f_Z$ is convex, and it is evident that $f_X(X)=f(X)$ for any $X$ in the proper domain of $f$.
This methodology captures the notion that a function $f$ is star-shaped with respect to $f(0)$ if and only if the line segment joining the points $(0,f(0))$ and $(Z,f(Z))$ on the graph of $f$ lies entirely above the graph for any $Z\in dom(f)$.
This observation intuitively justifies the inequality $f_Z \geq f$, for any element $f_Z$ of the family. Furthermore, due to the simple form of the functionals $f_Z$ we construct, we are able to establish lower semicontinuity and lower order semicontinuity through direct inspection.
        \begin{remark}
            It is not difficult to prove that similar results hold when we restrict our attention to positively homogeneous functionals (which are clearly star-shaped) defined on $\mathcal{X}$.
            In this case, we can show that their representation involves only sublinear functionals.
            Indeed, we already know from Remark~1 in \cite{C00} that $$f(X)=\min_{Z\in dom(f)}f_Z(X) \ \ \forall X\in\mathcal{X},$$ where $\mathcal{X}$ is a locally convex space and $f_Z:\mathcal{X}\to\mathbb{R}\cup\{+\infty\}$ is defined by:
            \begin{equation*}
                f_Z(X)=\begin{cases}
                \alpha f(Z) \ &\mbox{ if there exists } \alpha\in[0,+\infty) \mbox{ s.t. } X=\alpha Z, \\
                +\infty &\mbox{ otherwise.}
                 \end{cases}
            \end{equation*}
            In detail, for any $Z\in dom(f)$, $f_Z$ is a proper, lower semicontinuous and sublinear functional.
            As in the proof of Proposition~\ref{staticSS}, we can prove that $f_Z$ is also order lower semicontinuous, assuming that $\mathcal{X}$ is a locally convex order complete and order separable Fr\'echet lattice with the C-property.
            Hence, we obtain the min-max representation:
            $$f(X)=\min_{Z\in dom(f)}\sup_{q\in\mathcal{C}_Z}\left\{\langle q,X\rangle\right\},$$
            where $\mathcal{C}_Z$ is a convex subset of $\mathcal{X}^*$ depending on $Z\in dom(f)$.
            In addition, if $f$ is also monotone we can consider only dual elements $q\in\mathcal{C}'_Z$, where $\mathcal{C}'_Z$ is a convex subset of $\mathcal{X}^*_+$. In other words, we are able to represent every positively homogeneous risk measure in a (very) general space as a min-max of linear and positive functionals.
            \end{remark}
               It is worth noting that the representation of a star-shaped functional as the minimum of convex functionals is not necessarily unique.
                We can obtain uniqueness by choosing a suitable relaxation, using arguments analogous to those in Proposition~4 of \cite{CCMTW22}.

           The general representation results established in Proposition~\ref{staticSS} encompass various existing results as special cases.
           For example, by imposing the additional constraint of cash-additivity, we recover the conclusions of Theorem 2 and Proposition 5 in \cite{CCMTW22}.
           We emphasize, however, that all our representations hold
           in (very) general spaces, which include $L^{p}$ for any $p\in[0,+\infty]$, Orlicz spaces generated by a proper Young function and Morse spaces, and we allow for non-monotonicity.
           In the following corollary, we formally report these results; the proof is a consequence of the proofs of Lemma~\ref{Lconvexf} and Proposition~\ref{staticSS}.
           It is noteworthy that the following result still holds in the case of non-normalized star-shaped functionals.
           In this setting, the family of convex functionals that represents the star-shaped functional will consist of non-normalized functionals, wherein the value at $0$ of each functional will be equal to the value at $0$ of the star-shaped functional.
            \begin{corollary}
                Let $\rho:\mathcal{X}\to\mathbb{R}\cup\{+\infty\}$ be a normalized, star-shaped and cash-additive functional.
                Then there exist a set of indexes $\Gamma$ and a family $(\rho_{\gamma})_{\gamma\in\Gamma}$ of normalized, convex and cash-additive functionals $\rho_{\gamma}:\mathcal{X}\to\mathbb{R}\cup\{+\infty\}$ such that:
                \begin{equation*}
                    \rho(X)=\min_{\gamma\in\Gamma}\rho_{\gamma}(X) \ \ \forall X\in\mathcal{X},
                \end{equation*}
                or, equivalently, in terms of a min-max representation:
                \begin{equation*}
                    \rho(X)=\min_{\gamma\in\Gamma}\sup_{q\in\mathcal{C}}\left\{\langle q,X\rangle-\rho^*_{\gamma}(q)\right\} \ \ \forall X\in\mathcal{X},
                \end{equation*}
                where $\mathcal{C}:=\left\{q\in\mathcal{X}^*: \langle q, 1 \rangle =1\right\}$.
               Finally, if $\rho$ is also monotone, then every element of the family $(\rho_{\gamma})_{\gamma\in\Gamma}$ can be chosen to be monotone and the min-max representation holds if we restrict the admissible scenarios to $\mathcal{C}\cap\mathcal{X}^*_+$.
                \label{corCA}
                \end{corollary}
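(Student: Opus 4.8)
The plan is to reuse the family $(f_Z)_{Z\in dom(\rho)}$ constructed in the proof of Proposition~\ref{staticSS} (through Lemma~\ref{Lconvexf} and Equation~\eqref{convexf}) and to \emph{cash-additivize} each member while retaining convexity, order lower semicontinuity, and the two defining features $f_Z\ge\rho$ and $f_X(X)=\rho(X)$. Concretely, for each $Z\in dom(\rho)$ I would set
$$\rho_Z(X):=\inf_{c\in\R}\bigl\{f_Z(X-c)+c\bigr\},\qquad X\in\mathcal{X},$$
the infimal convolution of $f_Z$ with the functional that equals $c$ on each constant $c\in\R\subseteq\mathcal{X}$ and $+\infty$ on non-constant elements. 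Cash-additivity of $\rho_Z$ follows from the substitution $c\mapsto c-m$ in $\rho_Z(X+m)$; convexity is inherited because $(X,c)\mapsto f_Z(X-c)+c$ is jointly convex and partial minimization of a jointly convex functional is convex; $\rho_Z$ is proper since $\rho_Z(0)\le f_Z(0)=\rho(0)=0$ (here normalization enters). Unfolding the definition, $\rho_Z(X)<+\infty$ precisely when $X=\alpha Z+c$ for some $\alpha\in[0,1]$, $c\in\R$, in which case $\rho_Z(X)=\inf\{\alpha\rho(Z)+c:\ X=\alpha Z+c,\ \alpha\in[0,1]\}$; this explicit affine-slice form is what I would use to verify order lower semicontinuity of $\rho_Z$ by the same direct inspection as in the proof of Proposition~\ref{staticSS}.

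I would then establish the two comparison facts. If $X=\alpha Z+c$ with $\alpha\in[0,1]$, then cash-additivity, star-shapedness and normalization of $\rho$ give $\rho(X)=\rho(\alpha Z)+c\le\alpha\rho(Z)+(1-\alpha)\rho(0)+c=\alpha\rho(Z)+c$; taking the infimum over admissible decompositions yields $\rho\le\rho_Z$ on all of $\mathcal{X}$ (trivially so off $dom(\rho_Z)$). Conversely, the decomposition $X=1\cdot X+0$ shows $\rho_X(X)\le\rho(X)$ for $X\in dom(\rho)$, hence $\rho_X(X)=\rho(X)$; and when $\rho(X)=+\infty$ every $\rho_Z(X)$ equals $+\infty$, since a representation $X=\alpha Z+c$ with $Z\in dom(\rho)$ would force $\rho(X)=\alpha\rho(Z)+c<+\infty$. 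Consequently $\rho(X)=\min_{Z\in dom(\rho)}\rho_Z(X)$, the minimum attained at $Z=X$, which is the first representation with $\Gamma:=dom(\rho)$ (Lemma~\ref{Lemma:infSS} confirms consistency, as the right-hand side is automatically star-shaped with value $0$ at $0$). The min--max form then follows by applying the Fenchel--Moreau theorem to each $\rho_Z$ --- legitimate because order lower semicontinuity upgrades to $\sigma(\mathcal{X},\mathcal{X}^*)$-lower semicontinuity under the C-property, exactly as in Proposition~\ref{staticSS} --- together with the standard fact that cash-additivity forces $\rho_Z^{*}(q)=+\infty$ whenever $\langle q,1\rangle\neq1$, so that the inner supremum may be restricted to $\mathcal{C}$.

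For the monotone case I would rerun the argument with the monotone family $(\tilde f_Z)_{Z\in dom(\rho)}$ of Proposition~\ref{staticSS} in place of $(f_Z)$, or, equivalently, work directly with
$$\hat\rho_Z(X):=\inf\bigl\{\alpha\rho(Z)+c:\ \alpha\in[0,1],\ c\in\R,\ X\le\alpha Z+c\bigr\}=\inf_{Y\ge X}\rho_Z(Y).$$
Monotonicity is immediate (enlarging $X$ shrinks the feasible set); cash-additivity, convexity and $\hat\rho_Z(0)=0$ are checked as before; $\hat\rho_Z\ge\rho$ now follows from the second expression together with $\inf_{Y\ge X}\rho(Y)=\rho(X)$, and $\hat\rho_X(X)=\rho(X)$ again comes from $\alpha=1$, $c=0$. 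Each $\hat\rho_Z$ is thus proper, convex, monotone, normalized, cash-additive and order lower semicontinuous, so its conjugate is $+\infty$ outside $\mathcal{C}\cap\mathcal{X}^*_+$: cash-additivity removes every $q$ with $\langle q,1\rangle\neq1$, while for $q\notin\mathcal{X}^*_+$ one picks $X\ge0$ with $\langle q,X\rangle<0$ and uses $\hat\rho_Z^{*}(q)\ge\sup_{t>0}\{-t\langle q,X\rangle-\hat\rho_Z(-tX)\}=+\infty$, since $\hat\rho_Z(-tX)\le\hat\rho_Z(0)=0$. This yields the restricted min--max representation with monotone members.

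The only genuinely non-routine step is the order lower semicontinuity of $\rho_Z$ and $\hat\rho_Z$: infimal convolution and monotone hull do not preserve (order) lower semicontinuity in general, so one must exploit the very explicit structure of these functionals --- essentially $\rho$ restricted to the two-dimensional affine set $\{\alpha Z+c:\ \alpha,c\in\R\}$ and extended by $+\infty$ --- and argue as in the corresponding step of the proof of Proposition~\ref{staticSS}. Everything else is bookkeeping on top of Lemma~\ref{Lconvexf} and Proposition~\ref{staticSS}.
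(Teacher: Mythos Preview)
Your proposal is correct and follows essentially the same route as the paper. The paper defines $\rho_Z$ directly on the affine slice $\{\alpha Z+c:\alpha\in[0,1],\,c\in\R\}$ and extends by $+\infty$; your infimal-convolution formulation $\rho_Z(X)=\inf_{c\in\R}\{f_Z(X-c)+c\}$ unfolds to precisely the same object, and the subsequent steps---verification of cash-additivity, convexity, $\rho_Z\ge\rho$, $\rho_X(X)=\rho(X)$, the monotone hull $\hat\rho_Z(X)=\inf_{Y\ge X}\rho_Z(Y)$ for the monotone case, and the identification of order lower semicontinuity as the one genuinely delicate point requiring direct inspection of the affine-slice structure---all coincide with the paper's argument.
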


\subsection{Star-shaped risk measures and acceptance sets}

In this subsection, we investigate the properties of the acceptance sets of a star-shaped risk measure.
We define the concept of families of star-shaped acceptance sets and we prove the one-to-one correspondence with star-shaped risk measures.
Our definition of acceptance sets differs from the usual definition given, for instance, in \cite{DK12}.
Indeed, we do not necessitate the convexity of acceptance sets. Instead, we only require the weaker property of \textit{star-shapedness}, drawing an analogy with the star-shapedness of risk measures.
For brevity, we restrict our attention to normalized risk measures.
\begin{definition}
$\mathcal{A}:=(A^m)_{m\in\R}\subseteq \mathcal{X}$ is called a star-shaped family of acceptance sets if
\begin{itemize}
\item $\mathcal{A}$ is increasing in $m$, i.e., $A^{m_1}\subseteq A^{m_2}$ for any $m_1\leq m_2$;
\item $\mathcal{A}$ is monotone, i.e., if $Y\geq X$ and $X\in A^{m}$ for some $m\in\R$, then $Y\in A^m$;
\item $\mathcal{A}$ is star-shaped, i.e., $\lambda A^{m}\subseteq A^{\lambda m}$ for any $\lambda\in[0,1]$ and $m\in\R$;
\item $\mathcal{A}$ is right-continuous, i.e., $A^{m}=\bigcap_{\bar{m}>m}A^{\bar{m}}$ for any $m\in\R$.
\end{itemize}
\end{definition}

We recall that a family of acceptance sets is said to be a \textit{convex} family of acceptance sets if it is monotone, increasing, right-continuous and  $\lambda A^m+(1-\lambda)A^{m'}\subseteq A^{\lambda m+(1-\lambda)m'}$ for all $\lambda\in[0,1]$ and $m,m'\in\R$.
We stress that if $0\in A^0$, then convexity of $(A^m)_{m\in\mathbb{N}}$ implies star-shapedness of $(A^{m})_{m\in\mathbb{N}}$.
We have the following characterization of star-shaped risk measures:
\begin{lemma}
               If $\rho:\mathcal{X}\to\R\cup\{+\infty\}$ is a normalized star-shaped risk measure, then the family $\mathcal{A}_{\rho}$ \mbox{defined by}
                \begin{equation*}
                A^m_{\rho}:=\left\{X\in\mathcal{X}:\rho(X)\leq m\right\}\ \mbox{ for any } m\in\R,
                \end{equation*}
                is a star-shaped family of acceptance sets such that $\inf\{m\in\R: 0\in A^{m}_{\rho}\}=0$.
                Conversely, if $\mathcal{A}=(A^m)_{m\in\R}$ is a star-shaped family of acceptance sets such that $\inf\{m\in\R: 0\in A^{m}\}=0$, then \mbox{$\rho_{\mathcal{A}}:\mathcal{X}\to\R\cup\{+\infty\}$} defined by
                \begin{equation*}
                \rho_{\mathcal{A}}(X)=\inf\left\{m\in\R:X\in A^m\right\},
                \end{equation*}
                is a normalized star-shaped risk measure.
                In addition, $\rho=\rho_{\mathcal{A}_{\rho}}$ and $\mathcal{A}=\mathcal{A}_{\rho_{\mathcal{A}}}$.
                \label{asSS}
\end{lemma}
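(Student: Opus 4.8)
The plan is to treat this as the star-shaped counterpart of the classical bijection between (normalized, cash-additive) risk measures and acceptance sets, proving it in three steps: (i) show $\mathcal{A}_\rho$ satisfies the four axioms of a star-shaped family of acceptance sets together with the normalization $\inf\{m:0\in A^m_\rho\}=0$; (ii) show $\rho_\mathcal{A}$ is a normalized star-shaped risk measure; (iii) check the two round-trip identities. For (i), since $A^m_\rho=\{X:\rho(X)\le m\}$ and $\{m\in\R:\rho(X)\le m\}$ is an upper half-line, monotonicity in $m$ and right-continuity $A^m_\rho=\bigcap_{\bar m>m}A^{\bar m}_\rho$ are automatic (using that $\rho(X)\le m$ iff $\rho(X)\le\bar m$ for every $\bar m>m$); monotonicity of the family is just the set-theoretic rephrasing of the monotonicity of $\rho$; star-shapedness of the family follows from star-shapedness and normalization of $\rho$, because $X\in A^m_\rho$ and $\lambda\in[0,1]$ give $\rho(\lambda X)\le\lambda\rho(X)+(1-\lambda)\rho(0)=\lambda\rho(X)\le\lambda m$, hence $\lambda A^m_\rho\subseteq A^{\lambda m}_\rho$ (for $\lambda=0$ one uses $\rho(0)=0$, i.e.\ $0\in A^0_\rho$); and $\inf\{m:0\in A^m_\rho\}=\inf\{m:\rho(0)\le m\}=\rho(0)=0$.

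For (ii) I would set $\rho_\mathcal{A}(X)=\inf\{m\in\R:X\in A^m\}$, first recording that $\rho_\mathcal{A}(0)=\inf\{m:0\in A^m\}=0$ by the normalization hypothesis and that $\rho_\mathcal{A}$ is $\R\cup\{+\infty\}$-valued. Monotonicity of $\rho_\mathcal{A}$ is immediate from monotonicity of the family, since moving $X$ in the relevant order can only enlarge $\{m:X\in A^m\}$. For star-shapedness I would fix $\lambda\in(0,1]$ and observe that for every $m$ with $X\in A^m$ one has $\lambda X\in\lambda A^m\subseteq A^{\lambda m}$, whence $\rho_\mathcal{A}(\lambda X)\le\inf\{\lambda m:X\in A^m\}=\lambda\,\rho_\mathcal{A}(X)=\lambda\rho_\mathcal{A}(X)+(1-\lambda)\rho_\mathcal{A}(0)$, using that multiplication by the positive scalar $\lambda$ commutes with the infimum and that $\rho_\mathcal{A}(0)=0$; the case $\lambda=0$ is the trivial inequality $\rho_\mathcal{A}(0)\le\rho_\mathcal{A}(0)$.

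For (iii), $\rho_{\mathcal{A}_\rho}(X)=\inf\{m:\rho(X)\le m\}=\rho(X)$ is immediate. For $\mathcal{A}=\mathcal{A}_{\rho_\mathcal{A}}$, one inclusion, $A^m\subseteq\{X:\rho_\mathcal{A}(X)\le m\}$, holds since $X\in A^m$ forces $\rho_\mathcal{A}(X)\le m$; for the reverse inclusion, if $\rho_\mathcal{A}(X)\le m$ then for each $\varepsilon>0$ there is $m'<m+\varepsilon$ with $X\in A^{m'}$, so (the family being increasing) $X\in A^{m+\varepsilon}$ for all $\varepsilon>0$, and right-continuity yields $X\in\bigcap_{\bar m>m}A^{\bar m}=A^m$. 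I expect this last step to be the only genuinely delicate point: the infimum defining $\rho_\mathcal{A}$ need not be attained, so $\rho_\mathcal{A}(X)\le m$ does not by itself place $X$ in $A^m$, and it is exactly the right-continuity axiom that closes the gap — which is precisely why right-continuity is built into the definition. A minor additional point to watch is the bookkeeping of order and sign in the star-shapedness estimate for $\rho_\mathcal{A}$ when $\rho_\mathcal{A}(X)<0$, handled by the commutation of $\inf$ with scaling by a nonnegative number together with the separate treatment of $\lambda=0$.
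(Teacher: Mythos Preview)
Your proposal is correct and follows essentially the same route as the paper: the paper verifies the four axioms for $\mathcal{A}_\rho$ and the normalization $\inf\{m:0\in A^m_\rho\}=0$ in the same way, and proves star-shapedness of $\rho_{\mathcal{A}}$ via the same change of variable $m'=\lambda m$ combined with $\lambda A^m\subseteq A^{\lambda m}$. The only difference is that the paper does not spell out the round-trip identities $\rho=\rho_{\mathcal{A}_\rho}$ and $\mathcal{A}=\mathcal{A}_{\rho_{\mathcal{A}}}$, instead citing Theorem~1 in \cite{DK12}; your explicit argument for $\mathcal{A}=\mathcal{A}_{\rho_{\mathcal{A}}}$ via right-continuity is exactly the standard one and makes the proof self-contained.
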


We now state the main result of this subsection.
\begin{theorem}
Let $\rho:\mathcal{X}\to\R\cup\{+\infty\}$.
Then the following statements are equivalent:
\begin{enumerate}[i)]
\item $\rho$ is a normalized star-shaped risk measure;
\item There exists a family of normalized and convex risk measures $(\rho_{\gamma})_{\gamma\in\Gamma}$ parameterized by $\gamma\in\Gamma$ such that
$$\rho(X)=\min_{\gamma\in\Gamma}\rho_{\gamma}(X) \ \ \forall X\in\mathcal{X};$$
\item There exists a convex family of acceptance sets $(\mathcal{A}_{\gamma})_{\gamma\in\Gamma}$ parameterized by $\gamma\in\Gamma$ verifying \begin{equation}
\inf\{m\in\R:0\in A^m_{\gamma}\}=0 \ \mbox{ for any } \gamma\in\Gamma,
\label{eq: normas}
\end{equation}
such that
$$\rho(X)=\inf\left\{m\in\R:X\in A^m_{\gamma} \ \mbox{ for some } \gamma\in\Gamma\right\}.$$
\end{enumerate}
\label{th:asSS}
\end{theorem}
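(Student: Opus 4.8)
The plan is to prove the cycle $i)\Rightarrow ii)\Rightarrow iii)\Rightarrow i)$, building on Proposition~\ref{staticSS}, Lemma~\ref{asSS} and Lemma~\ref{Lemma:infSS}.

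For $i)\Rightarrow ii)$, I would invoke the monotone part of Proposition~\ref{staticSS} with $f=\rho$, which produces a family $(\tilde f_Z)_{Z\in dom(\rho)}$ of proper, convex, order lower semicontinuous risk measures such that $\rho(X)=\min_{Z\in dom(\rho)}\tilde f_Z(X)$ for every $X\in\mathcal{X}$, the minimum being attained. The one extra thing to verify is normalization of each $\tilde f_Z$: from the representation $\rho\leq\tilde f_Z$ pointwise, so $\tilde f_Z(0)\geq\rho(0)=0$; on the other hand the building block $f_Z$ of Equation~\eqref{convexf} satisfies $f_Z(0)=0\cdot\rho(Z)+1\cdot\rho(0)=0$ (the origin lies on every segment $[0,Z]$ with weight $\alpha=0$), and the monotone relaxation producing $\tilde f_Z$ as a minorant of $f_Z$ does not raise the value at $0$, so $\tilde f_Z(0)\leq 0$. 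Hence $\tilde f_Z(0)=0$, and with $\Gamma:=dom(\rho)$ and $\rho_\gamma:=\tilde f_\gamma$ we obtain the required family of normalized convex risk measures.

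For $ii)\Rightarrow iii)$, given $(\rho_\gamma)_{\gamma\in\Gamma}$ I would set $A^m_\gamma:=\{X\in\mathcal{X}:\rho_\gamma(X)\leq m\}$ and $\mathcal{A}_\gamma:=(A^m_\gamma)_{m\in\R}$. For each fixed $\gamma$ the family is increasing in $m$ and right-continuous by the definition of sublevel sets, monotone because $\rho_\gamma$ is monotone, and satisfies $\lambda A^m_\gamma+(1-\lambda)A^{m'}_\gamma\subseteq A^{\lambda m+(1-\lambda)m'}_\gamma$ because $\rho_\gamma$ is convex; thus $\mathcal{A}_\gamma$ is a convex family of acceptance sets, and $\rho_\gamma(0)=0$ gives \eqref{eq: normas}. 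Since $\{m\in\R:X\in A^m_\gamma\}=[\rho_\gamma(X),+\infty)$, taking the union over $\gamma$ yields $\{m:X\in A^m_\gamma\ \mbox{for some}\ \gamma\}=[\inf_\gamma\rho_\gamma(X),+\infty)$, whence $\inf\{m:X\in A^m_\gamma\ \mbox{for some}\ \gamma\}=\inf_\gamma\rho_\gamma(X)=\min_\gamma\rho_\gamma(X)=\rho(X)$. For $iii)\Rightarrow i)$, from $(\mathcal{A}_\gamma)_{\gamma\in\Gamma}$ I would define $\rho_\gamma(X):=\inf\{m\in\R:X\in A^m_\gamma\}$. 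Right-continuity together with \eqref{eq: normas} forces $0\in A^0_\gamma$, so by the remark that a convex family of acceptance sets with $0\in A^0$ is star-shaped, each $\mathcal{A}_\gamma$ is a star-shaped family of acceptance sets with $\inf\{m:0\in A^m_\gamma\}=0$; Lemma~\ref{asSS} then gives that $\rho_\gamma$ is a normalized star-shaped risk measure and that $A^m_\gamma=\{X:\rho_\gamma(X)\leq m\}$, while the convex-family inclusion translates, by the usual argument, into convexity of $\rho_\gamma$. As before $\rho(X)=\inf\{m:X\in A^m_\gamma\ \mbox{for some}\ \gamma\}=\inf_{\gamma\in\Gamma}\rho_\gamma(X)$, and since the $\rho_\gamma$ are monotone and star-shaped with common value $0$ at $0$, Lemma~\ref{Lemma:infSS} (applied with $c=0$) shows $\rho$ is star-shaped and normalized, monotonicity being inherited from the $\rho_\gamma$; hence $\rho$ is a normalized star-shaped risk measure.

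The main obstacle I anticipate is the normalization step inside $i)\Rightarrow ii)$: one must ensure that the monotone convex risk measures delivered by Proposition~\ref{staticSS} can indeed be taken to vanish at $0$ (equivalently, that the monotone relaxation of the $f_Z$ preserves the value at the origin). The remaining steps are routine verifications of the acceptance-set axioms and elementary interchanges of infima; in particular, the ``$\min$'' in $ii)$ is legitimate even for infinite $\Gamma$ precisely because the minimum in \eqref{minstarshaped} is attained.
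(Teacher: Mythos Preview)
Your proposal is correct and, for $i)\Rightarrow ii)$ and $ii)\Rightarrow iii)$, essentially coincides with the paper's proof (including the sandwich $\rho\leq\tilde f_Z\leq f_Z$ at $0$ that yields normalization, which the paper records inside the proof of Proposition~\ref{staticSS}). The only genuine difference is in $iii)\Rightarrow i)$. The paper argues directly at the level of acceptance sets: using $0\in A^0_\gamma$ and convexity of each family it proves the inclusion $\bigcup_{\gamma}\lambda A^m_\gamma\subseteq\bigcup_{\gamma}A^{\lambda m}_\gamma$, and then obtains $\rho(\lambda X)\leq\lambda\rho(X)$ by a change of variable $m=\lambda m'$ in the defining infimum. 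You instead pass through the auxiliary risk measures $\rho_\gamma(X):=\inf\{m:X\in A^m_\gamma\}$, observe (via Lemma~\ref{asSS} and the remark that convex families with $0\in A^0$ are star-shaped) that each $\rho_\gamma$ is a normalized star-shaped risk measure, identify $\rho=\inf_\gamma\rho_\gamma$, and invoke Lemma~\ref{Lemma:infSS}. Your route is slightly more structural---it effectively factors $iii)\Rightarrow i)$ through a weak form of $ii)$ with an infimum rather than a minimum---while the paper's route is a short direct computation that avoids introducing the $\rho_\gamma$ at all; both are equally valid and of comparable length.
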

The last theorem can also be seen as a generalization of Theorem~2 in \cite{CCMTW22}, where the authors request the further property of cash-additivity on $\rho$, which can be understood at the level of acceptance sets as $A^{m}=A^{0}+m$ for any $m\in\R$ (cf.~\cite{DK12} for a thorough discussion on this topic).

\setcounter{equation}{0}

\section{Main Results}\label{sec:mr}

In this section, we present the main results of our paper.
First, we prove that any dynamic risk measure that is star-shaped and induced via BSDEs can be represented as the pointwise minimum of a family of dynamic convex risk measures, whose dynamics are governed by BSDEs.
Next, we investigate the circumstances under which a dynamic star-shaped risk measure can be created from a set of convex risk measures.
When the convex risk measures are also governed by BSDEs, we identify the conditions under which they can be used to form a dynamic star-shaped risk measure that can be represented via a BSDE.

\subsection{Min-max representation of star-shaped risk measures induced via BSDEs}

\begin{proposition}
Let $g:\Omega\times [0,T]\times\mathbb{R}\times\mathbb{R}^n\to\mathbb{R}$ satisfy $L^2$-standard assumptions.
Then for any $X\in L^2(\mathcal{F}_T)$ the unique first component of the solution, $(\rho_t(X))_{t\in[0,T]}$, to the equation:
\begin{equation}
    \rho_t(X)=X+\int_t^Tg(s,\rho_s,Z_s)ds-\int_t^TZ_sdW_s
    \label{eq: star-shapedBSDE}
\end{equation}
is star-shaped (resp.\ positively homogeneous) if and only if the driver $g$ is star-shaped (resp.\ positively homogeneous) w.r.t.\ $(y,z)$.
    \label{POSS}
\end{proposition}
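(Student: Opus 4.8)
\emph{Plan of proof.} The statement is an equivalence, so I would prove the two implications separately, treating the positively homogeneous case as the easy shadow of the star-shaped one. For the \emph{necessity} direction ($\rho_t$ star-shaped, resp.\ positively homogeneous $\Rightarrow$ $g$ is so in $(y,z)$), the plan is to recover the driver from the induced $g$-expectation: under the Lipschitz ($L^2$-standard) assumptions one has, for $d\mathbb{P}\times dt$-a.e.\ $(\omega,t)$ and every $(y,z)\in\R\times\R^n$, the representation $g(\omega,t,y,z)=\lim_{\varepsilon\downarrow0}\varepsilon^{-1}\big(\rho_t[\xi^{y,z}_\varepsilon]-y\big)$ in $L^2$, where $\xi^{y,z}_\varepsilon:=y+z\cdot(W_{t+\varepsilon}-W_t)$ is taken as terminal datum on $[t,t+\varepsilon]$ (see \cite{J08} and references therein; continuity of $g$ in $(y,z)$, which the Lipschitz property provides, upgrades the a.e.-$(y,z)$ statement to all $(y,z)$ along a countable dense set plus stability). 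I would then apply the defining inequality of $\rho_t$ to the terminal data $\lambda\xi^{y,z}_\varepsilon=\xi^{\lambda y,\lambda z}_\varepsilon$ and to $\xi\equiv0$, and pass to the limit: $\rho_t(\lambda\xi^{y,z}_\varepsilon)\le\lambda\rho_t(\xi^{y,z}_\varepsilon)+(1-\lambda)\rho_t(0)$ becomes $g(\omega,t,\lambda y,\lambda z)\le\lambda g(\omega,t,y,z)+(1-\lambda)g(\omega,t,0,0)$, and analogously in the positively homogeneous case with $\alpha\ge0$.

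For the \emph{sufficiency} direction in the positively homogeneous case the argument is a one-line uniqueness check: for $\alpha\ge0$, positive homogeneity of $g$ makes $(\alpha\rho_s(X),\alpha Z^X_s)_{s\in[0,T]}$ a solution of \eqref{eq: star-shapedBSDE} with terminal datum $\alpha X$, whence $\rho_t(\alpha X)=\alpha\rho_t(X)$; for an $\F_t$-measurable (bounded) $\alpha$ one reduces to constants using regularity of BSDE-induced risk measures. I expect the star-shaped sufficiency direction to be where the real work lies. When $g(\cdot,\cdot,0,0)\equiv0$ it is again short: then $\rho_\cdot(0)\equiv0$ by uniqueness, and for a constant $\lambda\in[0,1]$ star-shapedness of $g$ gives $\lambda g(s,\rho_s(X),Z^X_s)\ge g(s,\lambda\rho_s(X),\lambda Z^X_s)$, so $(\lambda\rho_s(X),\lambda Z^X_s)$ solves a BSDE whose driver dominates $g$ with terminal datum $\lambda X$; the comparison theorem then yields $\rho_s(\lambda X)\le\lambda\rho_s(X)=\lambda\rho_s(X)+(1-\lambda)\rho_s(0)$.

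The obstacle is the general case $g(\cdot,\cdot,0,0)\not\equiv0$: a direct comparison of $\rho_\cdot(\lambda X)$ with the natural candidate $\lambda\rho_\cdot(X)+(1-\lambda)\rho_\cdot(0)$ would require bounding $g$ at a convex combination of the $X$- and $0$-trajectories, i.e.\ convexity of $g$, which is not at hand. My plan is to bypass this through the convex-driver construction of the next subsection: applying Proposition~\ref{staticSS} fibrewise in $(\omega,s)$ and regularising the (small-domain) chord functionals with the Pasch-Hausdorff envelope, I would exhibit a measurable, uniformly Lipschitz family $(g^\gamma)$ of drivers, convex in $(y,z)$, with $g^\gamma\ge g$, $g=\inf_\gamma g^\gamma$ pointwise, and — crucially — each $g^\gamma$ agreeing with $g$ along the trajectory $(\rho_\cdot(0),Z^0_\cdot)$, so that $\rho^\gamma_\cdot(0)=\rho_\cdot(0)$ and hence all induced convex risk measures $\rho^\gamma_t$ share the value $\rho_t(0)$ at $0$. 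By comparison $\rho_t\le\rho^\gamma_t$ for every $\gamma$; the essential-infimum representation established in the sequel identifies $\rho_t$ with $\essinf_\gamma\rho^\gamma_t$; and since each $\rho^\gamma_t$ is convex with the common value $\rho_t(0)$ at $0$, Lemma~\ref{Lemma:infSS} gives that $\rho_t$ is star-shaped. The hard part I expect to be precisely this construction: arranging the Pasch-Hausdorff regularisation so that each $g^\gamma$ is simultaneously Lipschitz with one constant, everywhere above $g$, and pinned to $g$ along the $0$-trajectory, while still realising the infimum along each ray $\{\lambda X:\lambda\in[0,1]\}$ — the technical core of the dynamic, non-cash-additive theory.
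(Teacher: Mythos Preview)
Your necessity argument (recover $g$ from $\rho$ via Jiang's short-time limit and pass the star-shaped inequality through) and your sufficiency argument in the normalized case $g(\cdot,0,0)\equiv0$ (the rescaled pair $(\lambda\rho_\cdot(X),\lambda Z^X_\cdot)$ solves a BSDE whose driver dominates $g$, then compare) coincide with the paper's proof. The paper in fact only treats the normalized case, writing that ``the proof can be straightforwardly adapted to the general case'', so your direct arguments already cover everything the paper actually proves.

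Where your plan goes beyond the paper---proposing to handle $g(\cdot,0,0)\not\equiv0$ via the Pasch-Hausdorff family $(g^\gamma)$ of Theorem~\ref{SuCO}---there is a gap. That construction (Lemma~\ref{Tminstarshaped}) guarantees $g^\gamma(\omega,t,0,0)=g(\omega,t,0,0)$ for every $\gamma$, i.e.\ agreement at the \emph{origin} of $\R\times\R^n$; it does \emph{not} pin $g^\gamma$ to $g$ along the trajectory $(\rho_\cdot(0),Z^0_\cdot)$, which is not the constant $(0,0)$ once $g(\cdot,0,0)\not\equiv0$. Consequently comparison only gives $\rho^\gamma_t(0)\ge\rho_t(0)$, not equality, so the common-value-at-zero hypothesis of Lemma~\ref{Lemma:infSS} fails and star-shapedness of $\essinf_\gamma\rho^\gamma_t$ with the correct anchor $\rho_t(0)$ does not follow. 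Arranging every convex $g^\gamma\ge g$ to touch $g$ both at the moving point $(\rho_t(0),Z^0_t)$ and at the additional point needed to realize $\inf_\gamma g^\gamma=g$ is generically over-determined for a non-convex star-shaped $g$, so this route does not close as stated.
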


We are ready to state the main results of this subsection.
          \begin{theorem}
              Let $g:\Omega\times [0,T]\times\mathbb{R}\times\mathbb{R}^n\to\mathbb{R}$ satisfy $L^2$-standard assumptions and star-shapedness w.r.t.\ $(y,z)$ (resp.\ positive homogeneity w.r.t.\ $(y,z)$).
              Then $\rho_t$ induced by the BSDE \eqref{eq: star-shapedBSDE} is a star-shaped (resp.\ positively homogeneous and normalized) risk measure that can be represented as
                  \begin{equation}
                  \rho_t(X)=\essmin_{\gamma\in\Gamma}\rho^{\gamma}_t(X) \ \ \forall t\in[0,T], \forall X\in L^2(\mathcal{F}_T),                  \label{minsub}
                  \end{equation} where $\Gamma$ is a set of indices and $\rho^{\gamma}_t$ are convex (resp.\ sublinear and normalized) risk measures induced via BSDEs that dominate $\rho_t$.
                  For each $\gamma\in\Gamma$, the driver $g^{\gamma}$ is convex (resp.\ sublinear) and the family $(g^{\gamma})_{\gamma\in\Gamma}$ is $k$-equi-Lipschitz, where $k$ is the Lipschitz constant of $g$.\footnote{A family of functions $(g^{\gamma})_{\gamma\in\Gamma}$ is said to be $k$-equi-Lipschitz if there exists a constant $k>0$ such that, $d\mathbb{P}\times dt$-a.s., for all $\gamma\in\Gamma$ and $(y_1,z_1),(y_2,z_2)\in\mathbb{R}\times\mathbb{R}^n$, the following inequality holds:
        $$|g^{\gamma}(t,y_1,z_1)-g^{\gamma}(t,y_2,z_2)|\leq k(|y_1-y_2|+|z_1-z_2|).$$}
                  We also have the min-max representation:
                  \begin{equation}
                  \rho_t(X)=\essmin_{\gamma\in\Gamma}\esssup_{(\beta,q)\in\mathcal{G}\times\mathcal{Q}^2}\mathbb{E}_{\mathbb{Q}^{q}_t}\left[D^{\beta}_{t}X-\int_t^TD^{\beta}_{t,s}G^{\gamma}(s,\beta_s,q_s)ds\Big|\mathcal{F}_t\right],
                  \label{minmax}
                  \end{equation}
                  for any $ t\in[0,T]$ and $ X\in L^2(\mathcal{F}_T),$ with $\beta$ and $q$ bounded processes.
                  In particular, $G^{\gamma}$ vanishes on its proper domain when the driver $g$ is positively homogeneous.
                  \label{SuCO}
          \end{theorem}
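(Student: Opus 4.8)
The plan is to lift the static representation of Proposition~\ref{staticSS} to the dynamic BSDE setting by regularizing the pointwise-defined convex functionals via the Pasch-Hausdorff envelope, so that each regularized object becomes a legitimate Lipschitz driver. First I would note that, by Proposition~\ref{POSS}, star-shapedness (resp.\ positive homogeneity) of $\rho_t$ is equivalent to star-shapedness (resp.\ positive homogeneity) of $g$ in $(y,z)$, which is what we assume. For each fixed $(\omega,t)$ the map $(y,z)\mapsto g(\omega,t,y,z)$ is then a static star-shaped function on $\mathbb{R}\times\mathbb{R}^n$, so Proposition~\ref{staticSS} (applied pointwise in $(\omega,t)$, with $\mathcal{X}=\mathbb{R}^{1+n}$) gives a family $\{g_Z(\omega,t,\cdot)\}_{Z\in dom}$ of convex, lsc functions with $g=\min_Z g_Z$ and $g_Z(\omega,t,0)=g(\omega,t,0,0)$. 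The functions $g_Z$ are $+\infty$ off the ray through $Z$, hence not Lipschitz; to fix this I would replace $g_Z$ by its Pasch-Hausdorff envelope at level $k$, i.e.
\begin{equation*}
g^{Z}(\omega,t,y,z):=\inf_{(y',z')\in\mathbb{R}\times\mathbb{R}^n}\Bigl\{g_Z(\omega,t,y',z')+k\bigl(|y-y'|+|z-z'|\bigr)\Bigr\},
\end{equation*}
with $k$ the Lipschitz constant of $g$. Standard properties of the Pasch-Hausdorff envelope (see \cite{RW97}) give: $g^{Z}$ is $k$-Lipschitz, convex (as an infimal convolution of a convex function with a norm), and since $g$ itself is $k$-Lipschitz and $g\le g_Z$ one gets $g\le g^{Z}\le g_Z$; evaluating on the ray through $Z$ where $g_Z$ agrees with the segment value, and using $g_X=g$-type identities, one recovers $g=\essinf_Z g^{Z}$ $d\mathbb{P}\times dt$-a.s. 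One must also check measurability of $(\omega,t)\mapsto g^{Z}(\omega,t,\cdot)$ and of the index set so that the $\essinf$ is well defined; this is a measurable-selection argument and the index family can be taken countable by order separability.

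Next I would fix for each admissible index $\gamma$ (ranging over a suitable countable dense collection of $Z$'s, as in the static proof) the driver $g^{\gamma}:=g^{Z_\gamma}$. Each $g^{\gamma}$ satisfies the $L^2$-standard assumptions: it is predictably measurable, $k$-Lipschitz, and $g^{\gamma}(\cdot,0,0)=g(\cdot,0,0)\in L^2_{\mathcal F}(T;\mathbb R)$ (using $g\le g^{\gamma}\le g_Z$ and the boundary value). Hence by \cite{PP90} the BSDE with driver $g^{\gamma}$ and terminal value $X$ has a unique solution, defining a convex dynamic risk measure $\rho^{\gamma}_t$; the family $(g^{\gamma})$ is $k$-equi-Lipschitz by construction. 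Since $g\le g^{\gamma}$ $d\mathbb P\times dt$-a.s., the comparison theorem for BSDEs (Theorem~2.2 in \cite{EPQ97}) yields $\rho_t\le\rho^{\gamma}_t$ for every $\gamma$ and every $X$. For the reverse inequality $\rho_t\ge\essmin_\gamma\rho^{\gamma}_t$, I would use a stability/selection argument: at each $(\omega,t)$ pick (measurably) $\gamma$ achieving the infimum $g(\omega,t,\rho_s(\omega),Z_s(\omega))=g^{\gamma}(\omega,t,\rho_s(\omega),Z_s(\omega))$ along the solution path of \eqref{eq: star-shapedBSDE}; then $(\rho_t,Z_t)$ solves the BSDE with driver $g^{\gamma}$ up to sets of measure zero, and uniqueness of the solution to that convex BSDE forces $\rho_t=\rho^{\gamma}_t$ there. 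Patching over a countable partition realizing the $\essinf$ pointwise, together with the regularity/pasting property of BSDE solutions, gives $\rho_t\ge\essmin_\gamma\rho^{\gamma}_t$, hence \eqref{minsub}. In the positively homogeneous case the $g_Z$ are sublinear and the Pasch-Hausdorff envelope preserves sublinearity and the identity $g^{Z}(\cdot,0,0)=0$, so each $\rho^{\gamma}_t$ is sublinear and normalized.

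Finally, the min-max representation \eqref{minmax} follows by applying to each convex driver $g^{\gamma}$ the dual representation \eqref{eq:conrep} of Proposition~3.1 of \cite{DKRT14} — legitimate since $g^{\gamma}$ is convex, $k$-Lipschitz (so sublinear-type growth holds) and $g^{\gamma}(\cdot,0,0)\in L^2_{\mathcal F}(T;\mathbb R)$, with $\beta,q$ bounded by Lipschitzianity — and taking the $\essmin$ over $\gamma$. Here $G^{\gamma}(t,\beta,q):=\sup_{(y,z)}\{-\beta y-\langle q,z\rangle-g^{\gamma}(t,y,z)\}$ is the Fenchel transform of $g^{\gamma}$. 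When $g$ is positively homogeneous each $g^{\gamma}$ is sublinear, hence its conjugate $G^{\gamma}$ is the indicator of a convex set and therefore vanishes on its proper domain, giving the last assertion. The main obstacle I anticipate is the reverse inequality in \eqref{minsub}: establishing, with full measurability care, that the pointwise-optimal index can be selected to produce a genuine $k$-Lipschitz convex driver whose BSDE solution coincides with $\rho_t$ — i.e.\ combining the measurable-selection argument, the stability of BSDE solutions under driver modification on a predictable partition, and the uniqueness theorem, rather than any single estimate.
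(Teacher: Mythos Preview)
Your core idea---apply Proposition~\ref{staticSS} pointwise in $(\omega,t)$ to obtain the segment functions $g_Z$, then regularize via the Pasch-Hausdorff envelope at level $k$ to produce convex $k$-Lipschitz drivers $g^Z$ with $g\le g^Z$, and conclude $\rho_t\le\rho^\gamma_t$ by comparison---is exactly the mechanism the paper uses. The min-max step via \eqref{eq:conrep} is also the same.

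The gap is in your parametrization of $\Gamma$ and the reverse inequality. You take $\Gamma$ to be a countable collection of \emph{fixed} points $Z_\gamma\in\mathbb{R}^{1+n}$, and then try to recover $\rho_t$ by measurable selection and ``patching over a countable partition''. This does not close: for a fixed $Z$, the identity $g^{Z}(\omega,t,y,z)=g(\omega,t,y,z)$ holds only when $(y,z)=Z$ (with $m=1$ in the envelope), so along the solution path you need $(\beta,\mu)=(\rho_s(\omega),Z_s(\omega))$, which varies with $(\omega,s)$ and does not lie in a fixed countable set. No patching of solutions to BSDEs with \emph{different constant} indices $\gamma$ will reproduce $(\rho,Z)$ as the solution of a single BSDE in your family, and without attainment you only get $\essinf$, not the $\essmin$ claimed in \eqref{minsub}.

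The paper resolves this by enlarging the index set: $\Gamma$ is taken to be the set of \emph{square-integrable predictable processes} $(\alpha,\delta)$ with values in $\mathbb{R}\times\mathbb{R}^n$, and for $\gamma=(\alpha,\delta)$ one sets $g^{\gamma}(\omega,t,y,z):=g^{\alpha_t(\omega),\delta_t(\omega)}(\omega,t,y,z)$. Then the optimal index is simply $\bar\gamma:=(\rho,Z)$, the solution of \eqref{eq: star-shapedBSDE} itself, which is automatically an element of $\Gamma$; one has $g^{\bar\gamma}(t,\rho_t,Z_t)=g(t,\rho_t,Z_t)$ $d\mathbb{P}\times dt$-a.s., and uniqueness gives $\rho^{\bar\gamma}_t(X)=\rho_t(X)$ directly. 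No measurable selection, no patching. Your closing paragraph correctly diagnoses that the obstacle is producing ``a genuine $k$-Lipschitz convex driver whose BSDE solution coincides with $\rho_t$''---the missing idea is that this driver is obtained by letting the envelope's anchor point be the \emph{process} $(\rho,Z)$ rather than a constant.
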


The proof of the previous results (see Appendix~\ref{sec:appendix}) relies on a regularization technique. Intuitively, we achieve this by performing an infimal convolution between the elements of a family $g_{\bar{y},\bar{z}}$ with $(\bar{y},\bar{z})\in\R\times\R^n$, which serve as analogs of $(f_Z)_{Z\in dom(f)}$ for $f$ in Proposition~\ref{staticSS}, and a regularizing Lipschitz function.
For any fixed $(t, \omega) \in [0, T] \times \Omega$ and $\gamma=(\alpha,\delta)$, where $\alpha$ and $\delta$ are square integrable and predictable processes valued in $\R\times\R^n$, the function $(y, z) \mapsto g^\gamma(\omega, t, y, z)$ represents the infimal convolution of the line segment connecting $g(\omega, t, 0, 0)$ to $g(\omega, t, \alpha_t(\omega), \delta_t(\omega))$, where $(y, z)$ lies on the segment joining $(0, 0)$ to $(\alpha_t(\omega), \delta_t(\omega))$, with the regularization function $h(y, z) = k(|y| + |z|)$.
This approach allows us to construct a family of drivers that satisfy $L^\infty$- or $L^2$-standard assumptions due to the regularization through infimal convolution.
Importantly, this construction preserves the properties of $g_{\bar{y},\bar{z}}$, meaning that $g^\gamma \geq g$ holds for any $\gamma \in \Gamma$, and equality is achieved for $\gamma=(\rho,Z)$ where $(\rho,Z)$ is the solution to Equation~\eqref{eq: star-shapedBSDE}.
Finally, by utilizing a comparison theorem for BSDEs, we obtain the desired result.

Different from \cite{TW23} in which the driver is assumed to be independent of $y$ (and to satisfy a quadratic growth condition), we consider here the non-cash-additive case in which the driver is allowed to depend on $y$.
Compared to Corollary~4.1 of \cite{TW23}, our Theorem~\ref{SuCO} establishes representation results for a wider family of risk measures that are not in general monetary/cash-additive.
As a corollary, our findings not only recover the results of \cite{TW23} (in the setting of Lipschitz conditions) when the driver is independent of $y$, but they also establish a characterization of $\rho_{t}^{\gamma}$ as the solution to a BSDE with a specific driver, based on the Pasch-Hausdorff envelope of $g$.
Additionally, unlike Corollary~4.1 in \cite{TW23}, by the flow property of BSDEs (see e.g., \cite{EPQ97}), we can also ensure the time-consistency of $\rho^{\gamma}_t$ for each $\gamma\in\Gamma$.
Indeed, further insight can be gained into the family of drivers $(g^{\gamma})_{\gamma\in\Gamma}$ by imposing additional conditions on the dynamics of $\rho_t$.
Specifically, assuming that $\rho_t$ is cash-additive (resp.\ cash-subadditive) under certain conditions on $g$, we can express $\rho_t$ in terms of a family of cash-additive (resp.\ cash-subadditive) and convex (resp.\ sublinear) risk measures $(\rho_t^{\gamma})_{\gamma\in\Gamma}$.
Analogous to the representations provided in Equations~\eqref{minsub}--\eqref{minmax}, the resulting expressions offer valuable insights into the behavior of $\rho_t$ under these additional constraints.

The following corollary provides the precise statements of these results.
Notably, this corollary extends the findings in Corollary~\ref{corCA} to the dynamic setting, where $\mathcal{X}=L^2(\mathcal{F}_T)$ or $\mathcal{X}=L^{\infty}(\mathcal{F}_T)$.

\begin{corollary}
    Let us assume the same hypotheses as in Theorem~\ref{SuCO}.
    If the driver $g$ does not depend on $y$ and $g(t,\cdot,\cdot)\equiv0$, then there exists a family of drivers $(g^{\gamma})_{\gamma\in\Gamma}$ with $g^{\gamma}$ convex, $g^{\gamma}(t,\cdot,\cdot)\equiv0$ and independent from $y$ for any $\gamma\in\Gamma$, of which the corresponding family $(\rho^{\gamma}_t)_{\gamma\in\Gamma}$ of normalized, cash-additive and convex (resp.\ sublinear) risk measures verifies:
   \begin{equation}
        \rho_t(X)=\essmin_{\gamma\in\Gamma}\rho_t^{\gamma}(X)= \essmin_{\gamma\in\Gamma}\essmax_{\mu_t\in\mathcal{A}}\mathbb{E}_{\mathbb{Q}^{\mu}_t}\left[X-\int_t^TG^{\gamma}(s,\mu_s)ds\Big|\mathcal{F}_t\right],
        \label{minmax1}
    \end{equation}
 for any $X\in L^{\infty}(\mathcal{F}_T)$ and $t\in[0,T]$, with the same notation as in Equation~\eqref{eq:dualSA}.
   When $g$ is positively homogeneous then $g^{\gamma}$ is sublinear for any $\gamma\in\Gamma$ and the Fenchel transformation $G^{\gamma}$ vanishes on its proper domain. \\
    Furthermore, when $g$ is decreasing in $y$ then there exists a family of drivers $(g^{\gamma})_{\gamma\in\Gamma}$ decreasing in $y$ for any $\gamma\in\Gamma$.
    In detail, the corresponding  normalized, convex and cash-subadditive risk measures $(\rho^{\gamma})_{\gamma\in\Gamma}$ verify:
\begin{equation}
\rho_t(X)=\essmin_{\gamma\in\Gamma}\essmax_{(\beta_t,\mu_t)\in\mathcal{A}'}\mathbb{E}_{\mathbb{Q}^{\mu}_t}\left[D_t^{\beta}X-\int_t^TD_{t,s}^{\beta}G^{\gamma}(s,\beta_s,\mu_s)ds\Big|\mathcal{F}_t\right],
        \label{minmax2}
    \end{equation}
 for any $X\in L^{\infty}(\mathcal{F}_T)$ and $t\in[0,T]$, with the same notation as in Equation~\eqref{eq:dualCSA}.
   When $g$ is positively homogeneous then $g^{\gamma}$ is sublinear for any $\gamma\in\Gamma$ and the Fenchel transformation $G^{\gamma}$ vanishes on its proper domain.
   \label{corCSA}
\end{corollary}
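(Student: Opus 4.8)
The plan is to specialize the construction carried out in the proof of Theorem~\ref{SuCO} and to verify that, under the additional structural assumptions on $g$, the regularized drivers $g^{\gamma}$ inherit the same structure. Recall that in Theorem~\ref{SuCO} the family $(g^{\gamma})_{\gamma\in\Gamma}$ is built by taking, for $\gamma=(\alpha,\delta)$, the infimal convolution of the ``segment functional'' $g_{\bar y,\bar z}$ (joining the value $g(\omega,t,0,0)$ to $g(\omega,t,\alpha_t,\delta_t)$ along the segment from $(0,0)$ to $(\alpha_t,\delta_t)$, and $+\infty$ off that segment) with the regularizer $h(y,z)=k(|y|+|z|)$. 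I would first isolate, as a standalone lemma, the following purely analytic fact: if $g(\omega,t,\cdot,\cdot)$ is \emph{independent of $y$} (resp.\ \emph{non-increasing in $y$}), then so is its Pasch--Hausdorff envelope obtained by infimal convolution with $h$. Independence of $y$ is immediate because $h$ is even in $y$ and the segment functional, restricted to a segment from the origin, ``sees'' the $y$ and $z$ coordinates only through the scalar parameter $\alpha$; one then checks that the resulting infimal convolution depends on $z$ alone. Monotonicity (non-increasingness in $y$) is preserved under infimal convolution with any function because $(f\square h)(y,z)=\inf_{y',z'}\{f(y',z')+h(y-y',z-z')\}$ and a shift $y\mapsto y+\varepsilon$, $\varepsilon>0$, can only decrease the first slot's contribution when $f$ is non-increasing in its first argument (translating the infimizing $y'$ accordingly); I would phrase this carefully since the regularizer $h$ is not monotone, but the standard argument — substitute $y'\mapsto y'+\varepsilon$ in the infimum — goes through.

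Next I would handle the normalization condition $g(t,\cdot,\cdot)\equiv 0$. Here $g(\omega,t,0,0)=0$ and the whole segment functional is the zero functional on the degenerate segment $\{(0,0)\}$ when $\alpha\equiv\delta\equiv 0$, while for general $\gamma$ the value at $(0,0)$ is still $0$ by construction (the segment always passes through the origin with value $g(\omega,t,0,0)=0$). Since $h(0,0)=0$, the infimal convolution $g^{\gamma}$ satisfies $g^{\gamma}(\omega,t,0,0)\le 0$; combined with $g^{\gamma}\ge g$ from Theorem~\ref{SuCO} and $g(t,0,0)=0$ one gets $g^{\gamma}(t,0,0)=0$. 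If moreover $g$ is independent of $y$ and $\equiv 0$, then by the lemma above each $g^{\gamma}$ is independent of $y$, hence $g^{\gamma}(t,\cdot,\cdot)\equiv g^{\gamma}(t,\cdot,0)$, and I would argue that when $g$ is genuinely the zero driver the family can be taken so that $g^{\gamma}(t,\cdot,\cdot)\equiv 0$ as well — essentially because the only admissible $\gamma$ contributing in the essential minimum is the one associated with the solution $(\rho,Z)$, for which equality $g^{\gamma}=g=0$ holds. At this point items~1, 4 and 5 of the list in Section~\ref{sec:prel} (relating the structure of the driver to cash-additivity / cash-subadditivity / normalization of the induced risk measure) upgrade $\rho^{\gamma}_t$ to a normalized, convex (resp.\ sublinear), cash-additive (resp.\ cash-subadditive) risk measure.

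Finally, with the structural properties of $g^{\gamma}$ in hand, I would invoke the dual representations already recorded in Section~\ref{sec:prel}: representation~\eqref{eq:dualSA} applies to each cash-additive convex $\rho^{\gamma}_t$ (giving the inner $\essmax$ over $\mathcal A$ in~\eqref{minmax1}), and representation~\eqref{eq:dualCSA} applies to each cash-subadditive convex $\rho^{\gamma}_t$ (giving the inner $\essmax$ over $\mathcal A'$ in~\eqref{minmax2}); the outer $\essmin_{\gamma}$ is exactly~\eqref{minsub} from Theorem~\ref{SuCO}. The assertion that $G^{\gamma}$ vanishes on its proper domain in the positively homogeneous case is already part of the statement of Theorem~\ref{SuCO} (a sublinear driver has Fenchel transform equal to the indicator of a convex set), so it transfers verbatim. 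I expect the main obstacle to be the $y$-monotonicity step: making precise that infimal convolution with the non-monotone regularizer $h(y,z)=k(|y|+|z|)$ preserves ``non-increasing in $y$'', and simultaneously that it does \emph{not} destroy the equi-Lipschitz bound or the preservation of equality at $\gamma=(\rho,Z)$ — in other words, checking that all the delicate properties established in the proof of Theorem~\ref{SuCO} survive the extra constraint rather than being in tension with it. Everything else is bookkeeping on top of Theorem~\ref{SuCO} and the cited dual representations.
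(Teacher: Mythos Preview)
Your proposal has a genuine gap in both the cash-additive and the cash-subadditive parts, and it stems from the same misconception: you argue as if the Pasch--Hausdorff envelope were applied to $g$ itself, whereas in Theorem~\ref{SuCO} the envelope is taken of the \emph{segment functional} $g_{\beta,\mu}$, which equals $+\infty$ off the segment from $(0,0)$ to $(\beta,\mu)$.

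For independence of $y$: even when $g$ does not depend on $y$, the segment functional for a direction $(\beta,\mu)$ with $\beta\neq 0$ lives on a segment with a non-trivial $y$-component, and the regularizer $h(y,z)=k(|y|+|z|)$ depends on $y$. The resulting envelope
\[
g^{\beta,\mu}(t,y,z)=\inf_{m\in[0,1]}\bigl\{k(|y-m\beta|+|z-m\mu|)+mg(t,\mu)+(1-m)g(t,0)\bigr\}
\]
manifestly depends on $y$, so no preservation lemma of the kind you propose can hold. The paper does not try to prove preservation; instead it \emph{changes the index set}: when $g=g(t,z)$ one takes $\Gamma$ to consist of $\mathbb{R}^n$-valued predictable processes $\delta$ only, builds the segment functional and its envelope purely in $z$-space with regularizer $b_k(z)=k|z|$, and reruns the proof of Theorem~\ref{SuCO} with the optimal index $\delta=Z$.

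For non-increasingness in $y$: your shift argument ``substitute $y'\mapsto y'+\varepsilon$ in the infimum'' is valid only if the inner function is non-increasing in its first argument, but here that inner function is $g_{\beta,\mu}$, which is $+\infty$ off a one-dimensional segment and hence not non-increasing in $y$. A direct check (e.g.\ $\beta=1$, $\mu=0$, $g(t,1,0)<g(t,0,0)=0$) shows that $y\mapsto g^{\beta,\mu}(t,y,0)$ increases for $y>1$. The paper therefore performs a second, different modification: starting from the family $(g^{\alpha,\delta})$ of Theorem~\ref{SuCO}, it passes to the monotone hull
\[
\tilde g^{\alpha,\delta}(\omega,t,y,z):=\inf\bigl\{g^{\alpha,\delta}(\omega,t,\bar y,z):\bar y\le y\bigr\},
\]
and then verifies that $\tilde g^{\alpha,\delta}$ is convex, decreasing in $y$, satisfies $g\le \tilde g^{\alpha,\delta}\le g^{\alpha,\delta}$ (using that $g$ itself is decreasing in $y$), coincides with $g$ at the solution $(\rho,Z)$, and remains $k$-Lipschitz. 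The Lipschitz verification is done by a case analysis according to whether $g^{\alpha,\delta}(\omega,t,\cdot,z)$ is monotone or not, invoking in the non-monotone case that a convex non-monotone function on $\mathbb{R}$ is coercive with compact $\mathrm{argmin}$. None of this machinery is present in your sketch, and the route you propose does not lead to it.
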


\subsection{When can dynamic star-shaped risk measures be induced via BSDEs?}
We now provide a converse to our previous findings. Specifically, under appropriate conditions, we establish that the pointwise minimum of convex (resp.\ sublinear) risk measures, induced by a family of standard parameters, constitutes a star-shaped (resp.\ positively homogeneous) risk measure with dynamics governed by a BSDE.

Before proceeding, we present a general statement concerning the star-shaped property in the dynamic setting.
We demonstrate that star-shapedness is preserved when taking the minimum operation between star-shaped risk measures.
Additionally, the pointwise minimum risk measure inherits other significant properties.
\begin{proposition}
    Let $\Gamma$ be a set of indexes and let $(\rho^{\gamma})_{\gamma\in\Gamma}$ with $\rho^{\gamma}_t: L^{\infty}(\mathcal{F}_T)\to L^{\infty}(\mathcal{F}_t)$ be a family of cash-subadditive (resp.\ cash-additive), continuous from above and star-shaped (resp.\ positively homogeneous) dynamic risk measures such that there exists $c\in\R$ verifying $\rho^{\gamma}_t(0)=c$ for any $\gamma\in\Gamma$ and $t\in[0,T]$.
    If for each $X\in L^{\infty}(\mathcal{F}_T)$ there exists $\bar{\gamma}\in\Gamma$ such that \begin{equation}
    \rho_t(X):=\essinf_{\gamma\in\Gamma}\rho^{\gamma}_t(X)=\rho_t^{\bar{\gamma}}(X), \ \ \forall t\in[0,T],
    \label{minC}
    \end{equation}
    then $\rho_t$ is a regular, cash-subadditive (resp.\ cash-additive), continuous from above and star-shaped (resp.\ positively homogeneous) dynamic risk measure.
    In addition, if $\rho^{\gamma}_t$ is time-consistent for each $\gamma\in\Gamma$ and for each $X\in L^{\infty}(\mathcal{F}_T)$, and there exists $\tilde{\gamma}\in\Gamma$ such that
    \begin{equation}
    \rho_s(\rho_t(X))=\rho_s^{\tilde{\gamma}}(\rho_t^{\tilde{\gamma}}(X)),
    \label{TC}
    \end{equation}
    $0\leq s \leq t \leq T$,
    then $\rho_t$ is a time-consistent risk measure.
    \label{PM}
\end{proposition}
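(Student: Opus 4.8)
The plan is to verify each claimed property of $\rho_t$ in turn, exploiting the existence of the attaining index $\bar{\gamma}$ (and $\tilde{\gamma}$ for time-consistency) to reduce statements about the infimum to statements about a single member of the family. First I would record that $\rho_t$ is well-defined as a map $L^\infty(\mathcal{F}_T)\to L^\infty(\mathcal{F}_t)$: for each $X$, $\rho_t(X)=\rho_t^{\bar\gamma}(X)\in L^\infty(\mathcal{F}_t)$ by \eqref{minC}, and $\rho_t(0)=c$ since every $\rho_t^\gamma(0)=c$. Monotonicity is immediate: if $X\geq Y$, then for the index $\bar\gamma$ attaining $\rho_t(X)$ we have $\rho_t(X)=\rho_t^{\bar\gamma}(X)\geq\rho_t^{\bar\gamma}(Y)\geq\essinf_\gamma\rho_t^\gamma(Y)=\rho_t(Y)$, using monotonicity of each $\rho_t^\gamma$. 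The same one-line trick handles cash-subadditivity (resp.\ cash-additivity): given $X$ and $c_t\in L^\infty_+(\mathcal{F}_t)$, pick $\bar\gamma$ attaining $\rho_t(X+c_t)$; then $\rho_t(X+c_t)=\rho_t^{\bar\gamma}(X+c_t)\leq\rho_t^{\bar\gamma}(X)+c_t$, and since $\rho_t^{\bar\gamma}(X)\geq\rho_t(X)$ would push the wrong way, I instead pick $\bar\gamma$ attaining $\rho_t(X)$ and argue $\rho_t(X+c_t)\leq\rho_t^{\bar\gamma}(X+c_t)\leq\rho_t^{\bar\gamma}(X)+c_t=\rho_t(X)+c_t$ — so the correct choice of which index to fix depends on the direction of the inequality, and I would be careful about this throughout. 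Star-shapedness (resp.\ positive homogeneity) is handled identically: for $\lambda\in L^\infty(\mathcal{F}_t)$ valued in $[0,1]$, fix $\bar\gamma$ attaining $\rho_t(X)$, so that $\rho_t(\lambda X)\leq\rho_t^{\bar\gamma}(\lambda X)\leq\lambda\rho_t^{\bar\gamma}(X)+(1-\lambda)\rho_t^{\bar\gamma}(0)=\lambda\rho_t(X)+(1-\lambda)c$; in the positively homogeneous case the analogous two-sided bound follows by also fixing the index attaining $\rho_t(\alpha X)$ and using $\rho_t^\gamma(\alpha X)=\alpha\rho_t^\gamma(X)$.

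Next I would treat continuity from above. Let $X_n\downarrow X$ in $L^\infty(\mathcal{F}_T)$. Monotonicity of $\rho_t$ gives $\rho_t(X_n)\downarrow L\geq\rho_t(X)$ for some limit $L$. For the reverse, fix the index $\bar\gamma$ attaining $\rho_t(X)$; then $\rho_t(X_n)\leq\rho_t^{\bar\gamma}(X_n)$ for every $n$, and letting $n\to\infty$, continuity from above of $\rho_t^{\bar\gamma}$ yields $L\leq\lim_n\rho_t^{\bar\gamma}(X_n)=\rho_t^{\bar\gamma}(X)=\rho_t(X)$, so $L=\rho_t(X)$. For regularity, I would use the standard fact (recalled in the paper, e.g.\ Lemma~\ref{REGCS}) that cash-subadditivity plus continuity from above forces regularity of a dynamic risk measure on $L^\infty$; since we have just established that $\rho_t$ is cash-subadditive and continuous from above, regularity follows directly. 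In the cash-additive subcase, regularity is of course the classical statement for monetary risk measures, which again applies.

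Finally, for time-consistency under the extra hypotheses: assume each $\rho^\gamma_t$ is time-consistent and that for each $X$ there is $\tilde\gamma\in\Gamma$ with $\rho_s(\rho_t(X))=\rho_s^{\tilde\gamma}(\rho_t^{\tilde\gamma}(X))$ for $0\leq s\leq t\leq T$. Then by time-consistency of $\rho^{\tilde\gamma}$, the right side equals $\rho_s^{\tilde\gamma}(X)$, which is $\geq\essinf_\gamma\rho_s^\gamma(X)=\rho_s(X)$. For the reverse inequality I would pick the index $\hat\gamma$ attaining $\rho_s(X)$, so $\rho_s(X)=\rho_s^{\hat\gamma}(X)=\rho_s^{\hat\gamma}(\rho_t^{\hat\gamma}(X))$; since $\rho_t^{\hat\gamma}(X)\geq\rho_t(X)$, monotonicity of $\rho_s^{\hat\gamma}$ gives $\rho_s^{\hat\gamma}(\rho_t^{\hat\gamma}(X))\geq\rho_s^{\hat\gamma}(\rho_t(X))\geq\essinf_\gamma\rho_s^\gamma(\rho_t(X))=\rho_s(\rho_t(X))$. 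Combining, $\rho_s(\rho_t(X))=\rho_s(X)$.

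The routine part is the bookkeeping of which family member to freeze for each direction of each inequality. The one genuinely delicate point is that all of these arguments require the infimum in \eqref{minC} to be \emph{attained}, so that one may replace $\rho_t(\cdot)$ by a bona fide risk measure $\rho^{\bar\gamma}_t$ and invoke that member's properties; without attainment the infimum of, say, cash-subadditive measures need not be cash-subadditive in the conditional sense, and continuity from above could fail — this is exactly why the hypothesis \eqref{minC} (and its time-consistent analogue \eqref{TC}) is imposed, and I would flag it as the crux. A secondary subtlety is that the attaining index generally depends on $X$ (and on the pair $(s,t)$ for time-consistency), so one must be scrupulous not to use a single $\bar\gamma$ across different arguments of $\rho_t$ within one computation unless the structure genuinely allows it; the computations above are arranged so that each inequality uses only one fixed index at a time. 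I expect no further obstacle, as all the remaining ingredients — the comparison via $\rho^\gamma_t\geq\rho_t$, continuity from above of each member, and Lemma~\ref{REGCS} — are available from the earlier parts of the paper.
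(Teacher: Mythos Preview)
Your proposal is correct and matches the paper's proof almost verbatim: the paper likewise declares monotonicity, cash-(sub)additivity, continuity from above, and star-shapedness ``readily verified'' via the freeze-one-index trick you describe, and proves time-consistency by exactly your two inequalities (the paper's Equations~\eqref{WTC1} and~\eqref{WTC2}). The only minor difference is regularity: you deduce it by applying Lemma~\ref{REGCS} to $\rho_t$ itself once cash-subadditivity is established (note that Lemma~\ref{REGCS} does not actually need continuity from above, only cash-subadditivity), whereas the paper proves it directly by fixing attaining indices $\gamma_1$ for $X$ and $\gamma_2$ for $\mathbb{I}_A X$ and using regularity of each $\rho_t^{\gamma_i}$; both routes are equally short.
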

\begin{remark}
    As far as the (strong) time-consistency of $\rho_t$ is concerned, we underline that if $(\rho_t^{\gamma})_{\gamma\in\Gamma}$ is a family of time-consistent risk measures, then without assuming condition \eqref{TC} $\rho_t$ could fail to be time-consistent.
    The \textit{weaker} version of recursion given by $$\rho_s(\rho_t(X))\leq \rho_s(X) \ \ \forall X\in L^{\infty}(\mathcal{F}_T), 0\leq s\leq t\leq T,$$ is still valid, as is clear from Equation~\eqref{WTC2}.

   Furthermore, it is worth noting that condition \eqref{TC} holds for the family of risk measures given in Equation~\eqref{minsub}.
   Clearly, in this case time-consistency of $\rho_t$ is ensured given that $\rho_t$ is the first component of
   the solution of a BSDE.
    However, it is interesting that condition \eqref{TC} is automatically verified in this circumstance.
    According to the proof of Theorem~\ref{SuCO}, for any $X\in L^{\infty}(\mathcal{F}_T)$ we have:
    $$\rho_s(\rho_t(X))=\rho_s^{\rho(\rho_t(X)),Z^1}(\rho_t^{\rho(X),Z^2}),$$
    $0\leq s\leq t\leq T$,
    where $(\rho(\rho_t(X)),Z^1)$ and $(\rho(X),Z^2)$ are solutions to Equation~\eqref{eq: star-shapedBSDE} with terminal conditions $\rho_t(X)$ and $X$, respectively.
    By the flow property of BSDEs (cf.~Proposition~2.5 in \cite{EPQ97}) we know that $(\rho(\rho_t(X)),Z^1)\equiv(\rho(X),Z^2)\ d\mathbb{P}$-a.s.\ on $[0,t]$.
    This implies that for each $0\leq s \leq t \leq T$ we can choose the control $\bar{\gamma}=(\rho(X),Z^2)$ to represent $\rho_s(\rho_t(X))$ because it generates the same solution of $\tilde{\gamma}=(\rho(\rho_t(X)),Z^1)$ on the entire interval $[0,t]$, thus:
    $$\rho_s(\rho_t(X))=\rho_s^{\bar{\gamma}}(\rho_t^{\tilde{\gamma}})=\rho_s^{\tilde{\gamma}}(\rho_t^{\tilde{\gamma}}(X)) \ \ \forall s\in[0,t],$$
    which means that the index $\tilde{\gamma}$ verifies condition \eqref{TC}.
\end{remark}

\begin{proposition}
     Let us consider a set of indexes $\Gamma$ and a family of parameters $(g^{\gamma})_{\gamma\in\Gamma}$ with {$g^{\gamma}:\Omega\times [0,T]\times\mathbb{R}\times\mathbb{R}^n\to\mathbb{R}$} such that
    \begin{itemize}
            \item $g^{\gamma}$ is an $\mathcal{P}\times\mathcal{B}(\mathbb{R})\times\mathcal{B}(\mathbb{R}^n)$-measurable process for any $\gamma\in\Gamma$, where $\mathcal{P}$ is the $\sigma$-algebra of predictable sets on $\Omega\times[0,T]$.
        \item $(g^{\gamma})_{\gamma\in\Gamma}$ is $k$-equi-Lipschitz, for some $k>0$.
        \item $g^{\gamma}$ is convex (resp.\ sublinear) w.r.t.\ $(y,z)$ and $g^{\gamma}(\cdot,0,0)=0 \ \forall t\in[0,T]$ and for each $\gamma\in\Gamma$.
    \end{itemize}
     Let us assume that for each $X\in L^{2}(\mathcal{F}_T)$ there exists $\bar{\gamma}\in\Gamma$ such that, for all $t\in[0,T]$,
     \begin{equation}
     \essinf_{\gamma\in\Gamma}\rho_t^{\gamma}(X)=\rho_t^{\bar{\gamma}}(X) \ d\mathbb{P}\text{-a.s.},
     \label{minC1}
     \end{equation}
     where $(\rho^{\gamma}_t)_{\gamma\in\Gamma}$ is the family of convex risk measures generated via BSDEs with drivers $(g^{\gamma})_{\gamma\in\Gamma}$.
    Moreover, we require the family $(\rho^{\gamma}_t)_{\gamma\in\Gamma}$ to satisfy condition \eqref{TC}.

    Under these hypotheses, the pointwise minimum in Equation \eqref{minC1} defines a star-shaped (resp.\ positively homogeneous) risk measure $\rho_t$ induced via a BSDE.
    Moreover, the driver $g:\Omega\times[0,T]\times\mathbb{R}\times\mathbb{R}^n\to\mathbb{R}$ associated with the BSDE representing the star-shaped (resp.\ positively homogeneous) risk measure $\rho_t$ satisfies $g(\cdot,0,0)\equiv 0$ and is star-shaped (resp.\ positively homogeneous) with respect to $(y,z)$.
    \label{POBSDEs}
\end{proposition}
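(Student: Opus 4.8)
The plan is to first show that $\rho_t:=\essinf_{\gamma\in\Gamma}\rho^{\gamma}_t$ inherits, as a \emph{dynamic} object, all the structure one needs (it is a normalized, monotone, regular, time‑consistent, star‑shaped (resp.\ positively homogeneous) dynamic risk measure), then to identify $\rho_t$ with the first component of a BSDE solution via a representation theorem for $g$‑expectation–dominated time‑consistent nonlinear expectations, and finally to read off the shape of the resulting driver from Proposition~\ref{POSS}. \emph{Step 1 (elementary and dynamic properties).} Since each $g^{\gamma}$ is convex (resp.\ sublinear) in $(y,z)$ with $g^{\gamma}(\cdot,0,0)=0$, the induced $\rho^{\gamma}_t$ is convex (resp.\ sublinear) with $\rho^{\gamma}_t(0)=0$; convexity plus normalization makes each $\rho^{\gamma}_t$ star‑shaped, and sublinearity is positive homogeneity. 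Taking $\essinf$ and using that multiplication by a non‑negative $\mathcal{F}_t$‑measurable random variable commutes with $\essinf$, one checks exactly as in Lemma~\ref{Lemma:infSS} and Proposition~\ref{PM} that $\rho_t$ is a normalized, monotone, star‑shaped (resp.\ positively homogeneous) dynamic risk measure with $\rho_t(0)=0$; regularity and continuity from above/below are inherited from the $\rho^{\gamma}_t$ because $\mathbb{I}_A$ commutes with $\essinf$, and the uniform $L^2$ bounds obtained in Step 3 below guarantee $\rho_t(X)\in L^2(\mathcal{F}_t)$.

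\emph{Step 2 (time-consistency).} Each $\rho^{\gamma}_t$ is time‑consistent, being BSDE‑induced (\cite{P05,EPQ97}). Combining time‑consistency of $\rho^{\tilde\gamma}$ with condition \eqref{TC} and the attainment in \eqref{minC1}, and arguing precisely as in Proposition~\ref{PM} and the remark following it, we get $\rho_s(\rho_t(X))=\rho_s(X)$ for $0\le s\le t\le T$, i.e.\ $\rho_t$ is time‑consistent. \emph{Step 3 (uniform $g$-expectation domination).} Here the $k$‑equi‑Lipschitz hypothesis is used. Fix $\gamma$ and terminal data $X,Y$, and let $\delta Y:=Y^{\gamma,X}-Y^{\gamma,Y}$; its driver $\delta g(t):=g^{\gamma}(t,Y^{\gamma,X}_t,Z^{\gamma,X}_t)-g^{\gamma}(t,Y^{\gamma,Y}_t,Z^{\gamma,Y}_t)$ satisfies $\delta g(t)\le k(|\delta Y_t|+|\delta Z_t|)$. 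Comparing (Theorem~2.2 in \cite{EPQ97}) against the BSDE with the $k$‑Lipschitz driver $h(y,z)=k(|y|+|z|)$ and the same terminal condition $X-Y$ gives
\[
\rho^{\gamma}_t(X)-\rho^{\gamma}_t(Y)\ \le\ \mathscr{E}_{h}\!\left[X-Y\mid\mathcal{F}_t\right],
\]
with $\mathscr{E}_h$ independent of $\gamma$. Picking, for given $Y$, an index $\bar\gamma$ attaining $\rho_t(Y)=\rho^{\bar\gamma}_t(Y)$ by \eqref{minC1} and using $\rho_t(X)\le\rho^{\bar\gamma}_t(X)$, the bound passes to $\rho_t$: $\rho_t(X)-\rho_t(Y)\le\mathscr{E}_{h}[X-Y\mid\mathcal{F}_t]$ for all $X,Y,t$; taking $Y=0$ (or $X=0$) also yields the two‑sided $L^2$ bounds used in Step 1.

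\emph{Step 4 (BSDE representation and shape of the driver).} Now $\rho_t$ is a regular, time‑consistent, $\mathscr{E}_h$‑dominated dynamic risk measure on $L^2(\mathcal{F}_T)$ with $\rho_t(0)=0$, where the dominating $g$‑expectation has the $y$‑dependent Lipschitz driver $h(y,z)=k(|y|+|z|)$. By the representation theorem for $g$‑expectation–dominated time‑consistent nonlinear expectations (see \cite{P04,P05,J08}), there exists a driver $g:\Omega\times[0,T]\times\mathbb{R}\times\mathbb{R}^n\to\mathbb{R}$ satisfying the $L^2$‑standard assumptions, with Lipschitz constant $k$, such that $\rho_t=\mathscr{E}_g[\,\cdot\mid\mathcal{F}_t]$; normalization $\rho_t(0)=0$ forces $(Y,Z)\equiv(0,0)$ to solve \eqref{eq:bsde} with terminal datum $0$, hence $g(\cdot,0,0)\equiv 0$. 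Since $\rho_t$ is then the first component of the solution to \eqref{eq: star-shapedBSDE} with this $g$, and since Step~1 established that $\rho_t$ is star‑shaped (resp.\ positively homogeneous), Proposition~\ref{POSS} yields that $g$ is star‑shaped (resp.\ positively homogeneous) in $(y,z)$, which is the claim.

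The step I expect to be the main obstacle is Step~4: ensuring that the domination produced in Step~3 is in exactly the right non–cash-additive, $y$‑dependent form so that a known representation theorem applies and returns a genuinely $y$‑dependent Lipschitz driver $g$ with $g(\cdot,0,0)\equiv 0$. A secondary delicate point is the transfer of time‑consistency in Step~2, where condition \eqref{TC} is indispensable — without it the $\essinf$ of time‑consistent risk measures need not be time‑consistent, as noted in the remark after Proposition~\ref{PM}, and the whole BSDE identification in Step~4 would break down.
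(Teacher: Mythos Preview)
Your proposal is correct and follows essentially the same route as the paper: you invoke Proposition~\ref{PM} for the dynamic properties and time-consistency (using condition~\eqref{TC}), establish the $\mathscr{E}_{b_k}$-domination via comparison and the attainment in~\eqref{minC1}, then apply Peng's representation theorem (Theorem~3.1 in \cite{P05}) to produce the driver $g$ with $g(\cdot,0,0)\equiv 0$, and finally read off star-shapedness (resp.\ positive homogeneity) of $g$ from Proposition~\ref{POSS}. The paper's proof is exactly this, citing Corollary~4.4 of \cite{P05} for the per-$\gamma$ domination step that you derive by hand via the comparison theorem.
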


        We underline that if the drivers $(g^{\gamma})_{\gamma\in\Gamma}$ satisfy some further suitable conditions, then also $\rho_t$ inherits the corresponding properties.
        For instance, if the drivers $(g^{\gamma})_{\gamma\in\Gamma}$ are decreasing in $y$ then $(\rho^{\gamma})_{\gamma\in\Gamma}$ is a family of cash-subadditive risk measures (cf.\ \cite{ELKR09}).
        By Proposition~\ref{PM}, $\rho_t$ is then a cash-subadditive and star-shaped (resp.\ positively homogeneous) risk measure.
        Furthermore, if the drivers ($g^{\gamma})_{\gamma\in\Gamma}$ do not depend on $y$ then the family $(\rho^{\gamma})_{\gamma\in\Gamma}$ is cash-additive, thus $\rho_t$ is a cash-additive and star-shaped (resp.\ positively homogeneous) risk measure.
        Moreover, in these cases we can say something more about the properties of the driver $g$ driving the dynamics of $\rho_t$, as the following corollary states.
        We start with a proposition in the spirit of Proposition~\ref{POSS}.
    \begin{proposition}
    Let $g:\Omega\times [0,T]\times\mathbb{R}\times\mathbb{R}^n\to\mathbb{R}$ satisfy $L^2$-standard assumptions.
    Consider the BSDE:
\begin{equation*}
    \rho_t(X)=X+\int_t^Tg(s,\rho_s,Z_s)ds-\int_t^TZ_sdW_s,
\end{equation*}
where $X\in L^2(\mathcal{F}_T)$.
Then, $\rho_t(X)$ is cash-subadditive if and only if $g$ is decreasing in $y$.
    \label{IFFCS}
    \end{proposition}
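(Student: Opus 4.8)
The plan is to prove both implications using the comparison theorem for BSDEs (Theorem~2.2 in \cite{EPQ97}) and its converse, linked by an elementary translation argument. For a driver $h$ satisfying the $L^2$-standard assumptions, write $\mathscr{E}_h[\cdot\mid\mathcal{F}_t]$ for the associated conditional $g$-expectation, and for a constant $c\geq 0$ put $g^c(\omega,s,y,z):=g(\omega,s,y-c,z)$; since $g$ is $k$-Lipschitz and $g^c(\cdot,0,0)=g(\cdot,-c,0)$ differs from $g(\cdot,0,0)$ by at most $kc$, the driver $g^c$ again satisfies the $L^2$-standard assumptions. The key observation is that if $(\rho_t(X),Z_t)$ solves \eqref{eq:bsde} with driver $g$ and terminal value $X$, then $(\rho_t(X)+c,Z_t)$ solves the BSDE with driver $g^c$ and terminal value $X+c$, because $g^c(s,\rho_s(X)+c,Z_s)=g(s,\rho_s(X),Z_s)$; by uniqueness this yields the identity
$$\mathscr{E}_{g^c}[X+c\mid\mathcal{F}_t]=\mathscr{E}_g[X\mid\mathcal{F}_t]+c \quad \text{for all } X\in L^2(\mathcal{F}_T),\ t\in[0,T],\ c\geq 0.$$

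For the ``if'' direction, assume $g$ is non-increasing in $y$. Then for $c\geq 0$ one has $g^c(s,y,z)=g(s,y-c,z)\geq g(s,y,z)$, so the comparison theorem applied to the common terminal value $X+c$ gives $\rho_t(X+c)=\mathscr{E}_g[X+c\mid\mathcal{F}_t]\leq\mathscr{E}_{g^c}[X+c\mid\mathcal{F}_t]=\rho_t(X)+c$; the passage from constant $c$ to $\mathcal{F}_t$-measurable $c\geq 0$ is the usual localization argument (and is already contained in \cite{ELKR09}). For the ``only if'' direction, assume $\rho_t$ is cash-subadditive and fix a rational $c>0$. Cash-subadditivity together with the identity above gives, for every $\xi\in L^2(\mathcal{F}_T)$ and every $t$, $\mathscr{E}_g[\xi+c\mid\mathcal{F}_t]=\rho_t(\xi+c)\leq\rho_t(\xi)+c=\mathscr{E}_{g^c}[\xi+c\mid\mathcal{F}_t]$; since $\xi\mapsto\xi+c$ is a bijection of $L^2(\mathcal{F}_T)$, this reads $\mathscr{E}_g[Y\mid\mathcal{F}_t]\leq\mathscr{E}_{g^c}[Y\mid\mathcal{F}_t]$ for all $Y\in L^2(\mathcal{F}_T)$ and all $t\in[0,T]$. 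The converse comparison theorem for BSDEs, in the $y$-dependent Lipschitz form used in \cite{J08}, then forces $g(\omega,s,y,z)\leq g^c(\omega,s,y,z)=g(\omega,s,y-c,z)$ for $d\mathbb{P}\times ds$-a.e.\ $(\omega,s)$ and all $(y,z)\in\mathbb{R}\times\mathbb{R}^n$. Intersecting the resulting null sets over rational $c>0$ and over rational $(y,z)$ and invoking the (Lipschitz) continuity of $g$ in $(y,z)$, we conclude that $y\mapsto g(\omega,s,y,z)$ is non-increasing for $d\mathbb{P}\times ds$-a.e.\ $(\omega,s)$.

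I expect the main obstacle to be the invocation of the converse comparison theorem in precisely the right generality: one needs the version that compares two (possibly $y$-dependent) Lipschitz drivers and extracts an a.e.\ pointwise inequality $g\leq g^c$ from the mere ordering of the associated conditional $g$-expectations on all of $L^2(\mathcal{F}_T)$. This is exactly the mechanism behind the characterizations of cash-additivity, convexity and positive homogeneity recalled before the statement (see \cite{J08}), and the translation trick is what recasts cash-subadditivity as an inequality of that type; the remaining steps — localization for $\mathcal{F}_t$-measurable shifts, the null-set bookkeeping, and the use of continuity in $(y,z)$ — are routine. An alternative to the converse comparison theorem would be Peng's local representation of $g$ as an $L^2$-limit of difference quotients of $\mathscr{E}_g$, which leads to the same conclusion.
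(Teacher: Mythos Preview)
Your argument is correct. Both directions are sound: the translation identity $\mathscr{E}_{g^c}[X+c\mid\mathcal{F}_t]=\mathscr{E}_g[X\mid\mathcal{F}_t]+c$ reduces cash-subadditivity to the ordering $\mathscr{E}_g\leq\mathscr{E}_{g^c}$ of two Lipschitz $g$-expectations, and the converse comparison theorem in the $y$-dependent form (which, as you note, is precisely Jiang's difference-quotient representation) then delivers $g\leq g^c$ pointwise.

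The paper takes a closely related but more direct route for the ``only if'' direction: rather than introducing the shifted driver $g^c$ and invoking converse comparison as a black box, it works straight with Peng's local representation $g_\varepsilon(t,y,z)=\varepsilon^{-1}(\rho_{t,t+\varepsilon}(y+z(W_{t+\varepsilon}-W_t))-y)$ and observes that cash-subadditivity makes $h\mapsto g_\varepsilon(t,y+h,z)$ non-increasing for each $\varepsilon$, then passes to the limit via Lemma~2.1 of \cite{J08}. Your translation trick is a cleaner conceptual packaging --- it isolates the key inequality $\mathscr{E}_g\leq\mathscr{E}_{g^c}$ and defers all the analytic work to a single citable result --- whereas the paper's version reprises the mechanism of \cite{J08} explicitly, in parallel with its own proof of Proposition~\ref{POSS}. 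The two are equivalent in substance; your final remark that the difference-quotient approach is an alternative is apt, since that is exactly what the paper does.
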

    \begin{corollary}
        Under the same hypotheses and with the same notation as in Proposition~\ref{POBSDEs}, if the family of drivers $(g^{\gamma})_{\gamma\in\Gamma}$ does not depend on $y$, then the driver $g$ involved in the BSDEs representation of $\rho_t$ does not depend on $y$ and $g(\cdot,0)\equiv0$. Moreover, if the family of drivers $(g^{\gamma})_{\gamma\in\Gamma}$ is decreasing w.r.t.\ $y$ then the driver $g$ is decreasing w.r.t.\ $y$.
    \label{cor:CSA}
    \end{corollary}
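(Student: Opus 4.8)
The plan is to transfer the relevant $y$-structure of the drivers $g^{\gamma}$ to the risk measures $\rho^{\gamma}_t$, then to their essential infimum $\rho_t$, and finally back to the driver $g$ of the BSDE representing $\rho_t$ (which exists by Proposition~\ref{POBSDEs}). For the first step I would use the ``direct'' implications recorded in Section~\ref{sec:prel} (items 4 and 5); for the second, elementary stability of the essential infimum under a fixed shift; for the last step, the ``converse'' characterizations of Proposition~\ref{IFFCS} and of \cite{J08}.

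First I would handle the case in which every $g^{\gamma}$ is decreasing in $y$. Then each $\rho^{\gamma}_t$ is cash-subadditive (item 5 of Section~\ref{sec:prel}; see \cite{ELKR09}), and since a fixed nonnegative shift commutes with $\essinf$ in an order-preserving way,
\[
\rho_t(X+c)=\essinf_{\gamma\in\Gamma}\rho^{\gamma}_t(X+c)\le\essinf_{\gamma\in\Gamma}\bigl(\rho^{\gamma}_t(X)+c\bigr)=\rho_t(X)+c
\]
for all $X\in L^{2}(\mathcal{F}_T)$ and $c\in L^{2}_{+}(\mathcal{F}_t)$, so $\rho_t$ is cash-subadditive. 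Since $\rho_t$ is induced via the BSDE~\eqref{eq: star-shapedBSDE} with a driver $g$ (as provided by Proposition~\ref{POBSDEs}, inheriting the $L^2$-standard assumptions from the $k$-equi-Lipschitz family), Proposition~\ref{IFFCS} yields at once that $g$ is decreasing in $y$.

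Next I would handle the case in which every $g^{\gamma}$ is independent of $y$. Then $g^{\gamma}(\cdot,\cdot,0)\equiv g^{\gamma}(\cdot,0,0)=0$, so for any real $c$ the pair $(c,0)$ solves the BSDE with driver $g^{\gamma}$ and terminal condition $c$; by uniqueness $\rho^{\gamma}_t(c)=c$ for every $\gamma$, hence $\rho_t(c)=\essinf_{\gamma\in\Gamma}c=c$. Plugging the constant solution $\rho_{\cdot}(c)\equiv c$ into the BSDE~\eqref{eq: star-shapedBSDE} for $\rho_t$ and separating its finite-variation and martingale parts forces the $Z$-component to vanish and $g(\cdot,c,0)\equiv 0$ for each $c$; letting $c$ run through a countable dense set and using Lipschitz-continuity of $g$ in $y$ upgrades this to $g(\cdot,\cdot,0)\equiv 0$. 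On the other hand each $\rho^{\gamma}_t$ is cash-additive (item 4 of Section~\ref{sec:prel}; see \cite{BEK05}), and this is inherited by $\rho_t$ verbatim because a fixed $m\in L^{2}(\mathcal{F}_t)$ commutes with $\essinf$. With $g(\cdot,\cdot,0)\equiv 0$ in hand I can then invoke the converse of item 4 established in \cite{J08}: cash-additivity of $\rho_t$ forces $g$ to be independent of $y$, and together with $g(\cdot,\cdot,0)\equiv 0$ this gives $g(\cdot,0)\equiv 0$, as claimed.

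The main obstacle is the implication $\rho_t(c)\equiv c\Rightarrow g(\cdot,\cdot,0)\equiv 0$: this is precisely the extra hypothesis needed to apply the converse driver-characterization of \cite{J08}, and establishing it requires combining the normalization $g^{\gamma}(\cdot,0,0)=0$, the fact that constants are fixed points of each $\rho^{\gamma}_t$ and hence of $\rho_t$, uniqueness of BSDE solutions, and a Doob--Meyer-type identification of drift and martingale parts. Everything else — passing cash-subadditivity and cash-additivity through the essential infimum, and the final appeals to Proposition~\ref{IFFCS} and to \cite{J08} — is routine.
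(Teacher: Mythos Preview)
Your proof is correct and follows the same overall strategy as the paper: pass cash-additivity (resp.\ cash-subadditivity) from the $\rho^{\gamma}_t$ to $\rho_t$, then invoke a converse driver-characterization to recover the corresponding property of $g$; the cash-subadditive part is identical to the paper's, which simply appeals to Proposition~\ref{IFFCS}. The only difference is in the cash-additive part: the paper cites Theorem~4.4 of \cite{P04} directly (which yields both $y$-independence and $g(\cdot,0)\equiv 0$ from cash-additivity in one stroke), whereas you first establish $g(\cdot,\cdot,0)\equiv 0$ by the constant-solution/semimartingale-decomposition argument and then apply the converse from \cite{J08}; your route is slightly more self-contained, the paper's is shorter because Peng's theorem packages what you call the ``main obstacle''.
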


\setcounter{equation}{0}

\section{Minimal Supersolutions of Star-Shaped BSDEs}\label{sec:supsolution}

In this section, we present some results on the min-max representation of minimal supersolutions of BSDEs with a star-shaped driver.
Existing results on minimal supersolutions of convex BSDEs and their dual representations are in \cite{DKRT14}.
Existence and uniqueness of minimal supersolutions are established in \cite{DHK13} and \cite{HKM12}.
We recall some relevant definitions and theorems from the aforementioned references.
Let us define:
    \begin{align*}
        &\mathcal{S}:=\left\{Y:\Omega\times[0,T]\to\mathbb{R}; \ Y \mbox{ is adapted and càdlàg}\right\}, \\
        &\mathcal{L}:=\left\{Z:\Omega\times[0,T]\to\R^n; \ Z \mbox{ is predictable and } \int_0^T|Z_s|^2ds<+\infty\right\}. \\
    \end{align*}
    Consider a BSDE of the form
    \begin{equation}
        Y_t=X+\int_t^Tg(s,Y_s,Z_s)ds-\int_t^TZ_sdW_s,
    \label{eq:bsdess}\end{equation}
     where $X\in L^0(\mathcal{F}_T)$ and the driver $g:\Omega\times[0,T]\times\R\times\R^n \to \R$ is a $\mathcal{P}\otimes\mathcal{B}(\R)\otimes\B(\R^n)$-measurable function satisfying the following assumptions:
    \begin{itemize}
   \item  (LSC) Lower semicontinuity w.r.t.\ $(y,z)$,
  \item  (POS) Positivity, i.e., $g\geq 0$,
    \item (NORM) Normalization, i.e., $g(\cdot,\cdot,0)\equiv 0$.
        \end{itemize}

     Henceforth, we write `SA' to refer to these assumptions.
     We say that a pair $(Y,Z)\in\mathcal{S}\times\mathcal{L}$ is a supersolution of \eqref{eq:bsdess} if it satisfies:
    \begin{equation}
        \begin{cases}
            Y_s-\int_s^tg(u,Y_u,Z_u)du+\int_s^tZ_udW_u\geq Y_t \ \ \mbox{ for every } 0\leq s \leq t \leq T, \\
            Y_T\geq X,
        \end{cases}
        \label{Defsupersolution}
    \end{equation}
    and $Z$ is admissible, i.e., $\int_{0}^{\cdot}Z_udW_s$ is a supermartingale.
    We define the set:
    \begin{equation*}
        \mathcal{A}(X,g):=\left\{(Y,Z)\in\mathcal{S}\times\mathcal{L}: \ Z \mbox{ is admissible and \eqref{Defsupersolution} holds}\right\}.
    \end{equation*}
   A supersolution $(\bar{Y},\bar{Z})$ is said to be a \textit{minimal} supersolution of the BSDE \eqref{Defsupersolution} if $(\bar{Y},\bar{Z})\in\mathcal{A}(X,g)$ and for any other $(Y,Z)\in\mathcal{A}(X,g)$ it holds that $\bar{Y}\leq Y$.
   \begin{theorem}[\cite{HKM12}]
   If the driver $g$ satisfies SA, $X\in L^0(\mathcal{F}_T)$, $X^-\in L^1(\mathcal{F}_T)$ and $\mathcal{A}(X,g)\neq\emptyset$, then there exists a unique minimal supersolution to Equation \eqref{Defsupersolution}.
   In particular, the value process $\bar{Y}$ is given by: $$\bar{Y}_t:=\essinf\left\{Y_t\in L^{0}(\mathcal{F}_T): (Y,Z)\in\mathcal{A}(X,g)\right\}.$$
   \label{EUsuper}
   \end{theorem}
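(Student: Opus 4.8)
The plan is to adapt the standard machinery for minimal supersolutions of convex/monotone BSDEs (as in \cite{HKM12,DHK13,DKRT14}) to the present setting where $g$ only satisfies the structural assumptions SA (lower semicontinuity, positivity, normalization) but need not be convex. The key observation is that the proof of existence of a minimal supersolution in \cite{HKM12} uses, besides (LSC), (POS), (NORM) and $\mathcal{A}(X,g) \neq \emptyset$, \emph{only monotone stability / closedness under pasting} of the supersolution set, not convexity of $g$. So the first step is to verify that $\mathcal{A}(X,g)$ is nonempty (given), stable under taking pointwise infima of the $Y$-components over finitely many (hence countably many, via a lattice argument) elements, and closed in the appropriate sense under monotone limits — i.e. if $(Y^n,Z^n) \in \mathcal{A}(X,g)$ with $Y^n$ decreasing to some $\bar Y$, then there is an admissible $\bar Z$ with $(\bar Y,\bar Z) \in \mathcal{A}(X,g)$.

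Concretely, I would proceed as follows. First, define the candidate value process $\bar Y_t := \essinf\{Y_t : (Y,Z) \in \mathcal{A}(X,g)\}$, which is well defined and dominates $X^-$ in the relevant integrability sense because $X^- \in L^1(\mathcal{F}_T)$ and any supersolution is bounded below by a uniformly integrable process (here positivity of $g$ is crucial: it forces $Y_s \geq \mathbb{E}[X \mid \mathcal{F}_s]$-type lower bounds via the supermartingale property of $Y + \int Z\,dW$, after adding the admissibility of $Z$). Second, show $\{Y_t : (Y,Z) \in \mathcal{A}(X,g)\}$ is downward directed: given two supersolutions, the pointwise minimum of their $Y$-components, with the $Z$ obtained by switching between the two $Z$'s on the predictable set where one $Y$ is below the other, is again in $\mathcal{A}(X,g)$ — this uses (NORM) to handle the pasting at the switching times and (POS)/(LSC) to preserve the supersolution inequality. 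Hence there is a sequence $(Y^n,Z^n)$ with $Y^n_t \downarrow \bar Y_t$ for each $t$. Third, invoke the compactness/monotone-stability argument of \cite{HKM12} (a combination of a Komlós-type or Delbaen–Schachermayer lemma to extract an admissible limit $\bar Z$ from the $Z^n$, together with Fatou's lemma using (LSC) and (POS) to pass to the limit in the supersolution inequality \eqref{Defsupersolution}) to conclude $(\bar Y,\bar Z) \in \mathcal{A}(X,g)$, which by construction is minimal. Uniqueness of the minimal supersolution's value process is immediate from the definition as an $\essinf$; uniqueness of $\bar Z$ (up to $d\mathbb{P}\times dt$-null sets) follows because two minimal supersolutions share the same $\bar Y$ and then $\int Z\,dW$ is determined by $\bar Y$ and the generator term.

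The main obstacle I expect is the third step: extracting an admissible limiting control $\bar Z$ and justifying the passage to the limit in the supersolution inequality without convexity. The delicate points are (i) that the $Z^n$ need not be bounded in any $L^p$ norm with $p>0$, so one cannot use weak compactness directly and must instead use the a priori supermartingale bounds coming from (POS) to get that $\int Z^n\,dW$ is locally bounded in probability, then apply the lemma of \cite{DHK13}/\cite{HKM12} yielding convergence of forward convex combinations (or a subsequence) to an admissible $\bar Z$; and (ii) verifying that the limit inequality $\bar Y_s - \int_s^t g(u,\bar Y_u,\bar Z_u)\,du + \int_s^t \bar Z_u\,dW_u \geq \bar Y_t$ survives — here one uses that $g \geq 0$ and lower semicontinuous so Fatou applies to $\int g$, and that the stochastic integrals converge along the chosen (sub)sequence. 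Since the argument of \cite{HKM12} is engineered precisely to work under SA alone, I would in fact largely cite Theorem 2.1 / Proposition of \cite{HKM12} directly after checking that $\mathcal{A}(X,g)$ is downward directed, rather than reproving the stability lemma; the only genuinely new bookkeeping is the directedness via pasting, which is where (NORM) enters essentially.
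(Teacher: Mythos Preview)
The paper does not contain its own proof of this statement: Theorem~\ref{EUsuper} is quoted directly from \cite{HKM12} (Heyne, Kupper and Mainberger) as a known result and used as a black box in Section~\ref{sec:supsolution}, with no argument given in the appendix. So there is nothing in the paper to compare your proposal against.

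That said, your sketch is broadly faithful to the actual strategy in \cite{HKM12} (and the convex precursor \cite{DHK13}): downward directedness of $\{Y_t:(Y,Z)\in\mathcal{A}(X,g)\}$ via pasting, extraction of a minimizing sequence, a compactness/Koml\'os-type step for the $Z^n$, and passage to the limit in the supersolution inequality using (LSC), (POS) and Fatou. Two points of caution if you intend to fill this in rigorously. First, the pasting step is more delicate than you indicate: one does not simply switch $Z$'s on a predictable set but rather concatenates the two supersolutions at a stopping time, and the role of (NORM) is precisely to ensure that $g(\cdot,\cdot,0)=0$ so that on the concatenation interval the driver contribution vanishes and the inequality in \eqref{Defsupersolution} is preserved; the c\`adl\`ag regularity of $\bar Y$ also requires a separate argument. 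Second, your remark on uniqueness of $\bar Z$ is too quick: \cite{HKM12} establishes uniqueness of the minimal supersolution as a pair $(\bar Y,\bar Z)$ by first showing $\bar Y$ is a supermartingale of class (D), then invoking the Doob--Meyer decomposition and the structure of the supersolution inequality to identify $\bar Z$ as the martingale part --- it is not immediate from ``$\bar Y$ and the generator term determine $\int Z\,dW$'' because the supersolution inequality is an inequality, not an equality.
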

   For completeness, we recall that is also possible to weaken the hypotheses of positivity and normalization, as done in Theorem~3.10 of \cite{HKM12}.
   In the following we will also make use of some further properties of the driver $g$:
   \begin{itemize}
       \item (CONV) $g$ is convex in $(y,z)$,
       \item (SS) $g$ is star-shaped in $(y,z)$,
       \item (DEC) $g$ is not-increasing in $y$.
   \end{itemize}
   Given a driver $g$ satisfying SA, the operator $\mathcal{E}^g:L^{\infty}(\mathcal{F}_T)\to\mathcal{S}\cup\{+\infty\}$ is defined as:
   \begin{equation}
   \mathcal{E}^g(X):=\begin{cases}
       \bar{Y} \ \ &\mbox{ if } \mathcal{A}(X,g)\neq\emptyset \\
       +\infty \ \ &\mbox{ otherwise}
   \end{cases},
   \label{solsuper}
   \end{equation}
   where $\bar{Y}$ is the unique minimal supersolution to Equation \eqref{Defsupersolution} with driver $g$ and final condition $X\in L^{\infty}(\mathcal{F}_T)$. Now we can state the main theorem of this section.
\begin{theorem}
Let $g$ be a driver satisfying SA and star-shapedness.
Then there exist a set of indexes $\Gamma$ and a family of drivers $(g^{\gamma})_{\gamma\in\Gamma}$ satisfying SA and convexity such that the corresponding operators defined by Equation \eqref{solsuper} verify:
\begin{equation}
    \mathcal{E}^g_t(X)=\essmin_{\gamma\in\Gamma}\mathcal{E}^{g^\gamma}_t(X) \ \ \forall t\in[0,T],\ \forall X\in L^{\infty}(\mathcal{F}_T).
    \label{minsuper}
\end{equation}
Furthermore, we have the min-max representation:
\begin{equation}
\mathcal{E}^g_t(X)=\essmin_{\gamma\in\Gamma}\esssup_{(\beta,q)\in\mathcal{G}\times\mathcal{Q}^{\infty}}\mathbb{E}_{\mathbb{Q}^{q}_t}\left[D^{\beta}_{t,T}X-\int_t^TD^{\beta}_{t,u}G^{\gamma}(u,\beta_u,q_u)du\Bigg|\mathcal{F}_t\right],
\label{minmaxsuper}
\end{equation}
for any $t\in[0,T]$ and $X\in L^{\infty}(\mathcal{F}_T)$.
In addition, if the driver $g$ is positively homogeneous, there exists a family of sublinear operators $\mathcal{E}_t^{g^{\gamma}}$ such that Equation~\eqref{minsuper} holds and the min-max representation reduces to:
\begin{equation*}
\mathcal{E}^g_t(X)=\essmin_{\gamma\in\Gamma}\esssup_{(\beta,q)\in\mathcal{G}_{\gamma}\times\mathcal{Q}_{\gamma}^{\infty}}\mathbb{E}_{\mathbb{Q}^{q}_t}\left[D^{\beta}_{t,T}X\right],
\end{equation*}
for any $t\in[0,T]$ and $X\in L^{\infty}(\mathcal{F}_T)$.
\label{superapresentation}
\end{theorem}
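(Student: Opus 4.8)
The plan is to mirror the static construction of Proposition~\ref{staticSS} at the level of drivers, exactly as in the proof of Theorem~\ref{SuCO}, but now adapted to the supersolution setting of Theorem~\ref{EUsuper}. For a driver $g$ satisfying SA and (SS), I would for each pair $(\bar y,\bar z)\in\R\times\R^n$ first build the ``segment driver'': for $(y,z)$ on the ray from $(0,0)$ to $(\bar y,\bar z)$, say $(y,z)=\a(\bar y,\bar z)$ with $\a\in[0,1]$, set $g_{\bar y,\bar z}(\omega,t,y,z):=\a\, g(\omega,t,\bar y,\bar z)+(1-\a)g(\omega,t,0,0)=\a\, g(\omega,t,\bar y,\bar z)$ (using (NORM)), and $+\infty$ otherwise. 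Because $g$ is star-shaped this family satisfies $g_{\bar y,\bar z}\geq g$ pointwise, and for the particular choice $(\bar y,\bar z)=(Y_t(\omega),Z_t(\omega))$ with $(Y,Z)$ the minimal supersolution, equality holds along the relevant trajectory. The obstacle is that $g_{\bar y,\bar z}$ takes the value $+\infty$ and is far from Lipschitz/lower semicontinuous in a usable way, so I would regularize via the Pasch-Hausdorff (infimal-convolution) envelope with $h(y,z)=k(|y|+|z|)$, $k$ the relevant constant, producing $g^{\gamma}$, $\gamma=(\alpha,\delta)$ ranging over square-integrable predictable $\R\times\R^n$-valued processes. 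Standard properties of the Pasch-Hausdorff envelope (cf.~\cite{RW97}, as in Theorem~\ref{SuCO}) give that each $g^{\gamma}$ is real-valued, $k$-Lipschitz (hence the family is $k$-equi-Lipschitz), convex in $(y,z)$ when $g$ is (CONV or, here, convexifying the segment driver, which is affine on the segment), satisfies $g^{\gamma}\geq 0$ and $g^{\gamma}(\cdot,\cdot,0)=0$ — i.e.\ SA plus (CONV) — while still dominating $g$ and touching it at $\gamma=(Y,Z)$.

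The second step is to transfer the ordering $g^{\gamma}\geq g$ to the supersolution operators. If $(Y,Z)\in\A(X,g^{\gamma})$ then, since $g^{\gamma}\geq g\geq 0$, the supermartingale/supersolution inequalities in \eqref{Defsupersolution} for $g^{\gamma}$ imply those for $g$, hence $\A(X,g^{\gamma})\subseteq\A(X,g)$ and therefore $\E^{g}_t(X)\leq\E^{g^{\gamma}}_t(X)$ for every $\gamma$; this gives $\E^g_t(X)\leq\essmin_\gamma\E^{g^\gamma}_t(X)$. For the reverse inequality I would use the hypothesis $\A(X,g)\neq\emptyset$ only implicitly: actually the point of the theorem (echoing the ``automatically guaranteed'' remark in the introduction) is that whenever $\A(X,g)\neq\emptyset$, picking $\gamma^\star=(\bar Y,\bar Z)$ the minimal supersolution of the $g$-BSDE, one has along $(\bar Y,\bar Z)$ that $g^{\gamma^\star}(t,\bar Y_t,\bar Z_t)=g(t,\bar Y_t,\bar Z_t)$, so $(\bar Y,\bar Z)\in\A(X,g^{\gamma^\star})$ and hence $\E^{g^{\gamma^\star}}_t(X)\leq\bar Y_t=\E^g_t(X)$. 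Combining, $\E^g_t(X)=\E^{g^{\gamma^\star}}_t(X)=\essmin_\gamma\E^{g^\gamma}_t(X)$, which is \eqref{minsuper}; the essmin is attained. (If $\A(X,g)=\emptyset$ then $\E^g_t(X)=+\infty$; one checks each $\A(X,g^\gamma)=\emptyset$ too since $g^\gamma\geq g$, so both sides are $+\infty$.)

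The third step is the min-max representation \eqref{minmaxsuper}. Here I would invoke the dual representation of minimal supersolutions of convex BSDEs from \cite{DKRT14} (as recalled around \eqref{eq:conrep}), applied to each $g^{\gamma}$: since $g^{\gamma}$ satisfies SA and (CONV) and is $k$-Lipschitz, its minimal supersolution operator admits
\[
\E^{g^{\gamma}}_t(X)=\esssup_{(\beta,q)\in\mathcal{G}\times\mathcal{Q}^{\infty}}\mathbb{E}_{\mathbb{Q}^{q}_t}\!\left[D^{\beta}_{t,T}X-\int_t^TD^{\beta}_{t,u}G^{\gamma}(u,\beta_u,q_u)\,du\,\Big|\,\mathcal{F}_t\right],
\]
with $G^\gamma$ the Fenchel transform of $g^\gamma$ in $(y,z)$, and $\beta,q$ bounded by Lipschitzianity; substituting into \eqref{minsuper} yields \eqref{minmaxsuper}. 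For the positively homogeneous case, the segment drivers $g_{\bar y,\bar z}$ become positively homogeneous, the Pasch-Hausdorff envelope of a positively homogeneous convex function is sublinear, so each $g^\gamma$ is sublinear, hence $G^\gamma$ is the indicator of a convex set $\mathcal{C}_\gamma$ (it vanishes on its proper domain); the penalty term drops out and one restricts $(\beta,q)$ to the effective domain $\mathcal{G}_\gamma\times\mathcal{Q}^\infty_\gamma$, giving the reduced representation.

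The main obstacle I anticipate is the regularity bookkeeping: verifying that the Pasch-Hausdorff-regularized drivers $g^{\gamma}$ genuinely satisfy \emph{all} of SA (in particular (POS) and (NORM), which require care because infimal convolution with $h$ can a priori push values below $g(\cdot,\cdot,0)=0$ or perturb the value at $z=0$) and simultaneously that $g^{\gamma}\geq g$ with equality exactly at the touching index — this is precisely where the argument is delicate, and it is the step where the proof of Theorem~\ref{SuCO} must be re-examined to make sure it still goes through with ``supersolution'' in place of ``solution'' and with only $X\in L^\infty(\mathcal{F}_T)$, $X^-\in L^1$ rather than $X\in L^2$. A secondary, more routine check is that $\essmin$ rather than $\essinf$ is legitimate, i.e.\ that the touching index $\gamma^\star$ depends measurably on $X$ and lies in the admissible family $\Gamma$; this follows from $\gamma^\star=(\bar Y,\bar Z)$ being the (unique, by Theorem~\ref{EUsuper}) minimal supersolution, which is adapted and square-integrable on the relevant set.
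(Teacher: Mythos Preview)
Your overall architecture (segment drivers, touching index $\gamma^\star=(\bar Y,\bar Z)$, comparison, then the DKRT14 dual formula) matches the paper's, but there is a genuine gap in your key technical step. You propose to Pasch--Hausdorff regularize the segment drivers with $h(y,z)=k(|y|+|z|)$, ``$k$ the relevant constant''. In the supersolution setting of this theorem, however, SA consists only of (LSC), (POS), (NORM); there is \emph{no} Lipschitz assumption on $g$, so there is no such $k$. The inequality $g^{\gamma}\geq g$ after inf-convolution (which you need for $\A(X,g^{\gamma})\subseteq\A(X,g)$ and for the $\A(X,g)=\emptyset$ case) is proved in Lemma~\ref{Tminstarshaped} precisely by using Lipschitzianity of $g$; without it, the regularized driver can drop strictly below $g$ and the comparison breaks. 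So the regularization you import from Theorem~\ref{SuCO} is both unnecessary here and, in general, invalid.

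The paper's fix is to skip regularization entirely: it takes the \emph{raw} segment drivers $g_{\beta,\mu}$ (which are $+\infty$ off the segment) as the family $g^{\gamma}$, with $\gamma=(\alpha,\delta)$ ranging over $\Gamma:=\{(\alpha,\delta)\in\mathcal{S}\times\mathcal{L}:\delta\ \text{admissible}\}$. These $g^{\gamma}$ are proper, convex, lower semicontinuous in $(y,z)$ (Lemma~\ref{Lconvexf}) and satisfy (POS) and (NORM), hence all of SA plus (CONV); since the supersolution existence/uniqueness Theorem~\ref{EUsuper} only requires LSC, this is enough. Domination $g^{\gamma}\geq g$ is then immediate from star-shapedness (Proposition~\ref{staticSS}), and one invokes the comparison principle for minimal supersolutions (Proposition~3.3 in \cite{DHK13}) rather than the strict inclusion argument. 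The touching index $\bar\gamma=(\mathcal{E}^g(X),\bar Z)$ lies in $\Gamma$ by construction of $\mathcal{S}\times\mathcal{L}$ (note: not ``square-integrable predictable'' as you write, which is another point where your $\Gamma$ may fail to contain $\bar\gamma$), and along that trajectory $g^{\bar\gamma}\equiv g$, so $(\mathcal{E}^g(X),\bar Z)\in\A(X,g^{\bar\gamma})$. The min-max formula then comes from Theorem~3.4 (and Corollary~3.11 in the positively homogeneous case) of \cite{DKRT14}, which handles LSC convex drivers directly.
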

              We present the following corollary;  its proof  follows immediately from the proofs of Corollary~\ref{corCSA} and Theorem~\ref{superapresentation}.
              \begin{corollary}
                  Under the same hypotheses as in Theorem~\ref{superapresentation}:
                  \begin{itemize}
                      \item If $g$ is decreasing in $y$, then the family of drivers $(g^{\gamma})_{\gamma\in\Gamma}$ satisfies (DEC) and the theses of Theorem~\ref{superapresentation} hold considering $\mathcal{D}_+:=\left\{\beta\in\mathcal{D}:\beta\geq0\right\}$.
                      In particular, any element of the family $(\mathcal{E}_t^{g^{\gamma}}(X))_{\gamma\in\Gamma}$ is cash-subadditive.
                      \item If $g$ does not depend on $y$, then also the family of drivers $(g^{\gamma})_{\gamma\in\Gamma}$ does not depend on $y$ and the min-max representation becomes:
                      $$\mathcal{E}_t^{g^{\gamma}}(X)=\essmin_{\gamma\in\Gamma}\esssup_{q\in\mathcal{Q}^{\infty}}\mathbb{E}_{\mathbb{Q}^q_t}\left[X-\int_t^TG^{\gamma}(u,q_u)du\Bigg|\mathcal{F}_t\right],$$
                    for any $X\in L^{\infty}(\mathcal{F}_T)$ and $t\in[0,T]$. In particular, any element of the family $(\mathcal{E}_t^{g^{\gamma}}(X))_{\gamma\in\Gamma}$ is cash-additive.
                  \end{itemize}
                  \label{corsuper1}
              \end{corollary}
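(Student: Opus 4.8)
The plan is to combine, essentially verbatim, the proof of Theorem~\ref{superapresentation}---which produces the family $(g^\gamma)_{\gamma\in\Gamma}$ and the min-max representation~\eqref{minmaxsuper}---with the proof of Corollary~\ref{corCSA}, which shows (in the BSDE-solution setting) that additional $y$-structure of the base driver is inherited by the regularized drivers. Recall from the proof of Theorem~\ref{superapresentation} that each $g^\gamma$ is a Pasch-Hausdorff-type regularization of a ``segment functional'' $g_{\bar y,\bar z}$ built from $g$---the infimal convolution of the affine interpolation between $g(\cdot,\cdot,0,0)$ and $g(\cdot,\cdot,\bar y,\bar z)$ carried along $\{(s\bar y,s\bar z):s\in[0,1]\}$ with a Lipschitz regularizer $h$---that $\mathcal{E}^{g^\gamma}_t$ is the corresponding minimal convex supersolution operator with dual form \eqref{minmaxsuper} (from \cite{DKRT14}), and that the essential-minimum identity \eqref{minsuper} rests on $g^\gamma\geq g$ together with the fact that $g^{\bar\gamma}$ coincides with $g$ along the trajectory of the minimal supersolution of $(X,g)$. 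The whole argument consists in re-running this construction with a regularizer $h$ adapted to the extra hypothesis on $g$ and reading off the consequences.

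Assume first that $g$ satisfies (DEC). I would replace $h(y,z)=k(|y|+|z|)$ by a variant non-increasing in $y$, e.g.\ $h(y,z)=k(y^-+|z|)$. Since the infimal convolution with a function non-increasing in a coordinate stays non-increasing in that coordinate, and since---using (DEC), star-shapedness and the Lipschitz control on $g$, exactly the computations in the proof of Corollary~\ref{corCSA}---one still checks $g^\gamma\geq g$, that $g^\gamma$ coincides with $g_{\bar y,\bar z}$ on the segment, and that $g^\gamma$ is convex and satisfies SA, the new family $(g^\gamma)_{\gamma\in\Gamma}$ satisfies SA, convexity \emph{and} (DEC), and \eqref{minsuper}--\eqref{minmaxsuper} still hold. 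Each $\mathcal{E}^{g^\gamma}_t$ is then cash-subadditive: if $(Y,Z)\in\mathcal{A}(X,g^\gamma)$ and $c\geq 0$ then $g^\gamma(u,Y_u+c,Z_u)\leq g^\gamma(u,Y_u,Z_u)$ for all $u$, so $(Y+c,Z)\in\mathcal{A}(X+c,g^\gamma)$ and hence $\mathcal{E}^{g^\gamma}_t(X+c)\leq\mathcal{E}^{g^\gamma}_t(X)+c$. Finally, since $g^\gamma\geq 0$ and $g^\gamma$ is non-increasing in $y$, the map $y\mapsto g^\gamma(t,y,z)$ is bounded on $[0,\infty)$, so for $\beta<0$ one has $-\beta y-g^\gamma(t,y,z)\to+\infty$ as $y\to+\infty$, whence the penalty $G^\gamma(t,\beta,q)=\sup_{(y,z)}\{-\beta y-\langle q,z\rangle-g^\gamma(t,y,z)\}=+\infty$ on $\{\beta<0\}$; the $\esssup$ in \eqref{minmaxsuper} is therefore unchanged when restricted to $\beta\geq 0$, i.e.\ to $\mathcal{D}_+$, which is the first bullet.

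Assume next that $g$ does not depend on $y$. Using NORM, I would index the family only over segments lying in the slice $\{y=0\}$ (equivalently, take $h(y,z)=k|z|$, constant in $y$), so that each $g^\gamma$ depends only on $(t,z)$; one verifies, again as in the proof of Corollary~\ref{corCSA}, that $g^\gamma\geq g$, that $g^\gamma$ coincides with the segment functional, that $g^\gamma$ is convex and satisfies SA, and that \eqref{minsuper} persists---the parameter $\bar\gamma$ attaining equality in the proof of Theorem~\ref{superapresentation} can already be taken of this form when $g$ is $y$-independent. A $y$-independent driver gives a cash-additive $\mathcal{E}^{g^\gamma}_t$ by the same shift argument, now valid for every $m\in\R$ because $g^\gamma(u,Y_u+m,Z_u)=g^\gamma(u,Y_u,Z_u)$, so $(Y+m,Z)\in\mathcal{A}(X+m,g^\gamma)$ iff $(Y,Z)\in\mathcal{A}(X,g^\gamma)$. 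Moreover $G^\gamma(t,\beta,q)=\sup_y\{-\beta y\}+\sup_z\{-\langle q,z\rangle-g^\gamma(t,z)\}=+\infty$ for every $\beta\neq 0$, so in \eqref{minmaxsuper} one may take $\beta\equiv 0$, the discount factors $D^\beta_{t,u}$ become $1$, and the representation collapses to $\mathcal{E}^g_t(X)=\essmin_{\gamma\in\Gamma}\esssup_{q\in\mathcal{Q}^\infty}\mathbb{E}_{\mathbb{Q}^q_t}[X-\int_t^T G^\gamma(u,q_u)\,du\mid\mathcal{F}_t]$, which is the second bullet.

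The one genuinely delicate step---where I would concentrate the work---is verifying that the modified regularizer $h$ (non-increasing in $y$, respectively constant in $y$) is compatible with all the properties the proof of Theorem~\ref{superapresentation} relies on: $g^\gamma$ still SA and convex (so the minimal-supersolution machinery of \cite{HKM12,DKRT14} and the representation \eqref{minmaxsuper} apply), $g^\gamma\geq g$ with $g^\gamma$ still coinciding with its segment functional on the segment (so the essential minimum in \eqref{minsuper} is exactly $\mathcal{E}^g_t(X)$), and equality still attained at the parameter coming from the value process of $\mathcal{E}^g$. These are precisely the verifications in the proof of Corollary~\ref{corCSA}, transported to the supersolution setting; everything else is the routine bookkeeping above, which is why the corollary ``follows immediately''.
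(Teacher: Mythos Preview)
Your plan rests on a misreading of the proof of Theorem~\ref{superapresentation}. In that proof the family $(g^\gamma)_{\gamma\in\Gamma}$ is \emph{not} obtained via a Pasch--Hausdorff envelope: one takes directly the raw segment functional
\[
g_{\beta,\mu}(\omega,t,y,z)=
\begin{cases}
m\,g(\omega,t,\beta,\mu) & \text{if } (y,z)=m(\beta,\mu),\ m\in[0,1],\\
+\infty & \text{otherwise},
\end{cases}
\]
and sets $g^{\gamma}=g_{\alpha_t(\omega),\delta_t(\omega)}$ for $(\alpha,\delta)\in\Gamma$. No infimal convolution with any $h$ is performed, because the supersolution framework of \cite{HKM12,DKRT14} only requires (LSC), (POS), (NORM)---not Lipschitz continuity. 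In particular there is no Lipschitz constant $k$ available to define your modified regularizers $h(y,z)=k(y^-+|z|)$ or $h(y,z)=k|z|$, so the mechanism you propose to enforce (DEC) or $y$-independence simply does not exist in this setting.

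What the paper means by ``follows immediately from the proofs of Corollary~\ref{corCSA} and Theorem~\ref{superapresentation}'' is rather: transport the \emph{post-processing} steps of Corollary~\ref{corCSA} to the segment functionals. For $g$ decreasing in $y$, replace $g_{\beta,\mu}$ by its monotone envelope $\tilde g_{\beta,\mu}(t,y,z):=\inf\{g_{\beta,\mu}(t,\bar y,z):\bar y\le y\}$, exactly as in the proof of Corollary~\ref{corCSA}; convexity, (POS), and $\tilde g_{\beta,\mu}\ge g$ (using (DEC) of $g$) are checked by the same arguments there and in Proposition~\ref{staticSS}. For $g$ independent of $y$, build segments only in the $z$-variable, $g_\mu(t,z)=m\,g(t,\mu)$ if $z=m\mu$, $+\infty$ otherwise, so that each $g^\gamma$ is $y$-free by construction and the optimal $\bar\gamma$ in the proof of Theorem~\ref{superapresentation} already has this form. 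Your concluding paragraphs on how (DEC) and $y$-independence force $G^\gamma(t,\beta,q)=+\infty$ for $\beta<0$ (resp.\ $\beta\neq0$), thereby reducing the $\esssup$ in \eqref{minmaxsuper} to $\mathcal D_+$ (resp.\ $\beta\equiv0$), and your shift argument for cash-(sub)additivity of $\mathcal E^{g^\gamma}$, are correct and are exactly the routine bookkeeping the paper has in mind.
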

\setcounter{equation}{0}

\section{Examples}\label{sec:examples}

In this and the next section, we provide illustrations and applications within the field of finance of the theoretical results established in the previous sections.
We consider a general dynamic setting in which risk measures are induced by BSDEs and satisfy two fundamental properties: monotonicity and star-shapedness.
Both assumptions are standard and have been adopted, implicitly\footnote{Each convex risk measure is star-shaped (\cite{FR02}).} or explicitly, in a vast literature.
Monotonicity has an obvious financial meaning; star-shapedness reflects financial liquidity and size considerations (\cite{FS02,FR02,LR22,CCMTW22}).
We demonstrate the generality, flexibility and applicability of this family of dynamic risk measures and analyze their implications for capital allocation and portfolio choice.
Whereas the implications of specific classes of risk measures for major problems, such as asset pricing, portfolio choice, capital allocation, and risk management, have been analyzed under more restrictive assumptions --- imposing monotonicity, star-shapedness and other conditions such as cash-additivity and convexity --- we consider a general setting.
This allows us to characterize the primitive implications in their most general form.

Adhering to the conventional interpretation of the driver $g$ as an infinitesimal risk measure that aligns with an individual's local preferences, we underscore the relevance of the process $Z_t$ (the second component of the solution to Equation~\eqref{eq:bsde}) as the local volatility.
Specifically, for a fixed $X\in L^{\infty}(\mathcal{F}_T)$, within a time interval $[t, t+dt]$, $d\rho_t(X)$ has an expected value given by $\mathbb{E}[d\rho_t|\mathcal{F}_{t}]=g(t, \rho_t, Z_t)dt$.
This implies that, on a local scale (i.e., within the infinitesimal time interval), the expected value of the risk associated with $X$ is proportional to the driver $g$.
Furthermore, the infinitesimal quadratic variation $\mathbb{V}[d\rho_t|\mathcal{F}_t]$ of the BSDE's solution is given by $|Z_t|^2dt$, allowing $Z_t$ to be interpreted as the local conditional volatility of the risk measure.
For a more comprehensive understanding of this concept, interested readers may refer to Section~6.2 in \cite{BEK05}, Example~7.2 in \cite{ELKR09} and the references therein.

We now present three examples, two of which feature a non cash-additive/cash-subadditive risk measure, illustrating how this setting can reflect an investor's or regulator's preferences in a real financial market.
\begin{example}
    An example of a star-shaped driver that is neither convex nor concave is provided by the function $g:\Omega\times[0,T]\times\R\times\R\to\R$ defined as follows:
    $$g(t,y,z):=-\gamma_t|y|e^{-|y|}+\delta_t(z^2\mathbb{I}_{\{|z|\leq1\}}+z\mathbb{I}_{\{|z|> 1\}}),$$
    which may represent the preferences of an investor facing a stochastic interest rate $\gamma_t$ (a bounded positive process) and a risk aversion coefficient $\delta_t$ (also a bounded positive process).
    It is worth noting that the induced risk measure is neither convex nor concave, nor is it cash-additive or subadditive (as $g$ depends non-monotonically on $y$).
    However, it is star-shaped due to the star-shapedness of the driver $g$ (see Proposition \ref{POSS}). Moreover, the function $g$ is Lipschitz in $(y,z)$.
    Thus, all the assumptions of Theorem~\ref{SuCO} are fulfilled, and as a result, $\rho_t$ can be represented in the form of Equation~\eqref{minmax}.\end{example}
{\small
\begin{center}
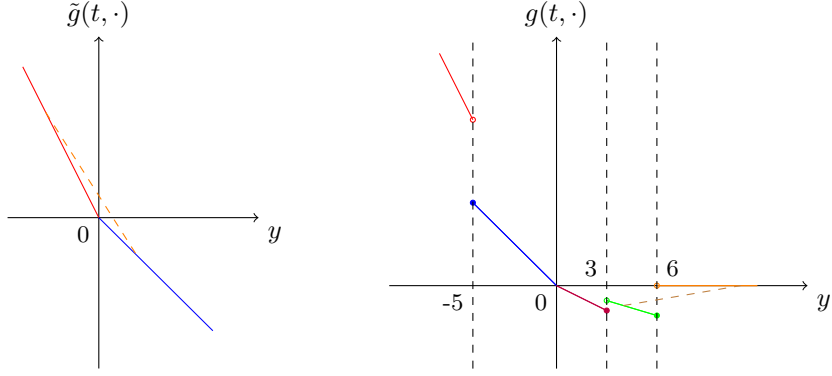
\begin{figure}[t]
\centering
\begin{tikzpicture}[scale=0.10]
\draw[->] (-12,0) -- (21,0) node[below right] {$y$};
\draw[->] (0,-20) -- (0,24) node[above ] {$\tilde{g}(t,\cdot)$};
\draw[color=red, domain=-10:0] plot(\x,{-2*\x}) ;
\draw[color=blue, domain=0:15] plot(\x,{-1*\x});
\draw (0,0) node[below left] {\small 0};
\draw[color=orange, dashed] (-7,14) -- (5,-5);
\end{tikzpicture}
\ \ \ \ \ \ \ \ \ \
\begin{tikzpicture}[scale=0.22]
\draw[->] (-10,0) -- (15,0) node[below right] {$y$};
\draw[->] (0,-5) -- (0,15) node[above] {$g(t,\cdot)$};
\draw[color=red, domain=-7:-5] plot(\x,{-2*\x}) ;
\draw[color=blue, domain=-5:0] plot(\x,{-1*\x});
\draw[color=purple, domain=0:3] plot(\x,{-0.5*\x});
\draw[color=green, domain=3:6] plot(\x,{-0.3*\x});
\draw[color=orange, domain=6:12] plot(\x,0);
\draw[color=blue, domain=-5:0] plot(\x,{-1*\x});
\filldraw[color=blue] (-5,5) circle (0.15);
\draw[color=purple, domain=0:3] plot(\x,{-0.5*\x});
\filldraw[color=purple] (3,-1.5) circle (0.15);
\draw[color=red] (-5,10) circle (0.15);
\draw[color=green, domain=3:6] plot(\x,{-0.3*\x});
\filldraw[color=green] (6,-1.8) circle (0.15);
\draw[color=green] (3,-0.9) circle (0.15);
\draw[color=orange, domain=6:12] plot(\x,0);
\draw[color=orange] (6,0) circle (0.15);
\draw[dashed] (-5,-5) -- (-5,15);
\draw (-5,0) node[below left] {\small -5};
\draw[dashed] (3,-5) -- (3,15);
\draw[dashed] (6,-5) -- (6,15);
\draw[color=brown, dashed] (4,-1.2) -- (11,0);
\draw (3,0) node[above left] {\small 3};
\draw (0,0) node[below left] {\small 0};
\draw (6,0) node[above right] {\small 6};
\end{tikzpicture}
\caption{\small Panel~(a): This figure illustrates an example of a function $y\mapsto \tilde{g}(t,y)$ for a fixed time $t\in[0,T]$ and scenario $\omega\in\Omega$ according to Example~7.2 in \cite{ELKR09}.
The function must be convex and made up of two half-lines joining at the origin with different angular coefficients.
These restrictions capture that the regulator's preferences only take into account ambiguity on the interest rate, without considering the dependence of the interest rate on the level of risk $y$.
Panel~(b): This figure illustrates a possible example of $y\mapsto g(t,y)$ for a fixed time $t\in[0,T]$ and scenario $\omega\in\Omega$, with thresholds $y_1=-5$, $y_2=3$ and $y_3=6$.
Any segment connecting the origin to a point on the graph of $g(t,\cdot)$ lies entirely within the epigraph of the function, thus $g(t,\cdot)$ is star-shaped. The dashed brown segment shows that $g(t,\cdot)$ is not convex.}
\label{fig:function-tikz}
\end{figure}
\end{center}}
\begin{example}
\vskip -0.7cm We analyze a financial situation in which star-shaped drivers play a crucial role, considering the interpretation of the driver $g$ as an infinitesimal risk measure locally compatible with the preferences of an investor or regulator.
To elaborate, we study a simplified setting involving a family of ambiguous interest rates $(\gamma^i_t)_{i=1}^n$ that range between positive bounded processes $[r^i_t,R^i_t]$, $i=1,\dots,n$.
It is reasonable to assume that the level of risk can influence the beliefs on the interest rate under certain circumstances.
Therefore, regulators may utilize `thresholds' to describe their view of the interest rate: if the driver's variable $y$ exceeds these thresholds, then the (random) interest rate will reach a different level.
Specifically, the regulator determines the thresholds $(y_i)_{i=1}^n$ such that $y_i\leq y_{i+1}$ for any $i=1,\dots,n$.
In this setting, the local preferences of regulators that are adverse to ambiguity and model the interest rate according to different risk levels may be represented by an infinitesimal generator of the form (compare with Example~7.2 in \cite{ELKR09} and refer to Figure~\ref{fig:function-tikz}, Panels~(a)--(b), for graphical illustrations):
 \begin{equation*}
    g(t,y)=\begin{cases}
        \ds\sup_{r^1_t\leq\gamma^1_t\leq R_t^1}\{-\gamma^1_ty\}=R^1_ty^--r_t^1y^+ &\mbox{ if } y< y_1\leq0, \\
        \ds\sup_{r^2_t\leq\gamma^2_t\leq R_t^2}\{-\gamma^2_ty\}=R^2_ty^--r_t^2y^+&\mbox{ if } y_1\leq y<y_2\leq0, \\
        \vdots \\
         \ds\sup_{r^{i+1}_t\leq\gamma^{i+1}_t\leq R_t^{i+1}}\{-\gamma^{i+1}_ty\}=R^{i+1}_ty^--r_t^{i+1}y^+&\mbox{ if } 0\geq y_i\leq  y \leq y_{i+1}\geq0, \\
         \vdots \\
        \ds\sup_{r^{n}_t\leq\gamma^{n}_t\leq R_t^{n}}\{-\gamma^{n}_ty\}=R^{n}_ty^--r_t^{n}y^+&\mbox{ if } y_{n-1}< y \leq
        y_{n}, \\
        0&\mbox{ if } y> y_n.
    \end{cases}
\end{equation*}

\noindent Here, $y^+$ and $y^-$ represent the positive and negative parts of $y\in\R$, respectively. The corresponding risk measure, denoted as $\rho_t(X)$, corresponds to the first component of the solution to the BSDE:
\begin{equation*}
    \rho_t(X)=X+\int_t^Tg(s,\rho_s(X))ds-\int_t^TZ_sdW_s, \ \ \mbox{ for any } X\in L^{\infty}(\mathcal{F}_T).
\end{equation*}
If the value process $\rho_t(X)$ (i.e., the risk at time $t$ of $X\in L^{\infty}(\mathcal{F}_T)$) exceeds the level $y_i$, then the investor will `locally' face a discontinuity, given by a different (worst case) interest/discount rate.
The meaning of $\gamma\equiv0$ when $y$ exceeds the threshold $y_n$, can be understood as a `barrier': when the risk is too high, the discount rate is $0$ and there is a null discount for future positions $X\in L^{\infty}(\mathcal{F}_T)$.
Let us note that, in general, by construction we have star-shapedness of $g$, whereas convexity is not necessarily verified (see Figure~\ref{fig:function-tikz}, Panel~(b), for an immediate visual check of this fact).
Thus the corresponding risk measure $\rho_t$ is star-shaped, according to Proposition~\ref{POSS}.
Moreover, although the positivity of the driver may not hold, it is important to note that $g$ is bounded from below by a constant.
As a result, the established results regarding the existence and uniqueness of supersolutions remain applicable (cf.\ \cite{HKM12}).
Hence, the theory presented in Section~\ref{sec:supsolution} can still be applied.
We would like to note that the function $g$ is neither increasing nor decreasing with respect to the variable $y$ (see Figure~\ref{fig:function-tikz}, Panel~(b)).
As a result, the risk measure $\rho_t$ may fail to satisfy both cash-subadditivity and cash-superadditivity, even though $\rho_t$ describes a possible financial framework, as explained above.

Therefore, there exists a family of convex supersolutions $(\mathcal{E}_t^{g^\gamma})_{\gamma\in\Gamma}$ such that the representation in Equation~\eqref{minsuper} holds.
\end{example}

\begin{example}
Given a driver of the form $g(t,z):=\frac{1}{\lambda_t}z^2$ (inducing an entropic risk measure), the coefficient $\lambda_t$ may be interpreted as the risk aversion of the investor to the market volatility.
Therefore, it is reasonable to suppose that an investor changes her/his preferences (and her/his risk aversion) when the volatility varies.
In the following, we assume that high volatility can increase the risk tolerance of an investor.
In particular, if an investor is seeking to buy or sell securities, high volatility can create opportunities for profit, if able to correctly predict market movements.
Similarly, traders who specialize in short-term investments may benefit from high volatility as it can create opportunities for quick profits.
Additionally, high volatility can also be beneficial for companies that engage in hedging activities as it allows them to purchase options or other instruments at lower prices (see also \cite{BCM18} for a detailed discussion).

From a mathematical perspective, we fix $n\geq 2$ $(\lambda^i_t)_{i=1}^{n}$ bounded, strictly positive stochastic processes such that $\lambda^i_t\geq\lambda^{i+1}_t$ for any $i=1,\dots,n-1$, and $(z_i)_{i=1}^{n}$ thresholds such that $|z_i|< |z_{i+1}|$ for any $i=1,\dots,n-1$.
An investor may choose the following driver (see Figure~\ref{fig:function-tikz2} for a graphical representation):
\begin{equation}
g(t,z)=
\begin{cases}
\frac{1}{\lambda^1_t}|z|^2 &\mbox{ if } |z|\leq |z_1|, \\
\frac{1}{\lambda_t^2}|z|^2 &\mbox{ if } |z_1|< |z| \leq |z_2|, \\
\vdots \\
\frac{1}{\lambda_t^{n}}|z|^2 &\mbox{ if } |z_{n-1}|< |z| \leq |z_{n}|, \\
+\infty &\mbox{ if } |z|\geq |z_n|.
\end{cases}
\label{eq:ex2}
\end{equation}

This driver has a clear interpretation: when the volatility of the investment exceeds a certain threshold, the investor's risk aversion experiences a sudden variation.
If the volatility is very low (less than $|z_1|$), then the investor has the highest possible risk aversion coefficient.
On the other hand, if the volatility exceeds $|z_n|$, then the investor's risk tolerance is modeled by a value $+\infty$ in the driver.
This situation may occur when the investment is always deemed profitable.
Note that while $g$ is star-shaped, it may not necessarily be convex.
Furthermore, the corresponding risk measure is cash-additive, as $g$ does not depend on $y$.
Finally, since $g$ satisfies the SA condition, we can apply Corollary~\ref{corsuper1} to obtain the min-max representation in Equation~\eqref{minmaxsuper}.

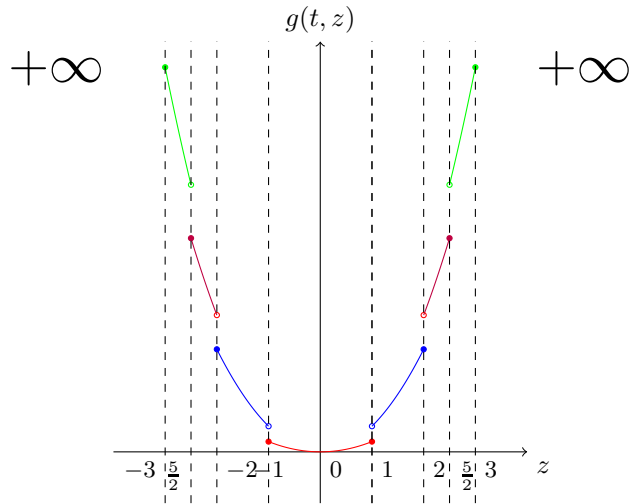
\begin{figure}[t]
    \centering
    \begin{tikzpicture}[scale=0.68]
        \draw[->] (-4,0) -- (4,0) node[below right] {$z$};
        \draw[->] (0,-1) -- (0,8) node[above] {$g(t,z)$};
        \draw[color=red, domain=-1:1] plot(\x,{\x*\x/5}) ;
        \draw[color=blue, domain=-2:-1] plot(\x,{\x*\x/2});
        \draw[color=blue, domain=1:2] plot(\x,{\x*\x/2});
        \draw[color=purple, domain=-2.5:-2] plot(\x,{\x*\x/1.5});
        \draw[color=purple, domain=2:2.5] plot(\x,{\x*\x/1.5});
        \draw[color=green, domain=-3:-2.5] plot(\x,{\x*\x/1.2});
        \draw[color=green, domain=2.5:3] plot(\x,{\x*\x/1.2});

    \filldraw[color=red] (-1,1/5) circle (0.05);
    \filldraw[color=red] (1,1/5) circle (0.05);
    \filldraw[color=blue] (-2,2) circle (0.05);
    \filldraw[color=blue] (2,2) circle (0.05);
    \filldraw[color=purple] (-5/2,25/6) circle (0.05);
    \filldraw[color=purple] (5/2,25/6) circle (0.05);
    \filldraw[color=green] (-3,7.5) circle (0.05);
    \filldraw[color=green] (3,7.5) circle (0.05);
    \draw[color=green] (5/2,5.2083333) circle (0.05);
    \draw[color=green] (-5/2,5.2083333) circle (0.05);
    \draw[color=red] (-2,2.666666) circle (0.05);
    \draw[color=red] (2,2.666666) circle (0.05);
    \draw[color=blue] (1,0.5) circle (0.05);
    \draw[color=blue] (-1,0.5) circle (0.05);

         \draw[dashed] (-1,-1) -- (-1,8);
         \draw[dashed] (1,-1) -- (1,8);
         \draw[dashed] (1,-1) -- (1,8);
         \draw[dashed] (-2,-1) -- (-2,8);
         \draw[dashed] (-2.5,-1) -- (-2.5,8);
         \draw[dashed] (2,-1) -- (2,8);
         \draw[dashed] (2.5,-1) -- (2.5,8);
         \draw[dashed] (-3,-1) -- (-3,8);
         \draw[dashed] (3,-1) -- (3,8);

         \draw (-1,0) node[below] {\small $-1$};
         \draw (1,0) node[below right] {\small $1$};
        \draw (-2,0) node[below right] {\small $-2$};
         \draw (2,0) node[below right] {\small $2$};
         \draw (-2.5,0) node[below left] {\small $\frac{5}{2}$};
         \draw (2.5,0) node[below right] {\small $\frac{5}{2}$};
        \draw (0,0) node[below right] {\small $0$};
        \draw (-3,0) node[below left] {\small $-3$};
        \draw (3,0) node[below right] {\small $3$};
        \draw (4,8) node[below right] {\huge $+\infty$};
        \draw (-4,8) node[below left] {\huge $+\infty$};

    \end{tikzpicture}
    \caption{\small This figure illustrates a potential star-shaped driver of the form given in Equation~\eqref{eq:ex2}, when $z\in\R$.
    Here, the thresholds are selected as $z_1=1,z_2=2,z_3=\frac{5}{2},z_4=3.$}
    \label{fig:function-tikz2}
\end{figure}
\end{example}
\setcounter{equation}{0}

\section{Applications}
\label{sec:app}
In this section, we explore two potential applications of star-shaped risk measures: capital allocation rules and portfolio choice.
In both applications, we highlight the possibility of reducing the star-shaped problem to a family of convex problems that can in principle be dealt with using standard techniques.

\subsection{Capital allocation rules for dynamic star-shaped risk measures}

In this subsection, we examine capital allocation rules (CARs) for risk measures that exhibit star-shapedness and of which the dynamics are governed by BSDEs.
In intuitive terms, a CAR describes how to equitably subdivide the capital requirement (or margin) of an aggregate risky position $X$ among the sub-portfolios or business lines that make up $X$, based on some financially sound criteria.
More explicitly, given a risk measure $\rho$, an aggregate position $X$ and its business lines $X_1, \ldots, X_n$, a CAR specifies how to apportion the risk associated with $X$ (according to $\rho$) across $X_1, \ldots, X_n$ by allocating capital $k_i$ to $X_i$ such that $\rho(X)=\sum_{i=1}^{n}k_i$, when the so-called \textit{full allocation} of the CAR is achieved.

Motivated by these qualitative observations, we recall the formal definition of CARs and discuss some properties that they may satisfy (see, e.g.,  \cite{D01,K05,CR18} for an axiomatic approach and see, for the dynamic setting, \cite{KO14,MGRO22,RZ23} and the references therein).

\begin{definition}
    Let $\rho_t:L^{\infty}(\mathcal{F}_T)\to L^{\infty}(\mathcal{F}_t)$ be a dynamic risk measure.
    We say that
             $\Lambda_t:L^{\infty}(\mathcal{F}_T)\times L^{\infty}(\mathcal{F}_T)\to L^{\infty}(\mathcal{F}_t)$ is a CAR with underlying risk measure $\rho_t$ if the following `consistency condition' holds: $$\Lambda_t(X,X)=\rho_t(X) \mbox{ for any } X\in L^{\infty}(\mathcal{F}_T).$$
             Furthermore, $\Lambda_t:L^{\infty}(\mathcal{F}_T)\times L^{\infty}(\mathcal{F}_T)\to L^{\infty}(\mathcal{F}_t)$ is called an \textit{audacious} CAR if only  $\Lambda_t(X,X)\leq\rho_t(X)$ for any $X\in L^{\infty}(\mathcal{F}_T)$.
             \label{DefCARs}
\end{definition}
Below we present a non-exhaustive list of axioms that CARs may satisfy; see, among many others, \cite{D01,K05,MGRO22} for further details and for a  discussion of the corresponding financial interpretation. \\ \\
\noindent - \textit{Normalization:} for any $t\in[0,T]$ and $Y\in L^{\infty}(\mathcal{F}_T)$, $\Lambda_t(0,Y)=0$.

\noindent - \textit{Monotonicity:} if $X\geq Y$ ($X,Y\in L^{\infty}(\mathcal{F}_T)$), then $\Lambda_t(X,Z)\geq \Lambda_t(Y,Z)$ for any $t\in [0,T]$, $Z \in L^{\infty}(\mathcal{F}_T)$.

\noindent - \textit{No-undercut}: $\Lambda_t(X,Y)\leq \rho_t(X)$ for any $t\in[0,T]$, $X,Y\in L^{\infty}(\mathcal{F}_T)$.

\noindent - \textit{1-cash-additivity}: $\Lambda_t(X+c_t,Y)=\Lambda_t(X,Y)+c_t$ for any $t\in[0,T]$, \mbox{$c_t\in L^{\infty}(\mathcal{F}_t),$} $X,Y \in L^{\infty}(\mathcal{F}_T)$.

\noindent - \textit{1-cash-subadditivity}: $\Lambda_t(X+c_t,Y)\leq\Lambda_t(X,Y)+c_t$ for any $t\in[0,T]$, \mbox{$c_t\in L^{\infty}_+(\mathcal{F}_t),$} $X,Y \in L^{\infty}(\mathcal{F}_T)$.

\noindent - \textit{Sub-allocation}: for any $t \in [0,T]$,
$$\Lambda_t \left(X,X \right) \geq \sum_{i=1}^{n} \Lambda_t \left(X_i,X \right)$$
holds for any $X_i,X\in L^{\infty}(\mathcal{F}_T)$, $i=1,\dots,n$, with $\sum_{i=1}^n X_i=X$.

\noindent - \textit{Weak-convexity}: for any $t \in [0,T]$,
$$\Lambda_t \left(\ds\sum_{i=1}^n a_iX_i,X \right)\leq \ds\sum_{i=1}^n a_i\Lambda_t(X_i,X)$$
holds for any $X, X_i\in L^{\infty}(\mathcal{F}_T),a_i\in[0,1]$, $i=1,\dots,n$, with $\sum_{i=1}^na_i=1$ and $\sum_{i=1}^na_iX_i=X$.\medskip \\
We emphasize that weak-convexity plays a similar role for capital allocation as star-shapedness in the context of risk measures.

\subsubsection{Star-shaped CARs: The subdifferential case}

CARs with convex risk measures and their relation with BSDEs have been widely studied in the literature (e.g., \cite{D01,CR18,MGRO22,RZ23}).
Henceforth, we consider a star-shaped risk measure $\rho_t$ induced via Equation~\eqref{eq: star-shapedBSDE}, whose dual representation is provided in Equation~\eqref{minmax}, assuming $g(\cdot,0,0)\equiv 0$ (normalization) and $X\in L^{\infty}(\mathcal{F}_T)$.
More explicitly, for each $Y\in L^{\infty}(\mathcal{F}_T)$ there exist $\gamma^Y\in\Gamma$ and $(\beta^Y,q^Y)\in\mathcal{G}\times\mathcal{Q}^{\infty}$ bounded processes such that:
\begin{equation}
\rho_t(Y)=\mathbb{E}_{\mathbb{Q}^{q^Y}_t}\left[D^{\beta^Y}_tY-\int_t^TD^{\beta^Y}_{t,s}G^{\gamma^Y}(s,\beta^Y_s,q^Y_s)ds\Bigg|\mathcal{F}_t\right].
\label{eq:subSS}
\end{equation}
It is important to note that the optimal parameters $\gamma^Y,\beta^Y,$ and $q^Y$ in the representation of $\rho_t(Y)$ depend on $Y$.
This means that there may not be a unique $(\gamma,\beta,q)\in\Gamma\times\mathcal{G}\times\mathcal{Q}^{\infty}$ that satisfies the min-max condition in Equation~\eqref{minmax}.
This situation is similar to the one encountered when dealing with CARs based on subdifferentiable convex risk measures. Therefore, we need to select one of the possible optimal scenarios in the min-max representation of $\rho_t$ going forward.
\begin{definition}
Let us define the subdifferential CAR $\Lambda^{SS}_t:L^{\infty}(\mathcal{F}_T)\times L^{\infty}(\mathcal{F}_T)\to L^{\infty}(\mathcal{F}_t)$ as:
\begin{equation*}
\Lambda^{SS}_t(X,Y):=\mathbb{E}_{\mathbb{Q}^{q^Y}_t}\left[D^{\beta^Y}_tX-\int_t^TD^{\beta^Y}_{t,s}G^{\gamma^Y}(s,\beta^Y_s,q^Y_s)ds\Bigg|\mathcal{F}_t\right],
\end{equation*}
where $(\gamma^Y,\beta^Y,q^Y)\in\Gamma\times\mathcal{G}\times\mathcal{Q}^{\infty}$ is an optimal scenario in the representation~\eqref{eq:subSS} of the underlying dynamic star-shaped risk measure $\rho_t(Y)$.
\end{definition}
\noindent We observe that the consistency condition $\Lambda_t(Y,Y)=\rho_t(Y)$ for any $Y\in L^{\infty}(\mathcal{F}_T)$ is verified, thus $\Lambda^{SS}_t$ is indeed a CAR in the sense of Definition~\ref{DefCARs}.

     The name `subdifferential' star-shaped CARs can be motivated similarly as for convex risk measures, where
the subdifferential CAR
$$\Lambda^{conv}_t(X,Y)=\mathbb{E}_{\mathbb{Q}^{q^Y}_t}\left[\left. D^{\beta^Y}_tX-\int_t^TG(s,\beta^Y_s,q_s^Y) ds \right| \mathcal{F}_t\right],$$
is defined by means of a possible optimal scenario $(D^{\beta^Y}_t,\mathbb{Q}^{q^Y}_t)$ in the dual representation of $\rho^{conv}_t(Y)$ and where all optimal $(D^{\beta^Y}_t,\mathbb{Q}^{q^Y}_t)$ belong to the subdifferential of $\rho_t(Y)$ (cf. \cite{RZ23}).

    In the more general setting of Equation~\eqref{minmax}, where the risk measure $\rho_t$ may not necessarily be convex, the subdifferential CAR $\Lambda^{SS}_t$ is given by the subdifferential CAR associated to the convex risk measure $\rho^{\gamma^Y}_t (Y)$ corresponding to some $\gamma^Y\in\Gamma$ such that $\rho_t(Y)=\rho_t^{\gamma^Y}(Y)$.
    That is, for any fixed portfolio $Y$, $\Lambda^{SS}_t(\cdot,Y)$ is the subdifferential CAR associated with the underlying risk measure $\rho^{\gamma^{Y}}_t$, and its representation does not rely on the elements of $\partial\rho_t (Y)$ (which can be an empty set) but on the elements of the family of subgradients $(\partial\rho_t^{\gamma} (Y))_{\gamma\in\Gamma}$, which are non-empty sets due to the convexity of $\rho_t^{\gamma}$ for any $\gamma\in\Gamma$.

Let us analyze the properties of the subdifferential CAR $\Lambda^{SS}_t$.
\begin{proposition}
    $\Lambda^{SS}_t:L^{\infty}(\mathcal{F}_T)\times L^{\infty}(\mathcal{F}_T)\to L^{\infty}(\mathcal{F}_t)$ is a monotone and weakly-convex CAR, which also satisfies sub-allocation.
    Moreover, when $\rho_t$ is cash-additive  (resp.\ cash-subadditive), then $\Lambda^{SS}_t$ is 1-cash-additive (resp.\ 1-cash-subadditive).
\end{proposition}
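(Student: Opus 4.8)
The plan is to verify each property directly from the defining formula for $\Lambda^{SS}_t$, exploiting the fact that, once the optimal index $\gamma^Y$ and optimal scenario $(\beta^Y,q^Y)$ are fixed by the second argument $Y$, the map $X\mapsto\Lambda^{SS}_t(X,Y)$ is nothing but the subdifferential CAR $\Lambda^{conv,\gamma^Y}_t(\cdot,Y)$ attached to the \emph{convex} risk measure $\rho^{\gamma^Y}_t$ represented through its BSDE dual formula \eqref{eq:conrep}. Thus each assertion reduces to a known property of subdifferential CARs for convex risk measures (cf.\ the discussion around $\Lambda^{conv}_t$ and \cite{RZ23}), together with the min-max identity $\rho_t(Y)=\essmin_{\gamma\in\Gamma}\rho^{\gamma}_t(Y)=\rho^{\gamma^Y}_t(Y)$ from Theorem~\ref{SuCO}.

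First I would establish \textbf{monotonicity}. Fix $Z\in L^{\infty}(\mathcal{F}_T)$ and let $X_1\geq X_2$. The scenario $(\gamma^Z,\beta^Z,q^Z)$ is the same for both, so
$$\Lambda^{SS}_t(X_1,Z)-\Lambda^{SS}_t(X_2,Z)=\mathbb{E}_{\mathbb{Q}^{q^Z}_t}\!\left[D^{\beta^Z}_t(X_1-X_2)\,\middle|\,\mathcal{F}_t\right]\geq0,$$
since $D^{\beta^Z}_t>0$ and $\mathbb{Q}^{q^Z}_t$ is a probability measure equivalent to $\mathbb{P}$ on $\mathcal{F}_t$. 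Next, \textbf{sub-allocation}: given $X_i$ with $\sum_i X_i=X$, by linearity of $X\mapsto\Lambda^{SS}_t(X,X)$ in its first slot (again, the scenario $(\gamma^X,\beta^X,q^X)$ is frozen by the common second argument $X$) we get $\sum_{i=1}^n\Lambda^{SS}_t(X_i,X)=\Lambda^{SS}_t(X,X)=\rho_t(X)=\Lambda^{SS}_t(X,X)$, so sub-allocation holds with equality. For \textbf{weak-convexity}, take $X,X_i$ with $a_i\in[0,1]$, $\sum a_i=1$, $\sum a_iX_i=X$; the second argument is again the common $X$, and the representing scenario is convex-type (it comes from the dual representation \eqref{eq:conrep} of the convex $\rho^{\gamma^X}_t$), hence $\Lambda^{SS}_t(\sum_i a_iX_i,X)=\rho_t(X)=\rho^{\gamma^X}_t(X)$, while $\sum_i a_i\Lambda^{SS}_t(X_i,X)$ is a sum of evaluations of the \emph{affine} functional $X'\mapsto\mathbb{E}_{\mathbb{Q}^{q^X}_t}[D^{\beta^X}_tX'-\int_t^TD^{\beta^X}_{t,s}G^{\gamma^X}(s,\beta^X_s,q^X_s)ds\,|\,\mathcal{F}_t]$; since this functional is affine (indeed linear up to the constant term) and dominates $\rho^{\gamma^X}_t$ pointwise on $L^\infty(\mathcal{F}_T)$ — this is exactly the content of \eqref{eq:conrep} — we obtain $\sum_i a_i\Lambda^{SS}_t(X_i,X)\geq\rho^{\gamma^X}_t(\sum_i a_iX_i)=\rho_t(X)=\Lambda^{SS}_t(X,X)$, which is weak-convexity. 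Note weak-convexity together with $\Lambda^{SS}_t(X,X)=\rho_t(X)$ also re-derives sub-allocation in the case $n$ arbitrary with $a_i$ replaced by $1$ if one prefers that route.

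For the \textbf{1-cash-additivity/subadditivity} statements, fix $Y$ and $c_t\in L^{\infty}(\mathcal{F}_t)$ (resp.\ $c_t\in L^{\infty}_+(\mathcal{F}_t)$). Since $\mathbb{Q}^{q^Y}_t|_{\mathcal{F}_t}\equiv\mathbb{P}$ and $D^{\beta^Y}_t$ is $\mathcal{F}_t$-measurable? — here is where care is needed: $D^{\beta^Y}_t=e^{-\int_t^T\beta^Y_u du}$ is $\mathcal{F}_T$-measurable, not $\mathcal{F}_t$-measurable, so $\mathbb{E}_{\mathbb{Q}^{q^Y}_t}[D^{\beta^Y}_t c_t|\mathcal{F}_t]=c_t\,\mathbb{E}_{\mathbb{Q}^{q^Y}_t}[D^{\beta^Y}_t|\mathcal{F}_t]$, which equals $c_t$ only when $\beta^Y\equiv0$ on $[t,T]$. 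When $\rho_t$ is cash-additive, the relevant drivers $g^\gamma$ do not depend on $y$ (Corollary~\ref{corCSA}), so $\beta^Y\equiv0$ and $D^{\beta^Y}_{t,s}\equiv1$, giving $\Lambda^{SS}_t(Y+c_t,Y')$... — actually the subtlety is that the second argument determining the scenario must be handled: the scenario is chosen by the \emph{second} slot $Y$, which is unchanged, so with $\beta^Y\equiv0$ we get $\Lambda^{SS}_t(X+c_t,Y)=\Lambda^{SS}_t(X,Y)+c_t$, i.e.\ 1-cash-additivity. When $\rho_t$ is merely cash-subadditive, Corollary~\ref{corCSA} gives drivers $g^\gamma$ decreasing in $y$, hence the optimal $\beta^Y\geq0$, so $D^{\beta^Y}_t=e^{-\int_t^T\beta^Y_u du}\leq1$ and $\mathbb{E}_{\mathbb{Q}^{q^Y}_t}[D^{\beta^Y}_tc_t|\mathcal{F}_t]=c_t\mathbb{E}_{\mathbb{Q}^{q^Y}_t}[D^{\beta^Y}_t|\mathcal{F}_t]\leq c_t$ for $c_t\geq0$, yielding $\Lambda^{SS}_t(X+c_t,Y)\leq\Lambda^{SS}_t(X,Y)+c_t$, i.e.\ 1-cash-subadditivity.

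\textbf{Main obstacle.} The delicate point is the bookkeeping of which variable fixes the optimal scenario: all the ``linearity'' used above is linearity in the \emph{first} argument only, and is legitimate precisely because $(\gamma^Y,\beta^Y,q^Y)$ depends on the \emph{second} argument, which stays frozen in every inequality we need (monotonicity fixes $Z$; sub-allocation, weak-convexity, and the consistency check all fix $X$; cash-additivity fixes $Y$). The one genuinely non-trivial input is the pointwise-domination property of the affine dual functionals over the convex risk measures $\rho^{\gamma}_t$, i.e.\ that for every admissible $(\beta,q)$ one has $\mathbb{E}_{\mathbb{Q}^{q}_t}[D^{\beta}_tX-\int_t^TD^{\beta}_{t,s}G^{\gamma}(s,\beta_s,q_s)ds|\mathcal{F}_t]\leq\rho^{\gamma}_t(X)$ with equality at the optimum — but this is exactly representation \eqref{eq:conrep}, which we are entitled to invoke. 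Everything else is routine manipulation of conditional expectations under the equivalent measures $\mathbb{Q}^{q}_t$.
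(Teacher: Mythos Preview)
The paper omits the proof entirely, referring to arguments ``similar to those in \cite{MGRO22,RZ23}'', so there is no paper proof to compare against in detail. Your overall strategy --- freeze the second argument, so that $X'\mapsto\Lambda^{SS}_t(X',Y)$ becomes the affine dual functional attached to the convex $\rho^{\gamma^Y}_t$ --- is exactly the right one, and your monotonicity and cash-(sub)additivity arguments are correct. However, two of your computations contain errors.

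\textbf{Sub-allocation.} You claim $\sum_{i=1}^n\Lambda^{SS}_t(X_i,X)=\Lambda^{SS}_t(X,X)$ by ``linearity of $X'\mapsto\Lambda^{SS}_t(X',X)$''. But this map is \emph{affine}, not linear: writing $c_t:=\mathbb{E}_{\mathbb{Q}^{q^X}_t}\!\big[\int_t^T D^{\beta^X}_{t,s}G^{\gamma^X}(s,\beta^X_s,q^X_s)\,ds\,\big|\,\mathcal{F}_t\big]$ for the penalty, one gets
\[
\Lambda^{SS}_t(X,X)-\sum_{i=1}^n\Lambda^{SS}_t(X_i,X)=(n-1)\,c_t,
\]
which is $\geq 0$ only because $c_t\geq 0$. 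Nonnegativity of $c_t$ holds since $g^{\gamma^X}(\cdot,0,0)\equiv g(\cdot,0,0)\equiv 0$ forces $G^{\gamma^X}\geq 0$ on its domain. This is precisely the mechanism the paper uses when proving sub-allocation of $\Lambda^{p-AS}_t$ in Proposition~\ref{prop:AS}. Your suggested alternative route, deducing sub-allocation from weak-convexity, does not work either: weak-convexity concerns convex combinations ($\sum a_i=1$), not plain sums.

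\textbf{Weak-convexity.} You assert that the affine dual functional ``dominates $\rho^{\gamma^X}_t$ pointwise''. The direction is reversed: each term inside the $\esssup$ in \eqref{eq:conrep} is $\leq \rho^{\gamma^X}_t$, not $\geq$. Fortunately you do not need any comparison with $\rho^{\gamma^X}_t$ at all. Since $\sum a_i=1$ and $X'\mapsto\Lambda^{SS}_t(X',X)$ is affine, one has directly
\[
\Lambda^{SS}_t\Big(\sum_i a_iX_i,X\Big)=\sum_i a_i\,\Lambda^{SS}_t(X_i,X),
\]
so weak-convexity holds with equality.
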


We omit the proof, which can be given using arguments similar to those in \cite{MGRO22,RZ23}.
\begin{remark}
    A well-known property of CARs for convex risk measures is the no-undercut property.
    In the case of star-shaped risk measures we can establish a related property, which is weaker but linked to no-undercut. Indeed, fixing an optimal scenario $(\gamma^{Y},\beta^Y,q^Y)\in\Gamma\times\mathcal{G}\times\mathcal{Q}^{\infty}$ in the representation of $\rho_t(Y)$ we have that:
    \begin{align*}\Lambda^{SS}_t(X,Y)&=\mathbb{E}_{\mathbb{Q}^{q^Y}_t}\left[\left.D^{\beta^Y}_tX \right| \mathcal{F}_t\right]-c^{\gamma^Y}_t(D^{\beta^Y}_t\mathbb{Q}^{q^Y}_t) \\
    &\leq\essmax_{(\beta,q)\in \mathcal{G}\times\mathcal{Q}^{\infty}} \mathbb{E}_{\mathbb{Q}^q_t}\left[ \left.D^{\beta}_tX-c^{\gamma^Y}_t(D^{\beta}_t\mathbb{Q}^{q}_t) \right| \mathcal{F}_t\right] \\
    &=\rho^{\gamma^Y}_t(X), \ \forall X\in L^{\infty}(\mathcal{F}_T),
    \end{align*}
    where
    \begin{equation}
c^{\gamma^Y}_t(D^{\beta^Y}_t\mathbb{Q}^{q^Y}_t):=\mathbb{E}_{\mathbb{Q}^{q^Y}_t}\left[\int_t^TD^{\beta^Y}_{t,s}G^{\gamma^Y}(s,\beta^Y_s,q^Y_s)ds|\mathcal{F}_t\right].
        \label{penfun}
    \end{equation}

    The inequality $\Lambda^{SS}_t(X,Y)\leq \rho^{\gamma^Y}_t(X)$ instead of the usual no-undercut property $\Lambda^{SS}_t(X,Y)\leq \rho_t(X)$ is due to the fact that an investor with preferences represented by $g^{\gamma^Y}$ has no incentive to separate the sub-portfolio $X$ from $Y$, as the capital needed to cover the risk of $X$ as a stand-alone portfolio measured through $\rho^{\gamma^Y}_t$ is greater than or equal to the capital required when $X$ is considered as a sub-portfolio of $Y$.
    However, another investor with different preferences might still prefer to separate $X$ from $Y$, if $\Lambda^{SS}_t(X,Y)>\rho^{\gamma}_t(X)$ for some $\gamma\in\Gamma$.
    This is a reasonable interpretation in the context of star-shaped risk measures, where the risk evaluation of a portfolio $X$ depends upon the specific element chosen in the set $\Gamma$ of possible `convex' risk measures available to the investor.
 \end{remark}

 \subsubsection{Star-shaped CARs: Aumann-Shapley capital allocation}
 We now focus on studying Aumann-Shapley capital allocation in the context of a star-shaped underlying dynamic risk measure.
 It is important to note that Aumann-Shapley CARs have a close connection with Aumann-Shapley values and their relevance in cooperative game theory (see, e.g., \cite{D01,CR18} and the  references therein).

Let us consider a star-shaped risk measure $\rho_t$ induced via BSDE \eqref{eq: star-shapedBSDE}.
For each $Y\in L^{\infty}(\mathcal{F}_T)$ we label by $\rho^{\gamma^Y}_t$ the convex risk measure such that $$\rho_t(Y)=\min_{\gamma\in\Gamma}\rho_t^{\gamma} (Y)=\rho_t^{\gamma^Y}(Y),$$ with $\gamma^Y$ one (of the possible) suitable $\gamma\in\Gamma$ attaining the minimum (cf.\ also Equation \eqref{minsub}).
We define `$\omega\times\omega$' the functions:
\begin{align}
    \Lambda^{AS}_t(X,Y)&=\int_0^1\mathbb{E}_{\mathbb{Q}_t^{q^{mY}}}\left[\left.D^{\beta^{mY}}_tX \right|\mathcal{F}_t \right]dm,
    \label{ASCAR} \\
     \Lambda^{p-AS}_t(X,Y)&=\int_0^1\Lambda_t^{SS}(X,mY)dm,
     \label{penCAR}
\end{align}
where $\beta^{\cdot}$ and $q^{\cdot}$ are optimal scenarios in the dual representation of the convex risk measure $\rho_t^{\gamma^{Y}}(\cdot).$
Let us stress that equations \eqref{ASCAR} and \eqref{penCAR} coincide when the penalty function $c_t$ defined in Equation \eqref{penfun} is null for any $t\in[0,T]$ (e.g., when the driver $g$ is positively homogeneous).
 We call $\Lambda_t^{AS}$ the Aumann-Shapley CAR, whereas we refer to $\Lambda^{p-AS}_t$ as the \textit{penalized} Aumann-Shapley CAR.
 These CARs have counterparts that are well-known in the context of static or dynamic convex risk measures satisfying cash-additivity/subadditivity (cf.\ \cite{MGRO22,RZ23}).

 The following result extends those of \cite{MGRO22,KO14,RZ23} to the current setting.

\begin{proposition}
Let $\rho_t:L^{\infty}(\mathcal{F}_T)\to L^{\infty}(\mathcal{F}_t)$ be a dynamic star-shaped risk measure induced by Equation \eqref{eq: star-shapedBSDE}, with $g(\cdot,0,0)\equiv0$.
The corresponding Aumann-Shapley CAR $\Lambda_t^{AS}:L^{\infty}(\mathcal{F}_T)\times L^{\infty}(\mathcal{F}_T)\to L^{\infty}(\mathcal{F}_t)$ is a monotone and normalized CAR satisfying full allocation.
Furthermore, for any $X,Y \in L^{\infty}(\mathcal{F}_T)$,
\begin{equation*}
    \Lambda_t^{AS}(X,Y)=\mathbb{E}_{\mathbb{P}}\left[\hat{L}^Y(T,t)X\Big|\mathcal{F}_t\right],
\end{equation*}
with
\begin{equation*}
    \hat{L}^Y(T,t)=\int_0^1L^{mY}(T,t)dm\triangleq \int_0^1\exp\left(-\frac{1}{2}\int_t^T |q_s^{mY}|^2ds-\int_t^T\beta_s^{mY} ds -\int_t^T\langle q_s^{mY},dW_s\rangle\right)dm,
\end{equation*}
where $(\beta_t^{\cdot},q_t^{\cdot})\in\partial\rho^{\gamma^Y}(\cdot)$ is an optimal scenario in the dual representation of $\rho^{\gamma^Y}_t(\cdot)$. In addition, when $\rho_t$ is cash-additive (resp.\ cash-subadditive) then $\Lambda^{AS}_t$ is 1-cash-additive (resp.\ 1-cash-subadditive).
\label{prop:AS}
Under the same hypotheses, $\Lambda_t^{p-AS}:L^{\infty}(\mathcal{F}_T)\times L^{\infty}(\mathcal{F}_T)\to L^{\infty}(\mathcal{F}_t)$
is a monotone, \textit{audacious} CAR satisfying the `modified' no-undercut property \mbox{$\Lambda_t^{p-AS}(X,Y)\leq\rho_t^{\gamma^{Y}}(X)$} and sub-allocation.
\end{proposition}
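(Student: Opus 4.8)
The plan is to handle $\Lambda_t^{AS}$ and $\Lambda_t^{p-AS}$ in turn, in each case fixing the portfolio $Y$ and working exclusively with the single convex risk measure $\rho_t^{\gamma^Y}$ for which $\rho_t(Y)=\rho_t^{\gamma^Y}(Y)$, so that the whole argument reduces to the convex-case reasoning of \cite{MGRO22,RZ23} with only the extra bookkeeping of the index $\gamma^Y$. First I would record the density identity: for bounded predictable $(\beta,q)$, writing $D^{\beta}_{t}=\exp\left(-\int_t^T\beta_s\,ds\right)$ and using the Girsanov density $\frac{d\mathbb{Q}^{q}_t}{d\mathbb{P}}$, one has $\mathbb{E}_{\mathbb{Q}^{q}_t}[D^{\beta}_tZ\mid\mathcal{F}_t]=\mathbb{E}_{\mathbb{P}}[L(T,t)Z\mid\mathcal{F}_t]$ with $L(T,t)=\exp\left(-\tfrac12\int_t^T|q_s|^2ds-\int_t^T\beta_s\,ds-\int_t^T\langle q_s,dW_s\rangle\right)$. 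Applying this with an optimal scenario $(\beta^{mY},q^{mY})\in\partial\rho^{\gamma^Y}(mY)$ and interchanging $\int_0^1(\cdot)\,dm$ with $\mathbb{E}_{\mathbb{P}}[\,\cdot\mid\mathcal{F}_t]$ --- a stochastic Fubini step, legitimate because $X\in L^{\infty}(\mathcal{F}_T)$ and, by the $k$-equi-Lipschitz property of $(g^{\gamma})_{\gamma\in\Gamma}$, the processes $(\beta^{mY},q^{mY})$ are bounded uniformly in $m$, so that $\{L^{mY}(T,t)\}_{m\in[0,1]}$ is uniformly integrable --- yields $\Lambda_t^{AS}(X,Y)=\mathbb{E}_{\mathbb{P}}[\hat L^Y(T,t)X\mid\mathcal{F}_t]$ with $\hat L^Y(T,t)=\int_0^1L^{mY}(T,t)\,dm$. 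Monotonicity of $\Lambda_t^{AS}$ in its first argument and normalization $\Lambda_t^{AS}(0,Y)=0$ are then immediate since $D^{\beta}_{t,s}>0$ and each $\mathbb{Q}^{q}_t$ is a probability measure.

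Next I would establish full allocation through the Aumann--Shapley ``diagonal'' identity. For $\mathbb{P}$-a.e.\ $\omega$, convexity of $\rho_t^{\gamma^Y}$ makes $m\mapsto\rho_t^{\gamma^Y}(mY)(\omega)$ a finite convex --- hence absolutely continuous --- function on $[0,1]$, and by the envelope (Danskin) theorem applied to the dual representation \eqref{minmax} its derivative at a.e.\ $m$ equals the slope $\mathbb{E}_{\mathbb{Q}^{q^{mY}}_t}[D^{\beta^{mY}}_tY\mid\mathcal{F}_t](\omega)$ of the optimal affine minorant. Integrating and using $\rho_t^{\gamma^Y}(0)=0$ (inherited from $g(\cdot,0,0)\equiv0$, which forces $g^{\gamma^Y}(\cdot,0,0)\equiv0$ in the Pasch--Hausdorff construction) gives $\Lambda_t^{AS}(Y,Y)=\rho_t^{\gamma^Y}(Y)=\rho_t(Y)$; since $X'\mapsto\Lambda_t^{AS}(X',X)$ is linear, $\sum_{i=1}^n\Lambda_t^{AS}(X_i,X)=\Lambda_t^{AS}(X,X)=\rho_t(X)$ whenever $\sum_iX_i=X$, which is full allocation (and in particular the consistency condition, so $\Lambda_t^{AS}$ is a CAR). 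The cash-additivity claims reduce to computing $\mathbb{E}_{\mathbb{P}}[\hat L^Y(T,t)\mid\mathcal{F}_t]$: when $\rho_t$ is cash-additive the driver $g$, hence each $g^{\gamma}$, is independent of $y$ (Corollary~\ref{corCSA}), so $\beta^{mY}\equiv0$ and $L^{mY}(T,t)$ is a true density with $\mathbb{E}_{\mathbb{P}}[L^{mY}(T,t)\mid\mathcal{F}_t]=1$, whence $\mathbb{E}_{\mathbb{P}}[\hat L^Y(T,t)\mid\mathcal{F}_t]=1$ and $\Lambda_t^{AS}$ is $1$-cash-additive; when $\rho_t$ is cash-subadditive, $g$ is decreasing in $y$, so $\beta^{mY}\ge0$ on the proper domain and $\mathbb{E}_{\mathbb{Q}^{q^{mY}}_t}[D^{\beta^{mY}}_t\mid\mathcal{F}_t]\le1$, giving $\mathbb{E}_{\mathbb{P}}[\hat L^Y(T,t)\mid\mathcal{F}_t]\le1$ and, for $c_t\ge0$, $1$-cash-subadditivity.

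For $\Lambda_t^{p-AS}$ I would write $\Lambda_t^{p-AS}(X,Y)=\Lambda_t^{AS}(X,Y)-\int_0^1c_t^{\gamma^Y}(D^{\beta^{mY}}_t\mathbb{Q}^{q^{mY}}_t)\,dm$, with $c_t^{\gamma^Y}$ the penalty \eqref{penfun}; monotonicity is inherited from $\Lambda_t^{AS}$ because the subtracted term does not depend on $X$. The ``modified'' no-undercut $\Lambda_t^{p-AS}(X,Y)\le\rho_t^{\gamma^Y}(X)$ --- and, specializing $Y=X$, audaciousness $\Lambda_t^{p-AS}(X,X)\le\rho_t(X)$ --- follows because for each $m$ the pair $(\beta^{mY},q^{mY})$ is an admissible scenario in the $\esssup$ defining $\rho_t^{\gamma^Y}(X)$ in \eqref{minmax}, so $\mathbb{E}_{\mathbb{Q}^{q^{mY}}_t}[D^{\beta^{mY}}_tX\mid\mathcal{F}_t]-c_t^{\gamma^Y}(D^{\beta^{mY}}_t\mathbb{Q}^{q^{mY}}_t)\le\rho_t^{\gamma^Y}(X)$, and integrating over $m\in[0,1]$ preserves the bound. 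For sub-allocation, take $\sum_{i=1}^nX_i=X$ and note $G^{\gamma^X}\ge0$ on its proper domain --- test $(y,z)=(0,0)$ in $G^{\gamma^X}(s,\beta,q)=\sup_{(y,z)}\{-\beta y-\langle q,z\rangle-g^{\gamma^X}(s,y,z)\}$ and use $g^{\gamma^X}(\cdot,0,0)\equiv0$ --- so $c_t^{\gamma^X}\ge0$; linearity of $\Lambda_t^{AS}(\cdot,X)$ then gives $\Lambda_t^{p-AS}(X,X)-\sum_{i=1}^n\Lambda_t^{p-AS}(X_i,X)=(n-1)\int_0^1c_t^{\gamma^X}(D^{\beta^{mX}}_t\mathbb{Q}^{q^{mX}}_t)\,dm\ge0$.

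The main obstacle I anticipate is making the diagonal formula fully rigorous in the conditional/dynamic setting: one needs joint $(m,\omega)$-measurability of the optimal selections $(\beta^{mY},q^{mY})$ and the justification of interchanging $\int_0^1dm$ with both $\mathbb{E}_{\mathbb{P}}[\,\cdot\mid\mathcal{F}_t]$ and differentiation in $m$, together with the a.e.\ differentiability of $m\mapsto\rho_t^{\gamma^Y}(mY)$ when the optimal scenario fails to be unique. This is precisely the technical core already treated for convex risk measures and their subdifferential/Aumann--Shapley CARs in \cite{KO14,MGRO22,RZ23}; the only genuinely new point here is that the minimizing index $\gamma^Y$ is held \emph{fixed} along the whole diagonal $\{mY:m\in[0,1]\}$, so that the entire convex-analytic toolbox applies verbatim to the single risk measure $\rho_t^{\gamma^Y}$, with the star-shaped measure $\rho_t$ entering only through the endpoint identity $\rho_t(Y)=\rho_t^{\gamma^Y}(Y)$.
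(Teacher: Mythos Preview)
Your proposal is correct and follows essentially the same approach as the paper: fix $\gamma^Y$, work with the single convex risk measure $\rho_t^{\gamma^Y}$ so that the argument reduces to the convex-case machinery of \cite{MGRO22,RZ23}, derive the density representation via Fubini, obtain consistency/full allocation from convexity of $m\mapsto\rho_t^{\gamma^Y}(mY)$ (the paper phrases this via the one-sided derivatives $F'_{\pm}$ rather than the envelope theorem, but the content is identical), and deduce sub-allocation for $\Lambda_t^{p-AS}$ from the nonnegativity of the penalty term. Your write-up is in fact somewhat more explicit than the paper's about the Fubini justification and the role of the equi-Lipschitz bound; the paper simply invokes Fubini's theorem and defers the remaining details to \cite{MGRO22,RZ23}.
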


\subsection{Portfolio choice with dynamic star-shaped risk measures}
\label{sec:portchoice}
In this subsection, we demonstrate the feasibility of solving the canonical portfolio optimization problem in a Brownian setting even when we only impose the assumption of star-shapedness on the respective dynamic measure of risk.
The existing literature has extensively studied portfolio optimization with convex or quasi-convex risk measures, both in a static setting (see e.g., \cite{RS06a,MGRO15a}) and in a dynamic one (see e.g., \cite{LS14,RK00,HIM05}).
However, without convexity, such optimization problems tend to be highly complex and, in general, not easily solvable.
We will establish that star-shaped risk measures exhibit appealing properties for the portfolio choice problem.
Specifically, the portfolio choice problem with star-shaped risk measures is reduced to a collection of convex problems, some of which are well-documented in the literature.

In the subsequent discussion, we will employ a suitable approach to tackle the dynamic portfolio optimization problem in a not necessarily complete financial market.
We start by describing the main features of the underlying framework.
We consider a risk-free bond evolving with a null interest rate.
Let $(W_t)_{t\in[0,T]}$ be an $n$-dimensional Brownian motion and consider $d\leq n$ stocks.
The dynamics of the $i$-th stock is given by the SDE:
\begin{equation*}
\frac{dS^i_t}{S^i_t}=b^i_tdt+\sigma^i_tdW_t \ \ \forall i=1,\dots,d.
\end{equation*}
We assume that $b^i,\sigma^i$ are predictable, $\R\times\R^{d}$-valued and bounded processes.
Furthermore, the matrix $\sigma_t$ is supposed to be full-rank and $\sigma_t\sigma_t^T$ is uniform elliptic, i.e., $k_1\mathcal{I}_d\leq \sigma_t\sigma_t^T\leq k_2 \mathcal{I}_d$ with $\mathcal{I}_d$ the identity function on $\R^d$ and $0<k_1<k_2$ positive constants.
For each $i=1,\dots,d$, denote with $\pi^i_t$ the process representing the capital invested in the $i$-th stock at time $t$.
The corresponding number of shares is then given by $\pi_t^i/S_t^i$. The wealth process $X^\pi$ of a predictable trading strategy $\pi$ with an initial capital of $x_0$ is defined as follows:
\begin{equation*}
    X^{\pi}_t=x_0+\sum_i^d\int_0^t\frac{\pi_s^i}{S_s^i}dS^i_s=x_0+\int_0^t\pi_sb_sds+\int_0^t\pi_s\sigma_sdW_s.
\end{equation*}
Fixing a (not necessarily convex) compact set $\Pi\subset\R^d$, we define the set of admissible strategies as $$\mathcal{K}:=\left\{\pi:\Omega\times[0,T]\to\R^d \mbox{ predictable processes valued in } \Pi \ d\mathbb{P}\times dt\mbox{-a.s.}\right\}.$$ Under these assumptions our market is free of arbitrage (see \cite{LS14,HIM05}).
Given $\pi \in \mathcal{K}$, the investor, who possesses a payoff $F$, will adhere to the trading strategy $\pi$ until maturity, resulting in a total wealth of $X^{\pi}_T + F$.

To assess the portfolio, the investor selects a star-shaped risk measure $\rho_t$ induced via BSDE \eqref{eq: star-shapedBSDE}.
It is worth noting that we use $\rho_t$ for consistency of exposition, whereas the typical sign convention would be to employ $-\rho_t$.
Adopting this convention, $-\rho_t$ arises as the \textit{maximum} of \textit{concave} risk measures.
Under these preferences, the dynamic portfolio choice problem can be formulated as follows:
\begin{equation}
    V_t(F):=\essmin_{\pi\in\mathcal{K}}\rho_t(X^{\pi}_T+F).
    \label{eq:portfoliochoice1}
\end{equation}
According to Theorem~\ref{SuCO}, this problem can be expressed as:
\begin{align}
V_t(F)&=\essmin_{\pi\in\mathcal{K}}\essmin_{\gamma\in\Gamma}\rho^{\gamma}_t(X^{\pi}_T+F) \notag \\ &=\essmin_{\gamma\in\Gamma}\essmin_{\pi\in\mathcal{K}}\rho^{\gamma}_t(X^{\pi}_T+F).
\label{eq:portfoliochoice}
\end{align}

To approach problem~\eqref{eq:portfoliochoice}, we can take the following steps:
for each fixed $\gamma\in\Gamma$, we solve the corresponding convex optimization problem
 \begin{equation}
     V_t^{\gamma}(F):=\essmin_{\pi\in\mathcal{K}}\rho^{\gamma}_t(X^{\pi}_T+F),
     \label{eq:convopt}
 \end{equation}
 finding the optimal strategy $\bar{\pi}^{\gamma}$ associated with the chosen $\gamma\in\Gamma$. Then, we take the minimum over $\gamma\in\Gamma$, i.e.,
 \begin{align*}
 V_t(F)&=\essmin_{\gamma\in\Gamma}V_t^{\gamma}(F)=\essmin_{\gamma\in\Gamma}\rho^{\gamma}_t(X^{\bar{\pi}^{\gamma}}_T+F) \\
 &=\rho^{\bar{\gamma}}_t(X^{{\bar{\pi}}^{\bar{\gamma}}}_T+F)=V_t^{\bar{\gamma}}(F),
 \end{align*}
 for some $\bar{\gamma}\in\Gamma$, determining both the optimal strategy $\bar{\pi}^{\bar{\gamma}}$ and the optimal evaluation $V_t(F)$ for the star-shaped optimization problem.
Consequently, we discover that the optimal value function for the star-shaped risk measure $\rho_t$ is indeed the minimum of the optimal value functions corresponding to the convex risk measures $\rho_t^{\gamma}$ appearing in the representation of $\rho_t$ in the sense of Equation \eqref{minsub}.
In particular, $(\bar{\gamma},\bar{\pi})\in\Gamma\times\mathcal{K}$ is a minimizer for the star-shaped problem~\eqref{eq:portfoliochoice} if and only if $\bar{\pi}\in\mathcal{K}$ is a minimizer for the problem~\eqref{eq:portfoliochoice1}.

The convex optimization problem described in Equation~\eqref{eq:convopt} is highly versatile.
It encompasses various convex risk measures, ranging from monetary risk measures to risk measures derived from power, logarithmic, or negative exponential utilities.
To the best of our knowledge, problem~\eqref{eq:convopt} has not been comprehensively addressed in the dynamic case, making it a potential subject for future research.
However, when we assume that the star-shaped risk measure $\rho_t$ satisfies cash-additivity, Corollary \ref{corCSA} establishes that the family $(\rho_t^{\gamma})_{\gamma\in\Gamma}$ is also cash-additive. This allows us to approach the convex problem \eqref{eq:convopt} in a similar manner as demonstrated in \cite{LS14}.
We can formalize these observations in the following proposition, focusing on the case of star-shaped and cash-additive risk measures.
\begin{proposition}
In the present context, consider a monetary star-shaped risk measure $\rho_t: L^{\infty}(\mathcal{F}_T) \to L^{\infty}(\mathcal{F}_t)$ induced by the BSDE \eqref{eq: star-shapedBSDE}.
Suppose the driver $g$ of the BSDE satisfies  $L^{\infty}$-standard assumptions, $g(t,0)\equiv0$, $g \geq 0$, and does not depend on $y$. Let $(\rho_t^{\gamma})_{\gamma\in\Gamma}$ be the family of convex risk measures representing $\rho_t$ in the sense of Corollary \ref{corCSA}.
In this setting, the star-shaped optimization problem \eqref{eq:portfoliochoice} has a minimizer $(\bar{\gamma}, \bar{\pi}) \in \Gamma \times \mathcal{K}$, and the corresponding optimal evaluation is given by:
$$V_t(F)=X^{\bar{\pi}}_t+Y^{\bar{\gamma}}_t,$$
where $Y_t^{\bar{\gamma}}$ represents the first component of the solution to the following BSDE:
$$Y^{\bar{\gamma}}_t=F+\int_t^T\tilde{g}(s,Z^{\bar{\gamma}}_s)ds-\int_t^TZ_s^{\bar{\gamma}}dW_{s},$$
with $$\tilde{g}(t,Z^{\bar{\gamma}}_t):=\bar{\pi}_tb_t+g^{\bar{\gamma}}(t,Z^{\bar{\gamma}}_t+\bar{\pi}_s\sigma_s),$$ where $g^{\bar{\gamma}}$ is the driver that induces the dynamics of the risk measure $\rho_t^{\bar{\gamma}}$ in the family $(\rho_t^{\gamma})_{\gamma \in \Gamma}$.
\label{prop:linear}
\end{proposition}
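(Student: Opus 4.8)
The plan is to reduce the star-shaped portfolio problem to a family of convex ones by means of Corollary~\ref{corCSA}, to solve each convex sub-problem by a Lim--Schied-type change of variables combined with a comparison/measurable-selection argument, and then to reassemble the pieces using that the essential minimum over $\gamma$ is attained.

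First I would apply Corollary~\ref{corCSA} with $\mathcal{X}=L^{\infty}(\mathcal{F}_T)$. Since $g$ is cash-additive (it does not depend on $y$), normalized, nonnegative and star-shaped in $z$ --- the last property being what makes $\rho_t$ star-shaped by Proposition~\ref{POSS} --- there is a $k$-equi-Lipschitz family $(g^{\gamma})_{\gamma\in\Gamma}$ of drivers, each convex in $z$, independent of $y$ and with $g^{\gamma}(\cdot,0)\equiv 0$, whose induced cash-additive convex risk measures $(\rho^{\gamma}_t)_{\gamma\in\Gamma}$ represent $\rho_t$ via $\rho_t(\xi)=\essmin_{\gamma\in\Gamma}\rho^{\gamma}_t(\xi)$, with the essential minimum attained (by the construction behind Theorem~\ref{SuCO}). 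Interchanging the two essential infima in \eqref{eq:portfoliochoice1}, which is routine for essential infima of families of random variables, gives
\begin{equation*}
V_t(F)=\essmin_{\gamma\in\Gamma}\essmin_{\pi\in\mathcal{K}}\rho^{\gamma}_t(X^{\pi}_T+F)=:\essmin_{\gamma\in\Gamma}V^{\gamma}_t(F),
\end{equation*}
so that it suffices to solve each convex problem $V^{\gamma}_t(F)$ and then minimize over $\gamma$.

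Fix $\gamma$ and let $(P,Q)$ solve the BSDE with driver $g^{\gamma}$ and terminal condition $X^{\pi}_T+F$, so $P_t=\rho^{\gamma}_t(X^{\pi}_T+F)$. Subtracting the wealth process, set $\widetilde{P}_s:=P_s-X^{\pi}_s$ and $\widetilde{Q}_s:=Q_s-\pi_s\sigma_s$; using $X^{\pi}_T=X^{\pi}_s+\int_s^T\pi_ub_u\,du+\int_s^T\pi_u\sigma_u\,dW_u$ one verifies that $(\widetilde{P},\widetilde{Q})$ solves the BSDE with terminal condition $F$ and driver $(s,z)\mapsto g^{\gamma}(s,z+\pi_s\sigma_s)+\pi_s b_s$, whence $\rho^{\gamma}_t(X^{\pi}_T+F)=X^{\pi}_t+\widetilde{P}_t$. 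Because $\Pi$ is compact and $b,\sigma$ are bounded while $g^{\gamma}$ is $k$-Lipschitz, the pointwise-minimized driver $g^{\gamma,*}(s,z):=\inf_{a\in\Pi}\{g^{\gamma}(s,z+a\sigma_s)+a\,b_s\}$ is again a uniformly Lipschitz standard driver, and a measurable-selection theorem produces an admissible feedback control $\bar{\pi}^{\gamma}_s:=\bar{a}^{\gamma}(s,Z^{\gamma}_s)$, where $(Y^{\gamma},Z^{\gamma})$ solves the BSDE with driver $g^{\gamma,*}$ and terminal condition $F$. The comparison theorem for BSDEs (Theorem~2.2 in \cite{EPQ97}) then yields $\widetilde{P}_t\geq Y^{\gamma}_t$ for every admissible $\pi$, with equality along $\bar{\pi}^{\gamma}$, so $\bar{\pi}^{\gamma}$ is optimal for $V^{\gamma}_t(F)$ and $V^{\gamma}_t(F)=X^{\bar{\pi}^{\gamma}}_t+Y^{\gamma}_t$; this is the scheme of \cite{LS14,HIM05} transported to the Lipschitz setting.

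Finally, since $\essmin_{\gamma}$ is attained, choose $\bar{\gamma}$ with $V_t(F)=V^{\bar{\gamma}}_t(F)$ and set $\bar{\pi}:=\bar{\pi}^{\bar{\gamma}}$. The chain $\rho_t(X^{\bar{\pi}}_T+F)=\essmin_{\gamma}\rho^{\gamma}_t(X^{\bar{\pi}}_T+F)\leq\rho^{\bar{\gamma}}_t(X^{\bar{\pi}}_T+F)=V^{\bar{\gamma}}_t(F)=V_t(F)\leq\rho_t(X^{\bar{\pi}}_T+F)$ is a string of equalities, so $(\bar{\gamma},\bar{\pi})$ is a minimizer of \eqref{eq:portfoliochoice} and $V_t(F)=\rho^{\bar{\gamma}}_t(X^{\bar{\pi}}_T+F)=X^{\bar{\pi}}_t+Y^{\bar{\gamma}}_t$, with $Y^{\bar{\gamma}}$ solving the BSDE with terminal condition $F$ and driver $\tilde{g}(s,z)=\bar{\pi}_s b_s+g^{\bar{\gamma}}(s,z+\bar{\pi}_s\sigma_s)$, which is precisely the stated formula; the cash-additivity of each $\rho^{\gamma}_t$ granted by Corollary~\ref{corCSA} yields the remaining $1$-cash-additivity remark if needed. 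I expect the main obstacle to be the inner convex control step: proving that the BSDE with driver $g^{\gamma,*}$ computes $\essmin_{\pi}$ and that the infimum is realized by an adapted feedback control requires a careful measurable-selection/verification argument, together with a check that admissibility of $\pi$, integrability of $X^{\pi}_T+F$ and the resulting modified drivers all remain within the standard BSDE framework; by contrast, the convex reduction and the interchange of the essential infima are comparatively routine.
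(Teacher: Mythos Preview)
Your proposal is correct and follows essentially the same route as the paper: reduce the star-shaped problem to the family of convex sub-problems via Corollary~\ref{corCSA}, solve each $\gamma$-problem, and then select the optimal $\bar{\gamma}$. The only difference is that the paper delegates the inner convex step entirely to Theorem~4.3 of \cite{LS14}, whereas you spell out the change of variables $\widetilde{P}=P-X^{\pi}$, the pointwise minimization of the driver over $\Pi$, and the comparison/measurable-selection argument that constitute that theorem; both proofs treat the attainment of $\essmin_{\gamma}V^{\gamma}_t(F)$ at the same informal level, relying on the explicit attainment in the representation of Theorem~\ref{SuCO}.
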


\begin{remark}
We can also consider a risk-averse investor whose preferences are characterized via a star-shaped and cash-additive risk measure $\rho_t$, as defined in Equation~\eqref{eq: star-shapedBSDE}:
 $$\tilde{\rho}_t(X):=\rho_t(u(X))=\essmin_{\gamma\in\Gamma}\rho^{\gamma}_t(u(X)),$$
where $u$ represents an exponential, logarithmic, or power  (in the latter case, we assume that the risk measure $\rho_t$ is positively homogeneous) (dis)utility function.
Under these assumptions, by Equations \eqref{eq:dualSA} and \eqref{eq:portfoliochoice}, we can express the star-shaped optimization problem as:
    \begin{equation}
    V_t(F)=\essmin_{\gamma\in\Gamma}\essmin_{\pi\in\mathcal{K}}\essmax_{\mu_t\in\mathcal{A}}\mathbb{E}_{\mathbb{Q}^{\mu}_t}\left[\left.u(X^{\pi}_T+F)-\int_t^TG^{\gamma}(s,\mu_s)ds\right|\mathcal{F}_t\right].
\label{eq:SSU}
    \end{equation}
    The corresponding convex optimization problem becomes:
 $$V^{\gamma}_t(F)=\essmin_{\pi\in\mathcal{K}}\essmax_{\mu_t\in\mathcal{A}}\mathbb{E}_{\mathbb{Q}^{\mu}_t}\left[\left.u(X^{\pi}_T+F)-\int_t^TG^{\gamma}(s,\mu_s)ds\right|\mathcal{F}_t\right],$$
    for each $\gamma\in\Gamma$.
These convex optimization problems can be approached through BSDEs using Theorems 4.5, 4.8, and 4.11 in \cite{LS14} for exponential, logarithmic, or power utilities, respectively.
The minimum over $\gamma\in\Gamma$ then yields the solution to the star-shaped portfolio choice problem provided in Equation~\eqref{eq:SSU}, in the spirit of Proposition~\ref{prop:linear}.

It is worth noting that the risk measure $\tilde{\rho}_t$ is time-consistent, as it is induced by BSDEs.
Indeed, the composition $\tilde{\rho}_t=\rho_t\circ u$ follows the dynamics given by:
$$\tilde{\rho}_t(X)=\rho_t(u(X))=u(X)+\int_t^Tg(s,\rho_s(u(X)),Z_s)ds-\int_t^TZ_sdW_s.$$
Therefore, the flow property of BSDEs (see Proposition~2.5 in \cite{EPQ97}) ensures the time-consistency of $\tilde{\rho}_t$.
This demonstrates the time-consistency property, indicating that the risk measure remains consistent with the investor's preferences over time.
\begin{example}
In this example, we consider a risk measure that exhibits robustness with respect to the probabilistic model as well as the utility function.
Imagine the investor considers a set of (dis)utility functions $(u^{\gamma})_{\gamma\in\Gamma}$ rather than a single function to assess the risk associated with the same asset.
A conservative, preference robust approach would be to employ the minimum risk measure generated by the set of utility functions.
Then, the investor may consider a risk measure of the following form:
$$\rho_t(X)=\essmin_{\gamma\in\Gamma}\rho^{\gamma}_t(X),\qquad\mathrm{where}\qquad
\rho_t^{\gamma}(X)=\esssup_{\mathbb{Q}\in\mathcal{Q}}\mathbb{E}_{\mathbb{Q}}[u^{\gamma}(X)-c^{\gamma}_t(\mathbb{Q})|\mathcal{F}_t].$$
Here, $(u^{\gamma})_{\gamma\in\Gamma}$ represents a family of convex and increasing (dis)utility functions,
$\mathcal{Q}$ denotes a suitable set of probability measures that are absolutely continuous with respect to the reference probability measure $\mathbb{P}$, and $c^{\gamma}_t$ represents a penalty function that quantifies the plausibility level of a model $\mathbb{Q}\in\mathcal{Q}$.
Proposition~\ref{PM} ensures that $\rho_{t}$ is a dynamic star-shaped risk measure.
However, it is important to note that this risk measure may not be time-consistent.
The portfolio choice problem can be formulated as follows:
$$V_t(F)=\essmin_{\gamma\in\Gamma}\essmin_{\pi\in\mathcal{K}}\esssup_{\mathbb{Q}\in\mathcal{Q}}\mathbb{E}_{\mathbb{Q}}[u^{\gamma}(X_T^{\pi}+F)-c^{\gamma}_t(\mathbb{Q})|\mathcal{F}_t].$$
When the family $(u^{\gamma})_{\gamma\in\Gamma}$ consists of exponential, logarithmic and power utility functions, and $c^{\gamma}$ is an appropriate penalty function (as defined in \cite{LS14}), we can solve the convex optimization problem $\gamma$-by-$\gamma$ employing the techniques in \cite{LS14}.
It is important to realize that, in this setting, the resulting risk measure $\rho_t$ is cash-subadditive, but it may not satisfy cash-additivity.
By minimizing over $\gamma\in\Gamma$, we obtain the solution to the original star-shaped optimization problem.
This highlights the generality of our approach, as in this case we can effectively handle star-shaped risk measures that are not necessarily induced via BSDEs.
\end{example}
\end{remark}

\newpage

\appendix
\renewcommand\thesection{\Alph{section}}

\section{Appendix: Proofs}
\label{sec:appendix}
\renewcommand{\thesubsection}{\thesection.\Roman{subsection}}
\renewcommand{\theequation}{\thesection.\Roman{equation}}
\renewcommand{\thetheorem}{\thesubsection.\arabic{theorem}}
\renewcommand{\thetheorem}{\thesection.\Roman{theorem}}
\setcounter{equation}{0}
\renewcommand{\thesection}{\Alph{section}}
\renewcommand{\thesubsection}{\thesection.\arabic{subsection}}
\renewcommand{\thetheorem}{\thesubsection.\arabic{theorem}}
\setcounter{theorem}{0}
\subsection{Proofs of Section~\ref{sec:prel}}
\begin{lemma}
     Let $\rho_t: L^{\infty}(\mathcal{F}_T)\to L^{\infty}(\mathcal{F}_t)$ be a cash-subadditive risk measure.
     Then $\rho_t$ is regular.
    \label{REGCS}
\end{lemma}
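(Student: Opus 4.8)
The plan is to exploit the boundedness of $X$ together with monotonicity and cash-subadditivity of $\rho_t$, applied to a suitably chosen \emph{nonnegative} $\mathcal{F}_t$-measurable shift. Fix $X\in L^{\infty}(\mathcal{F}_T)$, set $M:=\|X\|_{\infty}$ (so that $|X|\leq M$ a.s.), and fix $A\in\mathcal{F}_t$. The crucial observations are that $M\mathbb{I}_{A^c}\in L^{\infty}_+(\mathcal{F}_t)$, so cash-subadditivity applies with this shift, and that the two elementary order relations
\[
X\leq \mathbb{I}_A X+M\mathbb{I}_{A^c},\qquad \mathbb{I}_A X\leq X+M\mathbb{I}_{A^c}
\]
hold $d\mathbb{P}$-a.s.: on $A$ both sides of each inequality coincide with $X$, while on $A^c$ the first reads $X\leq M$ and the second reads $0\leq X+M$, both valid since $|X|\leq M$.

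First I would apply monotonicity and then cash-subadditivity to the first relation to obtain $\rho_t(X)\leq\rho_t(\mathbb{I}_A X)+M\mathbb{I}_{A^c}$; multiplying by $\mathbb{I}_A\geq 0$ and using $\mathbb{I}_A\mathbb{I}_{A^c}=0$ yields $\mathbb{I}_A\rho_t(X)\leq\mathbb{I}_A\rho_t(\mathbb{I}_A X)$. Next, applying the same two steps to the second relation gives $\rho_t(\mathbb{I}_A X)\leq\rho_t(X)+M\mathbb{I}_{A^c}$, and multiplying by $\mathbb{I}_A$ gives the reverse inequality $\mathbb{I}_A\rho_t(\mathbb{I}_A X)\leq\mathbb{I}_A\rho_t(X)$. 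Combining the two inequalities yields $\mathbb{I}_A\rho_t(X)=\mathbb{I}_A\rho_t(\mathbb{I}_A X)$ for all $X\in L^{\infty}(\mathcal{F}_T)$ and $A\in\mathcal{F}_t$, which is exactly regularity.

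I do not expect a genuine obstacle here; the only point requiring a little care is that cash-subadditivity is assumed only for \emph{nonnegative} $\mathcal{F}_t$-measurable shifts, which is why the argument is arranged so that the shift $M\mathbb{I}_{A^c}$ enters with a positive sign on both occasions (a cash-additive argument would instead be free to subtract $M\mathbb{I}_{A^c}$). Note that no use is made of continuity, normalization, or cash-additivity, so the statement genuinely extends the classical local property of cash-additive monetary risk measures.
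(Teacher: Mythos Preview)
Your proof is correct and rests on the same mechanism as the paper's---monotonicity to pass to a shifted version, cash-subadditivity with a nonnegative $\mathcal{F}_t$-measurable shift, then multiplication by $\mathbb{I}_A$ to kill the $\mathbb{I}_{A^c}$ term---but your execution is cleaner. The paper splits into three cases according to the signs of $\essinf X$ and $\esssup X$, using $\mathbb{I}_{A^c}\esssup X$ or $\mathbb{I}_{A^c}\essinf X$ as the shift in each case (and in the mixed-sign case it needs both, invoking in effect the reverse inequality $\rho_t(X+c')\geq\rho_t(X)+c'$ for $c'\leq 0$ that is equivalent to cash-subadditivity). Your single choice $M=\|X\|_\infty$ handles all sign configurations at once and keeps the shift $M\mathbb{I}_{A^c}$ nonnegative in both directions, so you never need to distinguish cases or rephrase cash-subadditivity for negative shifts. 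The gain is purely in economy of presentation; the underlying idea is identical.
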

\begin{proof}

    Let us fix $X\in L^{\infty}(\mathcal{F}_T)$.
   \begin{itemize}
   \item
   If $0\leq\essinf{X}\leq\esssup{X}$ then, for any $A\in\mathcal{F}_t$,
    \begin{align*}
       \mathbb{I}_{A}\rho_t(X)&=\mathbb{I}_{A}\rho_t(\mathbb{I}_{A}X+\mathbb{I}_{A^c}X) \leq \mathbb{I}_{A}\rho_t(\mathbb{I}_{A}X+\mathbb{I}_{A^c}\esssup{X}) \leq \mathbb{I}_{A}\rho_t(\mathbb{I}_{A}X)\leq\mathbb{I}_{A}\rho_t(X),
\end{align*}
where the first and last inequalities are due to monotonicity and positivity of $X$ (indeed if $X\geq0$ then $\mathbb{I}_{A}X\leq X$).
The second inequality follows from cash-subadditivity for positive $\mathcal{F}_t$-measurable random variables (given that $\esssup{X}\geq0$).
\item If $\essinf{X}\leq\esssup{X}\leq0$ the proof can be established similarly as above.
\item If $\essinf{X}\leq0\leq\esssup{X}$ then, for any $A\in\mathcal{F}_t$,
\begin{align*}
\mathbb{I}_{A}\rho_t(X)&=\mathbb{I}_{A}\rho_t(\mathbb{I}_{A}X+\mathbb{I}_{A^c}X)\leq\mathbb{I}_{A}\rho_t(\mathbb{I}_{A}X+\mathbb{I}_{A^c}\esssup{X}) \\
&\leq \mathbb{I}_{A}\rho_t(\mathbb{I}_{A}X)\leq \mathbb{I}_{A}\rho_t(\mathbb{I}_{A}X+\mathbb{I}_{A^c}\essinf{X})\leq \mathbb{I}_{A}\rho_t(X),
\end{align*}
where the first inequality is due to monotonicity, the second inequality follows from cash-subadditivity for positive $\mathcal{F}_t$-measurable random variables, while the third inequality is due to cash-subadditivity for negative $\mathcal{F}_t$-measurable random variables.
\end{itemize}
This completes the proof.
\end{proof}

\setcounter{theorem}{0}

\subsection{Proofs of Section~\ref{sec:con}}
\begin{proof}[Proof of Proposition \ref{DynamicReturn}]
        \textit{1.}
        Obvious by positive homogeneity. \\
         \textit{2.}
         We recall that regularity of $\rho_t$ follows from convexity and normalization (cf.\ Proposition 2 in \cite{DS05}), thus its dual representation can be obtained by following the proof of Proposition~6 in \cite{MGRO15}, with some necessary modifications in the sign convention.
         It can be verified that for any $X\in L^{\infty}(\mathcal{F}_T)$ and $t\in[0,T]$, the static return risk measure  $$\rho_{0,t}(X):=\mathbb{E}_{\mathbb{P}}[\rho_t(X)],$$ is convex, positively homogeneous, continuous from below and normalized at $1$, allowing its dual representation as in \textit{Step 1} of the proof of Proposition~6 in \cite{MGRO15}:
         $$\rho_{0,t}(X)=\sup_{{\mu\in\mathcal{S}}}\mathbb{E}_{\mu}[X], \ \ \forall X\in L^{\infty}(\mathcal{F}_T),$$
         where the penalty term vanishes due to positive homogeneity and $$\mathcal{S}:=\{\mu:(\Omega,\mathcal{F}_T)\to[0,1] \mbox{ measures s.t. } \mu\ll\mathbb{P} \mbox{ with } \mu(\Omega)\leq 1\}.$$
         Indeed, by Corollary~7 of \cite{FR02}, items (i) and (ii), and by observing that positive-constancy yields (using the same notation as in the aforementioned corollary and defining $\mathcal{X}\equiv L^{\infty}(\Omega,\mathcal{F},\mathbb{P})$):
            $$\rho_{0,t}(\eta)=\sup_{X'\in \mathcal{X}'_+}\left\{\eta X'(1)-\rho_{0,t}^*(X')\right\}=\eta \ \ \forall \eta\geq 0,$$ which implies:
            $$\eta(X'(1)-1)\leq \rho_{0,t}^*(X') \ \ \forall X'\in X',  \ \forall \eta\geq0, \ \rho_{0,t}^*(X')<+\infty,$$
            and the last inequality yields $X'(1)\leq 1$.
            We have that $\mathcal{X}'_+\cap \{X'\in \mathcal{X}'|X'(1)\leq 1\}=Ba_+(\Omega,\mathcal{F},\mathbb{P})\cap \{X'\in \mathcal{X}'|X'(1)\leq 1\}$, which is the space of all finite, finitely additive, positive and absolutely continuous measures w.r.t.\ $\mathbb{P}$ such that $\mu(\Omega)\leq 1$.
            The continuity from below of $\rho_t$ guarantees that $\mu$ also satisfies $\sigma$-additivity (cf.\ \cite{ELKR09} for more details).
            Moreover the penalty term vanishes according to item~(vi) of Corollary~7 in \cite{FR02}.
         The last part of the proof follows verbatim from \textit{Step 2} and \textit{Step 3} in the proof of Proposition~6 in \cite{MGRO15}.\\
         \textit{3.} Cash-subadditivity is clear from the dual representation of $\rho_t$.
        \end{proof}
         \begin{proof}[Proof of Corollary \ref{DynamicReturn1}]
 We only have to check that the restriction of $\rho_t$ in Proposition~\ref{DynamicReturn} maps positive $\mathcal{F}_T$-measurable random variables into positive $\mathcal{F}_t$-measurable random variables.
 This is clear from the dual representation in Proposition~\ref{DynamicReturn} and the positivity of the conditional expectation operator.
 \end{proof}

\begin{lemma}
Suppose $g : \Omega \times [0,T] \times \mathbb{R} \times \mathbb{R}^n \to \mathbb{R}$ satisfies the $L^{\infty}$-standard assumptions, is positively homogeneous with respect to $(y,z)$, and is non-negative.
Under these conditions, let $\rho_t(X)$ be the first component of the solution to the BSDE \eqref{eq:bsde} for any $X\in L^{\infty}_{+}(\mathcal{F}_T)$.
Then, $\rho_t$ is a map from $L^{\infty}_{+}(\mathcal{F}_T)$ into $L^{\infty}_{+}(\mathcal{F}_t)$ and is a return risk measure for any $X\in L^{\infty}_{+}(\mathcal{F}_T)$.
Moreover, if $g(\cdot, 1,0)\equiv 0$, then $\rho_t({1})=1$, i.e., the return risk measure is normalized at ${1}$.
\end{lemma}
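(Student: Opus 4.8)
The plan is straightforward and rests entirely on the uniqueness and comparison theorems for BSDEs under Lipschitz assumptions. First, since $g$ satisfies the $L^\infty$-standard assumptions and $X\in L^\infty_+(\mathcal{F}_T)\subseteq L^\infty(\mathcal{F}_T)$, the BSDE \eqref{eq:bsde} has a unique solution $(Y,Z)\in L^\infty(\mathcal{F}_t)\times\mathrm{BMO}(\mathbb{P})$, so that $\rho_t(X):=Y_t$ is a well-defined, essentially bounded random variable. The key additional observation is that positive homogeneity of $g$ in $(y,z)$, applied with scaling factor $0$, forces $g(\cdot,0,0)\equiv 0$; hence the constant pair $(0,0)$ is the unique solution of \eqref{eq:bsde} with terminal condition $0$.

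For nonnegativity I would then invoke the comparison theorem (Theorem~2.2 in \cite{EPQ97}) for the common driver $g$ and the ordered terminal data $0\le X$, which yields $\rho_t(X)=Y_t\ge 0$ a.s.; combined with boundedness this shows that $\rho_t$ maps $L^\infty_+(\mathcal{F}_T)$ into $L^\infty_+(\mathcal{F}_t)$.

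Next, to see that $\rho_t$ is a return risk measure on $L^\infty_+(\mathcal{F}_T)$: monotonicity again follows from the comparison theorem, and positive homogeneity is item~3 of the driver/$g$-expectation correspondences recalled in Section~\ref{sec:prel}. Concretely, for a deterministic $\alpha\ge 0$ the pair $(\alpha Y,\alpha Z)$ solves \eqref{eq:bsde} with terminal datum $\alpha X$ because $g(\cdot,\alpha y,\alpha z)=\alpha g(\cdot,y,z)$, so $\rho_t(\alpha X)=\alpha\rho_t(X)$ by uniqueness; for $\mathcal{F}_t$-measurable $\alpha\ge 0$ one first treats simple $\alpha=\sum_i a_i\mathbb{I}_{A_i}$ via the regularity of BSDE-induced risk measures, writing $\mathbb{I}_{A_i}\rho_t(\alpha X)=\mathbb{I}_{A_i}\rho_t(\mathbb{I}_{A_i}\alpha X)=\mathbb{I}_{A_i}\rho_t(a_i\mathbb{I}_{A_i}X)=a_i\mathbb{I}_{A_i}\rho_t(X)=\mathbb{I}_{A_i}\alpha\rho_t(X)$ and summing over $i$, and then passes to general $\alpha\in L^\infty_+(\mathcal{F}_t)$ by approximating $\alpha$ in $L^\infty$ by simple functions and using the $L^2$-continuity of the terminal-condition-to-solution map (the standard a priori estimates). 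This extension from deterministic to random scaling factors is the only mildly delicate point; every other assertion is immediate from uniqueness and comparison.

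Finally, under the extra hypothesis $g(\cdot,1,0)\equiv 0$, the constant pair $(1,0)$ satisfies $-d(1)=0=g(t,1,0)\,dt-0\cdot dW_t$ with terminal value $1$, so by uniqueness it is the solution of \eqref{eq:bsde} with $X=1$, whence $\rho_t(1)=1$ for every $t\in[0,T]$, i.e.\ $\rho_t$ is a return risk measure normalized at $1$.
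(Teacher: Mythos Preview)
Your proof is correct and follows essentially the same approach as the paper: positive homogeneity of $\rho_t$ via the driver correspondence (cited as \cite{RG06,J08} in the paper), nonnegativity via the comparison theorem, and $\rho_t(1)=1$ by checking that $(1,0)$ solves the BSDE. The only minor difference is that you derive $\rho_t(X)\ge 0$ from $g(\cdot,0,0)=0$ (a consequence of positive homogeneity) and same-driver comparison with terminals $0\le X$, whereas the paper invokes the positivity of $g$ directly; you also spell out the passage from deterministic to $\mathcal{F}_t$-measurable scalars, which the paper simply cites.
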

\begin{proof}
   The driver $g$ being positively homogeneous yields the corresponding property of $\rho_t(X)$ for any $X\in L^{\infty}_{+}(\mathcal{F}_T)$ (cf.\ \cite{RG06,J08}).
   Additionally, by the comparison theorem for BSDEs (cf.\ \cite{EPQ97}), the positivity of the driver $g$ ensures that for any positive random variable $X\in L^{\infty}_+(\mathcal{F}_T)$, $\rho_t(X)\geq 0$, and hence $\rho_t: L^{\infty}_{+}(\mathcal{F}_T) \to L^{\infty}_{+}(\mathcal{F}_T)$.
   Finally, we observe that $(1,0)$ is the unique solution of BSDE \eqref{eq:bsde} with terminal condition $X\equiv 1$ when $g(\cdot,1,0)\equiv0$, which implies that $\rho_t({1})=1$, i.e., the return risk measure is normalized at 1.
\end{proof}
\begin{remark}
    The normalization at $1$ and positive homogeneity of $g$ imply that $g(t,y,0)=0$ for all $y\geq0$.
    This in turn leads to the positive-constancy property of $\rho_t$: for any $c_t\in L^{\infty}_{+}(\mathcal{F}_t)$, we have
    $$\rho_t(c_t)=c_t+\int_t^Tg(s,\rho_s(c_t),Z_s)ds-\int_t^TZ_sdW_s,$$
    which has the unique solution $(\rho_t(c_t),Z_t)\equiv(c_t,0)$. \\
    More generally, if $\rho_t:L^{\infty}(\mathcal{F}_T)\to L^{\infty}(\mathcal{F}_t)$ is normalized at ${1}$ and positively homogeneous, then $\rho_t$ is also positive-constant.
    Indeed, for any $\alpha_t\in L^{\infty}_+(\mathcal{F}_t)$,
    $$\rho_t(\alpha_t)=\alpha_t\rho_t({1})=\alpha_t.$$
    \label{Rconstancy}
    \end{remark}
  \begin{proof}[Proof of Proposition \ref{PRP}]
                For each $X\in L^{\infty}_+(\mathcal{F}_T)$, the solution $\rho_t(X)$ to Equation \eqref{eq:bsde} inherits the properties of positivity, convexity, positive homogeneity and normalization at $1$ from the corresponding properties of the driver $g$.
                Thus, $\rho_t$ satisfies all the assumptions in Corollary~\ref{DynamicReturn1} (continuity from below is a general property for BSDEs solutions), yielding the desired conclusion.
            \end{proof}
            \begin{remark}
                    Note that the assumptions made in Proposition~\ref{PRP} imply that the driver $g$ is non-increasing w.r.t.\ $y$. Indeed, for any $(\omega,t,y,z)\in\Omega\times[0,T]\times\R\times\R^n$ and $c\geq 0$, positive homogeneity, normalization and convexity yield:
                $$g(t,y+c,z)=g(t,y+c,z+0)\leq g(t,y,z)+g(t,c,0)=g(t,y,z),$$
              where the first inequality follows from sublinearity and the second equality follows from normalization and positive homogeneity (see Remark~\ref{Rconstancy}).
              Therefore, we can conclude that $g$ is monotonic with respect to $y$. It is worth noting that this result is consistent with the thesis of Proposition~\ref{IFFCS}.
              \end{remark}

\setcounter{theorem}{0}

\subsection{Proofs of Section~\ref{sec:star}}

\begin{lemma}
    Let $f:\mathcal{X}\to\mathbb{R}\cup\{+\infty\}$ be a functional such that $f(0)<+\infty$. For any $Z\in dom(f)$, we define the functional $f_Z:\mathcal{X}\to\mathbb{R}\cup\{+\infty\}$ as:
    \begin{equation}
        f_Z(X):=
        \begin{cases}
            \alpha f(Z) +(1-\alpha)f(0) &\mbox{ if there exists } \alpha\in[0,1] \mbox{ s.t.\ } X=\alpha Z, \\
            +\infty &\mbox{ otherwise.}
        \end{cases}
        \label{convexf}
    \end{equation}
    Then $f_Z$ is a proper convex functional.
    Moreover, $f_Z$ is lower semicontinuous w.r.t.\ the topology $\tau$ on $\mathcal{X}$ and it is also order lower semicontinuous.
    \label{Lconvexf}
\end{lemma}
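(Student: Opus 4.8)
The plan is to read off the four claimed properties directly from the explicit two-branch form of $f_Z$, using that on the segment $[0,Z]:=\{\alpha Z:\alpha\in[0,1]\}$ the functional $f_Z$ is the affine map $\alpha Z\mapsto f(0)+\alpha\big(f(Z)-f(0)\big)$, while it is $+\infty$ off that segment. First I would settle well-definedness: if $Z=0$ the only point in the first branch is $X=0$, and every $\alpha\in[0,1]$ returns the common value $f(0)$; if $Z\neq 0$ then $X=\alpha Z$ determines $\alpha$ uniquely, so there is no ambiguity. Properness is then immediate, since $f_Z(0)=f(0)<+\infty$ (take $\alpha=0$) and $f_Z$ never equals $-\infty$ because $f(0),f(Z)\in\mathbb{R}$ (recall $Z\in dom(f)$).

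For convexity, take $X_1,X_2\in\mathcal{X}$ and $\lambda\in[0,1]$. If either point lies off the segment, the right-hand side of the convexity inequality is $+\infty$ and there is nothing to prove; otherwise $X_i=\alpha_iZ$ with $\alpha_i\in[0,1]$, so $\lambda X_1+(1-\lambda)X_2=\big(\lambda\alpha_1+(1-\lambda)\alpha_2\big)Z$ with coefficient in $[0,1]$, and substituting into the affine formula gives $f_Z\big(\lambda X_1+(1-\lambda)X_2\big)=\lambda f_Z(X_1)+(1-\lambda)f_Z(X_2)$. Hence $f_Z$ is affine along $[0,Z]$ and convex on all of $\mathcal{X}$.

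For $\tau$-lower semicontinuity I would show every sublevel set $\{X\in\mathcal{X}:f_Z(X)\le c\}$ is $\tau$-closed. Since $\alpha\mapsto f(0)+\alpha\big(f(Z)-f(0)\big)$ is affine, the set of admissible parameters $\{\alpha\in[0,1]:f(0)+\alpha(f(Z)-f(0))\le c\}$ is a (possibly empty) closed subinterval $[a,b]\subseteq[0,1]$, so the sublevel set equals $\{\alpha Z:\alpha\in[a,b]\}$, the image of the compact interval $[a,b]$ under the continuous map $\alpha\mapsto\alpha Z$ (scalar multiplication is continuous in a topological vector space); as the continuous image of a compact set it is compact, hence $\tau$-closed because $\mathcal{X}$ is separated. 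For order lower semicontinuity, let $X_n\xrightarrow{o}X$ and set $L:=\liminf_n f_Z(X_n)$; if $L=+\infty$ there is nothing to prove, so assume $L<+\infty$ and pass to a subsequence along which $f_Z(X_{n_k})\to L$ and $f_Z(X_{n_k})<+\infty$, whence $X_{n_k}=\alpha_{n_k}Z$ with $\alpha_{n_k}\in[0,1]$. Extracting a further subsequence with $\alpha_{n_{k_j}}\to\alpha\in[0,1]$, I would note $|\alpha_{n_{k_j}}Z-\alpha Z|=|\alpha_{n_{k_j}}-\alpha|\,|Z|\le\varepsilon_j|Z|$, where $\varepsilon_j:=\sup_{i\ge j}|\alpha_{n_{k_i}}-\alpha|\downarrow 0$, so $\varepsilon_j|Z|\downarrow 0$ in $\mathcal{X}$ (Archimedean property) and $\alpha_{n_{k_j}}Z\xrightarrow{o}\alpha Z$. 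Since subsequences of order-convergent sequences converge in order to the same limit and order limits are unique (if $|U-V|\le W_n\downarrow 0$ then $U=V$), we obtain $X=\alpha Z$, and continuity of the affine map in the scalar argument gives $f_Z(X)=\lim_j f_Z(X_{n_{k_j}})=L$; in particular $\liminf_n f_Z(X_n)\ge f_Z(X)$.

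The genuinely routine parts are properness and convexity. Care is needed only in the two semicontinuity claims: the $\tau$-case uses that $\mathcal{X}$ is a separated topological vector space (compact sets are closed, scalar multiplication is continuous), and the order case uses compatibility of order convergence with multiplication by a convergent bounded scalar sequence together with uniqueness of order limits. I expect the main (mild) obstacle to be the bookkeeping in the order-lsc argument — constructing the decreasing dominating sequence $(\varepsilon_j|Z|)_j$ and checking that passing to subsequences preserves order convergence and its limit.
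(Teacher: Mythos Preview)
Your proof is correct and follows the same architecture as the paper's: properness and convexity come directly from the affine structure of $f_Z$ on the segment $[0,Z]$, and order lower semicontinuity is obtained by showing that whenever $\liminf_n f_Z(X_n)<+\infty$ one has $X_{n_k}=\alpha_{n_k}Z$ along a subsequence with $\alpha_{n_k}\to\bar\alpha\in[0,1]$, so that $X_{n_k}\xrightarrow{o}\bar\alpha Z$ and hence $X=\bar\alpha Z$ by uniqueness of order limits.

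The only differences are tactical. For $\tau$-lower semicontinuity the paper simply cites Remark~1 of \cite{C00}, whereas you give a self-contained argument via compactness of the sublevel sets; your version is more transparent and uses nothing beyond the Hausdorff topological-vector-space structure. For the convergence $\alpha_{n_k}\to\bar\alpha$ in the order-lsc step, the paper shows $(\alpha_n)$ is Cauchy by evaluating the inequality $|\alpha_n-\alpha_m||Z(\omega)|\le|X_n(\omega)-X(\omega)|+|X_m(\omega)-X(\omega)|$ at a fixed $\omega$ with $Z(\omega)\neq 0$, thereby exploiting the concrete realization $\mathcal{X}\subseteq\bar L^0$; you instead extract a convergent subsequence via compactness of $[0,1]$, which is cleaner and does not rely on the pointwise structure. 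Both routes lead to the same conclusion.
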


 \begin{proof}
Convexity can easily be verified by direct inspection.
The lower semicontinuity w.r.t.\ the topology $\tau$ of $\mathcal{X}$ follows from Remark~1 of \cite{C00}.
It remains to prove the order lower semicontinuity.
To this end, it is sufficient to consider sequences instead of nets thanks to the order separability of $\mathcal{X}$ (see Lemma~3 in \cite{BF10}).
Fix $Z\in dom(f)$ with $Z\neq 0$ (if $Z=0$ then $f_Z(X)=+\infty \ \forall X\in\mathcal{X}\setminus\{0\}$ and the thesis is obvious) and let $X_n\xrightarrow{o} X$.
We want to prove that $\liminf_n f_Z(X_n)\geq f_Z(X)$.
Let us suppose $f_Z(X)<+\infty$; if $\liminf_n f_Z(X_n)=+\infty$ there is nothing to prove.
Conversely, if $\liminf_nf_Z(X_n)$ is finite we can consider only (sub-)sequences $X_n$ of the form $X_n=\alpha_n Z$ for some $\alpha_n\in[0,1]$.
We first prove that $\alpha_n$ is a Cauchy sequence in $[0,1]$, hence it converges to some $\bar{\alpha}\in[0,1]$.
Indeed, fixing $\varepsilon>0$ and $\omega\in\Omega$  s.t.\ $Z(\omega)\neq0$  we have: $$|\alpha_n-\alpha_m||Z(\omega)|=|(\alpha_n-\alpha_m)Z(\omega)|\leq |\alpha_nZ(\omega)-X(\omega)|+|\alpha_mZ(\omega)-X(\omega)|\leq \varepsilon'$$ for $n,m$ large enough, where the last inequality follows from the order convergence w.r.t.\ the pointwise order relation between random variables.
Hence, for $n,m$ large enough $|\alpha_n-\alpha_m|\leq \varepsilon$ (we can choose $\varepsilon'$ s.t.\ $\varepsilon'/|Z(\omega)|\leq \varepsilon$).
We now prove that $X_n\xrightarrow{o}\bar{\alpha}Z$.
Indeed, $|\alpha_nZ-\bar{\alpha}Z|=|\alpha_n-\bar{\alpha}||Z|\leq c_n|Z|=:Y_n$ with $c_n\downarrow 0$ (take $c_n=\sup_{k\geq n}|\alpha_k-\bar{\alpha}|$), which yields $Y_n\downarrow 0$ and by uniqueness of the order limit we can conclude that $X=\bar{\alpha}Z$.
So, any order convergent sequence of the form $X_n=\alpha_n Z$ order converges to $\bar{\alpha}Z$, where $\bar{\alpha}=\lim_{n\to\infty}\alpha_n$.
In conclusion, when $\liminf_n f_Z(X_n)<+\infty$, we have:
     $$\liminf_n f_Z(X_n)=\lim_n(\alpha_n f(Z)+(1-\alpha_n)f(0))=\bar{\alpha}f(Z)+(1-\bar{\alpha})f(0)=f_Z(X).$$
     If $f_Z(X)=+\infty$, we can suppose by contradiction that $\liminf_nf_Z(X_n)<+\infty$.
     As before we can consider $X_n=\alpha_n Z$ with $\alpha_n\in[0,1] \ \forall n\in\mathbb{N}$, concluding that $X_n\xrightarrow{o} X=\bar{\alpha}Z$ with $\bar{\alpha}=\lim_n\alpha_n$, so $\liminf_nf_Z(X_n)=\alpha_nf(Z)+(1-\alpha_n)f(0)\to\bar{\alpha}f(Z)+(1-\bar{\alpha})=f_Z(X)<+\infty$, which is a contradiction.
    \end{proof}
\begin{lemma}
    A functional $f:\mathcal{X}\to\mathbb{R}\cup\{+\infty\}$ with $f(0)<+\infty$ is star-shaped w.r.t.\ its value at $0$ if and only if, for any $\lambda\geq 1$, it holds that $f(\lambda X)\geq \lambda f(X)+(1-\lambda)f(0)$.
    \label{Lemma:SS}
\end{lemma}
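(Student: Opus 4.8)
The plan is to derive both implications at once from the observation that star-shapedness of $f$ with respect to $f(0)$ and the claimed reverse inequality differ only by the substitution $\lambda\mapsto 1/\lambda$. First I would dispose of the boundary values of the scaling parameter: for $\lambda=1$ both inequalities read $f(X)\le f(X)$, respectively $f(X)\ge f(X)$, and are vacuous, while the value $\lambda=0$ occurs only in the definition of star-shapedness and reads $f(0)\le f(0)$. Hence it suffices to establish the equivalence of ``$f(\lambda X)\le\lambda f(X)+(1-\lambda)f(0)$ for all $X\in\mathcal{X}$ and all $\lambda\in(0,1)$'' with ``$f(\mu X)\ge\mu f(X)+(1-\mu)f(0)$ for all $X\in\mathcal{X}$ and all $\mu\in(1,+\infty)$''.

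The key step is the elementary equivalence, valid for $a,b\in\mathbb{R}\cup\{+\infty\}$, $c\in\mathbb{R}$ and $\lambda\in(0,1)$,
$$a\le\lambda b+(1-\lambda)c\quad\Longleftrightarrow\quad b\ge\tfrac1\lambda\,a+\bigl(1-\tfrac1\lambda\bigr)c,$$
obtained, when $a$ and $b$ are finite, by subtracting the (finite) quantity $\bigl(1-\tfrac1\lambda\bigr)c$ and multiplying through by $\tfrac1\lambda>0$; the remaining cases $a=+\infty$ or $b=+\infty$ are checked by hand and are consistent with the equivalence (if $b=+\infty$ both sides hold, and if $b$ is finite with $a=+\infty$ both sides fail). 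Finiteness of $f(0)$ is exactly what guarantees that $c:=f(0)$ may be treated as an ordinary real number in this manipulation.

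I would then apply the displayed equivalence with $a=f(\lambda X)$, $b=f(X)$, $c=f(0)$, and perform the change of variables $Y:=\lambda X$, $\mu:=1/\lambda$, so that the right-hand inequality turns into $f(\mu Y)\ge\mu f(Y)+(1-\mu)f(0)$; as $(\lambda,X)$ ranges over $(0,1)\times\mathcal{X}$ the pair $(\mu,Y)$ ranges over $(1,+\infty)\times\mathcal{X}$, which yields precisely the asserted equivalence (both the ``only if'' and the ``if'' directions being contained in the biconditional). I do not expect a genuine obstacle; the only point deserving care is the bookkeeping with the value $+\infty$, in particular making sure the rearrangement never silently assumes $f(X)<+\infty$ — and, as noted, the hypothesis $f(0)<+\infty$ is what makes the extended-real computation rigorous.
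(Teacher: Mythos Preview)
Your proposal is correct and follows essentially the same approach as the paper: both arguments rest on the substitution $\lambda\mapsto 1/\lambda$ (the paper writes $f(X)=f\bigl(\tfrac{1}{\lambda}\cdot\lambda X\bigr)$ and applies star-shapedness, then rearranges). Your treatment is slightly more careful in explicitly handling the extended-real cases $a=+\infty$ or $b=+\infty$, which the paper leaves implicit.
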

    \begin{proof}
Let us assume that $f$ is star-shaped and fix $\lambda\geq 1$.
Thus, we have:
\begin{align*}
f(X) = f\Big(\frac{1}{\lambda}\lambda X\Big)\leq \frac{1}{\lambda}f(\lambda X) + \Big(1-\frac{1}{\lambda}\Big)f(0),
\end{align*}
where the inequality follows from the star-shapedness of $f$.
Rearranging terms, we obtain the desired thesis.
The converse implication can be proven through similar considerations.
\end{proof}
\begin{proof}[Proof of Lemma~\ref{Lemma:infSS}]
The fact that $f(0)=c$ is evident upon direct inspection.
For any $\lambda\geq1$, star-shapedness follows by the following chain of inequalities:
\begin{align*}
f(\lambda X)&=\inf_{\gamma\in\Gamma}f_{\gamma}(\lambda X)\geq \inf_{\gamma\in\Gamma}\left\{\lambda f_{\gamma}(X)+(1-\lambda)f(0)\right\} \\
&=\lambda\left(\inf_{\gamma\in\Gamma}f_{\gamma}(X)\right)+(1-\lambda)f(0)=\lambda f(X)+(1-\lambda)f(0),
\end{align*}
where the first inequality is due to star-shapdness of the family $(f_{\gamma})_{\gamma\in\Gamma}$ and Lemma~\ref{Lemma:SS}.
\end{proof}
     \begin{proof}[Proof of Proposition~\ref{staticSS}]
        By Lemma~\ref{Lconvexf}, $f_Z$ is a proper, convex and lower semicontinuous functional.
        We only need to verify the minimum condition.
        Fixing $Z\in dom(f)$, by star-shapedness of $f$, we have that for any fixed $X\in\mathcal{X}$ s.t.\ $X=\alpha Z$ with $\alpha\in[0,1]$,  $f(X)=f(\alpha Z)\leq\alpha f(Z)+(1-\alpha)f(0)=f_Z(X)$.
        Conversely, if $X\in\mathcal{X}$ verifies $X\neq\alpha Z$ for any $\alpha\in[0,1]$, then $f(X)\leq f_Z(X)=+\infty$.
        Hence, for any $Z$ in the proper domain of $f$ we have $f(X)\leq f_Z(X) \ \forall X\in\mathcal{X}$.
        Moreover, for each fixed $X\in dom(f)$ we can choose $Z=X$.
        Then $f_X(X)=f(X)$ (if $X\not\in dom(f)$ it is obvious by the previous point that $f_Z(X)\geq f(X)=+\infty$, which yields $f(X)=f_Z(X)=+\infty$ for any $Z\in dom(f)$).
        Thus, we have $f(X)=\ds\min_{Z\in dom(f)}f_Z(X) \ \forall X\in\mathcal{X}$.
        The min-max representation \eqref{minmaxstatic} follows from Proposition~1 of \cite{BF10}.
        Indeed, given that $\sigma(\mathcal{X},\mathcal{X}^*)$ has the C-property and for any $Z\in dom(f)$ $f_Z$ is a proper, convex and order lower semicontinuous function, it can be represented as: $$f_Z(X)=\sup_{q\in\mathcal{X}^*}\{\langle q,X\rangle-f^*_Z(q)\} \ \forall X\in\mathcal{X}.$$

        Now let us consider a monotone and star-shaped functional $f:\mathcal{X}\to\mathbb{R}$.
        We can build the family $(\tilde{f}_Z)_{Z\in dom(f)}$ as follows: for each $Z\in dom(f)$ we define $\tilde{f}_Z(X):=\inf\left\{f_Z(Y):Y\geq X\right\}$.
        Clearly, $\tilde{f}_Z$ is monotone by definition.
        We need to prove that it is also convex, proper, $\tilde{f}_Z(X)\geq f(X) \ \forall X\in\mathcal{X}$ and $\tilde{f}_Z(0)=f(0) \ \forall Z\in dom(f)$.
        We first observe that $\tilde{f}_Z(X)\leq f_Z(X)$ for any $X\in\mathcal{X}$ by definition, thus $\tilde{f}_Z\not\equiv +\infty$.
        Moreover, by monotonicity of $f$, for any $Y\geq X$ we have that $f_Z(Y)\geq f(Y) \geq f(X)$.
        Taking the infimum over $Y\geq X$ we get $\tilde{f}_Z(X)\geq f(X)$ and properness follows.
        Moreover, given that $f\leq \tilde{f}_Z \leq f_Z$ it follows that $f(X)\leq \tilde{f}_X(X) \leq f_X(X)=f(X)$ for any $X\in dom(f)$, so that the minimum is attained at $Z=X$, and analogously we can prove that $f_Z(0)=f(0) \ \forall Z\in dom(f)$.
        Now we prove that $\tilde{f}_Z$ is convex for any $Z\in dom(f)$.
        Let us fix $Z\in dom(f)$.
        For any $X_1,X_2\in\mathcal{X},\lambda\in(0,1)$ and choosing $Y_1\geq X_1, Y_2\geq X_2$ with $Y_1,Y_2\in\mathcal{X}$, we have that $\lambda X_1+(1-\lambda)X_2\leq \lambda Y_1+(1-\lambda)Y_2$, hence:
        \begin{eqnarray*}
            \tilde{f}_Z(\lambda X_1+(1-\lambda)X_2)&=&\inf\left\{f_Z(Y) : Y\geq \lambda X_1+(1-\lambda)X_2\right\} \\ &\leq& f_Z(\lambda Y_1+(1-\lambda)Y_2)\leq \lambda f_Z(Y_1)+(1-\lambda)f_Z(Y_2),
    \end{eqnarray*}
        where the last inequality is due to convexity of $f_Z$.
        Taking the infimum over $Y\geq X_1$ and then over $Y\geq X_2$ we obtain:
        $$\tilde{f}_Z(\lambda X_1+(1-\lambda)X_2)\leq \lambda\tilde{f}_Z(X_1)+(1-\lambda)\tilde{f}_Z(X_2).$$
        Summing up, we have proved that if $f$ is star-shaped and monotone it can be represented as: $$f(X)=\min_{Z\in dom(f)}\tilde{f}_Z(X) \ \ \forall X\in\mathcal{X},$$ where $\tilde{f}_Z$ is a proper and convex functional such that $\tilde{f}_Z(0)=f(0).$

        In order to obtain the representation in Equation~\eqref{minmaxstaticmon} we need to show that $\tilde{f}_Z$ is order lower semicontinuous w.r.t.\ the usual partial order relation between real random variables.
        According to Lemma~3 of \cite{BF10} it is enough to prove that $\tilde{f}_Z$ is continuous from below (i.e., if $X_n\uparrow X$ then $\tilde{f}_Z(X_n)\to \tilde{f}_Z(X)$ and sequences can be used instead of nets due to the order separability of $\mathcal{X}$).
        We can equivalently write $\tilde{f}_Z$ as follows, imposing w.l.o.g.\ $f(0)=0$:
        \begin{equation}
            \tilde{f}_Z(X)=
            \begin{cases}
            \ds\inf_{\alpha\in A} \alpha f(Z)  \ \ &\mbox{ if } A\neq\emptyset, A:=\{\alpha\in[0,1]:\exists Y \mbox{ s.t. } Y\geq X \mbox{ and } Y=\alpha Z\}, \\
            +\infty &\mbox{ otherwise. }
                \end{cases}
                \label{eqcfb}
        \end{equation}
        Let $X_n\uparrow X$.
        By monotonicity we have $\tilde{f}_Z(X_n)\leq \tilde{f}_Z(X) \ \forall n\in\mathbb{N}$, which implies $\limsup_{n}\tilde{f}_Z(X_n)\leq \tilde{f}_Z(X)$.
        So we only need to prove that $\liminf_{n}\tilde{f}_Z(X_n)\geq \tilde{f}_Z(X)$ to establish the thesis.
        Let us suppose that $\tilde{f}_Z(X)<+\infty$; if $\liminf_{n}\tilde{f}_Z(X_n)=+\infty$ the thesis is obvious.
        Conversely, if $\liminf_{n}\tilde{f}_Z(X_n)<+\infty$ there exist infinitely many $n\in\mathbb{N}$ such that $\tilde{f}_Z(X_n)<+\infty$.
        In particular, we can consider a sequence $(X_{n_k})_{k\in\mathbb{N}}$ such that $\tilde{f}_Z(X_{n_k})=\alpha_{n_k} f(Z)$ for some $\alpha_{n_k}\in[0,1]$, which verifies $\lim_k\tilde{f}_Z(X_{n_k})=\liminf_n\tilde{f}_Z(X_n)$ similarly as done in the proof of Lemma~\ref{Lconvexf}.
        By Equation~\eqref{eqcfb}, when $\tilde{f}_Z(X_{n_k})=\alpha_{n_k} f(Z)$ with $\alpha_{n_k}\in[0,1]$, there exists $Y_{n_k}\geq X_{n_k}$ s.t.\ $Y_{n_k}=\alpha_{n_k}Z$.
        Thus, we can write:
        $$X=\lim_nX_n=\liminf_kX_{n_k}\leq\liminf_kY_{n_k}=(\liminf_k\alpha_{n_k})Z=:\bar{\alpha}Z,$$
        where $\liminf_k\alpha_{n_k}:=\bar{\alpha}\in[0,1]$ given that $\alpha_{n_k}\in[0,1] \ \forall k\in\mathbb{N}.$ These observations yield:
        \begin{align*}
            &\tilde{f}_Z(X)\leq \tilde{f}_Z(\liminf_kY_{n_k})=\tilde{f}_Z(\bar{\alpha}Z)\leq f_Z(\bar{\alpha}Z)= \bar{\alpha}f(Z)\\ &=\liminf_{k}\alpha_{n_k}f(Z)=\liminf_k\tilde{f}_Z(X_{n_k})=\liminf_n\tilde{f}_Z(X_n).
        \end{align*}
        The first inequality is due to monotonicity of $\tilde{f}_Z$, the second inequality follows from $\tilde{f}_Z\leq f_Z$, whereas the second equality is due to the definition of $f_Z$ and the fact that $\bar{\alpha}\in[0,1]$.
        Summing up, if $\tilde{f}_Z(X)<+\infty$ and $X_n\uparrow X$, then $ \tilde{f}_Z(X_n)\to \tilde{f}_Z(X)$.
        If $\tilde{f}_Z(X)=+\infty$, let us suppose by contradiction that there exists a sequence $X_n\uparrow X$ such that $\liminf_n \tilde{f}_Z(X_n)<+\infty$.
        As before, we can find a (sub-)sequence $Y_n=\alpha_n Z$ with $\alpha_n\in[0,1]$ such that $Y_n\geq X_n$ and $X=\liminf_n X_n\leq \bar{\alpha}Z$, where $\bar{\alpha}:=\liminf_n\alpha_n\in[0,1]$.
        We can infer as above that $\tilde{f}_Z(X)\leq \liminf_n\tilde{f}_Z(X_n)<+\infty$, hence the contradiction.
        We conclude that $\tilde{f}_Z$ is continuous from below. Proposition~1 of \cite{BF10} yields for any $Z\in dom(f)$: $$\tilde{f}_Z(X)=\sup_{q\in\mathcal{X}^*_+}\{\langle q, X \rangle -\tilde{f}_Z^{*}(q)\} \ \ \forall X\in\mathcal{X},$$ thus we establish Equation~\eqref{minmaxstaticmon}.
        \end{proof}
         \begin{proof}[Proof of Corollary \ref{corCA}]
                 We define for any $Z\in dom(\rho)$ the functional:
                    \begin{equation*}
                        \rho_Z(X):=\begin{cases}
                            \alpha \rho(Z) + c \ \ &\mbox{ if } X \mbox{ is non-constant and there exist }\\&\qquad \alpha\in(0,1], c\in\R \mbox{ s.t. } X=\alpha Z + c, \\
                            c \ \ &\mbox{ if there exists } c\in\R \mbox{ s.t. } X=c, \\
                            +\infty &\mbox{ otherwise.}
                        \end{cases}
                    \end{equation*}
                Let us notice that if $Z$ is constant then the first case in the definition of $\rho_Z$ cannot happen.
                In this circumstance, either $X$ is constant and $\rho_Z(X)=c$ or $\rho_Z(X)=+\infty$.
                We have that $\rho_Z$ is cash-additive.
                Indeed, for any $m\in\R$, if $\rho_Z(X)<+\infty$ then $\rho_Z(X+m)=\alpha \rho(Z) + c +m =\rho_Z(X)+m$, and if $\rho_Z(X)=+\infty$ the same holds for $\rho_Z(X+m)$.
                Convexity is a routine verification.
                We need to prove that $\rho_Z(X)\geq \rho(X)$ for any $X\in\mathcal{X}$ and $\rho_X(X)=\rho(X)$ when $X\in dom(\rho)$. Let $X=\alpha Z + c$ for some $\alpha\in(0,1]$ and $c\in\mathbb{R}$ (the case when $X=c$ is obvious by cash-additivity of $\rho$).
                Then $\rho_Z(X)=\alpha \rho(Z) + c = \alpha \rho(Z+c/\alpha)\geq \rho(\alpha Z +c) = \rho(X)$, where we used cash-additivity and star-shapedness of $\rho$. Moreover, if $X\in dom(\rho)$ we can take $Z=X$, so $\rho_X(X)=\rho(X)$. Thus for any $X\in\mathcal{X}$, $\rho(X)=\min_{\gamma\in\Gamma}\rho_{\gamma}(X)$ with $\Gamma:= dom(\rho).$

                Now we prove order lower semicontinuity to obtain the min-max representation.
                Let $X_n\xrightarrow{o} X$.
                We start by considering the case $\rho_Z(X)<+\infty$.
                If $\liminf_n\rho_Z(X_n)=+\infty$ there is nothing to prove.
                So let us suppose that $\liminf_n\rho_Z(X_n)<+\infty$.
                As argued in Lemma~\ref{Lconvexf}, we can consider sequences of the form $X_n=\alpha_nZ+c_n$ where $\alpha_n\in[0,1],c_n\in\R \ \forall n\in\mathbb{N}.$ If we prove that $\alpha_n\to\bar{\alpha}\in[0,1]$ and $c_n\to\bar{c}\in\R$ with $X=\bar{\alpha}Z+\bar{c}$ then we have: $\liminf_n\{\alpha_n\rho(Z)+c_n\}=\bar{\alpha}\rho(Z)+\bar{c}=\rho_Z(X)$, hence the thesis is proved.
                If $Z$ is constant then also $X_n$ must be constant for any $n\in\mathbb{N}$ and by order convergence $X_n=c_n\to X$ a.s., so also $X$ is constant. Thus, cash-additivity of $\rho_Z$ implies $\rho_Z(X_n)=c_n\to X=\rho_Z(X)$. If $Z$ is not constant a.s., there exist $\omega_1,\omega_2\in\Omega$ such that $Z(\omega_1)\neq Z(\omega_2)$ and by order convergence we have $\alpha_nZ(\omega_i)+c_n\to \bar{Z}_{\omega_i}$ (eventually $\alpha_n$ can be equal to $0$ for some $n$) with $i=1,2$ and $\bar{Z}_{\omega_i}$ is a real number depending on $\omega_i$. Taking the difference of the two limits we have $\lim_{n\to\infty}(\alpha_nZ(\omega_1)-\alpha_n Z(\omega_2))=\bar{Z}_{\omega_1}-\bar{Z}_{\omega_2}$, hence:
                \begin{align*}
              &\lim_{n\to\infty}\alpha_n=\frac{\bar{Z}_{\omega_1}-\bar{Z}_{\omega_2}}{Z(\omega_1)-Z(\omega_2)}=:\bar{\alpha}\in[0,1], \\  &\lim_{n\to\infty}c_n=\bar{Z}_{\omega_1}-\bar{\alpha}Z(\omega_1)=:\bar{c}\in\mathbb{R}.
               \end{align*}
                Moreover $X_n\xrightarrow{o} \bar{\alpha}Z+\bar{c}$, indeed we have:
                \begin{eqnarray*}
                |X_n-(\bar{\alpha}Z+\bar{c})|&=&|(\alpha_n-\bar{\alpha})Z+(c_n-\bar{c})| \leq|\alpha_n-\bar{\alpha}||Z|+|c_n-\bar{c}|\\
                &\leq & \sup_{k\geq n}\{|\alpha_k-\bar{\alpha}|\}|Z|+\sup_{k\geq n}\{|c_k-\bar{c}|\}:=Y_n\downarrow 0,
                \end{eqnarray*}
                and by uniqueness of order limits we conclude $X=\bar{\alpha}Z+\bar{c}$, so the thesis follows.
                If $\rho_Z(X)=+\infty$ we can proceed by contradiction as done in Lemma~\ref{Lconvexf} to conclude that also $\liminf_n\rho_Z(X_n)=+\infty$.
                Thus $\rho_Z$ is order lower semicontinuous and the min-max representation holds.

                If in addition $\rho$ is monotone we can obtain monotonicity of $\rho_Z$ for any $Z\in dom(\rho)$ by defining $\tilde{\rho}_Z(X):=\inf\{\rho_Z(Y):Y\geq X\}$.
                In this case we still have cash-additivity, indeed:
                $$\tilde{\rho}_Z(X+m)=\inf\{\rho_Z(Y):Y\geq X+m\}=\inf\{\rho_Z(W+m):W\geq X\}=\tilde{\rho}_Z(X)+m,$$ where in the last equality we used cash-additivity of $\rho_Z$ and the definition of $\tilde{\rho}_Z$. Moreover, analogously as in Proposition~\ref{staticSS}, we have $\rho\leq\tilde{\rho}_Z\leq \rho_Z$ and $\tilde{\rho}_Z$ is still convex, thus the first thesis follows.

                As far as the min-max representation is concerned, we must prove that $\tilde{\rho}_Z$ is continuous from below.
                The proof is similar to the proof of Proposition~\ref{staticSS}.
                Once again we can write an equivalent expression for $\tilde{\rho}_Z(X)$. Defining for each $X\in\mathcal{X}$ and $Z\in dom(\rho)$ the set $\mathcal{A}^Z_{X}:=\{(\alpha,c)\in[0,1]\times\R:\exists Y\in\mathcal{X} \mbox{ s.t. } Y\geq X \mbox{ and } Y=\alpha Z+c\}$, it follows that:
                \begin{equation*}
                    \tilde{\rho}_Z(X)=\begin{cases}
                        \ds\inf_{(\alpha,c)\in\mathcal{A}_X^Z}\{\alpha \rho(Z)+c\} \ \ &\mbox{ if } \mathcal{A}^Z_{X}\mbox{ is non-empty}, \\
            +\infty &\mbox{ otherwise. }
                    \end{cases}
                \end{equation*}
                We observe that when $\mathcal{A}_X^Z$ is non-empty the infimum in the previous equation must be finite given that $\tilde{\rho}_Z(X)\geq \rho(X)>-\infty$, thus when $\mathcal{A}^Z_X$ is non-empty the infimum is in fact a minimum because there exist $\alpha\in[0,1]$ and $c\in\R$ such that $\tilde{\rho}_Z(X)=\alpha\rho(Z)+c$.
                Given $X_n\uparrow X$ we want to prove that $\lim_{n}\tilde{\rho}_Z(X_n)=\tilde{\rho}_Z(X)$. By monotonicity it is enough to prove that $\liminf_n\tilde{\rho}_Z(X_n)\geq \tilde{\rho}_Z(X).$ We suppose $\tilde{\rho}_Z(X)<+\infty$ and $\liminf_n\tilde{\rho}_Z(X_n)<+\infty$ (if $\liminf_n\tilde{\rho}_Z(X_n)=+\infty$ the thesis is clearly verified). Thus, there exists a (sub-)sequence $X_{n_k}$ such that $\tilde{\rho}_Z(X_{n_k})<+\infty \ \ \forall n\in\mathbb{N},$ hence $\tilde{\rho}_Z(X_{n_k})={\rho}_Z(Y_{n_k})=\alpha_{n_k}\rho(Z)+c_{n_k}$ for some $Y_{n_k}=\alpha_{n_k} Z+c_{n_k}$ with $\alpha_{n_k}\in[0,1],c_{n_k}\in\mathbb{R}$ with $Y_{n_k}\geq X_{n_k}$ and $\lim_{k}\tilde{\rho}_Z(X_{n_k})=\liminf_n\tilde{\rho}_Z(X_n).$
             Because $[0,1]$ is a compact set, we can extract a sub-sequence $\alpha_{n_{k_j}}$ such that $\alpha_{n_{k_j}}\to\bar{\alpha}\in[0,1]$. By monotonicity of $\tilde{\rho}_Z$, we infer that $\tilde{\rho}_Z(X_{n_{{k}}})$ must converge to $l\in\mathbb{R}$ (otherwise, we would have $\liminf_{k}\tilde{\rho}_Z(X_{n_k})=+\infty$, which contradicts the hypothesis).
             Thus, also $\tilde{\rho}_Z(X_{n_{k_j}})\to l$ and then $c_{n_{k_j}}=\tilde{\rho}_Z(X_{n_{k_j}})-\alpha_{n_{k_j}}\rho(Z)\to l-\bar{\alpha}\rho(Z):=\bar{c}\in\R$.
               So we have:
                $$X=\lim_{j\to\infty} X_{n_{k_j}}\leq \lim_{j\to\infty} Y_{n_{k_j}} =\lim_{j\to\infty} \{\alpha_{n_{k_j}}Z+c_{n_{k_j}}\}=\bar{\alpha}Z+\bar{c} \ \ d\mathbb{P}\mbox{-a.s.}$$
                This chain of inequalities and monotonicity of $\tilde{\rho}_Z$ yield:
                \begin{eqnarray*}
                    \tilde{\rho}_Z(X)&\stackrel{\mathrm{mon.}}\leq & \tilde{\rho}_Z(\lim_{j\to\infty} Y_{n_{k_j}})\stackrel{\mathrm{\tilde{\rho}_Z\leq\rho_Z}}\leq \rho_Z(\lim_{j\to\infty} Y_{n_{k_j}}) \\ &\stackrel{\mathrm{def.}}=&\bar{\alpha}\rho(Z)+\bar{c}= \lim_{j\to\infty} \{\alpha_{n_{k_j}}\rho(Z)+c_{n_{k_j}}\} \\
&=&\lim_{j\to\infty}\tilde{\rho}_Z(X_{n_{k_j}})=\liminf_{n\to\infty}\tilde{\rho}_Z(X_n).
                \end{eqnarray*}
                If $\tilde{\rho}_Z(X)=+\infty$ we can proceed by contradiction, proving that also $\liminf_n\tilde{\rho}_Z(X_n)=+\infty$, similarly as done in the proof of Proposition \ref{staticSS}.
                Hence, the min-max representation $$                    \rho(X)=\min_{\gamma\in\Gamma}\sup_{q\in\mathcal{C}\cap\mathcal{X}^*_+}\left\{\langle q,X\rangle-\rho^*_{\gamma}(q)\right\} \ \ \forall X\in\mathcal{X}
$$ holds, with $\Gamma:= dom(\rho)$.
                \end{proof}

        As a preparation for the dynamic results we provide a characterization of Lipschitz and star-shaped functionals as pointwise minimum of  Lipschitz and convex functions.
        This result non-trivially extends Theorem~2 in \cite{C00}, where only positively homogeneous functions have been considered.
        \begin{lemma}
            Let $(\mathcal{X},\|\cdot\|)$ be a normed vector space and $f:\mathcal{X}\to\mathbb{R}$ be a Lipschitz and star-shaped function and denote by $k>0$ its  Lipschitz constant.
            Then there exist a set of indexes $\Gamma$ and a set of equi-Lipschitz (with constant $k$) and convex functions labelled by $f_{\gamma}:\mathcal{X}\to\mathbb{R}$ with $\gamma\in\Gamma$ such that the following representation holds:
            \begin{equation*}
                f(X)=\min_{\gamma\in\Gamma}f_{\gamma}(X) \ \ \ \forall X\in\mathcal{X}.
            \end{equation*}
        Moreover, $f_{\gamma}(0)=f(0)$ for any $\gamma\in\Gamma$.
\label{Tminstarshaped}
        \end{lemma}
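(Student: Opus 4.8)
The plan is to feed the segment-functionals of Proposition~\ref{staticSS} through a Pasch-Hausdorff regularization. Since $f$ is real-valued, $\mathrm{dom}(f)=\mathcal{X}$; for each $Z\in\mathcal{X}$ let $f_Z$ be given by Equation~\eqref{convexf}, so that $f_Z(X)=\alpha f(Z)+(1-\alpha)f(0)$ whenever $X=\alpha Z$ with $\alpha\in[0,1]$ and $f_Z(X)=+\infty$ otherwise. The argument of Proposition~\ref{staticSS} is purely algebraic and valid in any vector space, so $f_Z$ is proper and convex, $f_Z\ge f$ pointwise, $f_Z(0)=f(0)$, and $f(X)=\min_{Z\in\mathcal{X}}f_Z(X)$ with the minimum attained at $Z=X$. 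These $f_Z$ are not Lipschitz, being $+\infty$ off a segment, so the idea is to replace each $f_Z$ by its infimal convolution with $k\|\cdot\|$.

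Concretely, for $Z\in\mathcal{X}$ define
\begin{equation*}
f_Z^{k}(X):=\inf_{Y\in\mathcal{X}}\big\{f_Z(Y)+k\|X-Y\|\big\},\qquad X\in\mathcal{X}.
\end{equation*}
Since on its domain $f_Z$ ranges only over the bounded set $\{\alpha f(Z)+(1-\alpha)f(0):\alpha\in[0,1]\}$, it is bounded below, hence $f_Z^{k}>-\infty$; and the choice $Y=0$ gives $f_Z^{k}(X)\le f(0)+k\|X\|<+\infty$, so $f_Z^{k}$ is real-valued. As an infimal convolution of the convex functions $f_Z$ and $k\|\cdot\|$, the function $f_Z^{k}$ is convex, and it is $k$-Lipschitz because for any $X_1,X_2\in\mathcal{X}$ and any $Y$ we have $f_Z(Y)+k\|X_1-Y\|\le f_Z(Y)+k\|X_2-Y\|+k\|X_1-X_2\|$, and one takes the infimum over $Y$. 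Hence $(f_Z^{k})_{Z\in\mathcal{X}}$ is a family of convex functions, equi-Lipschitz with constant $k$.

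It remains to show that $f(X)=\min_{Z\in\mathcal{X}}f_Z^{k}(X)$ and $f_Z^{k}(0)=f(0)$. For the lower bound, fix $X$: for every $Y$ we have $f_Z(Y)\ge f(Y)$ and, by the $k$-Lipschitz property of $f$, $f(Y)+k\|X-Y\|\ge f(X)$, so $f_Z(Y)+k\|X-Y\|\ge f(X)$; taking the infimum over $Y$ yields $f_Z^{k}\ge f$. The choice $Y=X$ in the definition gives $f_X^{k}(X)\le f_X(X)=f(X)$, hence $f_X^{k}(X)=f(X)$, so the infimum over $Z$ is a minimum attained at $Z=X$ with value $f(X)$. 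The choice $Y=0$ gives $f_Z^{k}(0)\le f_Z(0)=f(0)$, and combined with $f_Z^{k}\ge f$ this gives $f_Z^{k}(0)=f(0)$ for every $Z$. Setting $\Gamma:=\mathcal{X}$ and $f_\gamma:=f_\gamma^{k}$ then completes the proof. The one point that requires care, and the reason the construction works, is that the envelope must be taken with exactly the constant $k$: a strictly larger constant would destroy the $k$-Lipschitz conclusion, while a strictly smaller one would break the inequality $f_Z^{k}\ge f$, which rests on $f$ itself being $k$-Lipschitz; everything else is routine verification.
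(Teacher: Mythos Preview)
Your proof is correct and takes essentially the same approach as the paper: define the segment functionals $f_Z$ of Equation~\eqref{convexf}, take their Pasch-Hausdorff envelope with constant $k$ (i.e., infimal convolution with $k\|\cdot\|$), and verify convexity, $k$-Lipschitzness, the lower bound $f_Z^{k}\ge f$ via star-shapedness plus Lipschitzness of $f$, attainment at $Z=X$, and $f_Z^{k}(0)=f(0)$. Your closing observation that the regularization constant must equal $k$ exactly is a nice clarifying remark not explicit in the paper.
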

        \begin{proof}
            Let us fix $Z\in\mathcal{X}$.
            We define:
            $$f^Z(X):=\inf_{\bar{X}\in\mathcal{X}}\{k\|X-\bar{X}\|+f_Z(\bar{X})\}=\inf_{\alpha\in[0,1]}\{k\|X-\alpha Z\|+\alpha f(Z)+(1-\alpha)f(0)\},$$
            where $k$ is the Lipschitz constant of $f$ and $f_Z$ is given by Equation \eqref{convexf}.
            That is, $f^Z$ is the infimal convolution between $f_Z$ (convex and proper) and the Lipschitz and convex function $b_k(X):=k\|X\|$.
            Thus, by the properties of infimal convolution,\footnote{For a more comprehensive understanding of infimal convolution, interested readers may refer to \cite{R70,RW97,BEK05}.
            In general, the infimal convolution between $b_k$ and a function $h$ is commonly known in the literature as the \textit{Pasch-Hausdorff envelope} of $h$.} $f^Z$ is a proper, convex, and Lipschitz function.
            Moreover, we have that $f^Z(X)\geq f(X) \ \forall X\in\mathcal{X}$, indeed by Lipschitzianity and shar-shapedness of $f$ we have:
            $$f(X)-\alpha f(Z)-(1-\alpha)f(X)\leq f(X)-f(\alpha Z)\leq k\|X-\alpha Z\|,$$
            hence $f(X)\leq k\|X-\alpha Z\|+\alpha f(Z)+(1-\alpha)f(X)$ and taking the infimum over $\alpha\in[0,1]$ we obtain $f(X)\leq f^Z(X)$.
            Moreover, for any fixed $X\in\mathcal{X}$ we can choose $Z=X$ and $\alpha=1$, which yields $f^X(X)\leq f(X)$.
            The thesis follows by taking $\Gamma=\mathcal{X}$ and $f_{\gamma}=f^Z$.
            Moreover, by definition of $f_{\gamma}$ we have:
            $$f(0)\leq f_{\gamma}(0)=f^Z(0)=\inf_{\alpha\in[0,1]}\{k\|\alpha Z\|+\alpha\left(f(Z)-f(0)\right)\}+f(0)\leq f(0),$$
            where the first inequality is due to $f^Z(X)\geq f(X)$ for any $X\in\mathcal{X}$ and the last inequality follows by choosing $\alpha=0$, hence $f_{\gamma}(0)=f(0) \ \ \forall \gamma\in\Gamma$.
        \end{proof}
\begin{proof}[Proof of Lemma \ref{asSS}]
We prove the first statement.
If $\rho$ is a normalized star-shaped risk measure we have by monotonicity that $(A^m)_{m\in\R}$ is a monotone and increasing family of sets.
Moreover,  for any $\lambda\in[0,1]$, $Y\in\lambda A^{m}$ can be written as $Y=\lambda X$ with $X\in A^m$ and star-shapedness of $\rho$ yields:
$\rho(\lambda X)\leq \lambda \rho(X)\leq \lambda m$, so that $Y=\lambda X \in A^{\lambda m}$.
Right-continuity is given by the increasing property, which implies $A^m\subseteq \bigcap_{\bar{m}>m}A^{\bar{m}}$.
The converse inclusion follows by observing that if $X\in\bigcap_{\bar{m}>m}A^{\bar{m}}$ then $\rho(X)\leq \bar{m}$ for any $\bar{m}>m$, thus $\rho(X)\leq m$ and $X\in A^m$. Moreover, by normalization we have that $0\in\mathcal{A}_{\rho}^{m}$ if and only if $m\geq0$, thus $\inf\{m\in\R:0\in\mathcal{A}_{\rho}^m\}=0$.

Conversely, if $\mathcal{A}=(A^m)_{m\in\R}$ is a star-shaped family of acceptance sets we have that $\rho_{\mathcal{A}}$ is monotone due to the monotonicity of $(A^m)_{m\in\R}$.
Let $X\in\mathcal{A}^m$ for some $m\in\mathbb{R}$.
The star-shapedness of the acceptance sets yields:
\begin{align*}
\rho_{\mathcal{A}}(\lambda X)&=\inf\left\{m'\in\R:\lambda X\in A^{m'}\right\}\\&=\inf\left\{\lambda m'\in\R:\lambda X\in A^{\lambda m'}\right\} \\ &\leq\inf\left\{\lambda m'\in\R:\lambda X\in \lambda A^{m'}\right\}=\lambda\rho_{\mathcal{A}}(X).
\end{align*}
In addition, $\rho(0)=\inf\{m\in\R: 0\in A^{m}\}=0$, thus normalization follows.
The last thesis follows verbatim as in the proof of Theorem~1 in \cite{DK12}.
\end{proof}
\begin{proof}[Proof of Theorem \ref{th:asSS}]
$i)\implies ii)$.
This implication is proved in Proposition~\ref{staticSS}.  \\
$ii)\implies iii)$.
Let $A^m_{\gamma}=\left\{X\in\mathcal{X}:\rho_{\gamma}(X)\leq m\right\}$.
By convexity of $\rho_{\gamma}$ for any $\gamma\in\Gamma$ and by Proposition~1 in \cite{DK12}, $\mathcal{A}_{\gamma}$ is a convex family of acceptance sets.
Moreover, by item $ii)$ we have:
\begin{align*}
A^m_{\rho}&=\left\{X\in \mathcal{X}:\rho(X)\leq m\right\}=\left\{X\in\mathcal{X}:\rho_{\gamma}(X)\leq m \ \mbox{ for some } \gamma\in\Gamma\right\} \\
&=\{X\in\mathcal{X}:X\in \bigcup_{\gamma\in\Gamma}A^m_{\gamma}\}=\bigcup_{\gamma\in\Gamma}A^m_{\gamma}.
\end{align*}
Thus we can write by Lemma \ref{asSS}:
\begin{align*}
\rho(X)&=\inf\left\{m\in\R:X\in A^m_{\rho}\right\}=\inf\left\{m\in\R:X\in\bigcup_{\gamma\in\Gamma}A^m_{\gamma}\right\} \notag \\&=\inf\left\{m\in\R:X\in A^m_{\gamma} \mbox{ for some } \gamma\in\Gamma\right\}.
\end{align*}
In addition, normalization of $\rho_{\gamma}$ for any $\gamma\in\Gamma$ yields:
$$\rho_{\gamma}(0)=\inf\{m\in\mathbb{R}:0\in A^m_{\gamma}\}=0,$$
thus Equation \eqref{eq: normas} is satisfied by $(A_{\gamma})_{\gamma\in\Gamma}$. \\
$iii)\implies i)$.
We observe that for any $\lambda\in[0,1]$,
\begin{equation}
\bigcup_{\gamma\in\Gamma}\lambda A^m_{\gamma}\subseteq\bigcup_{\gamma\in\Gamma}A^{\lambda m}_{\gamma}.
\label{eq:inclusion}
\end{equation}
Indeed, if $X\in \bigcup_{\gamma\in\Gamma}\lambda A^m_{\gamma}$ then there exists $\bar{\gamma}\in\Gamma$ such that $X\in \lambda A^m_{\bar{\gamma}}$, thus by convexity of $A^m_{\bar{\gamma}}$ we have $X\in A^{\lambda m}_{\bar{\gamma}}$ (given that $0\in A^{0}_{\gamma} \ \forall\gamma\in\Gamma$), hence $X\in\bigcup_{\gamma\in\Gamma}A^{\lambda m}_{\gamma}$.

Now we prove star-shapedness of $\rho$.
Let us fix $\lambda\in[0,1]$ and $X\in\mathcal{X}$.
Then,
\begin{align*}
\rho(\lambda X)&=\inf\left\{m\in\R:\lambda X\in A^m_{\gamma} \mbox{ for some } \gamma\in\Gamma\right\}=\inf\{m\in\R:\lambda X\in \bigcup_{\gamma\in\Gamma}A^m_{\gamma}\}\\
&=\inf\{\lambda m'\in\R:\lambda X\in \bigcup_{\gamma\in\Gamma}A^{\lambda m'}_{\gamma}\}\leq\inf\{\lambda m':\lambda X \in
\bigcup_{\gamma\in\Gamma}\lambda A^{m'}_{\gamma}\}\\
&=\lambda\inf\{m'\in\R:X\in A^{m'}_{\gamma} \mbox{ for some } \gamma\in\Gamma\}=\lambda\rho(X),
\end{align*}
where the inequality is due to Equation~\eqref{eq:inclusion}. Furthermore we have:
$$\rho(0)=\inf\{m\in\R: 0\in A^{m}_{\gamma} \mbox{ for some } \gamma\in\Gamma\}=0,$$ by Equation \eqref{eq: normas}.
\end{proof}

\setcounter{theorem}{0}

\subsection{Proofs of Section~\ref{sec:mr}}
Before proving the main results, we first need a preliminary lemma whose proof is omitted since it can be derived from standard arguments in measure theory.
 \begin{lemma}
            Let $f:\Omega\times[0,T]\times\mathbb{R}\times\mathbb{R}^n\to\mathbb{R}$ be such that \begin{itemize}
                \item $f$ is an adapted stochastic process.
                \item  $d\mathbb{P}\times dt$-a.s.\ the function $(y,z)\mapsto f(\omega,t,y,z)$ is continuous.
                \item $f$ is star-shaped w.r.t.\ $(y,z)$, i.e., for any $(y,z)\in\mathbb{R}\times\mathbb{R}^n$ and $\alpha\in[0,1]$ we have $f(\omega,t,\alpha y,\alpha z)\leq \alpha f(\omega,t,y,z)+(1-\alpha)f(\omega,t,0,0) \ d\mathbb{P}\times dt$-a.s.
                \end{itemize}
                Under these conditions, it holds that: $$d\mathbb{P}\times dt\mbox{-a.s. } f(\omega,t,\alpha y,\alpha z)\leq \alpha f(\omega,t,y,z)+(1-\alpha)f(\omega,t,0,0) \mbox{ for any } (y,z)\in\mathbb{R}\times\mathbb{R}^n \mbox{ and }\alpha\in[0,1].$$
                 \label{LFMS}
        \end{lemma}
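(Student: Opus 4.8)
The statement is the standard device for upgrading an a.s.\ assertion that holds for each fixed parameter value to one that holds a.s.\ for \emph{all} parameter values simultaneously, exploiting separability of the parameter space together with continuity. The plan is as follows.

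First I would fix a countable dense subset $D\subseteq\mathbb{R}\times\mathbb{R}^n\times[0,1]$, for instance $D:=\mathbb{Q}\times\mathbb{Q}^n\times(\mathbb{Q}\cap[0,1])$. For each triple $(y,z,\alpha)\in D$, the hypothesis of pointwise star-shapedness furnishes a $d\mathbb{P}\times dt$-null set $N_{y,z,\alpha}\subseteq\Omega\times[0,T]$ outside of which
\[
f(\omega,t,\alpha y,\alpha z)\leq \alpha f(\omega,t,y,z)+(1-\alpha)f(\omega,t,0,0).
\]
Let $N_0$ denote the $d\mathbb{P}\times dt$-null set outside of which $(y,z)\mapsto f(\omega,t,y,z)$ is continuous, and put $N:=N_0\cup\bigcup_{(y,z,\alpha)\in D}N_{y,z,\alpha}$; since $D$ is countable, $N$ is again $d\mathbb{P}\times dt$-null.

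Next, for a fixed $(\omega,t)\notin N$, I would introduce the function $\Phi(y,z,\alpha):=\alpha f(\omega,t,y,z)+(1-\alpha)f(\omega,t,0,0)-f(\omega,t,\alpha y,\alpha z)$ on $\mathbb{R}\times\mathbb{R}^n\times[0,1]$. It is continuous: the map $(y,z,\alpha)\mapsto(\alpha y,\alpha z)$ is continuous and composes with the continuous map $(y,z)\mapsto f(\omega,t,y,z)$, while the remaining term is affine in $\alpha$ with coefficients continuous in $(y,z)$. Hence $\{\Phi\geq 0\}$ is closed, and since it contains the dense set $D$ it must equal $\mathbb{R}\times\mathbb{R}^n\times[0,1]$. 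This yields the claimed inequality for every $(y,z)\in\mathbb{R}\times\mathbb{R}^n$ and every $\alpha\in[0,1]$ at this $(\omega,t)$; as $N$ is null, the conclusion holds $d\mathbb{P}\times dt$-a.s.

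There is no genuine analytic obstacle here; the only point deserving attention is that $f$ is assumed continuous only in $(y,z)$, so one must observe that continuity along the rays $\alpha\mapsto(\alpha y,\alpha z)$ — and hence joint continuity of $\Phi$ — follows for free by composition, with no equicontinuity or uniformity required. Everything else is the routine measure-theoretic bookkeeping of taking a countable union of null sets, which is why the authors omit the proof.
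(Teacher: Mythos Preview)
Your proposal is correct and is precisely the ``standard arguments in measure theory'' the paper invokes while omitting the proof: pass to a countable dense set of parameters, take the countable union of the associated null sets together with the null set where continuity fails, and extend the inequality by continuity. There is nothing to add.
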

        \begin{proof}[Proof of Proposition \ref{POSS}]
    For brevity, we assume $\rho_t(0)\equiv 0$ and $g(t,0,0)\equiv0$ in the proof.
    However, it is important to note that the proof can be straightforwardly adapted to the general case.
    Let us fix $X\in L^2(\mathcal{F}_T)$ and suppose that $g(t,\alpha y,\alpha z)\geq \alpha g(t,y,z)$ for any $\alpha\geq1$ and $(y,z)\in\mathbb{R}\times\mathbb{R}^n \ d\mathbb{P}\times dt$-a.s.
    We define $\tilde{g}(t,y,z):=\alpha g(t,\frac{y}{\alpha},\frac{z}{\alpha})$ and let $(\tilde{\rho}_t(\alpha X),\tilde{Z}_t)_{t\in[0,T]}$ be the corresponding solution to the BSDE with driver $\tilde{g}$ and terminal condition $\alpha X$.
    Clearly, because $\tilde{g}$ satisfies the same hypotheses as $g$ (\textit{mutatis mutandis}), we have existence and uniqueness of this solution.
    By uniqueness of the solution we find $(\tilde{\rho}_t(\alpha X),\tilde{Z}_t)_{t\in[0,T]}=\alpha(\rho_t(X),Z_t)_{t\in[0,T]}$.
    Moreover, the star-shapedness of $g$ yields $g(t,y,z)\geq \tilde{g}(t,y,z) \ d\mathbb{P}\times dt$-a.s.\ and by the comparison theorem for BSDEs it holds that $\rho_t(\alpha X)\geq\tilde{\rho}_t(\alpha X)=\alpha\rho_t(X)$, so $\rho_t$ is star-shaped.

    Conversely, let us assume $\rho_t(\alpha X)\geq \alpha \rho_t(X) \ \forall \alpha\geq1,X\in L^2(\mathcal{F}_T)$.
    With the same notation as in Theorem~3.4 in \cite{J08}, let us define
    $$\mathcal{S}^{z}_{y}(g):=\left\{t\in[0,T]:g(t,y,z)=L^1-\lim_{\varepsilon\to0^+}\frac{1}{\varepsilon}[\rho_{t,t+\varepsilon}(y+z(W_{t+\varepsilon}-W_t))-y]\right\},$$
    where $\rho_{u,s}$ is the solution to BSDE \eqref{eq: star-shapedBSDE} on the interval $[u,s]$ with $u\leq s$. We
    fix $\alpha\geq 1$, $(y,z)\in\mathbb{R}\times\mathbb{R}^n$ and $t\in \mathcal{S}^y_z(g)\cap \mathcal{S}^{\alpha y}_{\alpha z}(g)$.
    Hence, star-shapedness of $\rho_t$ implies for any sufficiently small $\varepsilon>0$ the inequality $\alpha g_{\varepsilon}(t,y,z)\leq g_{\varepsilon}(t,\alpha y,\alpha z) \ d\mathbb{P}$-a.s., where
    \begin{align*}
    & g_{\varepsilon}(t,\alpha y,\alpha z)=\frac{1}{\varepsilon}\left(\rho_{t,t+\varepsilon}(\alpha(y+z(W_{t+\varepsilon}-W_t)))-\alpha y\right), \\
    &\alpha g_{\varepsilon}(t,y,z)=\alpha\frac{1}{\varepsilon}\left({\rho}_{t,t+\varepsilon}((y+z(W_{t+\varepsilon}-W_t)))-y\right).
    \end{align*}
    Lemma~2.1 in \cite{J08} ensures that (extracting a subsequence) $g_{\frac{1}{n}}(t,\alpha y,\alpha z)\to g(t,\alpha y,\alpha z)$ $ d\mathbb{P}$-a.s., as $n\to\infty$ and analogously for $\alpha g_{\frac{1}{n}}$.
    By combining the last convergences and the previous inequality we obtain $\alpha g(t,y,z)\leq g(t,\alpha y,\alpha z) \ d\mathbb{P}$-a.s.;
    Lemma 2.1 in \cite{J08} asserts also that $\mathcal{L}([0,T]\setminus \mathcal{S}^y_z(g)\cap \mathcal{S}^{\alpha y}_{\alpha z}(g))=0$, thus
    $\alpha g (t,y,z)\leq g(t,\alpha y,\alpha z) \ d\mathbb{P}\times dt$-a.s.; Lipschitzianity of $g$ w.r.t.\ $(y,z)$ implies continuity of $g$ w.r.t.\ $(y,z)$ and Lemma~\ref{LFMS} yields:
    $$d\mathbb{P}\times dt\mbox{-a.s. } \  g(t,\alpha y,\alpha z)\geq \alpha g(t,y,z) \ \ \ \forall (y,z)\in\mathbb{R}\times\mathbb{R}^n, \forall \alpha\geq1.$$
    The case of positive homogeneity follows similarly.
\end{proof}
  \begin{proof}[Proof of Theorem \ref{SuCO}]
             We divide the proof into several steps to make it simpler to comprehend.
             We start by proving the statements for star-shaped drivers.
             The positively homogeneous case is similar; we nevertheless provide a sketch of the proof in this case for the sake of clearness.
             Let us recall that when the driver $g$ is star-shaped (resp.\ positively homogeneous) then also $\rho_t$ inherits this property, according to Proposition~\ref{POSS}.
             {\begin{center}
\textit{The star-shaped case}:
\end{center}}
\noindent\textit{Step 1: approximation of the driver $g$.} \\ Given $\Omega'\subseteq\Omega$ with $\mathbb{P}(\Omega')=1$  and $t\in I\subseteq[0,T]$ with $\mathcal{L}(I)=1$, where $\mathcal{L}$ is the Lebesgue measure on the measurable space $([0,T],\mathcal{B}([0,T])$, s.t.\ $g$ is star-shaped and Lipschitz for all $\omega\in\Omega'$ and $t\in I$, we can define for each $(\omega,t)\in\Omega'\times I$:
\begin{equation*}
g_{\beta,\mu}(\omega,t,y,z):=
\begin{cases}
    m g(\omega,t,\beta,\mu)+(1-m)g(\omega,t,0,0) \ \ &\mbox{ if } \exists m\in[0,1] \mbox{ s.t. } (y,z)=m (\beta,\mu), \\
    +\infty \ &\mbox{ otherwise}.
\end{cases}
\end{equation*}
By Lemma~\ref{Lconvexf} we know that this is a lower semicontinuous, convex and proper function for any $(\omega,t)\in\Omega'\times I$.
Furthermore, we define:
\begin{equation*}
    g^{\beta,\mu}(\omega,t,y,z)=\inf_{\alpha\in[0,1]}\{k(|y-m\beta|+|z-m\mu|)+mg(\omega,t,\beta,\mu)+(1-m)g(\omega,t,0,0)\},
\end{equation*}
where $k$ is the Lipschitz constant of the driver $g$. By Lemma~\ref{Tminstarshaped} $g^{\beta,\mu}$ is a proper convex and Lipschitz function w.r.t.\ $(y,z)$, for any $(\omega,t)\in\Omega'\times I$, given that it is the Pasch-Hausdorff\footnote{Let us recall that according to Example 9.11 in \cite{RW97}, if the Pasch-Hausdorff envelope $g^{\beta,\mu}$ of $g_{\beta,\mu}$ is not equal to $-\infty$, then it is a Lipschitz continuous function. In our particular case, we have $g^{\beta,\mu} \not\equiv -\infty$.
To illustrate this, consider any $(\beta,\mu) \in \mathbb{R} \times \mathbb{R}^n$. We can choose $(y,z) = (\beta,\mu)$, thus $m = 1$ attains the infimum in the definition of $g^{\beta,\mu}(\omega,t,\beta,\mu)$. As a result, we obtain $g^{\beta,\mu}(\omega,t,\beta,\mu) = g(\omega,t,\beta,\mu) > -\infty$, indicating that $g^{\beta,\mu}$ is uniformly Lipschitz continuous.} envelope of a convex function (convexity follows from the fact that the infimal convolution between two convex functions is convex, provided that it is still proper).
Moreover, $g^{\beta,\mu}(\omega,t,0,0)=g(\omega,t,0,0)$ for any $(\beta,\mu)\in\mathbb{R}\times\mathbb{R}^n$ and $(\omega,t)\in\Omega'\times I$, according to the second thesis of Lemma~\ref{Tminstarshaped}.
Hence, $g^{\beta,\mu}(\cdot,\cdot,0,0)$ coincides $d\mathbb{P}\times dt$-a.s.\ with $g(\cdot,\cdot,0,0)$, which yields $g^{\beta,\mu}(\cdot,\cdot,0,0)\in L^2_{\mathcal{F}}(T,\mathbb{R})$ for any $(\beta,\mu)\in\mathbb{R}\times\mathbb{R}^n$.
Regarding the measurability of $g^{\beta,\mu}$, we can observe that the map
 $$m\mapsto k(|y-m\beta|+|z-m\mu|)+mg(\omega,t,\beta,\mu)+(1-m)g(\omega,t,0,0)$$ is continuous.
 Hence, it is sufficient to consider the infimum over a countable set $(m_n)_{n \in \mathbb{N}}$. Specifically, the infimum over a countable set of $\mathcal{P} \times \mathcal{B}(\mathbb{R}) \times \mathcal{B}(\mathbb{R}^n)$-measurable processes defined as $$g^{\alpha,\beta}_n(\omega,t,y,z):= k(|y-{m_n}\beta|+|z-m_n\mu|)+m_ng(\omega,t,\beta,\mu)+(1-m_n)g(\omega,t,0,0)$$ will preserve the $\mathcal{P} \times \mathcal{B}(\mathbb{R}) \times \mathcal{B}(\mathbb{R}^n)$-measurability of the resulting process.
 Furthermore, by Lemma~\ref{Tminstarshaped}, the Lipschitz and star-shapedness conditions on $g$ guarantee that for any $(\beta,\mu)$ we have $g^{\beta,\mu}(\omega,t,y,z)\geq g(\omega,t,y,z)$ $\forall (\omega,t)\in\Omega'\times I$ and $(y,z)\in\mathbb{R}\times\mathbb{R}^n$. Finally, if we take $(\beta,\mu)\equiv(y,z)$ and $m=1$, equality holds.
 So, for any $(\omega,t)\in\Omega'\times I$, we have the representation:
$$g(\omega,t,y,z)=\ds\min_{(\beta,\mu)\in\mathbb{R}\times\mathbb{R}^n}g^{\beta,\mu}(\omega,t,y,z) \ \ \ \forall (y,z)\in\mathbb{R}\times\mathbb{R}^n.$$
   We stress that also in this case the family of functions $\{g^{\beta,\mu}:(\beta,\mu)\in\mathbb{R}\times\mathbb{R}^n\}$ is equi-Lipschitz.
   Indeed for any $(\omega,t)\in\Omega'\times I$ the function $g^{\beta,\mu}$ is  Lipschitz with constant $k$, and $k$ does not depend on the choice of $(\beta,\mu)\in\mathbb{R}\times\mathbb{R}^n$, being the Lipschitz constant of the driver $g$.\medskip\\
\textit{Step 2: building drivers for convex risk measures.}\\ We define $$\Gamma:=\{(\alpha,\delta)_{t\in[0,T]} \mbox{ square integrable predictable stochastic processes valued in } \mathbb{R}\times\mathbb{R}^n \}$$ and for any $(\alpha,\delta)\in\Gamma$ and any $(\omega,t)\in\Omega'\times I$:
            \begin{equation*}
                g^{\alpha,\delta}(\omega,t,y,z):=g^{\alpha_t(\omega),\delta_t(\omega)}(\omega,t,y,z).
            \end{equation*}
           In particular, $g^{\alpha,\delta}$ is a predictable process, being the infimum over a countable set of predictable processes: $${g}^{\alpha,\delta}_n(\omega,t,y,z):=k(|y-m_n\alpha_t(\omega)|+|z-m_n\delta_t(\omega)|)+m_ng(\omega,t,\alpha_t(\omega),\delta_t(\omega))+(1-m_n)g(\omega,t,0,0).$$
           Moreover, $g^{\alpha,\delta}(\omega,t,0,0)=g(\omega,t,0,0)$ $\forall (\omega,t)\in\Omega'\times I$, which follows from the second thesis of Lemma~\ref{Tminstarshaped}, hence $g^{\alpha,\delta}(\cdot,\cdot,0,0)\in L^2_{\mathcal{F}}(T,\mathbb{R})$.
           In addition, it is Lipschitz w.r.t.\ $(y,z)$ thanks to the equi-Lipschitz property of the family $$\{g^{\beta,\mu}:(\beta,\mu)\in\mathbb{R}\times\mathbb{R}^n\},$$ which means we can choose the Lipschitz constant of $g^{\alpha,\delta}$ equal to $k$ $\forall (\omega,t)\in\Omega'\times I$.
           Finally, it is convex, given that for any $(\omega,t)\in\Omega'\times I$ we have that $g^{\alpha_t(\omega),\delta_t(\omega)}(\omega,t,y,z)$ is convex w.r.t $(y,z)$, according to $\textit{Step 1}$. Then the BSDE:
            \begin{equation*}
                \rho^{\alpha,\delta}_t (X)=X+\int_t^Tg^{\alpha,\delta}(s,\rho^{\alpha,\delta}_s,Z^{\alpha,\delta}_s)ds-\int_t^TZ_s^{\alpha,\delta}dW_s
            \end{equation*}
            admits a unique, monotone and convex solution (the convexity of $\rho^{\alpha,\delta}_t$ follows from the convexity of the driver, cf.\ \cite{RG06}). \medskip\\
\textit{Step 3: representing $\rho$ as a minimum of $\rho^{\alpha,\delta}$ over $\Gamma$.} \\
           Also in this case for any $(\alpha,\delta)\in\Gamma$ we have $g^{\alpha,\delta}(t,\rho_t,Z_t)\geq g(t,\rho_t,Z_t) \ d\mathbb{P}\times dt$-a.s.\ and by the comparison theorem (cf.\ \cite{EPQ97}) we conclude $\rho^{\alpha,\delta}_t(X)\geq\rho_t(X)$.
           Now let us consider the process $(\bar{\alpha},\bar{\delta})=(\rho_t,Z_t)_{t\in[0.T]}$.
           We have that $(\bar{\alpha},\bar{\delta})\in\Gamma$ (given that $\rho_t$ and $Z_t$ are predictable processes valued in $\mathbb{R}\times\mathbb{R}^n$) and it holds that
$$g^{\bar{\alpha},\bar{\delta}}(t,\rho_t,Z_t)=g(t,\rho_t,Z_t) \ d\mathbb{P}\times dt \mbox{-a.s.},$$
by the definition of $g^{\bar{\alpha},\bar{\delta}}$ and the results in \textit{Step 1}.\footnote{Indeed, for any fixed $(\omega,t)\in\Omega'\times I$ we can write: $g^{\bar{\alpha},\bar{\delta}}(\omega,t,\rho_t(\omega),Z_t(\omega))=g^{\rho_t(\omega),Z_t(\omega)}(\omega,t,\rho_t(\omega),Z_t(\omega))=g(\omega,t,\rho_t(\omega),Z_t(\omega))$, as proved in \textit{Step 1} or Lemma~\ref{Tminstarshaped}.
Given that $\mathbb{P}\otimes \mathcal{L}(\Omega'\times I)=0$, we have $g^{\bar{\alpha},\bar{\delta}}(t,\rho_t,Z_t)\equiv g(t,\rho_t,Z_t) \ \ d\mathbb{P}\times dt$-a.s.}
Then we conclude that $d\mathbb{P}$-a.s.\ and $\forall t\in[0,T]$ it holds that $\rho^{\bar{\alpha},\bar{\delta}}_t(X)=\rho_t(X)$ by uniqueness of the solution.
In other words, we have obtained the representation given in Equation~\eqref{minsub} for any fixed $X\in L^{\infty}(\mathcal{F}_T):$
\begin{equation*}
    \rho_t(X)=\essmin_{(\alpha,\delta)\in\Gamma}\rho_t^{\alpha,\delta}(X)=\rho_t^{\bar{\alpha},\bar{\delta}}(X) \ \ \forall t\in[0,T].
\end{equation*}
             {\begin{center} \textit{Sketch of the proof for the positively homogeneous case:}
             \end{center}}

Let us consider $(y,z),(\beta,\mu)\in\mathbb{S}_1\times\mathbb{S}_n\cup (0,0)$ where $\mathbb{S}_j$ is the unit sphere in $\mathbb{R}^j$.  Given $\Omega''\subseteq\Omega$ with $\mathbb{P}(\Omega'')=1$  and $t\in J\subseteq[0,T]$ with $\mathcal{L}(J)=1$, s.t.\ $g$ is positively homogeneous and Lipschitz for all $\omega\in\Omega''$ and $t\in J$, we can define for each $(\omega,t)\in\Omega''\times J$ the function
              \begin{equation*}
                  g_{\beta,\mu}(\omega,t,y,z):=
                  \begin{cases}
                      mg(\omega,t,\beta,\mu) \ &\mbox{if there exists } m\geq0 \mbox{ s.t. } (y,z)=m(\beta,\mu) \\
                      +\infty \ &\mbox{otherwise}.
                  \end{cases}
              \end{equation*}
             Keeping $(\beta,\mu)$ fixed, we evaluate the infimal convolution of $g_{\beta,\mu}$ and the sublinear and Lipschitz function $b_k(y,z):=k(|y|+|z|)$ where $k>0$ is the Lipschitz constant of the driver $g$.
             For all $(\omega,t)\in\Omega''\times J$ it holds that:
              \begin{align*}
                  g^{\beta,\mu}(\omega,t,y,z)&=\inf_{(\bar{y},\bar{z})\in\mathbb{R}\times\mathbb{R}^n}\{b_k(y-\bar{y},z-\bar{z})+g_{\beta,\mu}(\omega,t,\bar{y},\bar{z})\} \\
                  &=\inf_{m\geq0}\{k(|y-m\beta|+|z-m\mu|)+mg(\omega,t,\beta,\mu)\}.
              \end{align*}
              We observe that for any $(\omega,t)\in\Omega'\times I$ the function $g^{\beta,\mu}(\omega,t,\cdot,\cdot)$ is a sublinear---being the infimal convolution of two sublinear functions---and uniformly Lipschitz continuous function in $(y,z)$, hence $g^{\beta,\mu}(\omega,t,\cdot,\cdot)$ is real-valued.
                            Due to positive homogeneity we can extend the same argument for all $(y,z)\in\mathbb{R}\times\mathbb{R}^n$ (see Theorem~2 in \cite{C00} for a proof of this statement). In particular, according to Theorem~2 in \cite{C00}, for any $(y,z)\in\mathbb{R}\times\mathbb{R}^n$ the infimum is realized when $(\beta,\mu)=\frac{(y,z)}{|y|+|z|}$, with the convention that if $(y,z)=(0,0)$ the previous quantity is equal to $(0,0)$.
            Considering the set $$\Gamma:=\{(\alpha,\delta)_{t\in[0,T]} \mbox{ predictable stochastic process valued in } \mathbb{S}_1\times\mathbb{S}_n \cup (0,0) \ \  d\mathbb{P}\times dt \mbox{-a.s.}\},$$
            and the driver
            \begin{equation*}
             g^{\alpha,\delta}(\omega,t,y,z):=g^{\alpha_t(\omega),\delta_t(\omega)}(\omega,t,y,z),
           \end{equation*}
            the remaining part of proof can be derived similarly as in the star-shaped case, verifying that

           \begin{equation*}
                \rho^{\alpha,\delta}_t(X)=X+\int_t^Tg^{\alpha,\delta}(s,\rho^{\alpha,\delta}_s,Z^{\alpha,\delta}_s)ds-\int_t^TZ_s^{\alpha,\delta}dW_s
            \end{equation*}
            admits a unique, monotone and sublinear solution, satisfying
                      $\rho^{\alpha,\delta}_t(X)\geq\rho_t(X)$.            The equality holds when $(\bar{\alpha}_t,\bar{\delta}_t)_{t\in[0,T]}=\left(\frac{1}{|\rho_t|+|Z_t|}(\rho_t,Z_t)\right)_{t\in[0.T]}$, with the convention that if $(\rho_t(\omega),Z_t(\omega))=(0,0)$ then $(\bar{\alpha}_t(\omega),\bar{\delta}_t(\omega))=(0,0)$. \\
           \begin{center}
           {\textit{The min-max representations:}}
           \end{center}
The representation given in Equation \eqref{minmax} for star-shaped risk measures follows from Equation \eqref{eq:conrep}, observing that $g^{\gamma}$ satisfies standard assumptions and convexity.
Additionally, when $g$ is positively homogeneous, the family $(g^{\gamma})_{\gamma\in\Gamma}$ inherits the same property. As a result, $G^{\gamma}$ vanishes on its proper domain.
\end{proof}
\begin{remark}
    Let us note that the theses of Theorem~\ref{SuCO} still hold if we only suppose the process $g$ to be $\mathcal{F}_t$-adapted or $\mathcal{F}_t$-progressively measurable.
    In these cases, the solution to Equation~\eqref{eq: star-shapedBSDE} will be an adapted or progressively measurable process, respectively.
    Therefore, the proof of Theorem~\ref{SuCO} remains unchanged except for the choice of $\Gamma$, which becomes as follows for the positively homogeneous case (analogously for the star-shaped case): $$\Gamma:=\left\{(\alpha,\delta) \  \mathcal{F}_t\mbox {-adapted (resp.\ p.m.) square integrable processes valued in } \mathbb{S}_1\times\mathbb{S}_n\cup (0,0)\right\}.$$
   Furthermore, it is worth noting that it is not necessary to restrict attention to the space $L^{2}(\mathcal{F}_T)$ for the terminal condition $X$.
   In fact, it is possible to consider $X\in L^{\infty}(\mathcal{F}_T)$ and $g(\cdot,0,0)\in \mbox{BMO}(\mathbb{P})$. In this case, it follows that $g^{\gamma}(\cdot,0,0)\in\mbox{BMO}(\mathbb{P})$ for any $\gamma\in\Gamma$, and $\rho^{\gamma}_t\in L^{\infty}(\mathcal{F}_t)$ for any $t\in[0,T]$. Moreover, the min-max representation in Equation~\eqref{minmax} becomes:
    $$\rho_t(X)=\essmin_{\gamma\in\Gamma}\essmax_{(\beta,q)\in\mathcal{G}\times\mathcal{Q}^{\infty}}\mathbb{E}_{\mathbb{Q}^{q}_t}\left[D^{\beta}_{t}X-\int_t^TD^{\beta}_{t,s}G^{\gamma}(s,\beta_s,q_s)ds\Big|\mathcal{F}_t\right],$$
    with $(\beta_t,q_t)_{t\in[0,T]}$ bounded processes.
\end{remark}

Before presenting the proof of Corollary \ref{corCSA}, we provide a preliminary lemma on convex functions.
Although the proof of this lemma is not difficult, we were unable to find a reference for this result.

\begin{lemma}
Let $f:\mathbb{R}\to\mathbb{R}$ be convex and not monotone.
Then $f$ is coercive: $$\lim_{|x|\to+\infty}f(x)=+\infty,$$ in particular $f$ admits a global minimum. Moreover, the set $M:=\argmin_{y\in\mathbb{R}}{f(y)}$ is compact, thus $M$ admits a maximum and minimum.
\label{LemmaConvex}
    \end{lemma}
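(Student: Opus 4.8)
The plan is to run everything off the monotonicity of the slopes of a convex function. Write $s(x,y):=\frac{f(y)-f(x)}{y-x}$ for $x\neq y$, and recall the three–slope inequality for convex $f$: if $x_0<x_1<x_2$ then $s(x_0,x_1)\le s(x_1,x_2)$. First I would unpack the hypothesis: since $f$ is not monotone, it is neither non-decreasing nor non-increasing, so there exist $a<b$ with $f(a)>f(b)$ and $c<d$ with $f(c)<f(d)$; equivalently $s(a,b)<0$ and $s(c,d)>0$.

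Next I would establish coercivity. For any $x<a$, the three–slope inequality applied to $x<a<b$ gives $s(x,a)\le s(a,b)$, hence $f(x)\ge f(a)+s(a,b)(x-a)$; since $s(a,b)<0$, the right–hand side tends to $+\infty$ as $x\to-\infty$. Symmetrically, for $x>d$ the inequality on $c<d<x$ gives $s(c,d)\le s(d,x)$, hence $f(x)\ge f(d)+s(c,d)(x-d)\to+\infty$ as $x\to+\infty$. This proves $\lim_{|x|\to+\infty}f(x)=+\infty$.

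Existence of a global minimum then follows by a standard compactness argument: $f$, being finite and convex on $\R$, is continuous; by coercivity pick $R>0$ with $f(x)>f(0)$ for all $|x|\ge R$; Weierstrass on the compact interval $[-R,R]$ produces $x^{*}$ with $f(x^{*})=\min_{[-R,R]}f\le f(0)$, which is therefore a global minimizer, and we set $m^{*}:=f(x^{*})$. For the last assertion, note that $M=\argmin_{y\in\R}f(y)=f^{-1}(\{m^{*}\})$ is closed by continuity of $f$, and $M\subseteq[-R,R]$ since $f(x)>f(0)\ge m^{*}$ whenever $|x|\ge R$; hence $M$ is a nonempty closed bounded subset of $\R$, so it is compact and admits a maximum and a minimum. (Convexity of $f$ also forces $M$ to be an interval, but this is not needed for the statement.)

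The argument is elementary throughout; the only point requiring a little care is the passage from ``not monotone'' to the two strict slope inequalities $s(a,b)<0$ and $s(c,d)>0$, and making sure the points are inserted in the correct order into the three–slope inequality in each of the two coercivity estimates. I do not foresee any genuine obstacle.
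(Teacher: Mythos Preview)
Your proof is correct and follows essentially the same approach as the paper: both exploit the monotonicity of secant slopes of a convex function to derive coercivity from the non-monotonicity hypothesis, then use continuity plus coercivity for existence of a minimizer and compactness of $M$. Your version is slightly more streamlined in that you directly obtain affine lower bounds $f(x)\ge f(a)+s(a,b)(x-a)$ and $f(x)\ge f(d)+s(c,d)(x-d)$, whereas the paper computes the limit of the slopes and distinguishes the cases $l=-\infty$ and $l\in(-\infty,0)$; but the underlying idea is identical.
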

    \begin{proof}
     Let $x_1<x_2$ with $f(x_1)>f(x_2)$.
     Convexity implies that the function $\gamma(x)=\frac{f(x)-f(x_1)}{x-x_1}$ is non-decreasing.
     Given that $f(x_1)>f(x_2)$ we have \mbox{$\gamma(x_2)<0$} so $\lim_{x\to-\infty}\gamma(x)=l\in[-\infty,0)$. The case where $l=-\infty$ is straightforward.
     Assume now that $l>-\infty$.
     This yields: $\frac{f(x_1)-f(x)}{x_1-x}=l+o(1)$ as $x\to-\infty$, thus $f(x)=lx+o(x)$ as $x\to-\infty$ and letting $x\to-\infty$ we obtain $\lim_{x\to-\infty}f(x)=+\infty$.
     Considering $x_3<x_4$ with $f(x_3)<f(x_4)$ we can prove as above that $\lim_{x\to+\infty}f(x)=+\infty$.
     So we establish coercivity of $f$.
     Now we know that $f$ is continuous on the entire $\mathbb{R}$ because it is convex, so coercivity ensures that $f$ admits a minimum.
     Finally, by standard theory we know the set $M:=\argmin_{y\in\mathbb{R}}{f(y)}\neq\emptyset$ is convex, hence it is a point or an interval.
     In particular, it must be bounded given that $f$ is coercive.
     Indeed, if $M=(-\infty,c]$ or $[c',+\infty)$ for some $c,c'\in\mathbb{R}$ then $f$ could not satisfy $\lim_{|x|\to+\infty}f(x)=+\infty.$
     We prove the closure. Let $(x_n)_{n\in\mathbb{N}}\subseteq M$ s.t.\ $x_n\to x$.
     Then $f(x_n)=m:=\ds\min_{y\in\mathbb{R}}f(y) \ \ \forall n\in\mathbb{N}$ thus $\lim_{n\to\infty}f(x_n)=m$ and continuity of $f$ yields $x\in M$, hence the thesis.
    \end{proof}
    \begin{proof}[Proof of Corollary~\ref{corCSA}]
    The first thesis is a simple adaptation of the previous results. If $g$ does not depend on $y$,  with the same notation used in the proof of Theorem \ref{SuCO}, for the star-shaped case (and similarly when $g$ is positively homogeneous) we can choose the generator $g^{\mu}$ by defining:
    \begin{align*}
                  g^{\mu}(\omega,t,y,z)&=\inf_{\bar{z}\in\mathbb{R}^n}\{b_k(z-\bar{z})+g_{\mu}(\omega,t,\bar{z})\} \\
                  &=\inf_{m\geq0}\{k(|z-m\mu|)+mg(\omega,t,\mu)\},
              \end{align*}
with $b_k=k|z|.$ Clearly, $g^{\mu}$ does not depend on $y$.
Choosing $$\Gamma:=\{\delta \mbox{ predictable square integrable stochastic process valued in } \mathbb{R}^n\},$$
and proceeding analogously as in the proof of Theorem~\ref{SuCO} we can conclude that $\rho_t^{\delta}$ is a dynamic convex and cash-additive risk measure (because its driver is convex and does not depend on $y$) and the minimum is achieved when $\delta\equiv Z$, where $Z$ is the second component of the solution to Equation~\eqref{eq: star-shapedBSDE} with driver $g$.
Hence, we have proved the representation in Equation~\eqref{minmax1} (of which the right-hand side directly follows from Equation~\eqref{eq:dualSA}).

As far as cash-subadditivity is concerned, we know from Theorem~\ref{SuCO} that for any $(\alpha,\delta)\in\Gamma$ the drivers $g^{\alpha,\delta}(t,y,z)$ are convex (if $g$ is star-shaped, otherwise they are sublinear, but the proof remains the same) and satisfy the standard assumptions.
Moreover, we have proved that $g^{\alpha,\delta}(t,\rho_t,Z_t)\geq g(t,\rho_t,Z_t)$ and $g^{\rho,Z}(t,\rho_t,Z_t)=g(t,\rho_t,Z_t)$.
We want to prove that for any $(\alpha,\delta)\in\Gamma$ there exists $\tilde{g}^{\alpha,\delta}$ that verifies the standard assumptions, convexity, decreasing monotonicity w.r.t.\ $y$, and, for any $(\alpha,\delta)\in\Gamma$,  $\tilde{g}^{\alpha,\delta}\geq g \
 d\mathbb{P}\times dt$-a.s., where equality holds if $(\alpha,\delta)=(\rho,Z)$.
 By the comparison theorem, if these assertions are true for any $(\alpha,\delta)\in\Gamma$, the first component of the solution $(\tilde{\rho}^{\alpha,\delta})_{t\in[0,T]}$ to the BSDE generated by the driver $\tilde{g}^{\alpha,\delta}$ is cash-subadditive and Equation~\eqref{minmax2} is verified.
 Let us define for each fixed $(\alpha,\delta)\in\Gamma$, $(\omega,t,y,z)\in\Omega\times[0,T]\times\R\times\R^n$:
\begin{equation*}
    \tilde{g}^{\alpha,\delta}(\omega,t,y,z):=\inf\left\{g^{\alpha,\delta}(\omega,t,\bar{y},z):\bar{y}\leq y\right\}.
\end{equation*}
Then $\tilde{g}^{\alpha,\delta}$ is a predictable process, given that continuity of $g^{\alpha,\delta}$ w.r.t.\ $y$ ensures that the infimum can be taken over a countable set $(y_n)_{n\in\mathbb{N}}$, thus the infimum over a countable set of predictable processes is still predictable.
By definition, $\tilde{g}^{\alpha,\delta}$ is decreasing in $y$ $d\mathbb{P}\times dt$-a.s., while convexity w.r.t.\ $(y,z)$ can be verified as in Proposition~\ref{staticSS}.
Moreover, by definition $\tilde{g}^{\alpha,\delta}(\omega,t,y,z)\leq g^{\alpha,\delta}(\omega,t,y,z)$ and in particular $\tilde{g}^{\alpha,\delta}(t,\rho_t,Z_t)\leq g^{\alpha,\delta}(t,\rho_t,Z_t) \ d\mathbb{P}\times dt$-a.s.
Furthermore, fixing $(\alpha,\delta)\in\Gamma$,$(\omega,t,y,z)\in\Omega\times[0,T]\times\R\times\R^n$, we have for any $\bar{y}\leq y$:
$$g^{\alpha,\delta}(\omega,t,\bar{y},z)\geq g(\omega,t,\bar{y},z)\geq g(\omega,t,y,z),$$
where first inequality follows from the proof of Theorem~\ref{SuCO} and the second inequality is due to decreasing monotonicity of $g$.
Taking the infimum over $\bar{y}\leq y$ we obtain $\tilde{g}^{\alpha,\delta}(\omega,t,y,z)\geq g(\omega,t,y,z)$.
Summing up, we have for any $(\alpha,\delta)\in\Gamma$ $g\leq \tilde{g}^{\alpha,\delta}\leq g^{\alpha,\delta} \ \ d\mathbb{P}\times dt$-a.s., hence $\tilde{g}^{\alpha,\delta}(t,0,0)\in L^2_{\F}(T,\R)$ and $g(t,\rho_t,Z_t)\leq \tilde{g}^{\rho,Z}(t,\rho_t,Z_t)\leq g^{\rho,Z}(t,\rho_t,Z_t)=g(t,\rho_t,Z_t)$, so $g(t,\rho_t,Z_t)=\tilde{g}^{\rho,Z}(t,\rho_t,Z_t)$.

We need to check that $\tilde{g}^{\alpha,\delta}$ is Lipschitz with constant $k.$
    It is enough to verify that for almost all $(\omega,t)\in\Omega\times [0,T]$ we have $|\tilde{g}^{\alpha,\delta}(\omega,t,y_1,z_2)-\tilde{g}^{\alpha,\delta}(\omega,t,y_2,z_2)|\leq k|y_1-y_2|$ and $|\tilde{g}^{\alpha,\delta}(\omega,t,y_1,z_1)-\tilde{g}^{\alpha,\delta}(\omega,t,y_1,z_2)|\leq k|z_1-z_2|$ $\forall y_1,y_2\in\R$, $z_1,z_2\in\R^n$.
If ${g}^{\alpha,\delta}$ is already decreasing w.r.t.\ $y$ then by definition $\tilde{g}^{\alpha,\delta}\equiv{g}^{\alpha,\delta}$ and Lipschitzianity follows from the same property of $g^{\alpha,\delta}$.
If ${g}^{\alpha,\delta}$ is increasing w.r.t.\ $y$ then for almost all $(\omega,t)\in\Omega\times[0,T]$ and $(y,z)\in\R\times\R^n$ $\tilde{g}^{\alpha,\delta}(\omega,t,y,z)=\lim_{\gamma\to-\infty}g^{\alpha,\delta}(\omega,t,\gamma,z)>+\infty$, so $\tilde{g}^{\alpha,\delta}$ does not depend on $y$ and Lipschitzianity w.r.t.\ $z$ follows from the same property of $g^{\alpha,\delta}$.
So, it only remains to check the case of $g$ non-monotone w.r.t.\ $y$.
We want to prove that  for almost all $(\omega,t)\in\Omega\times[0,T]$ the Lipschitz constant $k$ of $g$ is also a Lipschitz constant for $\tilde{g}^{\alpha,\delta}$.
We can fix $y_1,y_2\in\R$ and $z\in\R^n$.
We know by Lemma~\ref{LemmaConvex} that $g^{\alpha,\delta}(\omega,t,\cdot,z)$ admits a minimum and the set $M$ of minimum points is compact.
Let us call $x^*:=\min M$.
If $y_1,y_2\geq x^*$ then $\tilde{g}^{\alpha,\delta}(\omega,t,y_1,z)=\tilde{g}^{\alpha,\delta}(\omega,t,y_2,z)=g^{\alpha,\delta}(\omega,t,x^*,z)$ hence: $$|\tilde{g}^{\alpha,\delta}(\omega,t,y_1,z)-\tilde{g}^{\alpha,\delta}(\omega,t,y_2,z)|=0\leq k|y_1-y_2|.$$
If $y_1,y_2\leq x^*$ then $\tilde{g}^{\alpha,\delta}(\omega,t,y_1,z)={g}^{\alpha,\delta}(\omega,t,y_1,z)$ and $\tilde{g}^{\alpha,\delta}(\omega,t,y_2,z)={g}^{\alpha,\delta}(\omega,t,y_2,z)$ and Lipschitzianity follows from Lipschitzianity of $g^{\alpha,\delta}$.
Finally, if $y_1\leq x^*\leq y_2$ (or the converse) we have $\tilde{g}^{\alpha,\delta}(\omega,t,y_1,z)={g}^{\alpha,\delta}(\omega,t,y_1,z)$ and $\tilde{g}^{\alpha,\delta}(\omega,t,y_2,z)={g}^{\alpha,\delta}(\omega,t,x^*,z)$, so that
\begin{align*}
|\tilde{g}^{\alpha,\delta}(\omega,t,y_1,z)-\tilde{g}^{\alpha,\delta}(\omega,t,y_2,z)|&=|{g}^{\alpha,\delta}(\omega,t,y_1,z)-{g}^{\alpha,\delta}(\omega,t,x^*,z)|\\&
\leq k|y_1-x^*|\leq k|y_1-y_2|.
\end{align*}
Summing up, for almost all $(\omega,t)\in\Omega\times[0,T]$ the Lipschitz constant $k$ of $g$ also verifies for all $y_1,y_2\in\R$ and $z\in\R^n$: $$|\tilde{g}^{\alpha,\delta}(\omega,t,y_1,z)-\tilde{g}^{\alpha,\delta}(\omega,t,y_2,z)|\leq k|y_1-y_2|.$$
Analogously we can prove that for almost all  $(\omega,t)\in\Omega\times[0,T]$ we have: $$|\tilde{g}^{\alpha,\delta}(\omega,t,y,z_1)-\tilde{g}^{\alpha,\delta}(\omega,t,y,z_1)|\leq k|z_1-z_2|, \ \ \forall y\in\R,z_1,z_2\in\mathbb{R}^n.$$ \end{proof}

\begin{proof}[Proof of Proposition \ref{PM}]
   The proofs of monotonicity, cash-(sub)additivity and continuity from above are readily verified.
   The proof of star-shapedness (resp.\ positive homogeneity) follows verbatim from the proof provided in Lemma~\ref{Lemma:infSS}.
   Here we only prove regularity and time-consistency. \medskip\\
   \textit{Regularity:} Let us consider $A\in\mathcal{F}_t$ and $X\in L^{\infty}(\mathcal{F}_T)$.
    We want to verify $\mathbb{I}_A\rho_t(X)=\mathbb{I}_A\rho_t(\mathbb{I}_AX).$
    There exist $\gamma_1,\gamma_2\in\Gamma$ such that $\rho_t(X)=\rho_t^{\gamma_1}(X)$ and $\rho_t(\mathbb{I}_AX)=\rho_t^{\gamma_2}(\mathbb{I}_AX)$.
    We have:
    $$\mathbb{I}_A\rho_t(X)=\mathbb{I}_A\rho_t^{\gamma_1}(X)=\mathbb{I}_A\rho_t^{\gamma_1}(\mathbb{I}_AX)\geq \mathbb{I}_A\rho_t(\mathbb{I}_AX),$$
    where the second equality follows from regularity\footnote{We recall that monotonicity and cash-subadditivity (resp.\ cash-additivity) of the family $(\rho^{\gamma}_t)_{\gamma\in\Gamma}$ implies regularity of $\rho^{\gamma}_t$ for each $\gamma\in\Gamma$, as we have proved in Lemma~\ref{REGCS}.} of $\rho_t^{\gamma_1}(X)$ and the inequality is due to Equation~\eqref{minC}.
   Analogously, we obtain:
   $$\mathbb{I}_A\rho_t(\mathbb{I}_AX)=\mathbb{I}_A\rho_t^{\gamma_2}(\mathbb{I}_AX)=\mathbb{I}_A\rho_t^{\gamma_2}(X)\geq\mathbb{I}_A\rho_t(X),$$
hence $\mathbb{I}_A\rho_t(X)=\mathbb{I}_A\rho_t(\mathbb{I}_AX).$ \medskip\\
\textit{Time-consistency:} Now we assume the further condition given in Equation~\eqref{TC}.
For each $X\in L^{\infty}(\mathcal{F}_T)$, by conditions \eqref{minC} and \eqref{TC}, there exist $\gamma_1,\gamma_2\in\Gamma$ such that $\rho_t(X)=\rho_t^{\gamma_1}(X)$ and $\rho_s(\rho_t(X))=\rho_s^{\gamma_2}(\rho^{\gamma_2}_t(X))$.
These equalities imply:
\begin{equation}
\rho_s(\rho_t(X))=\rho_s^{\gamma_2}(\rho_t^{\gamma_2}(X))=\rho_s^{\gamma_2}(X)\geq \rho_s^{\gamma_1}(X)=\rho_s(X),
\label{WTC1}
\end{equation}
where the second equality is due to the time-consistency of $\rho_t^{\gamma_2}$, while the inequality follows from the minimum condition \eqref{minC}.
Conversely, by the minimum condition there exists $\gamma_3\in\Gamma$ such that $\rho_s(\rho_t(X))=\rho^{\gamma_3}_s(\rho_t(X))$.
This yields:
\begin{equation}
\rho_s(\rho_t(X))=\rho_s^{\gamma_3}(\rho_t(X))=\rho_s^{\gamma_3}(\rho_t^{\gamma_1}(X))\leq\rho_s^{\gamma_1}(\rho_t^{\gamma_1}(X))=\rho_s^{\gamma_1}(X)=\rho_s(X).
\label{WTC2}
\end{equation}
Here, the second and last equalities are due to the relation $\rho_t(X)=\rho_t^{\gamma_1}(X)$, the inequality is due to the minimum condition~ \eqref{minC}, while the third equality holds given the time-consistency of $\rho_t^{\gamma_1}$.
The inequalities in Equations~\eqref{WTC1} and \eqref{WTC2} yield the time-consistency of $\rho_t$.
\end{proof}
\begin{remark}
   It is evident that certain statements in Proposition~\ref{PM} can be proven independently of a subset of the hypotheses outlined in the same proposition.
   Specifically, the properties of monotonicity, regularity, cash-subadditivity (or cash-additivity), star-shapedness, and time-consistency, along with condition~\eqref{TC} of $(\rho^{\gamma})_{\gamma\in\Gamma}$, are sufficient to establish the corresponding properties of $\rho_t$, without the need for additional assumptions.
   Furthermore, assuming monotonicity of $(\rho^{\gamma})_{\gamma\in\Gamma}$, the continuity from above of $(\rho^{\gamma})_{\gamma\in\Gamma}$ also guarantees the same property for $\rho_t$.
   Moreover, we obtain the same theses if we consider unbounded random variables in the definition of $\rho_t$, i.e., \mbox{$\rho_t:L^{p}(\mathcal{F}_T)\to L^{p}(\mathcal{F}_t)$} with \mbox{$p\in[1,+\infty).$}
    \end{remark}
\begin{proof}[Proof of Proposition~\ref{POBSDEs}]
    By Proposition~\ref{PM} we know that $\rho_t$ is monotone, regular, time-consistent and $\rho_T(X)=X$, inheriting the same properties from the family $(\rho^{\gamma})_{\gamma\in\Gamma}$.
    Then $\rho_t$ is a nonlinear evaluation w.r.t.\ the Brownian filtration $(\mathcal{F}_t)_{t\in[0,T]}$, in the sense of Definition~2.1 in \cite{P05}.
    Moreover, given the convexity (resp.\ sublinearity) and normalization of the family $(\rho^{\gamma})_{\gamma\in\Gamma}$, $\rho_t$ is star-shaped (resp.\ positively homogeneous) and $\rho_t(0)=0 \ \forall t\in[0,T]$.
    We want to apply Theorem~3.1 of \cite{P05}.
    We only need to verify the dominance condition:
    \begin{align}
        &\rho_t(X)-\rho_t(Y)\leq \rho_t^{b_k}(X-Y), \ \forall X,Y\in L^{2}(\mathcal{F}_T),
        \label{D2}
    \end{align}
    where $k$ is the equi-Lipschitz constant for the family $(\rho^{\gamma})_{\gamma\in\Gamma}$, $b_k$ is the function \mbox{$b_k(y,z):=k(|y|+|z|)$} and $\rho^{b_k}_t(X)$ is the first component of the solution of the BSDE with terminal condition $X$ and driver $b_k$.
    Fixing $X,Y\in L^{2}(\mathcal{F}_T)$ by condition \eqref{minC1} there exist  $\gamma_1,\gamma_2\in\Gamma$ such that $\rho_t(X)=\rho_t^{\gamma_1}(X)$ and $\rho_t(Y)=\rho_t^{\gamma_2}(Y)$.
    We are ready to prove condition \eqref{D2}: $$\rho_t(X)-\rho_t(Y)=\rho_t^{\gamma_1}(X)-\rho_t^{\gamma_2}(Y)\leq \rho_t^{\gamma_2}(X)-\rho_t^{\gamma_2}(X)\leq \rho_t^{b_k}(X-Y).$$
The first inequality follows from the minimum condition \eqref{minC1}, while the second inequality is due to Corollary~4.4 in \cite{P05}, observing that $\rho^{\gamma}$ is induced via BSDEs for each $\gamma\in\Gamma$ and the drivers $g^{\gamma}$ are equi-Lipschitz with constant $k$.
The thesis of Theorem~3.1 in \cite{P05} ensures there exists a driver $g$ that satisfies standard assumptions for existence and uniqueness of the solution, with $g(\cdot,0,0)\equiv 0$, such that:
$$\rho_t(X)=X+\int_t^Tg(s,\rho_s,Z_s)ds-\int_t^TZ_sdW_s.$$
The star-shapedness (resp.\ positive homogeneity) of the driver $g$ follows from Proposition~\ref{POSS} and star-shapedness (resp.\ positive homogeneity) of $\rho_t$.
    \end{proof}
      \begin{remark}
        We stress that, when the family of drivers is convex, the hypothesis of normalization on $g^{\gamma}(\cdot,0,0)$ is not necessary (while if the family $(g^{\gamma})_{\gamma\in\Gamma}$ is sublinear this hypothesis is automatically satisfied).
        Indeed, if $g^{\gamma}(\cdot,0,0)\not\equiv0$ we can apply Corollary~5.12 of  \cite{P05}.
        Let us suppose \mbox{$g^{\gamma}(\cdot,0,0):=\tilde{g}(t)\in L^{2}_{\mathcal{F}}(T;\mathbb{R})$} for any $\gamma\in\Gamma$. By the minimum condition~\eqref{minC1} we know that $\rho_t(0)=\rho_t^{\gamma}(0)$ for any $t\in[0,T]$ and $\gamma\in\Gamma$.
        Then one can easily verify the constraint in Equation (5.13) of \cite{P05}:
        $$\rho_t^{-b_k+\tilde{g}}(0)\leq\rho_t(0)\leq\rho_t^{b_k+\tilde{g}}(0),$$ where $b_k:=k(|y|+|z|)$ with $k$ being the equi-Lipschitz constant of the family $(g_{\gamma})_{\gamma\in\Gamma}$.
        The constraint follows from the Lipschitz condition:
        $$|{g}^{{\gamma}}(t,y,z)-\tilde{g}(t)|\leq k(|y|+|z|) \ \forall \gamma\in\Gamma,$$
        which yields $$-b_k+\tilde{g}\leq g^{{\gamma}} \leq b_k+\tilde{g} \ \forall \gamma\in\Gamma,$$ thus by the comparison theorem we have the bounds:
        $$\rho_t^{-b_k+\tilde{g}}(0)\leq\rho^{\tilde{\gamma}}_t(0)=\rho_t(0)\leq\rho_t^{b_k+\tilde{g}}(0) \ \ \forall t\in[0,T].$$
        By Corollary~5.12 of \cite{P05} the nonlinear evaluation $\rho_t$ is induced by a BSDE with a driver $g$ such that $g(t,0,0)=\tilde{g}(t) \ \forall t\in[0,T]$. \\
       Furthermore, assuming the additional constraints: $X\in L^{\infty}(\mathcal{F}_T)$ and $g(\cdot,0,0)\in\text{BMO}(\mathbb{P})$, the theses outlined in Proposition \ref{POBSDEs} remain valid. Specifically, under these assumptions, we have \mbox{$\rho_t:L^{\infty}(\mathcal{F}_T)\to L^{\infty}(\mathcal{F}_t)$.}  This observation applies equally to Corollary \ref{cor:CSA}.
    \end{remark}
    \begin{proof}[Proof of Proposition \ref{IFFCS}]
    If $g$ is decreasing w.r.t.\ $y$ then $\rho_t$ is cash-subadditive (cf.\ Proposition~7.3 in \cite{ELKR09}).
        Let us prove the converse statement.
        Fixing $(y,z)\in\mathbb{R}\times\mathbb{R}^n$, $h\in\mathbb{R}$ and $t\in[0,T)$, we define for any sufficiently small $\varepsilon>0$:
        \begin{align*}
        g_{\varepsilon}(t,y+h,z)=\frac{1}{\varepsilon}\left({\rho}_{t,t+\varepsilon}(y+h+z(W_{t+\varepsilon}-W_t))-(y+h)\right).
        \end{align*}
        By cash-subadditivity of $\rho_t$ we infer $g^{h}_{\varepsilon}(t,y,z):=g_{\varepsilon}(t,y+h,z)$ is decreasing in $h$, namely for $h_1\geq h_2$ we have that $g^{h_1}_{\varepsilon}(t,y,z)\leq g^{h_2}_{\varepsilon}(t,y,z) \ d\mathbb{P}\mbox{-a.s.}$
        Using the same arguments as in the proof of Proposition~\ref{POSS} we infer $$d\mathbb{P}\times dt\mbox{-a.s. } g(t,y+h_1,z)\leq g(t,y+h_2,z), \ \forall (y,z)\in\mathbb{R}\times\mathbb{R}^n, \forall h_1\geq h_2.$$
        Hence, $d\mathbb{P}\times dt$-a.s.\ $\forall (y,z)\in\mathbb{R}\times\mathbb{R}^n$ $g(t,y+h,z)$ is decreasing in $h\in\mathbb{R}$, which implies that $d\mathbb{P}\times dt$-a.s. $g$ is decreasing in $y$ for any $z\in\mathbb{R}^n$.
    \end{proof}
  \begin{proof}[Proof of Corollary \ref{cor:CSA}]
       Given that $\rho_t$ is cash-additive, the hypotheses of Theorem~4.4 in \cite{P04} are satisfied.
       Then the driver $g$ does not depend on $y$ and $g(t,0)\equiv0$ for all $t\in[0,T].$
       The second part of the thesis follows by observing that $\rho_t$ is cash-subadditive.
       Then by Proposition~\ref{IFFCS} the driver $g$ is decreasing w.r.t.\ $y$.
    \end{proof}

    \setcounter{theorem}{0}

    \subsection{Proofs of Section~\ref{sec:supsolution}}
       \begin{lemma}
       If $g$ satisfies SA plus star-shapedness and $X\in L^0(\mathcal{F}_T)$ then the corresponding $\mathcal{E}^{g}_t$ is star-shaped.
   \end{lemma}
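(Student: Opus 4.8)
The plan is to exploit normalization of the generator: because $g(\cdot,\cdot,0)\equiv 0$, star-shapedness of $g$ annihilates the affine correction term that would otherwise require convexity, so that multiplying \emph{any} supersolution of the BSDE with terminal condition $X$ by a scalar $\lambda\in[0,1]$ produces a supersolution with terminal condition $\lambda X$. Working with the value-process representation $\mathcal{E}^g_t(X)=\essinf\{Y_t:(Y,Z)\in\mathcal{A}(X,g)\}$ (with the convention $\essinf\emptyset=+\infty$), this gives $\mathcal{E}^g_t(\lambda X)\le\lambda\,\mathcal{E}^g_t(X)$ at once, which is star-shapedness once one knows $\mathcal{E}^g_t(0)=0$.

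First I would record that $\mathcal{E}^g_t(0)=0$. Since $g$ satisfies (POS) and (NORM), the pair $(0,0)$ satisfies \eqref{Defsupersolution} with terminal datum $0$ and the null integrand is admissible, so $(0,0)\in\mathcal{A}(0,g)$ and hence $\mathcal{E}^g_t(0)\le 0$. For the reverse inequality, let $(Y,Z)\in\mathcal{A}(0,g)$; taking $t=T$ in \eqref{Defsupersolution} and using $Y_T\ge 0$ and $g\ge 0$ gives $Y_s\ge -\int_s^T Z_u\,dW_u$ for every $s$; multiplying by $\mathbb{I}_{\{Y_s<0\}}\in\mathcal{F}_s$, taking expectations, and using that $\int_0^{\cdot}Z_u\,dW_u$ is a supermartingale (admissibility of $Z$) forces $\mathbb{E}[\mathbb{I}_{\{Y_s<0\}}|Y_s|]\le 0$, i.e.\ $Y_s\ge 0$ a.s. Hence $\mathcal{E}^g_t(0)=\essinf\{Y_t:(Y,Z)\in\mathcal{A}(0,g)\}\ge 0$, and therefore $\mathcal{E}^g_t(0)=0$.

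Next, fix $X\in L^0(\mathcal{F}_T)$ and $\lambda\in(0,1]$ (the case $\lambda=0$ being trivial). If $\mathcal{A}(X,g)=\emptyset$ then $\mathcal{E}^g_t(X)=+\infty$ and, since $\lambda>0$, the right-hand side $\lambda\,\mathcal{E}^g_t(X)+(1-\lambda)\mathcal{E}^g_t(0)$ equals $+\infty$, so there is nothing to prove. Otherwise take any $(Y,Z)\in\mathcal{A}(X,g)$ and consider $(\lambda Y,\lambda Z)\in\mathcal{S}\times\mathcal{L}$. The terminal inequality $\lambda Y_T\ge\lambda X$ holds because $\lambda\ge 0$; $\int_0^{\cdot}\lambda Z_u\,dW_u=\lambda\int_0^{\cdot}Z_u\,dW_u$ is again a supermartingale because $\lambda\ge 0$; and, multiplying the supersolution inequality for $(Y,Z)$ by $\lambda\ge 0$ and using (SS) together with (NORM) in the form $g(u,\lambda Y_u,\lambda Z_u)\le\lambda g(u,Y_u,Z_u)+(1-\lambda)g(u,0,0)=\lambda g(u,Y_u,Z_u)$ $d\mathbb{P}\times dt$-a.s., one gets, for all $0\le s\le t\le T$,
\[
\lambda Y_s-\int_s^t g(u,\lambda Y_u,\lambda Z_u)\,du+\int_s^t\lambda Z_u\,dW_u\ \ge\ \lambda\Big(Y_s-\int_s^t g(u,Y_u,Z_u)\,du+\int_s^t Z_u\,dW_u\Big)\ \ge\ \lambda Y_t .
\]
Thus $(\lambda Y,\lambda Z)\in\mathcal{A}(\lambda X,g)$, so $\mathcal{E}^g_t(\lambda X)\le\lambda Y_t$; taking the essential infimum over $(Y,Z)\in\mathcal{A}(X,g)$ yields $\mathcal{E}^g_t(\lambda X)\le\lambda\,\mathcal{E}^g_t(X)=\lambda\,\mathcal{E}^g_t(X)+(1-\lambda)\mathcal{E}^g_t(0)$, which is star-shapedness of $\mathcal{E}^g_t$.

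The argument is almost entirely elementary; the two points that need a little care are the identity $\mathcal{E}^g_t(0)=0$, whose lower bound relies on admissibility of $Z$ and the small conditioning step above (standard in the minimal-supersolution framework of \cite{HKM12,DHK13}), and the passage in (SS) from the statement ``for all $(y,z)$, $d\mathbb{P}\times dt$-a.s.'' to its validity along the process $(Y,Z)$, which is handled exactly as in Lemma~\ref{LFMS}. The conceptual point --- and the reason no convexity of $g$ is required, unlike in the corresponding results for convex BSDE supersolutions --- is precisely that (NORM) removes the term $(1-\lambda)g(u,0,0)$ from the star-shapedness inequality for $g$, so that the naive scaling $(Y,Z)\mapsto(\lambda Y,\lambda Z)$ already does the job.
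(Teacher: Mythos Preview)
Your proof is correct and follows the same route as the paper: both arguments rest on the inclusion $\lambda\,\mathcal{A}(X,g)\subseteq\mathcal{A}(\lambda X,g)$, obtained by checking that $(\lambda Y,\lambda Z)$ is again an admissible supersolution thanks to (SS) and (NORM). The paper states this inclusion in one line and declares the thesis; you add the explicit verification that $\mathcal{E}^g_t(0)=0$ (needed to match the inequality $\mathcal{E}^g_t(\lambda X)\le\lambda\,\mathcal{E}^g_t(X)$ with the paper's definition of star-shapedness), which the paper leaves implicit but which is indeed immediate from (POS), (NORM) and admissibility as you argue.
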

   \begin{proof}
       Fixing $X\in L^0(\mathcal{F}_T)$, if $\mathcal{A}(X,g)=\emptyset$ there is nothing to prove.
       Suppose now that $\mathcal{A}(X,g)\neq\emptyset$. For any $\lambda\in[0,1]$ the star-shapedness of $g$ yields $\lambda(Y,Z)\in\mathcal{A}(\lambda X,g)$ when $(Y,Z)\in\mathcal{A}(X,g)$, thus $\lambda\mathcal{A}(X,g)\subseteq\mathcal{A}(\lambda X,g)$, which yields the thesis.
   \end{proof}
\begin{proof}[Proof of Theorem \ref{superapresentation}]
Let us fix $X\in L^{\infty}(\mathcal{F}_T)$.
If $\mathcal{A}(X,g)=\emptyset$ there is nothing to prove.
If $\mathcal{A}(X,g)\neq\emptyset$ we define, similarly as done in Theorem~\ref{SuCO}, for all $(\beta,\mu)\in\mathbb{R}\times\mathbb{R}^n$ the function

              \begin{equation*}
                  g_{\beta,\mu}(\omega,t,y,z):=
                  \begin{cases}
                      mg(\omega,t,\beta,\mu) \ &\mbox{if there exists } m\in[0,1] \mbox{ s.t. } (y,z)=m(\beta,\mu) \\
                      +\infty \ &\mbox{otherwise}.
                  \end{cases}
              \end{equation*}
              We notice that for any $(\beta,\mu)$ the function $g_{\beta,\mu}$ is a proper, lower semicontinuous and convex function w.r.t.\ $(y,z)$ $d\mathbb{P}\times dt$-a.s.
              Moreover, according to Proposition~\ref{staticSS}, $g_{\beta,\mu}(t,y,z)\geq g(t,y,z)\geq 0$, thus $g_{\beta,\mu}$ satisfies positivity, while normalization follows from  $g_{\beta,\mu}(\cdot,0,0)\equiv g(\cdot,0,0)\equiv0$.
              Let us define the space $\Gamma:=\left\{(\alpha,\delta)\in\mathcal{S}\times\mathcal{L}: \delta \mbox{ is admissible}\right\}$.
              For each $\gamma:=(\alpha,\delta)\in\Gamma$ the driver is defined by:
              \begin{equation*}
                  g^{\gamma}(\omega,t,y,z):=g_{\alpha,\delta}(\omega,t,y,z):=g_{\alpha_t(\omega),\delta_t(\omega)}(\omega,t,y,z),
              \end{equation*}
              which is a $\mathcal{P}\otimes\B(\R)\otimes\B(\R^n)$-measurable function (given that $g$ is $\mathcal{P}\otimes\B(\R)\otimes\B(\R^n)$-measurable and $(\alpha,\delta)$ is predictable), that also inherits the properties of convexity, lower semicontinuity, positivity and normalization from $g_{\beta,\mu}$.
              Fixing $\gamma\in\Gamma$, if $\mathcal{A}(X,g^{\gamma})\neq\emptyset$, by Theorem~\ref{EUsuper}, there exists a unique minimal supersolution $(\mathcal{E}^{g^{\gamma}}(X),Z^{\gamma})$ to Equation \eqref{Defsupersolution} with driver $g^{\gamma}$ and Proposition 2.2 in \cite{DKRT14} ensures that $\mathcal{E}^{g^{\gamma}}$ is convex.
              In addition, we have $g^{\gamma}(\omega,t,y,z)\geq g(\omega,t,y,z)$, thus by the comparison principle for minimal supersolutions (see Proposition~3.3 in \cite{DHK13}) it results that
              \begin{equation}
                  \mathcal{E}^{g^{\gamma}}_t(X)\geq \mathcal{E}^{g}_t(X) \ \ \mbox{ for any } t\in[0,T].
                  \label{convin}
                    \end{equation}
                    Clearly, inequality \eqref{convin} is true also when $\mathcal{A}(X,g^{\gamma})=\emptyset$.
                    Indeed by definition $\mathcal{E}^{g^{\gamma}}_t(X)=+\infty$.
                    Furthermore, considering $\bar{\gamma}:=(\mathcal{E}^g(X),\bar{Z})\in\Gamma$ it holds that $$g^{\bar{\gamma}}(t,\mathcal{E}^g_t(X),\bar{Z_t})=g_{\mathcal{E}^g(X),\bar{Z}}(t,\mathcal{E}^g_t(X),\bar{Z}_t)= g(t,\mathcal{E}^g_t(X),\bar{Z}_t).$$
                    In particular, if $\mathcal{A}(X,g)\neq\emptyset$, then $\mathcal{A}(X,g^{\bar{\gamma}})\neq\emptyset$.
                    Indeed, the equality $g^{\bar{\gamma}}(t,\mathcal{E}^g_t(X),\bar{Z}_t)\equiv g(t,\mathcal{E}^g_t(X),\bar{Z})$ implies that the relation:
              $$\begin{cases}
              Y^{\bar{\gamma}}_s-\int_s^tg^{\bar{\gamma}}(u,Y^{\bar{\gamma}}_u,Z^{\bar{\gamma}}_u)du+\int_s^tZ^{\bar{\gamma}}_udW_u\geq Y^{\bar{\gamma}}_t \ \ \mbox{ for every } 0\leq s \leq t \leq T, \\
              Y^{\bar{\gamma}}_T=X
              \end{cases}$$
              is satisfied by $(Y^{\bar{\gamma}},Z^{\bar{\gamma}})=(\mathcal{E}^g(X),\bar{Z})$.
              This means that $(\mathcal{E}^g(X),\bar{Z})\in\mathcal{A}(X,g^{\bar{\gamma}})$, so $\mathcal{A}(X,g^{\bar{\gamma}})\neq\emptyset$ and, by definition of minimal supersolutions, $\mathcal{E}_t^g(X)\geq \mathcal{E}_t^{g^{\bar{\gamma}}}(X)$ for any $t\in[0,T]$.
              The last relation together with the inequality \eqref{convin} implies \mbox{$\mathcal{E}_t^{g}(X)=\mathcal{E}^{g^{\bar{\gamma}}}_t(X)$} for any $t\in[0,T]$, hence Equation~\eqref{minsuper} follows.
              Equation~\eqref{minmaxsuper} is due to Theorem~3.4 of \cite{DKRT14}.

              The last statements for positively homogeneous $g$ are a straightforward adaptation of the proof given above, making use of Corollary~3.11 of \cite{DKRT14} for the min-max representation.
              \end{proof}
              \begin{remark}
                  We stress that fixing $X\in L^{\infty}(\mathcal{F}_T)$ and $\gamma\in\Gamma$ we can not infer \textit{a priori} that $\mathcal{A}(X,g^{\gamma})\neq\emptyset$, even when $\mathcal{A}(X,g)\neq\emptyset$.
                  Nevertheless, it is enough that for each $X\in L^{\infty}(\mathcal{F}_T)$ there exists $\bar{\gamma}\in\Gamma$ such that  $\mathcal{A}(X,g^{\bar{\gamma}})\neq\emptyset$ and $\mathcal{E}^{g^{\bar{\gamma}}}_t(X)\leq\mathcal{E}^{g}_t(X)$.
              \end{remark}

\setcounter{theorem}{0}

\subsection{Proofs of Section~\ref{sec:app}}
\begin{proof}[Proof of Proposition~\ref{prop:AS}]
 The proof follows a similar line of reasoning as Proposition~10 in \cite{MGRO22}.
 Monotonicity, normalization and full allocation can be verified by direct inspection.
 We define a function $F:[0,1]\to L^{\infty}(\mathcal{F}_t)$ as $F(m)=\rho^{\gamma^X}_t(m X)$, where $t\in[0,T]$ and $X\in L^{\infty}(\mathcal{F}_T)$, with $\gamma^X\in\Gamma$ satisfying $\rho_t(X)=\min_{\gamma\in\Gamma}\rho_t^{\gamma}(X)=\rho^{\gamma^X}_t (X)$ (cf.\ Equation~\eqref{minsub}). Here, $\rho_t^{\gamma^X}$ is a convex risk measure.
 The convexity of $\rho_t^{\gamma^X}$ implies the convexity of $F$.
 Similar to \cite{RZ23}, it can be shown that
$$\int_0^1 F'_{-}(m)dm=\Lambda_t^{AS}(X,X) = \int_0^1 F'_{+}(m)dm=F(1)-F(0)=\rho^{\gamma^X}_t(X),$$
thus $\Lambda^{AS}_t$ is a CAR.
 Furthermore we have that the measure $\mu^{\cdot}_t\triangleq D_t^{\beta^{\cdot}}\mathbb{Q}^{q^{\cdot}}_{t}$ satisfies $$\frac{d\mu^{mY}_t}{d\mathbb{P}}=\exp\left(-\frac{1}{2}\int_t^T |q_s^{mY}|^2ds-\int_t^T\beta_s^{mY} ds -\int_t^Tq_s^{mY}dW_s\right)= L^{mY}(T,t).$$
 Hence, using Fubini's Theorem and Equation~\eqref{ASCAR}, we obtain  $\Lambda^{AS}_t(X,Y)=\mathbb{E}_{\mathbb{P}}\left[\left.\hat{L}^Y(T,t)X\right|\mathcal{F}_t\right]$.
Now let us consider the cash-subadditive case. By fixing $Y\in L^{\infty}(\mathcal{F}_T)$ and utilizing Corollary \ref{corCSA}, we observe that $\rho_t^{\gamma^Y}$ is a cash-subadditive convex risk measure.
Therefore, $\beta^{m Y}_t\geq 0$ for any $m \in[0,1]$, based on the representation~\eqref{minmax2}.
The proof can then follow verbatim from Theorem 15 in \cite{RZ23}.
Similarly, we can prove the cash-additivity of the Aumann-Shapley CAR when the underlying risk measure is cash-additive.
The results regarding the penalized Aumann-Shapley CAR follow through routine verification.
We only prove sub-allocation.
Let $(X_i)_{i=1,\dots,n},X\in L^{\infty}(\mathcal{F}_T)$ such that $\ds\sum_{i=1}^nX_i=X$.
Then:
\begin{align*}\Lambda_t^{p-AS}(\sum_{i=1}^nX_i,Y)&=\mathbb{E}_{\mathbb{Q}^{q^{Y}}_t}[D_t^{\beta^Y}\sum_{i=1}^nX_i|\mathcal{F}_t]-c^{\gamma^Y}_t(D_t^{\beta^Y}\mathbb{Q}^{q^{Y}}_t) =\sum_{i=1}^n\mathbb{E}_{\mathbb{Q}^{q^{Y}}_t}[D_t^{\beta^Y}X_i|\mathcal{F}_t]-c^{\gamma^Y}_t(D_t^{\beta^Y}\mathbb{Q}^{q^{Y}}_t) \\
&\geq\sum_{i=1}^n\mathbb{E}_{\mathbb{Q}^{q^{Y}}_t}[D_t^{\beta^Y}X_i|\mathcal{F}_t]-nc^{\gamma^Y}_t(D_t^{\beta^Y}\mathbb{Q}^{q^{Y}}_t)
=\sum_{i=1}^n\left(\mathbb{E}_{\mathbb{Q}^{q^{Y}}_t}[D_t^{\beta^Y}X_i|\mathcal{F}_t]-c^{\gamma^Y}_t(D_t^{\beta^Y}\mathbb{Q}^{q^{Y}}_t)\right) \\
&=\sum_{i=1}^n\Lambda_t^{p-AS}(X_i,Y).
\end{align*}
The inequality follows from positivity of the penalty term $c_t^{\gamma^Y}$ as defined in Equation~\eqref{penfun}.
 \end{proof}
    \begin{proof}[Proof of Proposition~\ref{prop:linear}]
    First, we note that the existence of a minimizer $(\bar{\gamma}, \bar{\pi}) \in \Gamma \times \mathcal{K}$ has already been established at the beginning of Section~\ref{sec:portchoice}.
    Furthermore, for each $\gamma \in \Gamma$, the convex optimization problem
    $$V^{\gamma}_t(F) = \essmin_{\pi \in \mathcal{K}} \rho^{\gamma}_t(X^{\pi}_T + F)$$
can be solved using methods similar to those employed in the case of linear utility (refer to Theorem~4.3 in \cite{LS14}). This result is possible due to the cash-additivity of $\rho^{\gamma}_t$, which is ensured by Corollary~\ref{corCSA}. Specifically, the optimal portfolio for each $\gamma \in \Gamma$ is given by the first component of the solution to Equation (16) in \cite{LS14}, without considering jumps and employing the sign conventions for disutilities.
More precisely, we have:
\begin{equation}
V_t^{\gamma}(F) = X^{\bar{\pi}^{\gamma}}_t + Y_t^{\gamma},
\label{eq:convopt1}
\end{equation}
where $\bar{\pi}^{\gamma}$ is the optimal strategy corresponding to the fixed $\gamma \in \Gamma$, and $Y^{\gamma}_t$ follows the dynamics:
$$Y^{\gamma}_t = F + \int_t^T \tilde{g}^{\gamma}(s, Z^{\gamma}_s) ds - \int_t^T Z_s^{\gamma} dW_s,$$
with $\tilde{g}^{\gamma}(t, Z^{\gamma}_t) = \bar{\pi}^{\gamma}_t b_t + g^{\gamma}(t, Z^{{\gamma}}_t + \bar{\pi}^{\gamma}_s \sigma_s)$.
Here, $g^{\gamma}$ is the driver that induces the dynamics of the risk measure $\rho_t^{{\gamma}}$.
Indeed, by considering the dual representation of $\rho^{\gamma}_t$ (cf.\ Equation \eqref{eq:dualSA}), we find that the convex conjugate of $G^{\gamma}$ is $g^{\gamma}$.
Therefore, we can conclude that $g^{\gamma}$ satisfies Equation (12) in \cite{LS14}.
Moreover, $g^{\gamma}$ also satisfies the hypotheses (H1), (H2), and (H3) stated in \cite{LS14}, as $g^{\gamma}$ fulfills the conditions $g^{\gamma} \geq g \geq 0$, $g^{\gamma}(t,0)\equiv0$, and $g^{\gamma}$ is Lipschitz.
Hence, the solution to the convex optimization problem is given by Equation \eqref{eq:convopt1}, as asserted in Theorem 4.3 in~\cite{LS14}.
Taking the minimum over $\gamma \in \Gamma$, we obtain:
$$V_t(F) = \essmin_{\gamma \in \Gamma} \{X_t^{\bar{\pi}^{\gamma}} + Y^{\gamma}_t\} = X^{\bar{\pi}}_t + Y^{\bar{\gamma}}_t,$$
thus proving the desired result.
\end{proof}

\end{document}